\documentclass[12pt,mycrest]{ociamthesis}

\usepackage[T1]{fontenc}

\usepackage{amsmath}
\usepackage{amssymb}
\usepackage{amsthm}
\usepackage{amsfonts}
\usepackage{amscd}
\usepackage{array}
\usepackage{blindtext}
\usepackage{booktabs}
\usepackage{xcolor}
\usepackage{enumerate}
\usepackage{enumitem}
\usepackage{float}
\usepackage{mdframed}
\usepackage{graphicx}
\usepackage{latexsym}
\usepackage{lmodern}
\usepackage{mathrsfs}
\usepackage{makeidx}
\usepackage{dsfont}
\usepackage{multirow}
\usepackage{tensor}
\usepackage{slashed}
\usepackage{url}
\usepackage{xspace}
\usepackage{tikz-cd}
\usepackage{simplewick}
\usepackage{cite}
\usepackage{setspace}

\definecolor{oxblue}{rgb}{0,0.1294,0.2784}
\definecolor{oxbluel}{rgb}{0.26667,0.4078,0.4902}
\definecolor{pantone279}{rgb}{0.28235,0.5686,0.8627}
\definecolor{pantone562}{rgb}{0,0.46667,0.4392}

\theoremstyle{definition}
\newtheorem{thm}{Theorem}
\newtheorem{deff}{Definition}
\newtheorem{prop}{Proposition}
\newtheorem*{proofsketch}{Proof sketch}

\newtheorem{invtool}{Inversion}

\usepackage{chngcntr}
\counterwithin{rmk}{chapter}
\counterwithin{thm}{chapter}
\counterwithin{deff}{chapter}
\counterwithin{prop}{chapter}
\counterwithin{invtool}{chapter}

\usepackage[pdftex,
            pdfauthor={Johan Henriksson},
            pdftitle={Analytic Bootstrap for Perturbative Conformal Field Theories},breaklinks=true,colorlinks=true,citecolor={oxblue},urlcolor={oxblue},linkcolor=black]{hyperref} 

\usepackage[all]{hypcap}

\AtBeginDocument{ 


\newcommand{\R}{\mathbb R}
\newcommand{\C}{\mathbb C}
\newcommand{\Z}{\mathbb Z}
\renewcommand{\O}{\mathcal O}

\newcommand{\1}{\mathds1}

\newcommand{\SL}[2]{\mathrm{SL}(#1,#2)}
\newcommand{\SO}[1]{\mathrm{SO}(#1)}\newcommand{\SU}[1]{\mathrm{SU}(#1)}
\newcommand{\OO}[1]{\mathrm{O}(#1)}
\newcommand{\UU}[1]{\mathrm{U}(#1)}


\newcommand{\ten}{\tensor}

\DeclareMathOperator*{\Tr}{Tr}

\DeclareMathOperator*{\res}{res}


\newcommand{\df}{\mathrm d} 
\newcommand{\de}{\partial} 

\newcommand{\pp}[1]{\frac{\partial}{\partial#1}}


\newcommand{\munu}{_{\mu\nu}}

\newcommand{\parr}[1]{\left(#1\right)}
\newcommand{\parrk}[1]{\left[#1\right]}
\newcommand{\parrm}[1]{\left\{#1\right\}}
\newcommand{\ket}[1]{\left|#1\right\rangle}

\newcommand{\braccket}[3]{\left\langle#1\middle|#2\middle| #3\right\rangle}

\newcommand{\expv}[1]{\left\langle#1\right\rangle}


\newcommand{\beqa}[1]{\begin{align}#1\end{align}}
\newcommand{\beq}[1]{\begin{equation}#1\end{equation}}

\newcommand{\hb}{{\bar h}}
\newcommand{\NN}[1]{$\mathcal N=#1$}
\newcommand{\zb}{{\bar z}}

\newcommand{\G}{{\mathcal G}}
\newcommand{\alphab}{{\bar\alpha}}

\renewcommand{\AA}{\mathbf A}
\renewcommand{\SS}{\mathbf S_{\boldsymbol 1}}


\renewenvironment{proof}[1][\proofname]{\noindent {\bfseries #1. }}{}

\hyphenation{sing-let}
\hyphenation{sing-lets}




\DeclareMathOperator*{\INV}{INV}

\DeclareMathOperator*{\dsat}{\mathcal{D}_{\mathrm{sat}}}
\DeclareMathOperator*{\cas}{\mathcal{C}}

\newcommand{\Dbar}{{\overline{D}}}

\newcommand{\xit}{{\xi}}

\newcommand{\oprp}{{\widetilde{\mathcal R}}}
\newcommand{\rp}{{\widetilde R}}
\newcommand{\veps}{\epsilon}

} 
\usepackage{layouts}
\newcommand{\dDisc}{\mathrm{dDisc}}
\setcounter{tocdepth}{2}

\title{Analytic Bootstrap     
 for   \\[1ex]     {\hspace{-18mm}}Perturbative Conformal Field Theories{\hspace{-18mm}} }

\author{\bf Johan Henriksson}
\college{\bf Balliol College}

\degree{Doctor of Philosophy}
\degreedate{Hilary term MMXX}

\begin{document}


\baselineskip=24pt plus1pt

\setcounter{secnumdepth}{3}
\setcounter{tocdepth}{3}

\maketitle

\begin{romanpages}

\begin{abstractlong}

\noindent \textsf{\textbf{Abstract:}} Conformal field theories (CFTs) play a central role in theoretical physics with many applications ranging from condensed matter to string theory. The conformal bootstrap studies conformal field theories using mathematical consistency conditions and has seen great progress over the last decade. In this thesis we present an implementation of analytic bootstrap methods for perturbative conformal field theories in dimensions greater than two, which we achieve by combining large spin perturbation theory with the Lorentzian inversion formula. In the presence of a small expansion parameter, not necessarily the coupling constant, we develop this into a systematic framework, applicable to a wide range of theories.

The first two chapters provide the necessary background and a review of the analytic bootstrap. This is followed by a chapter which describes the method in detail, taking the form of a practical guide to large spin perturbation theory by means of a step-by-step implementation. The goal is to compute the CFT-data that define a given conformal field theory, and this is achieved by considering contributions from operators in a four-point correlator through the crossing equation. We give a general recipe for determining which operators to consider, how to find their contributions from conformal blocks and how to compute the corresponding CFT-data through the inversion formula.

The second part of the thesis presents several explicit implementations of the framework, taking examples from a number of well-studied conformal field theories. We show how many literature results can be reproduced from a purely bootstrap perspective and how a variety of new results can be derived. We consider in depth how to determine the CFT-data in the $\epsilon$ expansion for the Wilson--Fisher model from crossed-channel operators. All CFT-data to order $\epsilon^3$ follow from only the identity and the bilinear scalar operator, and by considering contributions from two infinite families of operators we generate new results at order $\epsilon^4$. We study in similar depth conformal gauge theories in four dimensions, where we find a five-parameter solution for the most general form of the one-loop four-point correlator of bilinear scalars. For particular parameter values this reproduces the case of the Konishi operator and the stress tensor multiplet in weakly coupled \NN4 super Yang--Mills theory. We then present more briefly four additional examples. These include the critical $\OO N$ model in a large $N$ expansion, a solution for $\phi^4$ theory with any global symmetry, multicritical theories to order $\epsilon^2$ near their critical dimensions, including new results for the central charge, and the four-point correlator of bilinear scalars in the $\epsilon$ expansion. We conclude the thesis with a discussion and some appendices.

\end{abstractlong}

\begin{originalitylong}

\noindent This thesis is based on the results of the following papers, all to which the author contributed substantially:

\vspace{3ex}
{
\renewcommand{\arraystretch}{1.6}
\onehalfspacing
\noindent \begin{tabular}{rp{14cm}}
\cite{Paper1} & 
Johan Henriksson and Tomasz \L ukowski, \emph{Perturbative four-point functions from the analytic conformal bootstrap}, \href{https://doi.org/10.1007/JHEP02(2018)123}{\emph{JHEP} \textbf{02} (2018) 123}, [\href{https://arxiv.org/abs/1710.06242}{\texttt{1710.06242}}].
\\
\cite{Paper2}& 
Luis Fernando Alday, Johan Henriksson and Mark van Loon, \emph{Taming the $\epsilon$-expansion with large spin perturbation theory}, \href{https://doi.org/10.1007/JHEP07(2018)131}{\emph{JHEP} \textbf{07} (2018) 131}, [\href{https://arxiv.org/abs/1712.02314}{\texttt{1712.02314}}].
\\
\cite{Paper3}& 
Johan Henriksson and Mark van Loon, \emph{Critical $ O(N)$ model to order $\epsilon^4$ from analytic bootstrap}, \href{https://doi.org/10.1088/1751-8121/aaf1e2}{\emph{J. Phys.} \textbf{A52} (2019) 025401}, [\href{https://arxiv.org/abs/1801.03512}{\texttt{1801.03512}}].
\\
\cite{Paper4}& 
Luis Fernando Alday, Johan Henriksson and Mark van Loon, \emph{An alternative to diagrams for the critical $ O(N)$ model: dimensions and structure constants to order $1/N^2$}, \href{http://dx.doi.org/10.1007/JHEP01(2020)063}{\emph{JHEP}
  {\textbf{ 01}} (2020) 063}, [\href{https://arxiv.org/abs/1907.02445}{\texttt{1907.02445}}]
\\
\cite{Paper5}&Johan Henriksson, Stefanos Kousvos and Andreas Stergiou, \emph{{Analytic and \mbox{numerical} bootstrap of CFTs with $O(m)\times O(n)$ global symmetry in $3$D}},  {\it
  Arxiv preprint} (2020), [\href{https://arxiv.org/abs/2004.14388}{{\tt
  2004.14388}}].
\end{tabular}
}
 \phantom{
 }
 \vspace{3ex}

\noindent Chapters~\ref{ch:paper2} and \ref{ch:paper1} are modified versions of \cite{Paper2} and \cite{Paper1} respectively. Sections~\ref{sec:ONspectrum} and~\ref{sec:paper4short} summarise \cite{Paper4}, section~\ref{sec:paper5short} is a modified version of part of \cite{Paper5} and finally section~\ref{sec:ONe4} and appendix~\ref{app:subcollinearblocks} are based on material in \cite{Paper3}. Chapters \ref{ch:intro}--\ref{ch:practical} constitute an extended review and sections~\ref{sec:multicritical} and~\ref{sec:WFfrompaper1} contain previously unpublished results.
Versions of \cite{Paper2,Paper3,Paper4} have appeared in a previous DPhil thesis from the University of Oxford \cite{MarkThesis}.

\end{originalitylong}

\begin{acknowledgementslong}
{
\baselineskip=23pt plus1pt

\noindent First and foremost, I would like to thank my supervisor Fernando Alday for his great support throughout my DPhil. Without his never-ending enthusiasm and excellent suggestions of research topics this thesis would not have been possible. Apart from our direct collaboration I have received a lot of help from his useful advice and through his impressive skills with Mathematica. I also thank Tomasz {\L}ukowski for direct guidance and collaboration during my first year and for many useful discussions. Moreover, I have had the great pleasure to work closely with Mark van Loon, and during the final year with Andreas Stergiou and Stefanos Kousvos.

Throughout my time in Oxford, I have developed a deepened friendship with all my fellow DPhil students in the Mathematical Physics group. I would especially like to thank my academic siblings Mark, Pietro and Carmen, and my office mates Mohamed, Hadleigh and Juan for many interesting discussions. I owe thanks to each of the professors in our research group, as well as to Andre Lukas and Paul Fendley. I also thank my many teaching colleagues and students, and acknowledge the financial support from the Marvin Bower scholarship and from the Clarendon Fund.

I would like to express my eternal gratitude to my parents for their continuous and reliable support, as well as to Ylva and Axel. Thanks also to Mormor, for inspiration and for hosting the 2019 Uvan\r{a} reading camp, and to Farmor and Farfar who sadly passed away during my time in Oxford. Thanks to Elo\"{i}se Hamilton for our almost daily coffee breaks, and to Mohamed Elmi for keeping friends despite sharing office, college and master's degree. My time at this university has been greatly enhanced by the Scandinavian Society, Balliol MCR and Boat Club, the university Orienteering and Triathlon clubs, and all the friends I have made through these. Finally, thanks to my Uppsala friends, in particular to Jakob Jonnerby, who once convinced me to apply to Oxford. It has been a great time.

\vspace{1ex}
\begin{flushright}
Oxford, 7 April 2020
\end{flushright}

}
\end{acknowledgementslong}

\setcounter{tocdepth}{2}
{
\baselineskip=23.5pt plus1pt
\tableofcontents  
}
\listoffigures 
\chapter*{List of abbreviations}
{
\renewcommand{\arraystretch}{1.65}
\hspace{1.75ex} \begin{tabular}{p{1.5cm}p{16cm}}
AdS & anti-de Sitter (spacetime)
\\
BPS & Bogomol'nyi--Prasad--Sommerfield 
\\
CFT & conformal field theory
\\
GFF & generalised free fields
\\
IR & infrared, long-distance, low-energy
\\
irrep & irreducible representation
\\
LSPT & large spin perturbation theory
\\
OPE & operator product expansion
\\
QCD & quantum chromodynamics
\\
QFT & quantum field theory
\\
RG & renormalisation group
\\
SCFT & superconformal field theory
\\
SYM & super Yang--Mills (theory)
\\
UV & ultraviolet, short-distance, high-energy
\\
WF & Wilson--Fisher
\end{tabular}
}

\end{romanpages}

\chapter{Introduction}\label{ch:intro}


The main enterprise of theoretical physics is to construct mathematical models for describing physical phenomena. These models are constructed from a supply of experimental data, and are judged based on their success in explaining previous observations and in particular on their ability to make predictions that can be confirmed by new experiments. This approach, whose enormous success can be exemplified with Maxwell's equations for electromagnetism in the 1860's and the theory of quantum mechanics in the 1920's, has remained successful into present days with the discovery of the Higgs boson in 2012 \cite{Aad:2012tfa,Chatrchyan:2012xdj} and gravitational waves in 2015 \cite{Abbott:2016blz}.

In parallel with the main line of development, a slightly different perspective emerged and gained increasing popularity in the study of fundamental physics. The idea is to identify some fundamental principles, and then explore the implications that follow from imposing mathematical consistency. One example is Dirac's attempt in 1928 to write down a linear equation of motion for the electron quantum field \cite{Dirac:1928hu}. He found that the only way to write a consistent equation was to formulate it in terms of matrices of size at least $4\times 4$. This in turn introduced negative energy solutions interpreted as positrons \cite{Weyl1929}, which were experimentally observed a few years later \cite{Anderson:1933mb}.

Another example is the study of statistics in quantum mechanics. Under spatial rotation by $2\pi$, the wave function picks up a phase $+1$ for bosons and $-1$ for fermions. Famously this corresponds to $\parrm{\pm1}$ being the pre-image of the identity in the universal cover of the rotation group: $\SU2$ over $\SO3$. In two dimensions the universal cover of the rotation group $\SO2$ is non-compact, $\R$, and it was noted in 1977 that this would allow for a new kind of quantum statistics \cite{Leinaas:1977fm}. The corresponding particles were dubbed anyons, and were shown to play a role in the fractional quantum Hall effect \cite{Arovas:1984qr}, discovered in 1982 \cite{Tsui:1982yy}. However, even without the experimental realisation, the discovery of anyons as a consistent theory is interesting on its own, as it is investigating the boundaries for what kind of physics could at all possibly exist.

Instead of thinking about statistics, we may study the implications of spacetime symmetry in a relativistic quantum field theory. It is believed that the maximal extension with bosonic generators of the Poincar\'e group of spacetime symmetries for interacting quantum field theories is the conformal group\footnote{It is clear that the conformal group is an extension of the Poincar\'e group. In \cite{Maldacena:2011jn} it was shown that for three-dimensional theories, the existence of a higher spin current makes the theory free. In part of this thesis we will look at theories which contain infinitely many weakly broken higher spin currents $\mathcal J_\ell$.}. Theories with spacetime symmetries given by the conformal group---the Poincar\'e group extended by scalings and translations of the infinity---are called conformal field theories (CFTs).

In this thesis we are broadly interested in questions like \emph{what possible models for physics are consistent with conformal symmetry?} Again, the case of two dimensions is special, and we will here focus on $d>2$ spacetime dimensions.

Physics with conformal invariance has great importance. Apart from a large number of specific conformal field theories, some of which we will discuss shortly, CFT was given a special role at the heart of quantum field theory (QFT) through Wilsonian renormalisation \cite{Wilson:1973jj,Wilson:1974mb,Wilson:1979qg}. In Wilson's approach, physics at different energies---or equivalently different length scales---are related through the renormalisation group (RG), 
and it has been observed that the scale-invariant fixed-points of the renormalisation group flow in fact happen to be conformal field theories\footnote{In two and four dimensions, it has been shown that for unitary theories scale invariance implies conformal invariance \cite{Polchinski:1987dy,Dymarsky:2013pqa}, but it is not known if this holds in generic dimensions \cite{Nakayama:2013is}.}. 
An important consequence of this fact is that theories with different microscopic descriptions might flow to the same CFT at long distances (IR). The short-distance (UV) theory does not even need to be a quantum field theory, but could for instance be a spin chain, which is a statistical system defined on a lattice. An example is the Ising spin chain \cite{Ising1925} which consists of a spin chain in $d$ dimensions whose Hamiltonian contains a nearest neighbour interaction and a coupling to an external magnetic field. At zero magnetic field, the system undergoes a second-order phase transition between an ordered, low-temperature phase and a disordered, high-temperature phase. At the transition, the system becomes scale-invariant and is described by a CFT: the \emph{Ising model CFT}\footnote{In the following, we will refer to the CFT as just the Ising model, and use the phrase Ising spin chain to describe the statistical system.}. 
In two dimensions, the Ising model was solved exactly \cite{Onsager:1943jn}, but, interestingly, there is no exact solution to the 3d Ising model to this date.

Another way to reach the Ising model is to start from a Lagrangian quantum field theory containing a single real scalar $\phi$ with $\phi^4$ interaction. In the context of statistical physics this is said to give a Landau--Ginzburg description of the Ising model. In three dimensions, for instance, this results in a ``long RG flow'' as depicted in figure~\ref{fig:operahouse}, where the spectrum of the 3d Ising model differs substantially from that of the free theory where the flow started. This viewpoint was systematically developed by the introduction of the $\epsilon$ expansion by Wilson and Fisher \cite{Wilson:1971dc}. They considered the RG flow between the free theory and the interacting theory (Ising model) in $d=4-\epsilon$ dimensions. For small $\epsilon$, both fixed-points can be described as a perturbation from the free $4d$ theory, illustrated by the ``short RG flow'' in figure~\ref{fig:operahouse}. In practice, quantities of interest are computed by Feynman diagrams and are subsequently evaluated at the point of vanishing beta function, called the Wilson--Fisher (WF) fixed-point. The results computed through the $\epsilon$ expansion are in general given by asymptotic series in $\epsilon$, but the evaluation of suitably truncated series gives good predictions also at finite $\epsilon$, for instance at $\epsilon=1$ corresponding to three dimensions. In chapter~\ref{ch:paper2} we will study the $\epsilon$ expansion from a CFT point of view, without referring to Feynman diagrams. Generalisations of the Wilson--Fisher fixed-point, called multicritical models, were soon found and can be described by $\phi^{2\theta}$ interactions near appropriate critical dimensions $d_{\mathrm c}(\theta)$. Each such theory requires tuning $\theta-1$ different relevant couplings. We indicate the multicritical theories in figure~\ref{fig:operahouse}.
\begin{figure}
  \centering
\includegraphics{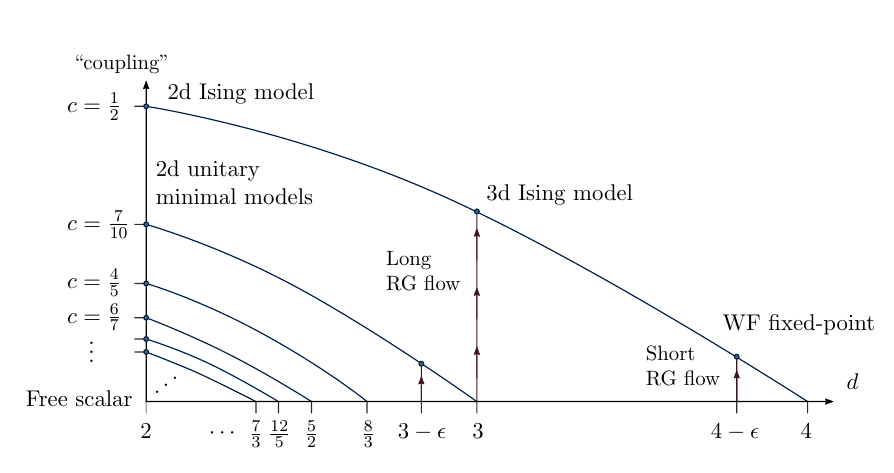}
\caption[The Wilson--Fisher fixed-point in $d=4-\epsilon$, and multicritical fixed-points in $d=d_{\mathrm c}(\theta)-\epsilon$.]{The Wilson--Fisher fixed-point in $d=4-\epsilon$, and multicritical fixed-points in $d=d_{\mathrm c}(\theta)-\epsilon$ with $d_{\mathrm c}(\theta)=\frac{2\theta}{\theta-1}$. Beyond the small $\epsilon$ limit the theories are reached from free theory by a long RG-flow tuning $\theta$ parameters. The multicritical fixed-points are connected to the unitary minimal models with (holomorphic) central charge $c=1-\frac6{(\theta+1)(\theta+2)}$.}
\label{fig:operahouse}
\end{figure}

The fact that several systems with different microscopic descriptions exhibit the same long-distance physics is referred to as universality. We say that systems with the same IR behaviour belongs to a common universality class. Typically, universality classes are characterised by the global symmetry group and the number of relevant singlet scalar operators. For instance, the Ising universality class with $\Z_2$ global symmetry contains, besides the Ising spin chain, some magnetic systems and some van der Waals gases---such as water---near the critical point of their phase diagrams. Universality classes with other global symmetry groups are common in second order phase transitions of certain materials, including structural phase transitions, and they also describe quantum critical phase transitions. 
\\

\noindent From the beginning, the renormalisation group played an important role also in high-energy physics, through the analysis of the strong force in deep inelastic scattering experiments in terms of asymptotic freedom. High-energy theorists started to develop conformal field theory as an independent subject. In an important paper by Ferrara, Gatto and Grillo in 1973 \cite{Ferrara1973}, representations of the conformal group were discussed and the operator product expansion (OPE) was analysed further in the CFT context, where it has a finite radius of convergence. Soon thereafter, Polyakov \cite{Polyakov1974} studied the implications of conformal symmetry on four-point correlators (as we will discuss shortly, two- and three-point correlators are completely fixed up to some theory-dependent constants called the CFT-data). The idea was to avoid any Lagrangian description of the theory and instead use the crossing equation to generate non-trivial equations for the CFT-data. 
The specific implementation of Polyakov, using OPE consistency for crossing-symmetric expressions, was recently revived using Mellin amplitudes to create the conformal bootstrap in Mellin space \cite{Gopakumar:2016wkt,Gopakumar:2018xqi}.

The advent of string theory directed interest towards two-dimensional CFTs, and a new version of the bootstrap appeared. In 1984 Belavin, Polyakov and A. Zamolodchikov studied the crossing equation for two-dimensional conformal field theories, with a particular focus on theories with central charge $c<1$ \cite{Belavin1984}. This was very successful and led to a complete classification of such theories, denoted \emph{minimal models}. The minimal models that satisfy unitarity, $0<c<1$, can be enumerated by an integer $\theta$, conjecturally connected to the Ising model and the multicritical theories as displayed in figure~\ref{fig:operahouse}. 

Let us explain the key ideas of the bootstrap programme in a bit more detail.
Unlike in conventional field theory, in this approach it proves useful to focus on operators rather than fields. Furthermore, the transformation properties of the correlators of these operators can be taken as axioms for the CFT. A CFT contains a distinguished set of operators called conformal primaries and the main observables are correlators of these primary operators. The OPE between two operators is convergent away from other operator insertions, which implies that we can reduce any $n$-point function to a sum over $(n-1)$-point functions,
\beq{\label{eq:OPEintro}
\big\langle
\contraction{}{\O_1}{(x)}{\O_2}
 \O_1(x)\O_2(0)
 \O_3(x_3)\cdots \O_n(x_n)\big\rangle
 =
 \sum _k c_{12k}\big\langle C(x,\de)\O_k(0)\O_3(x_3)\cdots \O_n(x_n)\big\rangle,
}
where $\de_\mu=\pp{x^\mu}$. The coefficients $c_{ijk}$ are theory-dependent OPE coefficients and $C(x,\de)$ are theory-independent functions depending only on the scaling dimensions and spins of the involved operators. Ultimately, any correlator can be reduced to a sum of two- or three-point functions, which are given in terms of the OPE coefficients and scaling dimensions in the theory, collectively referred to as the CFT-data.

From applying the OPE in two different ways within the four-point function, one can extract the \emph{crossing equation},
\beq{\label{eq:crossingintro}
\big\langle
\contraction[1.2ex]{\O(x_1}{)\vphantom{\rule{1pt}{2.7ex}}}{\O(x_2)\O(x_3}{)\vphantom{j}}
\contraction{}{\O(x_1}{)\vphantom{\rule{1pt}{2.7ex}}}{\O(x_2}
\contraction{\O(x_1)\vphantom{\rule{1pt}{2.7ex}}\O(x_2)}{\O(x_3}{)\vphantom{j}}{\O(x_4}
\O(x_1)\vphantom{\rule{1pt}{2.7ex}}\O(x_2)\O(x_3)\vphantom{j}\O(x_4)\big\rangle
=
\big\langle
\contraction[2.2ex]{}{\O(x_1}{)\O(x_2\vphantom{\rule{1pt}{2.7ex}})\O(x_3)}{\O(x_4}
\contraction{\O(x_1)}{\O(x_2}{\vphantom{\rule{1pt}{2.7ex}})}{\O(x_3}
\contraction[1.2ex]{\O(x_1)\O(x_2}{\vphantom{\rule{1pt}{2.7ex}}}{}{}{}
\O(x_1)\O(x_2\vphantom{\rule{1pt}{2.7ex}})\O(x_3)\O(x_4)
\big\rangle.
}
In this highly non-trivial equation the CFT-data enters in different ways in the left-hand and right-hand sides, referred to as the \emph{direct} and the \emph{crossed} channel, and as a functional equation it contains a vast amount of information. The goal of the conformal bootstrap is to use the crossing equation to harvest as many constraints as possible on the CFT-data, and ultimately to fix all involved quantities. 

The bootstrap was particularly powerful in the case of two dimensions due to enhanced symmetry from (global) conformal symmetry to Virasoro symmetry. This means that the conformal multiplets, which contain a primary operator and its descendants (constructed by action of $\de$), group into Virasoro multiplets. We illustrate this in the top left corner of figure~\ref{fig:sixcases}. For instance, the minimal models contain only a finite number of Virasoro multiplets, whose CFT-data could be completely determined. Conformal field theory in two dimensions has expanded to a large body of knowledge---an important result is the construction of the Wess--Zumino--Witten models \cite{Wess:1971yu,Witten:1983tw}---and is now established textbook material\footnote{The standard reference is \cite{DiFrancesco}, see also the lecture notes \cite{Ginsparg:1988ui} and other textbooks \cite{Sergei1995,Polchinski:1998rq,Schottenloher2008}. Attempts to rigorously axiomatise 2d CFT have been made, for instance by Moore and Seiberg \cite{Moore:1988qv} and by Segal \cite{Segal2002}.}. 

In higher dimensions the progress was slower. A set of conventions for higher-dimensional CFTs was given by Osborn and Petkou in 1993 \cite{Osborn:1993cr} and the conformal field theories behind the critical phenomena, in particular the critical $\OO N$ models, were studied from a CFT perspective in a series of papers \cite{Lang:1992zw,Lang:1992pp,Petkou:1994ad} identifying the set of conformal primaries and computing the central charges. The computation of critical exponents, which corresponds to a subset of the CFT-data, using the $\epsilon$ expansion was pushed further \cite{Kleinert2001}, and the collective knowledge about critical phenomena around the year 2000 was collected in \cite{Pelissetto:2000ek}.

One important motivation for increasing interest in CFT came through the AdS-CFT correspondence, or \emph{holography}, relating gravity in $(d+1)$-dimensional anti de~Sitter spacetime to strongly coupled conformal field theory on the $d$-dimensional asymptotic boundary \cite{Maldacena:1997re,Witten:1998qj,AGMOO}. The involved CFTs often have superconformal symmetry, combining conformal symmetry with supersymmetry, and the prime example is the 4d maximally supersymmetric Yang--Mills theory with gauge group $\SU N$ (\NN4 SYM). In the large $N$ limit it is dual to type IIB string theory in $\mathrm{AdS_5}\times S^5$, which at infinite coupling reduces to supergravity. Superconformal symmetry facilitates a variety of powerful methods such as integrability \cite{Beisert:2010jr} and supersymmetric localisation \cite{Pestun:2016zxk}. The intense activity within holography also led to important technical results, such as explicit results for conformal (and superconformal) blocks, which sum up the contribution to a four-point function from a given (super)conformal primary and its descendants. These results, many of which were obtained by Dolan and Osborn \cite{Dolan:2000ut,Dolan:2002zh,Dolan:2003hv}, are essential for what comes next, and for the computations in this thesis.
\\

\noindent In 2008 the ideas of conformal bootstrap were revived in higher dimensions---focussing initially on four dimensions---in the seminal work of Rattazzi, Rychkov, Tonni and Vichi \cite{Rattazzi2008}. The leading principle was to investigate the space of allowed CFT-data, and therefore the space of allowed conformal field theories, by using the mathematical consistency built into the crossing equation, without making use of any Lagrangians or perturbative limits. More precisely, the crossing equation was studied numerically in an expansion around a special kinematic configuration, and positivity of squares of real-valued OPE coefficients was used to rule out whole regions of CFT-data. This idea, which we refer to as the \emph{numerical conformal bootstrap}, has been refined and generated a wealth of results over the past decade, see \cite{Poland:2018epd} for a review, \cite{Poland:2016chs} for a brief summary and \cite{Chester:2019wfx} for a comprehensive and pedagogical introduction. Flagship results include the precise determination of the critical exponents in $\OO N$ models in three dimensions \cite{Kos:2016ysd}, where the results in the Ising \cite{ElShowk2012,Kos:2016ysd} and $\OO 2$ \cite{Chester:2019ifh} case are the most precise available by any method.

In this thesis we will focus on a parallel development, namely \emph{analytic conformal bootstrap}. We introduce the main objectives of this programme by figure~\ref{fig:sixcases}, where we illustrate the spectrum of a conformal field theory, given in terms of the set of primary operators and their scaling dimensions. Without any further assumptions, the axioms of CFT allow for arbitrary and independent values of the scaling dimensions of the various primary operators (up to certain unitarity bounds). This corresponds to the top centre part of figure~\ref{fig:sixcases}. In the cases of CFTs in two dimensions or supersymmetric CFTs in any dimension, the existence of additional symmetries induces an organisation of the conformal primaries into Virasoro multiplets or superconformal multiplets respectively. The scaling dimensions and OPE coefficients of all operators within in such an enlarged multiplet are all related, reducing the number of free parameters and facilitating a wider range of computational methods. 
\begin{figure}
\centering
\includegraphics{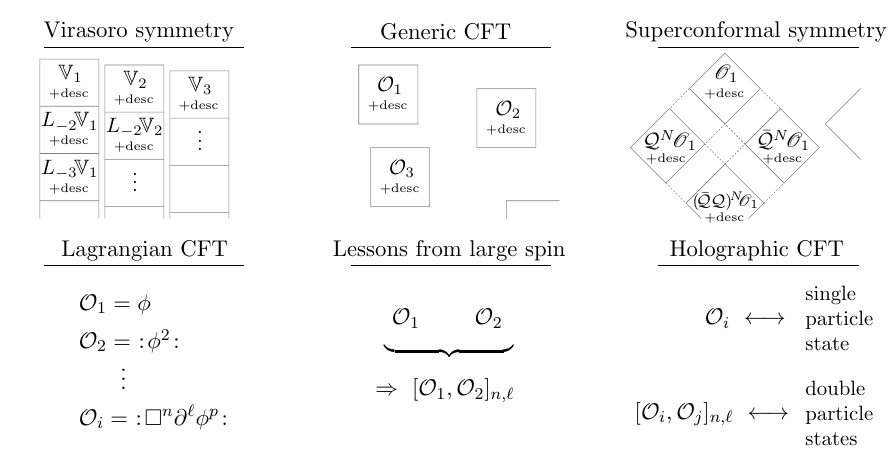}
\caption[Additional symmetries relate properties of different conformal primaries.]{In the cases of 2d CFTs or SCFTs, additional symmetries relate properties of different conformal primaries. In Lagrangian CFTs and holographic CFTs, there are direct methods for constructing and labelling operators. In the generic case, all operators can \emph{a priori} have independent CFT-data. The purpose of analytic bootstrap is to bring structure into this picture by studying operators with large spin.
}\label{fig:sixcases}
\end{figure}

While two-dimensional CFTs and supersymmetric CFTs are highly structured, one cannot \emph{a priori} infer much about the spectrum of a generic CFT. The goal of the analytic conformal bootstrap in higher dimensions is to overcome this gap. There are two main objectives: on the one hand, to make universal statements valid in any CFT, and on the other hand, to use the power of conformal invariance to deduce more properties of specific models.

A key concept in this quest is a \emph{twist family} of operators\footnote{By ``operator'' we here refer to a conformal primary operator. Descendant operators will be explicitly called descendants.}. This consists of a family of operators, parametrised by spin, with approximately equal value of the \emph{twist}, defined as the difference between scaling dimension and spin: $\tau_\ell=\Delta_\ell-\ell$. Such operators naturally occur in weakly coupled Lagrangian CFTs as well as in strongly coupled holographic CFTs. In the former case, twist families of operators are constructed from the fundamental fields, taking the form $\phi\square^n\de^{\{\mu_1}\cdots\de^{\mu_\ell\}}\phi$ with a twist of $2\Delta_\phi+2n+\gamma_\ell$, where $\gamma_\ell$ are small anomalous dimensions, $\square=\de^\mu\de_\mu$, and curly brackets denote symmetrisation and removal of traces. In the latter case, operators in a twist family have a natural definition as the operators dual to rotation modes of weakly interacting multiparticle states in AdS. 

In the examples studied, the operators in a twist family were observed to have collective properties. In Lagrangian theories, the anomalous dimensions could be parametrised in closed form in terms of the spin, and this played an important role in deep inelastic scattering in QCD, even beyond the strictly conformal limit. In this setting, Nachtmann's theorem \cite{Nachtmann:1973mr} further showed that the function $\gamma_\ell$ of the leading twist family takes a convex shape. Much later, important lessons were drawn in \cite{AldayMaldacena2007} about the large spin limit in conformal gauge theories, where it was shown that the logarithmic scaling of anomalous dimensions at large spin can be understood as the corresponding linear scaling with energy of a flux tube in an auxiliary theory.

In two papers from 2012 \cite{Fitzpatrick:2012yx,Komargodski:2012ek} it was independently proven, using crossing symmetry, that such twist families of operators must exist in any CFT in dimension $d>2$. This result is usually taken as the starting point for the analytic conformal bootstrap. The statement is that given any two operators $\O_1$ and $\O_2$ with twists $\tau_1$ and $\tau_2$, there must exist an infinite family of operators with twists $\tau_\ell$ approaching $\tau_\infty=\tau_1+\tau_2$ as $\ell\to\infty$. As we will describe in more detail in the next chapter, this follows from the presence of the identity operator in the crossed channel of the correlator $\expv{\O_1(x_1)\O_2(x_2)\O_2(x_3)\O_1(x_4)}$. Similarly, other operators in the crossed channel induce corrections to the twist, which means that we can non-perturbatively define anomalous dimensions by $\gamma_\ell=\tau_\ell-(\tau_1+\tau_2)$.

Subsequently, more systematics were developed for the analytic bootstrap. In \cite{AldayBissiLuk2015} it was shown that the anomalous dimensions $\gamma_\ell$, as well as corrections to OPE coefficients, naturally expand in inverse powers of the conformal spin, defined as $J^2=(\Delta_\ell+\ell)(\Delta_\ell+\ell-2)/4$. In \cite{AldayZhiboedov2015}, these principles were used to reproduce anomalous dimensions in a number of perturbative CFTs, and in \cite{Simmons-Duffin:2016wlq}, the methods, now dubbed the \emph{lightcone bootstrap}, were used to quantitatively explain a large part of the spectrum of the 3d Ising model, computed in the same paper. In 2016, a completely systematic framework named \emph{large spin perturbation theory} was introduced by Alday \cite{Alday2016,Alday2016b}. This framework facilitates significant progress in both generic and specific CFTs in the presence of a small expansion parameter, which may for instance be a small coupling constant, a dimensional $\epsilon$, or the inverse number of degrees of freedom. In this thesis we show that large spin perturbation theory not only elucidates the structure of many conformal field theories, but it is also powerful enough to generate new results beyond other methods.

The final ingredient to achieve this goal was given by Caron-Huot in the \emph{Lorentzian inversion formula} \cite{Caron-Huot2017}. It puts on firm grounds the empirical observations from all known examples that the functions $\gamma_\ell$ extend in exact form all the way down to some finite spin, typically $0$, $1$ or $2$. The anomalous dimensions and OPE coefficients, collectively the CFT-data, are given in terms of an integral over a compact domain of the \emph{double-discontinuity} of the correlator weighted against a kernel. The double-discontinuity restricts to terms containing enhanced singularities, which means that the CFT-data can be computed without knowing the full correlator. This can be phrased as a dispersion relation, meaning that the correlator can be reproduced from its double-discontinuity, up to contributions from spin $0$ or $1$ \cite{Carmi:2019cub}.
\\

\noindent The purpose of this thesis is to demonstrate how large spin perturbation theory can be turned into a powerful and systematic framework for studying perturbative conformal field theories, by which we mean CFTs equipped with any expansion parameter, not necessarily the coupling constant. This is achieved through a number of examples where the method is successfully applied to some of the most well-studied CFTs. The framework follows the analytic bootstrap approach, which means that it builds only on consistency conditions and on the axioms of conformal field theory, without any reference to Lagrangians or standard perturbation theory, and it does not make use of specific methods such as supersymmetric localisation. 

The thesis starts with a comprehensive review, which includes a practical guide to large spin perturbation theory, followed by a number of concrete examples. These examples are given as a demonstration of the method, but the results generated there are also contributions to the literature. We do not aim to cover all aspects of higher-dimensional CFTs and we refer instead to the excellent reviews on the subject \cite{Qualls:2015qjb,Rychkov:2016iqz,TASIBootstrap,Poland:2016chs,Poland:2018epd}. However, we do give the essential ingredients and present the ideas that lead up to work in this thesis. This is the purpose of chapter~\ref{ch:analyticstudy}, which finishes by outlining the method in terms of the following procedure:
\begin{enumerate}
\item Find operators that contribute at each order in the expansion parameter.
\item Compute their double-discontinuity in the crossed channel.
\item Find the corresponding corrections to the CFT-data using the Lorentzian inversion formula.
\item Where applicable, use consistency conditions to fix any undetermined constants and/or iterate the procedure.
\end{enumerate}
Chapter~\ref{ch:analyticstudy} also derives the precise version of the inversion formula used in large spin perturbation theory from the more general formula in \cite{Caron-Huot2017}, and it contains a presentation of some of the theories studied in detail in the later chapters.

Chapter~\ref{ch:practical} takes the form of a practical guide, giving more details on how to execute each of the steps given above. The presentation is encyclopaedic, and the purpose is to give a useful overview of the method, as an alternative to the often technical original publications. 

The following chapters contain explicit examples. In chapter~\ref{ch:paper2} we demonstrate how large spin perturbation theory combined with the Lorentzian inversion formula can be applied to the Wilson--Fisher fixed-point in the $d=4-\epsilon$ expansion, where new results are generated at order $\epsilon^4$, for instance for the central charge. In chapter~\ref{ch:paper1} we study conformal gauge theories and find the most general form of the order $g$ four-point function of a bilinear scalar operator. This reproduces known results in the \NN4 super Yang--Mills theory but applies to any theory satisfying a short list of assumptions. 

Chapter~\ref{ch:more} is divided into smaller sections, each giving yet another application of the framework but recycling some technical results from previous sections and chapters. While sections~\ref{sec:paper4short} and \ref{sec:paper5short}---which cover critical $\phi^4$ theories with $\OO N$ and general global symmetry---are based on work presented elsewhere, sections~\ref{sec:multicritical} and \ref{sec:WFfrompaper1} contain previously unpublished results. In section~\ref{sec:multicritical} we collectively study the multicritical theories described in figure~\ref{fig:operahouse} and derive new results for OPE coefficients, including the central charge. In section~\ref{sec:WFfrompaper1} we show that the results from chapter~\ref{ch:paper1} can be used to compute the order $\epsilon$ four-point function of the $\varphi^2$ operator in the Wilson--Fisher fixed-point.

We finish with a discussion in chapter~\ref{ch:disc}, where we summarise and give some outlook. This is followed by some appendices with technical details from the chapters described above.

\chapter{Analytic study of conformal field theories}\label{ch:analyticstudy}

In this chapter we review the necessary background for the analytic study of conformal field theories. After giving an overview of the fundamental definitions, we discuss Lorentzian kinematics and give some references on conformal blocks. We then introduce conventions regarding the operator content in CFTs and give three explicit examples. This is followed by a review of the developments within the analytic bootstrap, including the Lorentzian inversion formula. We give a precise derivation of the perturbative inversion formula, which plays a central role in the following chapters, and finish by outlining the principles of large spin perturbation theory.

\section{What is a CFT?}

A brief way of defining a conformal field theory is that it is a quantum field theory invariant under conformal symmetry. This leads to a description of CFTs built on the understanding of quantum field theory (QFT), where conformal symmetry is used to distil properties that are special to CFTs. One such property is that all fields are massless, i.e.\ the theory has no mass gap. However, while much of our understanding of QFT relies on the possibility of writing down a Lagrangian that describes a given theory, at least at weak coupling, it is possible to give a characterisation of conformal field theory that is independent of this construction. It is this perspective that we will take here. It leads to a more concrete, but at the same time more mathematically rigorous, definition of a conformal field theory.

We define a conformal field theory as a consistent set of operators together with correlation functions (\emph{correlators}) of these operators with appropriate transformation properties under the conformal group. A conformal transformation between subsets of Euclidean manifolds is an angle-preserving map
\beq{\label{eq:metrictransform}
x\mapsto x'=\psi(x),\quad (\psi^*g')_{\mu\nu}(x)=\Omega(x)^{2}g_{\mu\nu}(x),
}
where $\Omega(x)$ is a positive scalar function and $\psi^*g'$ denotes the pullback of the metric.

A formal approach, as in Segal's axiomatisation \cite{Segal2002}, is to view the CFT itself as a set of operators together with a framework (a set of functors) which assigns, to a given manifold, the set of correlators of its operators on that manifold. In this sense, the CFT is a tool that can be used to probe the geometry of the manifold. This philosophy is particularly useful in two dimensions, where any manifold is locally conformally flat. In this thesis, which is restricted to the case of $d>2$ dimensions, we focus our considerations on conformally flat manifolds, and therefore study flat space $\R^d$. After removing the origin, this is also conformally equivalent to the cylinder $\R\times S^{d-1}$ through a radial foliation, which implies that correlators on the cylinder are directly related to correlators on $\R^d$ \footnote{The perhaps most important manifold not conformally equivalent to $\R^d$ is $S^1\times \R^{d-1}$. Probing this geometry gives access to observables at finite temperature, where the length of the circle can be related to the inverse temperature. In \cite{Iliesiu:2018fao} a bootstrap analysis was developed for this geometry and in \cite{Iliesiu:2018zlz} observables for the Ising model at finite temperature were computed.}.
In addition, we will limit the set of observables to correlators of local operators. This excludes for instance Wilson loops, as well as some interesting non-local operators such as light-ray operators and shadow operators \cite{Kravchuk:2018htv}. 

The fundamentals of conformal field theories in flat $\R^d$ are well-documented, for instance briefly in \cite{DiFrancesco} and in more detail in some lecture notes \cite{Qualls:2015qjb,Rychkov:2016iqz,TASIBootstrap}\footnote{See also \cite{Simmons-DuffinPhys229} for some comprehensive but unfinished lecture notes.}. We will not repeat all details here, but instead just outline the main results.

The conformal group of Euclidean $\R^d$ is the Poincar\'e group extended by dilatations $x^\mu\mapsto\lambda x^\mu$ and special conformal transformations (translations of infinity), and it is isomorphic to the special orthogonal group $\SO{d+1,1}$. Local conformal operators, according to Mack's classification \cite{Mack:1975je}, are either primary operators (primaries) or descendant operators. A conformal primary may be defined as an operator which transforms locally under the conformal transformations~\eqref{eq:metrictransform}. For a scalar primary operator $\O$ this takes the form
\beq{\label{eq:primarytransformation}
\O(x')=\Omega(x)^{-\Delta_\O}\O(x),
}
which defines the \emph{scaling dimension} $\Delta_\O$. Primary operators also transform in irreducible representations (irreps) of the Lorentz group and in irreps of any potential global symmetry group. The transformation property~\eqref{eq:primarytransformation} implies that primaries inserted at the origin are annihilated by the generators $K_\mu$ of special conformal transformations \cite{Rychkov:2016iqz}. Descendant operators are generated by the action of the generator of momentum, $-iP_\mu=\de_\mu:=\pp {x^\mu}$, conjugate to $K_\mu$: $[P_\mu,K^\nu]=2i(D\ten\delta{_\mu^\nu}-\ten M{_\mu^\nu})$, where $D$ and $\ten M{_\mu^\nu}$ generate dilatation and Lorentz transformations respectively. For descendants, the transformation rule \eqref{eq:primarytransformation} holds only for constant dilatations, and it is corrected by derivatives in the case of more general transformations. The set of descendants generated from a given primary forms a conformal multiplet, and all properties of these operators are related to the corresponding primary. Therefore, we limit our considerations to primaries, and in what follows we refer to primary operators just as operators. 

From invariance under dilatation and special conformal transformations it follows that the two- and three-point correlation functions of scalar primaries take the form
\beqa{\label{eq:scalaroperatorsnormalisation}
\expv{\O_i(x_1)\O_j(x_2)}&=\frac{\delta_{ij}}{x_{12}^{2\Delta_i}}, 
\\
\expv{\O_i(x_1)\O_j(x_2)\O_k(x_3)}&=\frac{c_{\O_i\O_j\O_k}}{x_{12}^{\Delta_i+\Delta_j-\Delta_k}x_{23}^{\Delta_j+\Delta_k-\Delta_i}x_{13}^{\Delta_k+\Delta_i-\Delta_j}}, \quad x_{ij}=\sqrt{(x_i-x_j)^2},
}
where the set of OPE coefficients $c_{\O_i\O_j\O_k}$ and scaling dimensions $\Delta_{\O_i}$ forms the CFT-data and carries the dynamical information of the CFT\footnote{We have normalised scalar operators by choosing a diagonal basis \eqref{eq:scalaroperatorsnormalisation}. For spinning operators, there are multiple conventions in the literature. Our conventions will be clear from the normalisation of the conformal blocks below. Notice that in the presence of a global $\UU1$ symmetry it is customary to assign the non-vanishing two-point functions to charge conjugate pairs.}. 

The final essential ingredient needed to define a CFT is the state-operator correspondence. It implies that any quantum state $\ket\psi$ defined on a sphere $S^{d-1}$ in Euclidean $\R^d$ can be written as a linear combination of primary and descendant operators inserted at the centre $x_0^\mu$ of the sphere:
\beq{\label{eq:operatorstatecorrespondence}
\ket\psi=\sum_\O f_\O\, \O(x_0)\!\ket0.
}
If we take $\ket\psi$ to be the state $\O_1(x)\O_2(0)\!\ket0$ for primaries $\O_1$, $\O_2$, we get the operator product expansion (OPE) given in \eqref{eq:OPEintro} in the introduction. Importantly, in a CFT the OPE coefficients of descendants are related to those of the primary, and the coefficient functions $C(x,\de)$ in \eqref{eq:OPEintro} depend only on the quantum numbers of the involved primary operators.
In the case of scalar operators $\O_1$ and $\O_2$, the conformal primaries in the OPE must transform in a traceless symmetric representation of the Lorentz group, and can therefore be characterised by their scaling dimension $\Delta$ and spin $\ell$, the latter defined as the rank of the representation. We write $\O^{\mu_1\cdots\mu_\ell}$ and assume that the symmetrisation and removal of traces is understood.

\section{Lorentzian four-point functions}

In the previous section we saw that the two- and three-point functions of conformal primaries are completely fixed in terms of the CFT-data. The first correlator to carry non-trivial kinematics is therefore the four-point function. Moving from Euclidean $\R^d$ to Lorentzian $\R^{d-1,1}$ spacetime introduces an interesting kinematic limit for four-point functions, the \emph{lightcone limit}, where operators become collinear. In this section we describe this and other relevant limits for Lorentzian four-point functions.

Conformal symmetry can be used to map any four-point configuration onto a two-dimensional plane, which means that the spacetime dependence can be parametrised by two independent variables, called the conformal cross-ratios,
\beq{
u=z \zb=\frac{x_{12}^2x_{34}^2}{x_{13}^2x_{24}^2}, \qquad v=(1-z)(1-\zb)=\frac{x_{14}^2x_{23}^2}{x_{13}^2x_{24}^2}.
}
Throughout this thesis we will use $(u,v)$ and $(z,\zb)$ interchangeably. In slight abuse of notation, we will therefore write $\G(u,v)=\G(z,\zb)$ for the four-point function of identical external scalar operators $\phi$, as defined by
\beq{\label{eq:fourpointsetup}
\expv{\phi(x_1)\phi(x_2)\phi(x_3)\phi(x_4)}=\frac{1}{x_{12}^{2\Delta_\phi}x_{34}^{2\Delta_\phi}}\G(u,v).
}
Here we factored out a pair of two-point functions such that the contribution from the identity operator $\1$ in the pairwise OPEs $\phi(x_1)\times\phi(x_2)$ and $\phi(x_3)\times\phi(x_4)$ is just~$1$. In this notation, the crossing equation~\eqref{eq:crossingintro} takes the form\footnote{There is also another crossing equation which follows from exchanging the operators at $x_1$ and $x_2$. It takes the form $\mathcal{G}(u,v)=\mathcal{G}\left(\frac{u}{v},\frac{1}{v}\right)$ but it will not be important in this thesis.}
\beq{\label{eq:crossing}
\G(u,v)=\parr{\frac uv}^{\Delta_\phi}\G(v,u).
}

The OPE expansions within the four-point function can be organised as
\beq{\label{eq:CBexp}
\G(u,v)=\sum_{\O}c^2_{\phi\phi\O}G^{(d)}_{\Delta_\O,\ell_\O}(u,v),
}
where we have introduced \emph{the conformal blocks} $G^{(d)}_{\Delta,\ell}(u,v)$. In the OPE expansion of the four-point function they sum up the contributions from the primary $\O$ and all its descendants, and we refer to \eqref{eq:CBexp} as the \emph{conformal block expansion} of the correlator. The conformal blocks are theory-independent functions of the cross-ratios and depend only on the scaling dimension $\Delta$ and spin $\ell$ of the exchanged operators. The CFT-data enters the conformal block expansion through the parameters $\Delta_\O$ and $\ell_\O$~\footnote{Since we are considering only scalar external operators $\phi$, the exchanged operators transform in the traceless symmetric representations of the Lorentz group, uniquely labelled by an integer $\ell$ (corresponding to one-row Young tableaux of length $\ell$).} of the primaries, and through the squared OPE coefficients $c^2_{\phi\phi\O}$. 

Notice now the advantage of writing the crossing equation in the form~\eqref{eq:crossing}. The expressions in the direct channel (left-hand side) and the crossed channel (right-hand side) take the same functional form, where both sides have an expansion \eqref{eq:CBexp}, just with $u$ and $v$ exchanged. The conformal bootstrap aims to harvest as much information as possible about the CFT-data from this highly complicated equation.

\subsection{Lorentzian kinematics}

Let us take a closer look at the kinematics of the four-point function \eqref{eq:fourpointsetup} in Lorentzian signature. As described above, any configuration is conformally equivalent to one where the four operator insertion points $x_i$ are confined to a plane. Restricting to space-like separation, we can, up to permutation of the $x_i$, parametrise the plane by a time-like and a space-like direction $(x^0,x^1)$ and use additional conformal symmetry to place (gauge-fix) the operators at
\beq{\label{eq:lorentzpoints}
x_1=0, \quad x_2=(x_2^0,x_2^1,0,\ldots), \quad x_3=(0,1,0,\ldots), \quad x_4=\infty.
}
We define $z=x^1_2-x_2^0$ and $\zb=x_2^1+x_2^0$, by which the space-like separation corresponds to the values $z,\zb\in(0,1)$. This is illustrated in figure~\ref{fig:diamond}. In this region the conformal blocks are real-valued regular functions of $z$ and $\zb$ \cite{Rattazzi2008}.
\begin{figure}
\centering
\includegraphics{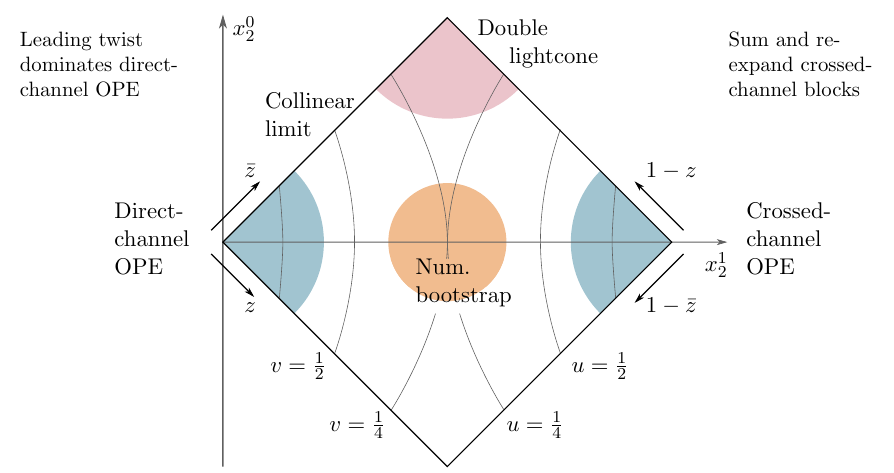}
\caption[A spacelike configuration in Lorentzian kinematics.]{A spacelike configuration in Lorentzian kinematics with operators $\O_1$, $\O_3$ and $\O_4$ at fixed positions and $\O_2$ confined in the $(x_2^1,x_2^0)$ plane. The cross-ratios and the relevant kinematic limits are indicated.}\label{fig:diamond}
\end{figure}
The Lorentzian configuration~\eqref{eq:lorentzpoints} can be reached by a Wick-rotation from Euclidean signature, where $(z,\zb)$ are complex and each other's conjugate. 

The convergence of the OPE in Euclidean signature is guaranteed by the operator-state correspondence \eqref{eq:operatorstatecorrespondence}, and carries over to Lorentzian signature, see e.g.\ \cite{Hartman:2016lgu}. In the limit $z,\zb\ll1$~\footnote{It is often necessary to separate the hierarchy between $z$ and $\zb$, which can be realised from the appearance of terms like $(z-\zb)^{-1}$ for instance in the four-dimensional conformal blocks \eqref{eq:CB4d} below. In these cases we assume that we have $z\ll\zb\ll1$.}, operators with the smallest scaling dimensions dominate the conformal block expansion, and we refer to this limit as the OPE limit. Conversely, the limit $1-z,1-\zb\ll1$ corresponds to the crossed-channel OPE limit.

In the OPE limit, the conformal blocks have the following expansion
\beq{
G_{\Delta,\ell}^{(d)}(u,v)\sim u^{\frac\tau2}(1-v)^\ell, \qquad\tau=\Delta-\ell,
}
which has two important implications. Firstly, it shows that the twist $\tau=\Delta-\ell$ is a useful label for operators since all operators with equal twist can be collected into a common $u$ power. Secondly, it means that given a correlator $\G(u,v)$ in closed form, one can expand both sides of \eqref{eq:CBexp} order by order in $u$ and $1-v$, and compute the involved OPE coefficients and scaling dimensions one by one. We refer to this procedure as performing the conformal block decomposition. This decomposition may equally well be done in the variables $(z,\zb)$ in the expansion $z\ll\zb\ll1$.

For the purpose of the numerical bootstrap, an expansion around the crossing symmetric point $z=\zb=\frac12$ is particularly useful, since it treats the direct and the crossed channel on an equal footing. However, for this thesis we will instead make use of an inherently Lorentzian regime, namely the lightcone limit. Since the notion of the lightcone limit sometimes is ambiguous in the literature, we will always use the two notions \emph{collinear limit} and \emph{double lightcone limit}, and for additional clarity we summarise our conventions for kinematic variables in table~\ref{tab:kinematicconventions} as well as in figure~\ref{fig:diamond}.

\begin{table}
\centering
\caption{Conventions for the kinematic variables, used throughout this thesis.}\label{tab:kinematicconventions}
{\small
\vspace{4pt}
\begin{tabular}{|lll|}\hline
&&\\[-2ex]
\multirow{2}{*}{Cross-ratios} & $u=\dfrac{x_{12}^2x_{34}^2}{x_{13}^2x_{24}^2}$ & $v=\dfrac{x_{14}^2x_{23}^2}{x_{13}^2x_{24}^2}$
\\[1.9ex]
&$u=z\zb$ & $v=(1-z)(1-\zb)$
\\\hline &&\\[-2ex]
Auxiliary cross-ratios &&$w=1-\zb,\qquad \xit = \dfrac{1-\zb}{\zb}$
\\[1.9ex]\hline &&\\[-2ex]
Harmonic superspace & $\alpha\alphab=\dfrac{y_{12}^2y_{34}^2}{y_{13}^2y_{24}^2}$ & $(1-\alpha)(1-\alphab)=\dfrac{y_{14}^2y_{23}^2}{y_{13}^2y_{24}^2}$
\\[1.9ex]
\hline
 \multirow{2}{*}{OPE limit} &$u\to 0$& $v\to 1$
\\  &$z\to 0$& $\zb\to 0$
\\
\hline
\multirow{2}{*}{Collinear limit}& $u\to 0$ & any $v$
\\  & $z\to 0$ & any $\zb$
\\\hline
\multirow{2}{*}{Double lightcone limit} & $u\to0$ & $v \to 0$
\\& $z\to0$ & $\zb \to 1$
\\
\hline
\multirow{2}{*}{Crossing} & $u\mapsto v$ & $v\mapsto u$
\\
 & $z\mapsto 1-\zb$ & $\zb\mapsto 1-z$
 \\\hline
\end{tabular}
}
\end{table}

The collinear limit is defined as $z\to0$ for any value of $\zb$. This corresponds to the point $x_2$ becoming null separated from $x_1$. Defining $x=x_{21}$ we see that this limit is characterised by the vanishing of the four-vector norm $x^2\to0$ in a limit where some of the components $x^\mu$ remain finite. The importance of this limit dates back to deep inelastic scattering experiments with hadrons, where an approximate conformal invariance was understood to control the operator product expansion, see e.g.\ \cite{PeskinSchroeder,Braun:2003rp}. In the collinear limit, the OPE is dominated by the operator with the lowest value of twist for each spin. This can be seen from the detailed form of the OPE,
\beq{\label{eq:OPEexplicit}
\O_2(x)\O_1(0)=\sum_{\O_\ell}c_{21\O}{(x^2)}^{\!\frac{\Delta_{\O}-\Delta_1-\Delta_2}2} \,\frac{x^{\mu_1}\cdots x^{\mu_{\ell}}-\text{traces}}{(x^2)^{\ell/2}}\parr{\O_{\mu_1\cdots\mu_{\ell}}(0)+\text{desc.}},
}
from which one can read off that the leading singularities as $x^2\to0$, keeping $x^\mu$ finite, scale as $\tau=\Delta_\O-\ell$.

The double lightcone limit is relevant when we discuss crossing in Lorentzian kinematics. It occurs when $x_2$ becomes collinear with both $x_1$ and $x_3$. The double lightcone limit is dominated by operators of large spin, and studying the crossing equation in this limit has become known as lightcone bootstrap, of which we can view large spin perturbation theory (LSPT) as a special case.

When dealing with expansions in the double lightcone limit we should take
\beq{\label{eq:lightconepreciselimit}
z\ll1-\zb\ll1.
}
This explicitly breaks the symmetry of the double lightcone towards the direct channel collinear limit. This means that in this thesis we will treat the direct channel and crossed channel differently. In the direct channel, we are always free to restrict ourselves to the leading twist family. When considering crossed-channel operators, however, we have to be careful. In the expansion of the crossed channel in the limit~\eqref{eq:lightconepreciselimit}, it is in general not enough to expand conformal blocks one by one. Instead, one needs to compute sums over twist families before taking $z\to0$. We indicate this in figure~\ref{fig:diamond}. 

Both the collinear and the double lightcone limit translate to corresponding limits for the cross-ratios $u$ and $v$, and all conventions for kinematics used in this thesis are collected in table~\ref{tab:kinematicconventions}.
From the conformal blocks in the collinear limit, \eqref{eq:collblocks} below, we can derive an approximate relation between $\zb$ and the spins that dominate the contribution to the four-point function. The dominant contributions come from spins $\ell$ of order
\beq{
\ell\sim\frac1{\sqrt{1-\zb}}\,,
}
and we give more detail on this in section~\ref{sec:lightconecrossing} and~\ref{sec:kernel.method}. 

Apart from the lightcone limit, which plays the central role in this thesis, there is another important intrinsically Lorentzian limit, denoted the \emph{Regge limit} \cite{Cornalba:2006xk,Costa:2012cb}. In a CFT four-point configuration with pairwise timelike separated operators at $x_4^0>x_1^0$ and $x_2^0>x_3^0$, the Regge limit arises when both pairs $(x_1,x_2)$ and $(x_3,x_4)$ approach null separation. In holographic theories, this corresponds to high-energy scattering in the dual AdS space. As explained in e.g.\ \cite{Kravchuk:2018htv}, in terms of the cross-rations this projects onto the OPE limit $z,\zb\to0$, but with $\zb$ evaluated on the second sheet after analytically continuing around the point $\zb=1$. The OPE is not convergent since the points $x_1$ and $x_2$ are not near each other. Instead each conformal block has a scaling that schematically looks like
\beq{\label{eq:reggescaling}
G^{(d)}_{\Delta,\ell}(z,\zb)\sim \sqrt{z\zb}^{1-\ell}.
}
However, any physical correlator is expected to be bounded in this limit\footnote{More precisely by a scaling of the form \eqref{eq:reggescaling}, where $\ell$ is taken to be the Regge/Pomeron intercept $\ell_0$, which has a value $\ell_0<2$ \cite{Kravchuk:2018htv}.}. We will not review the various applications of the Regge limit, but will refer back to the scaling \eqref{eq:reggescaling} when we discuss the Lorentzian inversion formula in section~\ref{sec:Inversionformula}.

\subsection{Conformal blockology}\label{sec:blockology}
From the discussion above, it should be clear that any method in conformal bootstrap will rely heavily on the conformal blocks $G^{(d)}_{\Delta,\ell}(u,v)$ appearing in the decomposition \eqref{eq:CBexp}. In general, these functions are not known in closed form, which means that we depend on various technologies for evaluating the blocks in certain expansions.

In the most general setting, the conformal blocks are functions of the cross-ratios, depending on the scaling dimensions of the exchanged and the external operators, and on the spacetime dimension $d$. In the case of identical external scalar operators $\phi$, the blocks are independent of $\Delta_\phi\,$\footnote{For generic external scalars they depend on the combinations $\Delta_1-\Delta_2$ and $\Delta_3-\Delta_4$.} and we reserve the notion $G^{(d)}_{\Delta,\ell}(u,v)$ for this case. 

By definition, the conformal block for a conformal primary operator of dimension $\Delta$ and spin $\ell$ sums up the contribution to the four-point function of that operator together with all its descendants. 
Since descendants of a given primary are related by the generator of translations, $\de_\mu$, all terms making up the conformal block have identical eigenvalues under the action of the Casimir operators of the conformal group. This leads to a set of differential equations satisfied by the blocks,
\beqa{\label{eq:casimireig2}
\mathcal C_2G^{(d)}_{\Delta,\ell}(u,v)&=\frac12\parr{\ell(\ell+d-2)+\Delta(\Delta-d)}G^{(d)}_{\Delta,\ell}(u,v),\\\label{eq:casimireig4}
\mathcal C_4G^{(d)}_{\Delta,\ell}(u,v)&=\ell(\ell+d-2)(\Delta-1)(\Delta-d+1)G^{(d)}_{\Delta,\ell}(u,v),
}
for the quadratic and quartic Casimir operators \cite{Dolan:2011dv}\footnote{Written in terms of generators of the Lorentzian conformal group $\SO{d,2}$, the Casimirs take the form $\mathcal C_2=J^{AB}J^{BA}$ and $\mathcal C_4=J^{AB}J^{BC}J^{CD}J^{DA}$ \cite{Ferrara1973}. Note that the Casimir eigenvalues satisfy a symmetry generated by $\{\ell\leftrightarrow 2-d-\ell,\Delta\leftrightarrow d-\Delta,\Delta\leftrightarrow 1-\ell\}$, corresponding to the dihedral group of eight elements \cite{Caron-Huot2017,Kravchuk:2018htv}.}
\beqa{\label{eq:quadraticCasimir}
\mathcal C_2&=D_z+D_\zb+(d-2)\frac{z\zb}{z-\zb}\parr{(1-z)\de_z-(1-\zb)\de_\zb},\\\label{eq:quarticCasimir}
\mathcal C_4&=\parr{\frac{z\zb}{z-\zb}}^{d-2}\parr{D_z-D_\zb}\parr{\frac{z\zb}{z-\zb}}^{2-d}\parr{D_z-D_\zb},
}
respectively, where  
\beq{
D_x=(1-x)x^2\de_{x}^2-x^2\de_x.
}
By solving the Casimir equations with appropriate boundary conditions, the conformal blocks in four dimensions were computed in a closed form by Dolan and Osborn in 2000 \cite{Dolan:2000ut}\footnote{A similar expression was also derived in two dimensions and through a recursion relation the blocks in all even dimensions can be generated \cite{Dolan:2003hv}. In two dimensions there is also the notion of Virasoro conformal blocks, summing up contributions from an entire Virasoro multiplet. They are much more complicated, but can be generated to arbitrary order \cite{Perlmutter:2015iya}.} 
\beq{\label{eq:CB4d}
G_{\tau,\ell}(z,\zb):=G^{(4)}_{\tau+\ell,\ell}(z,\zb)=\frac{z\zb}{z-\zb}\left(
k_{\frac\tau2+\ell}(z)k_{\frac\tau2-1}(\zb)-k_{\frac\tau2+\ell}(\zb)k_{\frac\tau2-1}(z)
\right),
}
where
\beq{
k_\beta(x)=x^\beta{_2F_1}(\beta,\beta;2\beta;x),
}
in which $_2F_1(a,b;c;x)$ is Gau\ss's hypergeometric function as defined in \eqref{eq:hyperdef} in appendix~\ref{app:identities}.
Dolan and Osborn also showed that the conformal block for a scalar operator in arbitrary dimension $d=2\mu$ is given by the infinite double-sum \cite{Dolan:2000ut}
\beq{
\label{eq:ScalarBlockAnyD}
G^{(d)}_{\Delta,0}(u,v) = u^{\frac\Delta2}\sum_{m,n=0}^\infty \frac{\left(\Delta/2\right)^2_m\left(\Delta/2\right)^2_{m+n}}{ \left( \Delta+1-\mu\right)_m \left(\Delta\right)_{2m+n}} \frac{u^m(1-v)^n}{m! \, n!},}
where $(a)_n=\frac{\Gamma(a+n)}{\Gamma(a)}$ is the Pochhammer symbol.

We have already stressed the importance of the collinear limit, and in fact, in this limit both the Casimir operators and the conformal blocks simplify dramatically. We are effectively left with one cross-ratio and the conformal group reduces to $\SL2\R$. Theories with this symmetry group are referred to as one-dimensional CFTs. The corresponding Casimir operator, called the collinear or $\SL2\R$ Casimir, takes the form\footnote{The form of the collinear Casimir, up to a constant shift, follows from acting with the $\mathcal C_2$ on an ansatz for the blocks given as a series expansion in $z$ starting at $z^{\frac\tau2}$, with coefficients as functions of $\zb$.}
\beq{\label{eq:Dbardeff}
\Dbar = D_\zb=(1-\zb)\zb^2\de^2_\zb-\zb^2\de_\zb,
}
and the conformal blocks expand as
\beq{\label{eq:collblocks}
G^{(d)}_{\Delta,\ell}(z,\zb)=z^{\frac{\Delta-\ell}2}k_{\frac{\Delta+\ell}{2}}(\zb)+O\parr{z^{\frac{\Delta-\ell}2+1}}.
}
We refer to $z^{\tau/2}k_\hb(\zb)$ as the \emph{collinear blocks} and $k_{\hb}(\zb)$ as the \emph{$SL(2,\R)$ blocks}. The collinear blocks, or equivalently the $\SL2\R$ blocks, have eigenvalue $J^2=\hb(\hb-1)$ under the Casimir action,
\beq{\label{eq:collcasrel}
\Dbar k_\hb(\zb)=\hb(\hb-1)k_\hb(\zb).
}

The expansion \eqref{eq:collblocks} suggests that in the collinear limit it is natural to introduce variables $h=\frac{\Delta-\ell}2$, $\hb=\frac{\Delta+\ell}2$, in analogy with two dimensions. In fact, $\hb$ will be very important in what follows, as we discuss in section~\ref{sec:largespinlightcone}. We will however not employ the notation $(h,\hb)$ for individual operators; instead we will use $\hb$ as an independent variable, parametrising operators with approximately equal value of $h$. We also keep the twist as a label rather than $h$, since $\tau=2h$ and the twist is more commonly used in the literature.

The subleading corrections in $z$ of \eqref{eq:collblocks} can be computed and at each order in $z$ they take the form of a finite sum of $\SL2\R$ blocks with shifted arguments, where the coefficients depend on $\Delta$, $\ell$ and $d$. In appendix~\ref{app:subcollinearblocks} we give more detail on this expansion. 
Notice that the explicit form of the collinear blocks is independent of the spacetime dimension $d$, following from the one-dimensional nature of the collinear expansion. Nevertheless, the subleading corrections do depend on $d$.

Finally, for completeness we give the collinear blocks in the case of non-identical external scalar operators. In this case we evaluate \eqref{eq:collblocks} under the replacement 
\beq{\label{eq:collineardifferentmod}
k_\hb(\zb)\rightsquigarrow \zb^{\hb}{_2F_1}\parr{\hb+\tfrac{\Delta_2-\Delta_1}2,\hb+\tfrac{\Delta_3-\Delta_4}2;2\hb;\zb}.
}

\subsection{Conserved currents and unitarity bounds}\label{sec:UBC}

A main object of interest is the spectrum of operators that appear in the OPE expansion of a four-point function. In section~\ref{sec:pertCFT} below we will discuss this in detail in the case of CFTs with a small expansion parameter, and we will look at a few explicit examples. Here, we instead make some universal statements valid in any CFT.

We have seen that the collinear limit emphasises the contribution from the leading twist family in the OPE. In a unitary CFT there is a minimal twist that such any spinning operator can admit \cite{Mack:1975je}
\beq{
\tau_{\O_\ell}\geqslant d-2, \qquad \ell>0.
}
Operators saturating this bound are referred to as conserved currents, since they are subject to a conservation equation $\de_{\mu_1}\O^{\mu_1\cdots\mu_\ell}=0$. Equivalently, we can view this as a shortening of the conformal multiplet, since this equation means that a subset of the possible descendant operators vanishes. For the conformal blocks this translates into a differential equation\cite{Alday2016b} 
\beq{\label{eq:boundspinning}
\dsat G^{(d)}_{d-2+\ell,\ell}(u,v)=0,\quad \dsat=(d-2) \left(z^2 \de_z-\zb^2 \de_\zb\right)+2 z \zb (\zb-z) \de_z\de_\zb,
}
which will be used in section~\ref{sec:paper4short}. For scalar operators the corresponding bound is 
\beq{\label{eq:boundscalar}
\Delta_\O\geqslant \frac{d-2}2, \qquad \ell=0, \ \O \neq\1.
}
The bounds \eqref{eq:boundscalar} and \eqref{eq:boundspinning} are saturated by the operators $\phi$ and $\phi\de^\ell\phi$ in the theory of a free scalar in $d$ dimensions.

The unitarity bounds are not the only way a Lorentzian CFT can fail to be unitary, and typically unitarity is broken by some OPE coefficients or scaling dimensions taking values off the real axis. In \cite{Hogervorst:2015akt} it was shown that the Ising model, described by the top curve in figure~\ref{fig:operahouse}, is in fact non-unitary away from any integer dimension.

A generic interacting conformal field theory contains only a finite number of conserved currents, namely the stress tensor $T\munu$ related to the generators of Poincar\'e invariance, and, where applicable, global symmetry currents $J_\mu$. In addition, supersymmetry adds further conserved currents. Correlators involving conserved currents satisfy conformal Ward identities, which introduce physically meaningful normalisation constants called central charges \cite{Petkou:1994ad}, see also \cite{Dolan:2000ut}.

The \emph{central charge}, $C_T$, determines the OPE coefficient with the stress tensor and is of the same order of magnitude as the number of degrees of freedom in the theory\footnote{However, $C_T$ is not a precise measure of the number of degrees of freedom of the theory, and it does not always decrease under RG-flow as in Cardy's $c$-theorem. However, in two dimensions, $C_T=2c$ has this role \cite{Zamolodchikov:1986gt}. In higher dimensions, the statements corresponding to the $c$ theorem are the $a$-theorem in four dimensions \cite{Komargodski:2011vj} and the conjectured $F$-theorem in three dimensions \cite{Jafferis:2011zi,Closset:2012vg}.}. In our conventions the stress tensor OPE coefficient takes the form
\beq{\label{eq:centralchargeOPE}
c_{\O_i\O_j T}=-\frac{d\Delta_{\O_i}}{d-1}\frac1{2\sqrt{C_T}}\delta_{ij}.
}
These conventions correspond to 
\beq{\label{eq:CTNscalars}
C_{T,\mathrm{free}}=\frac{Nd}{d-1}
}
for $N$ free scalars in $d$ dimensions.

The \emph{current central charge} $C_J$, related to the normalisation of global symmetry currents $J^\mu$, roughly corresponds to the amount of degrees of freedom charged under the corresponding symmetry. 
The exact normalisations for current central charges depend on conventions for the group generators and the normalisations of the adjoint representation. In this thesis we take the conventions such that for global $\OO N$ symmetry we have
\beq{
c^2_{\phi\phi J}=-\frac1{C_J}, \qquad C_{J,\mathrm{free}}=\frac{2}{d-2}.
}
The negative sign for the squared OPE coefficient is a consequence of our convention for the conformal blocks, which differs by a factor $(-2)^\ell$ from the conventions given in \cite{Dolan:2000ut}.

\subsection{Conformal bootstrap}

Before we move on to discuss conformal field theories with small expansion parameters, let us briefly review the developments within non-perturbative conformal bootstrap. The mainstream numerical approach relies on writing the crossing equation~\eqref{eq:crossing} as
\beq{\label{eq:bootstrapnumeq}
\sum_{\Delta,\ell}c^2_{\phi\phi\O_{\Delta,\ell}}F_{\Delta,\ell}(u,v)=0, \qquad F_{\Delta,\ell}(u,v)=v^{\Delta_\phi}G_{\Delta,\ell}(u,v)-u^{\Delta_\phi}G_{\Delta,\ell}(v,u).
}
The interpretation is that the left-hand side consists of a convex hull of vectors in an infinite-dimensional vector space spanned by the functions $F_{\Delta,\ell}(u,v)$. By acting with functionals $\mathcal F$ on \eqref{eq:bootstrapnumeq}, one derives strict bounds on the dimensions and spins of the exchanged operators. Typically, these functionals consist of acting with derivatives at the crossing symmetric point
\beq{\label{eq:derivativefunctional}
\mathcal F_{p,q}(F_{\Delta,\ell})=\left.\frac{\de^p}{\de z^p}\frac{\de^q}{\de \zb^q}F_{\Delta,\ell}(z,\zb)\right|_{z=\zb=\frac12}.
}
This idea was presented in 2008 by Rattazzi, Rychkov, Tonni and Vichi \cite{Rattazzi2008} and has since led to numerous applications and refinements, as reviewed in \cite{Poland:2018epd}. Important early results were the determination of 3d Ising exponents \cite{ElShowk2012}, including $c$ minimization \cite{El-Showk:2014dwa} and a set of universal bounds in 4d theories \cite{Poland:2011ey}. The framework was subsequently applied to supersymmetric theories \cite{Beem:2014zpa,Beem:2013qxa} and extended to systems of mixed correlators, the latter leading to high precision results for the 3d $\OO N$ models \cite{Kos:2016ysd}, and particularly high precision in the cases of Ising \cite{Kos:2016ysd} and $\OO 2$ \cite{Chester:2019ifh}. There have also been implementations for non-scalar external operators such as fermions \cite{Iliesiu:2015qra} and vector currents \cite{Reehorst:2019pzi}. Finally, let us mention the paper \cite{Cappelli:2018vir} which studies the Ising model in interpolating dimensions along the top curve of figure~\ref{fig:operahouse}, making interesting observations on the interplay between the large spin expansion (valid for $d>2$) and the 2d Virasoro symmetry.

Apart from the mainstream numerical bootstrap, other numerical techniques have been developed. Gliozzi \cite{Gliozzi:2013ysa} proposed a truncation method where the idea is to search for approximate solutions to crossing using only a small set of conformal primary operators. The method does not rely on the positivity of the squared OPE coefficients $c^2_{\phi\phi\O}$ and therefore also applies to non-unitary theories such as the Yang--Lee edge singularity.

More recently, analytic functionals have been developed, which replace the numeric functionals \eqref{eq:derivativefunctional}. By varying these functionals, one can get constraints on the spectrum for one-dimensional CFTs \cite{Mazac:2018mdx,Mazac:2018ycv}, generating an interesting relation to the problem of sphere packings \cite{Hartman:2019pcd}. Some generalisations to higher dimensions have also been made \cite{Paulos:2019gtx,Mazac:2019shk}.

\section{Perturbative structure of conformal field theories}
\label{sec:pertCFT}

So far, we have discussed the structure of the OPE and the conformal block decomposition in a generic conformal field theory. We now focus the discussion onto CFTs which admit a small expansion parameter $g$. We will from time to time refer to the expansion in $g$ as a perturbative expansion, but it does not need to be a coupling constant in the traditional, Lagrangian, sense. Indeed, for $g$ we can take the $\epsilon=d_{\mathrm c}-d$ in an $\epsilon$ expansion, $1/N$ in a planar expansion, or $1/\lambda$ for the 't Hooft coupling in strongly coupled holographic CFTs. However, we will assume that $g=0$ corresponds to \emph{twist degeneracy}, namely that all operators in a twist family has identical twist. We keep $g$ as a generic name and assume that all quantities in the theory of consideration admit expansions in powers of $g$. In general, such series might be asymptotic and may need to be complemented by non-perturbative corrections.

Let us focus on the contribution within the four-point function from a single twist family. We assume that there is one operator $\O_\ell$ for each even spin\footnote{The spin takes either even or odd values, depending on the transformation properties under the global symmetry group.}. This means that we can parametrise the CFT-data in terms of the spin
\beq{\label{eq:defCFTdata}
\Delta_\ell=\tau_0+\ell+\gamma_\ell ,\quad c^2_{\phi\phi\O_\ell}=a_\ell,
}
where the \emph{anomalous dimensions} $\gamma_\ell$ are of order $g$. Here we have introduced a \emph{reference twist} $\tau_0$. A natural choice of reference twist is $\tau_\infty$, chosen such that $\gamma_\ell\to0$ as $\ell\to\infty$, but other choices are also allowed as long as they are consistent with anomalous dimensions of order $g$.

In \eqref{eq:defCFTdata} we also introduced the notion $a_\ell$ to denote the squared OPE coefficients. As such, they are positive in unitary theories. By abuse of notation we will often refer to the $a_\ell$ as just the OPE coefficients. We assumed that both $a_\ell$ and $\gamma_\ell$ admit expansions in $g$, so we write
\beq{
a_\ell=g^{\alpha}\parr{a^{(0)}_\ell+ga^{(1)}_\ell+g^2a^{(2)}_\ell+\ldots}, \quad \gamma_\ell=g\gamma_\ell^{(1)}+g^2\gamma_\ell^{(2)}+\ldots,
}
where we have taken out a possible overall factor $g^{\alpha}$. Inserting this in the conformal block expansion \eqref{eq:CBexp} and expanding in the collinear limit gives
\beq{\label{eq:collineartwistfamily1}
\sum_\ell a_\ell G^{(d)}_{\Delta_\ell,\ell}(u,v)=z^{\frac{\tau_0}2}\sum_\ell a_\ell z^{\frac{\gamma_\ell}2}k_{\hb+\frac12\gamma_\ell}(\zb)+O(z^{\frac{\tau_0}2+1}),
}
where $\hb=\frac{\tau_0}2+\ell$. For reasons that will become clear in section~\ref{sec:conformalspin}, we refer to $\hb$ as the \emph{bare conformal spin}, often omitting the word ``bare''. Expanding each term in \eqref{eq:collineartwistfamily1} in powers of $g$, we can write the sum as
\beqa{\label{eq:ordergexpansion}
&\sum_\ell a_\ell G^{(d)}_{\Delta_\ell,\ell}(u,v)
\\
&\qquad=z^{\frac{\tau_0}2}g^{\alpha}\sum_\ell \parr{a^{(0)}_\ell+g \Big[ a^{(1)}_\ell+\frac 12a^{(0)}_\ell\gamma_\ell^{(1)}(\log z+\de_\hb)\Big]+O(g^2)}k_\hb(\zb)+O(z^{\frac{\tau_0}2+1}),
\nonumber
}
where evaluation at $\hb=\frac{\tau_0}2+\ell$ is understood. An important observation is that terms proportional to $\log z$ are multiplied by the leading order anomalous dimensions $\gamma^{(1)}_\ell$. Similarly, higher powers of $\log z$ will have leading terms corresponding to higher powers of the anomalous dimensions. At subleading orders in $g$, the contributions from anomalous dimensions and OPE coefficients are mixed and need to be resolved. In section~\ref{sec:one-loop.correlator.into.H} we describe a straight-forward way of resolving this at leading order in $g$, by introducing a shift in the OPE coefficients. In the majority of this thesis we will instead make use of the formula \eqref{eq:aellfromT}, or rather \eqref{eq:aellfromU}, which more transparently generalises to arbitrary orders.

\subsection{Generalised free field theory}
We discussed above a family of operators $\O_\ell$ parametrised by spin $\ell$. A natural place where such operators appear is in what is called the \emph{generalised free field} (GFF) theory, which is also known as ``mean field theory''. It is the theory of a single non-interacting field $\phi$ with arbitrary dimension $\Delta_\phi$. Correlators are computed via pair-wise Wick contractions, using the CFT two-point function \eqref{eq:scalaroperatorsnormalisation}
\beq{
\expv{\phi(x_1)\phi(x_2)}=\parr{\frac1{x_{12}}}^{2\Delta_\phi}.
}
This means that the four-point function of $\phi$ can be constructed from three contributions, corresponding to $s$-channel, $t$-channel and $u$-channel contractions. Normalising with respect to the $s$-channel, in agreement with \eqref{eq:fourpointsetup}, we get that the four-point function takes the form
\beq{
\G(u,v)=1+u^{\Delta_\phi}+\parr{\frac uv}^{\Delta_\phi}.
}
From this expression we can perform a conformal block decomposition. The powers of $u$ indicate that there must be operators of twist $\tau=2\Delta_\phi+2n$ for integer $n$, which we refer to as GFF operators and denote by $[\phi,\phi]_{n,\ell}$. We write the OPE schematically as
\beq{\label{eq:GFFOPEschem}
\phi\times\phi=\1+\sum_{n,\ell}[\phi,\phi]_{n,\ell},
}
and denote the corresponding OPE coefficients $a_{n,\ell}^{\mathrm{GFF}}$. These OPE coefficients, which were worked out in two and four dimensions in \cite{Heemskerk2009} and in full generality in \cite{Fitzpatrick:2011dm}, take the form
\beq{\label{eq:aGFF}
\hspace{-2pt}a^{\mathrm{GFF}}_{n,\ell}|_{\Delta_\phi}=\frac{2(\Delta_\phi+1-\mu)_n^2(\Delta_\phi)_{n+\ell}^2}{\ell !\, n!\, (\ell+\mu)_n(2\Delta_\phi+n+1-2\mu)_n(2\Delta_\phi+\ell+n-\mu)_n(2\Delta_\phi+2n+\ell-1)_\ell}
}
for $\mu=d/2$.

Despite having well-defined scaling dimensions, correlators and OPE, the generalised free field theory is not a local conformal field theory, since, unless $\phi$ is the free scalar ($\Delta_\phi=\mu-1$) the theory lacks a stress tensor. Such a theory is sometimes called a conformal theory. However, the GFF theory is often a useful tool in understanding CFTs. Firstly, its operator content is exactly dual to freely propagating fields in AdS, where the lack of stress tensor signals the lack of gravitational interaction. Secondly, the spectrum  of GFF operators and the OPE \eqref{eq:GFFOPEschem} is a useful starting point in describing spectra of many CFTs, as we will see in the examples below.

\subsection{Operators, labels and mixing}
\label{sec:mixing}

In order to discuss the spectra of actual CFTs we need to introduce some language to precisely describe the primary operators in a conformal block expansion. Even in the cases where we can construct primary operators from the fundamental fields of the Lagrangian, it is often too cumbersome to write down the explicit form of these operators, since this involves projecting away terms that are descendants of other primaries. We have already noted that operators come in twist families labelled by some $\tau_0$ and that the OPE expansions sometimes involve the whole tower of GFF operators $[\phi,\phi]_{n,\ell}$. We will build on these ideas to formulate some universal naming conventions which could be used to describe any perturbative CFT. The name of a primary operator should be as short as possible but still carry the essential information about that operator, such as its spin and its belonging to a twist family.  At the same time, our conventions need to be flexible enough in order to describe a variety of theories, which means that sometimes a given operator may be assigned several different names. We adopt the following conventions.

\begin{deff}
We use the following types of symbols to denote primary operators. Each symbol comes with a convention for the reference twist $\tau_0$ of the twist family the that the operator belongs to.
\begin{itemize}
\item Unique names for scalar operators, generically $\O_1$, $\O_2$ etc., where the scaling dimension is denoted $\Delta_{\O_1}$, $\Delta_{\O_2}$ etc. Some of these operators are referred to as fields, or fundamental fields, since in a Lagrangian description they correspond to fields integrated over in the path integral. In that case, we define the anomalous dimension of these operators as the difference between the scaling dimension and the canonical dimension: $\Delta_{\O}=\Delta_\O^{(0)}+\gamma_\O$. We often use the letter $\phi$ in the case of weakly coupled scalar fields with $\Delta^{(0)}=\mu-1$, but we will sometimes let $\phi$ denote a generic external operator without any assumptions about its scaling dimension.
\item Universal names for conserved currents, $T^{\mu\nu}$ and $J^\mu$, as well as weakly broken higher spin currents $\mathcal J_\ell$ with $\tau_0=d-2+O(g)$, i.e.\ near the unitarity bound.
\item Composite operators (or, in the terminology of \cite{Fitzpatrick:2011dm}, conglomerate operators) written as $\square^n\de^\ell\O_1^{k_1}\O_2^{k_2}\cdots$~\footnote{This should be read as the following: An operator constructed from $2n$ contracted and $\ell$ uncontracted gradients acting on $k_1$ operators $\O_1$ etc., distributed in such a way that it is not a descendant.}, with reference twist $\tau_0=2n+k_1\Delta_{\O_1}+k_2\Delta_{\O_2}+\ldots$. When there are exactly two fields involved we write the derivatives between the operators.
\item GFF operators $[\O_1,\O_2]_{n,\ell}$ with $\tau_0=\Delta_{\O_1}+\Delta_{\O_2}+2n$. In composite operator notation we would write $\O_1\square^n\de^\ell\O_2$.
\end{itemize}
The choice of $\tau_0$ is arbitrary up to terms of order $g$. We will often be explicit with the choice made for a given twist family, especially when using conventions which do not agree with the definitions above. The consequence of changing reference twist is simply an order $g$ redefinition of anomalous dimension.

In the presence of global symmetry, operators transform in irreducible representations of the global symmetry group. We then add a label $R$ denoting the irrep, and write the corresponding operators names as $\mathcal J_{R,\ell}$, $[\O_1,\O_2]_{R,n,\ell}$, $(\square^n\de^\ell\phi^k)_{R}$ etc.
\end{deff}
\vspace{2.5ex}

\noindent Let us now describe the important concept of operator mixing. The existence of mixing arises naturally from the following considerations. 
By the naming conventions above, we may parametrise all operators in a theory by their reference twist $\tau_0$ and spin $\ell$. Assuming this, and focussing on a given twist family, the conformal block decomposition of a correlator and a re-expansion in $g$ would generate a sum like \eqref{eq:ordergexpansion}
\beq{\label{eq:expwithoutmixing}
\sum_{\ell} a_\ell G^{(d)}_{\Delta_\ell,\ell}(u,v)=z^{\frac{\tau_0}2}\sum_\ell \parr{\! a_\ell\! +\! \frac 12a_\ell\gamma_\ell(\log z\! +\! \de_\hb)\! +\! \frac 18a_\ell\gamma_\ell^2(\log z\! +\! \de_\hb)^2\! +\! \ldots}k_\hb(\zb)\! +\! \ldots.
}
The conformal blocks now depend only on $\tau_0$ and $\ell$, and the expansion is blind to any additional information about the involved operators. In particular, there may exist $d_\ell$ different degenerate operators with equal $\tau_0$ and $\ell$. By our naming convention, such operators would share the name, say $\O_\ell$. To distinguish them we need to employ an additional label and write $\O_{\ell,i}$, $i=1,\ldots d_\ell$.
In the expansion above we define
\beq{\label{eq:mixingdef}
\expv{a_\ell\gamma_\ell^p}=\sum_{i=1}^d a_{\ell,i}\gamma_{\ell,i}^p\,,
}
by which \eqref{eq:expwithoutmixing} takes the form
\beq{
z^{\frac{\tau_0}2}\sum_\ell \parr{\expv{a_\ell}+\frac 12\expv{a_\ell\gamma_\ell}(\log z+\de_\hb)+\frac 18\expv{a_\ell\gamma_\ell^2}(\log z+\de_\hb)^2+\ldots}k_\hb(\zb)+\ldots.
}

Mixing has some severe consequences. For instance, without mixing, knowing $\langle {a^{(0)}_\ell} \rangle$ and $\langle{a^{(0)}_\ell\gamma^{(1)}_\ell}\rangle$ would give access to all $\langle{a^{(0)}_\ell(\gamma^{(1)}_\ell)^p}\rangle$, i.e.\ to the leading power of $\log z$ at all orders in perturbation. With degenerate operators, the mixing must be resolved before one can compute even the sum of anomalous dimensions squared. Resolving the mixing problem in a given theory requires knowledge of the individual anomalous dimensions and/or considerations of mixed correlators and it is, in general, a difficult task.

\subsection{Spectrum of the Wilson--Fisher model}\label{sec:WFspectrum}

As a first example of a fully interacting conformal field theory, we review the spectrum of the Wilson--Fisher (WF) model in $d=4-\epsilon$ dimensions \cite{Wilson:1971dc,Wilson:1973jj}. Here $\epsilon$ serves as the expansion parameter $g$. As discussed briefly in the introduction in connection to figure~\ref{fig:operahouse}, one can view this CFT as the IR fixed-point of a short RG flow starting from the theory with a free scalar field $\phi$ perturbed by a quartic interaction $\lambda\phi^4$. At the fixed-point, $\lambda$ takes a value of order $\epsilon$. Another point of view is that the $\epsilon$ expansion follows from a limit of a family of conformal field theories non-perturbatively defined in $d$ dimensions---the $d$-dimensional Ising model---which approaches the free theory as $d\to4$.
A more concrete description follows from studying the multiplet recombination induced by the equation of motion $\square\phi\propto\phi^3$. This equation generates $\phi^3$ as a descendant of $\phi$, and it was shown in \cite{Rychkov:2015naa} how this simple statement can be used to deduce several properties of the Wilson--Fisher fixed-point.

We focus on the operators that appear in the conformal block decomposition of the four-point function \beq{
\label{eq:WFcorrelator}
\G(u,v)=x_{12}^{2\Delta_\phi}x_{34}^{2\Delta_\phi}\expv{\phi(x_1)\phi(x_2)\phi(x_3)\phi(x_4)}.
}
Here $\Delta_\phi=\mu-1+\gamma_\phi=1-\frac\epsilon2+\epsilon^2\gamma_{\phi}^{(2)}+\ldots$, where $\gamma_\phi^{(2)}=\frac1{108}$ and where we have indicated the well-known fact that $\phi$ has no anomalous dimension at order $\epsilon$~\footnote{From a Lagrangian point of view, $\gamma_\phi^{(1)}=0$ corresponds to the fact that there is no one-loop field renormalisation.}. Other conformal primaries in the theory can be explicitly constructed from $\phi$ and $\de^\mu$, and are defined up to contribution from descendants. Due to the global $\Z_2$ symmetry $\phi\mapsto-\phi$, only $\Z_2$ even operators, constructed from an even number of fields, appear in the OPE.

The scaling dimensions of $\phi$, $\phi^2$ and $\phi^4$ can be computed from standard dimensional regularisation, where the coupling is evaluated at the fixed-point. The dimensions of the first two operators $\phi$ and $\phi^2$ are often presented in terms of a pair of critical exponents, such as $\eta$ and $\nu$ using the relations $\eta=2\Delta_\phi-d+2=2\gamma_\phi$ and $\nu^{-1}=d-\Delta_{\phi^2}$~\footnote{Other critical exponents for the Ising model can be related to $\eta$ and $\nu$ through scaling relations, see e.g.\ \cite{Pelissetto:2000ek}. The exception is the exponent $\omega$, defined through $\omega=\Delta_{\phi^4}-d$.}. They were computed to order $\epsilon^4$ soon after the WF model was proposed \cite{Brezin:1974eb} in order to generate estimates for the critical exponents of the 3d Ising model, and have since been computed to order $\epsilon^7$ \cite{Schnetz:2016fhy}\footnote{The results for the Ising exponents were not added to \cite{Schnetz:2016fhy} until after the $\epsilon^6$ results appeared in \cite{Kompaniets:2017yct}. I thank Erik Panzer for making me aware of \cite{Schnetz:2016fhy} and providing me with the explicit results for future reference.}.

At leading twist, the OPE $\phi\times\phi$ contains weakly broken currents $\mathcal J_\ell=\phi\de^\ell\phi$, with \beq{
\label{eq:gammaWFlitt}
\gamma_\ell=-\frac{\epsilon^2}{9\ell(\ell+1)}+O(\epsilon^3)} as derived in \cite{Wilson:1972cf,Wilson:1973jj}. In \cite{Derkachov:1997qv} they were computed to order $\epsilon^4$ and we provide an independent computation in chapter~\ref{ch:paper2} based on \cite{Paper2}.
\begin{figure}
\centering
\includegraphics[width=\textwidth]{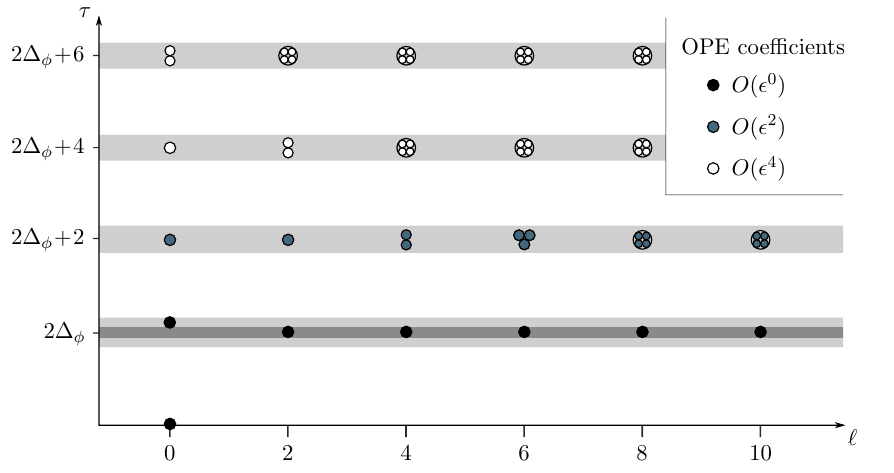}
\caption[Operators, labels and mixing in the Wilson--Fisher fixed-point in the $\epsilon$ expansion.]{Operators, labels and mixing in the Wilson--Fisher fixed-point in the $\epsilon$ expansion. We display the number of degenerate operators as one, two, three or more than three.}\label{fig:spectrumWF}
\end{figure}
At higher reference twist mixing occurs and in figure~\ref{fig:spectrumWF} we illustrate the spectrum of operators in the OPE decomposition of the four-point function \eqref{eq:WFcorrelator}. The identity operator $\Delta=0$, $\ell=0$ and the operators $\phi\de^\ell\phi$ are the only operators with (squared) OPE coefficients at leading order, as illustrated by the black dots. In the figure we have indicated the anomalous dimensions in terms of grey bands of width $\epsilon$ and $\epsilon^2$, centred around twists $2\Delta_\phi+2n$, $n\in\mathbb N$. Of the bilinear operators $\phi\de^\ell\phi$, the scalar ${\phi^2}$ is the only one that has an anomalous dimension at order $\epsilon$. The positions of the grey bands, as well as the corresponding ones for the theories we consider below, depend on which four-point function we study and will be very important when we develop the analytic bootstrap approach later. It is instructive to compare figure~\ref{fig:spectrumWF} with figure~1 of \cite{Simmons-Duffin:2016wlq}, which gives a similar display of the operator spectrum in the 3d Ising model as found by the numerical bootstrap.

Now we take a look at the operators of higher twist. Since $\Delta_\phi=1+O(\epsilon)$, and the $\Z_2$ symmetry enforces an even number of fields $\phi$, all operators in the $\phi\times\phi$ OPE will have twists of the form $\tau=2\Delta_\phi+2n+O(\epsilon)$ for $n=0,1,2,\ldots$. It is possible to compute the order $\epsilon$ anomalous dimension of an arbitrary operator of this kind \cite{Kehrein:1992fn}, and in \cite{Kehrein:1994ff} the spectrum was systematically investigated. The leading anomalous dimensions of arbitrary composite operators may also be computed using conformal perturbation theory \cite{Hogervorst:2015akt}\footnote{See appendix~C of \cite{Hogervorst:2015tka} for more details, and \cite{Liendo:2017wsn} for an alternative method based on the multiplet recombination. I thank M.~Hogervorst and P.~Liendo for detailed discussions on these two methods.}.

At $n=1$, the operators take the schematic form $\de^\ell\phi^4$, and are now subject to mixing. At spins $0$ and $2$ there is a unique operator, but at spin $4$ there are two different operators, with dimensions $\Delta_{(\de^4\phi^4)_1}=8-2\epsilon+\frac49\epsilon+O(\epsilon^2)$ and $\Delta_{(\de^4\phi^4)_2}=8-2\epsilon+\frac{15}{19}\epsilon+O(\epsilon^2)$. The degeneracy keeps growing for each subsequent spin and in figure~\ref{fig:spectrumWF} we only indicate the precise degeneracy $d_\ell$ when $d_\ell<4$.

At higher $n$ the situation is even more complicated, with mixing between operators of different number of fields, for instance $\phi^8$ and $\square^2\phi^4$~\footnote{Recall that we discuss mixing here in meaning of having equal reference twist, in the context of our discussion in section~\ref{sec:mixing}. In constructing the explicit form of the conformal primary operators, there is no order $\epsilon$ mixing between operators with different number of fields.}. The only non-degenerate point for $n\geqslant2$ is $n=2$, $\ell=0$, where the operator is $\phi^6$ with $\Delta_{\phi^6}=6+2\epsilon+O(\epsilon^2)$. This is the only point with $n\geqslant1$ where no operator of the form $\de^\ell\square^{n-1}\phi^4$ takes part in the mixing, a fact that will have an interesting consequence in section~\ref{sec:WFfrompaper1}.

\subsection[Spectrum of \texorpdfstring{$\mathcal N=4$}{N=4} SYM at weak coupling]{Spectrum of $\boldsymbol{\mathcal N=4}$ SYM at weak coupling}

\label{sec:NN4spectrum}

In preparation for chapter~\ref{ch:paper1}, we give a short description of the weak coupling spectrum of operators in the \NN4 supersymmetric Yang--Mills (SYM) theory in four dimensions. Due to its properties as a highly complex but still well-structured theory, the literature on the topic is vast and we will not be able to review it. Here we only give the minimum amount of information needed to use the theory as a test and prototype for the methods of chapter~\ref{ch:paper1}.

The \NN4 SYM theory is the maximally supersymmetric quantum field theory in four dimension. The field content consists of one vector multiplet in the adjoint representation of the gauge group, which we will take to be $\SU N$. The vector multiplet contains a gauge field $A_\mu \in \boldsymbol 1$, four Majorana spinors $\lambda^i\in \boldsymbol 4$ and six real scalars $\Phi^I\in\boldsymbol 6$, transforming in the indicated irreps of the R-symmetry $\SU4\cong \SO6$. The theory has an exactly marginal coupling $g_{\mathrm{YM}}$, and is thus conformal at all values of this coupling. Here we look at the weak coupling limit and define
\beq{
g=\frac{g_{\mathrm{YM}}^2N}{4\pi^2}
}
as our expansion parameter.

\begin{figure}
\centering
\includegraphics[width=\textwidth]{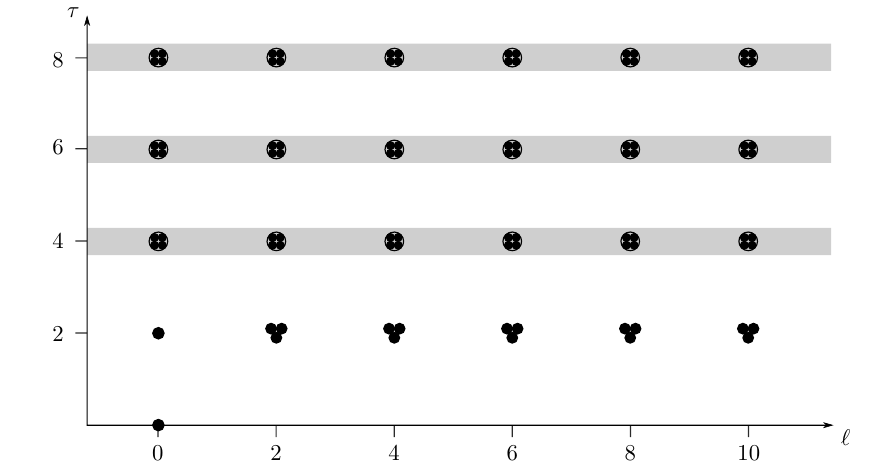}
\caption[Operators, labels and mixing in \NN4 SYM at weak coupling.]{Operators, labels and mixing in \NN4 SYM at weak coupling, displaying operators in the $\boldsymbol 1$ representation. At each even spin and near each even twist, there are operators with OPE coefficients at order $g_{\mathrm{YM}}^0$.} \label{fig:spectrumNN4}
\end{figure}

Conformal primary operators are constructed from gauge-invariant combinations of the fields, and transform in irreducible representations of the R-symmetry. Supersymmetry further groups the operators into supermultiplets, labelled by superconformal primaries. The conformal primaries are generated from the superconformal primaries by the supersymmetry generators and transform in Lorentz and R-symmetry irreps related to their superprimaries. In particular, the scaling dimensions are related and all superdescendants share the same anomalous dimensions. 

The theory contains a number of superconformal primary operators whose dimensions are protected by supersymmetry, typically on some BPS-bound. They are referred to as short multiplets due to various shortening conditions, which means that a fraction of the operator content of these multiplets is annihilated. In addition to the short multiplets, the theory contains long multiplets with unprotected scaling dimensions. A detailed presentation of the various supermultiplets can be found in \cite{Dolan:2002zh}.

We will be interested in four-point correlators of the simplest possible scalar operators, which are constructed from bilinears in $\Phi$, with a $\SU N$ trace to render them gauge-invariant. There are two such operators: the Konishi operator $\mathcal K=\Tr(\Phi^I\Phi^I)\in\boldsymbol 1$, and half-BPS operator $\O_{\boldsymbol{20'}}=\Tr(\Phi^{\{I}\Phi^{J\}})\in \boldsymbol{20'}$, where the latter is the rank two traceless symmetric representation of $\SO6$.

The Konishi operator is the superconformal primary of a long multiplet and has scaling dimension $\Delta_{\mathcal K}=2+3g+O(g^2)$. Its anomalous dimension is known to order $g^4$ \cite{Velizhanin:2009gv}, and non-perturbatively in the planar limit $N\to\infty$ \cite{Gromov:2009zb}. The operator $\O_{\boldsymbol{20'}}$ with $\Delta_{\O_{\boldsymbol{20'}}}=2$ is the superconformal primary of the short supermultiplet, which in addition contains amongst others the stress tensor and the R-symmetry currents. In the OPE decomposition of the Konishi four-point function, only R-symmetry singlets contribute, whereas the decomposition of the $\O_{\boldsymbol{20'}}$ four-point function contains operators in all $\SU4$ irreps in the tensor product
\beq{
\boldsymbol{20'}\otimes\boldsymbol{20'}=\boldsymbol{1}\oplus\boldsymbol{15}\oplus\boldsymbol{20'}\oplus\boldsymbol{84}\oplus\boldsymbol{105}\oplus \boldsymbol{175},
}
where we used the notation of \cite{Slansky:1981yr} for the  irreps. Since both correlators contain R-symmetry singlets, we will focus on them. In fact, the only unprotected superconformal primaries in the $\O_{\boldsymbol{20'}}$ four-point function are in the singlet representation, which means that the singlet representation contains all dynamical information of the perturbative correlator.

Figure~\ref{fig:spectrumNN4} contains a plot similar to figure~\ref{fig:spectrumWF}, displaying the singlet conformal primaries, where we have shaded regions within order $g$ from $\tau=4+2n$, $n\in \mathbb N$. At the leading twist, there are three conformal primaries at each even spin, denoted $T_\ell$, $\Sigma_\ell$ and $\Xi_\ell$. They have anomalous dimensions $\gamma=\Delta-(2+\ell)$ of the form
\beq{\label{eq:anomalousdimensionsLTSYM}
\gamma^{(1)}_{T_\ell}=2S_1(\ell-2),\qquad \gamma^{(1)}_{\Sigma_\ell}=2S_1(\ell)\qquad \gamma^{(1)}_{\Xi_\ell}=2S_1(\ell+2),
}
where $S_1(n)=\sum_{k=1}^{n}\frac1n$ denotes the harmonic numbers, defined in appendix~\ref{integrals}. These operators, called leading twist operators, or twist-2 operators, follow from diagonalisation of the one-loop perturbative anomalous dimension in the space of bilinears in $\Phi$, $\lambda$ and $F\munu$ respectively \cite{Kotikov:2001sc,Anselmi:1998ms}\footnote{More details can be found in \cite{Velizhanin:2014zla}, where the matrix elements of the one-loop dilatation operator are given. Notice, however, a typo in that paper; the proper form is $\gamma_{\lambda\lambda}^{(0)}=-4S_1(j)+8/j-8/(j+1)$.}. At $\ell=0$, the operator is non-degenerate: $\Xi_0=\mathcal K$. $T_2$ is the stress tensor. In fact, the operators belong to superconformal multiplets in groups of three, which explains why the anomalous dimensions are related to the universal function $\gamma_{\mathrm{univ.}}(\ell)=2S_1(\ell)$. In the four-point function of the Konishi operator they appear with an average given by

\beq{\label{eq:averagesKonishi}
\expv{a_\ell\gamma_\ell}=
\sum_{\O_\ell=T_\ell,\Sigma_\ell,\Xi_\ell}a_{\O_\ell}\gamma_{\O_\ell}
=\frac{2c\Gamma(\ell+1)^2}{\Gamma(2\ell+1)}\gamma_{\mathrm{univ.}}(\ell)+2c\, 3\delta_{\ell,0},
}	
where $c=\frac2{3(N^2-1)}$.

The operators just discussed constitute the leading twist family in \NN4 SYM, but let us emphasise that they are not double-twist operators. Since they have twist below the double twist of the external operator, they lie outside the grey bands displayed in figure~\ref{fig:spectrumNN4}.
At the double twist, as well as at higher twists, a large number of operators contribute, and to resolve the mixing problem is a difficult task. 
This was, however accomplished in the four-point function of $\O_{\boldsymbol{20'}}$ case in \cite{AldayBissi2017,Aprile:2017bgs,Aprile:2017xsp} and more generally in \cite{Aprile:2018efk}.

\subsection[Spectrum of the critical \texorpdfstring{$\OO N$}{O(N)} model]{Spectrum of the critical $\boldsymbol{\OO N}$ model}
\label{sec:ONspectrum}

As a final example, let us discuss the spectrum of the \emph{critical $O(N)$ model}, where $\OO N$ denotes the orthogonal group. This theory is a generalisation of the Ising model and admits a $4-\epsilon$ expansion with a similar Lagrangian $\lambda\phi^4\rightsquigarrow \lambda(\varphi^i\varphi^i)^2$, where $i$ runs from $1$ to $N$. Likewise, in three dimensions the critical model follows from a long RG flow from the theory of $N$ free scalars, and describes a range of interesting critical phenomena \cite{Pelissetto:2000ek}. However, it is possible to treat the number of fields $N$ as an additional parameter of the theory, and indeed many observables can be seen as analytic functions of $N$. This group parameter expansion  has been common practice for a long time and was recently put on more firm ground using Deligne categories \cite{Binder:2019zqc}. Thanks to the continuation in $N$, the theory admits various overlapping perturbative limits, displayed in figure~\ref{fig:ONlimits}, which we will now describe. 
\begin{figure}
\centering
\includegraphics{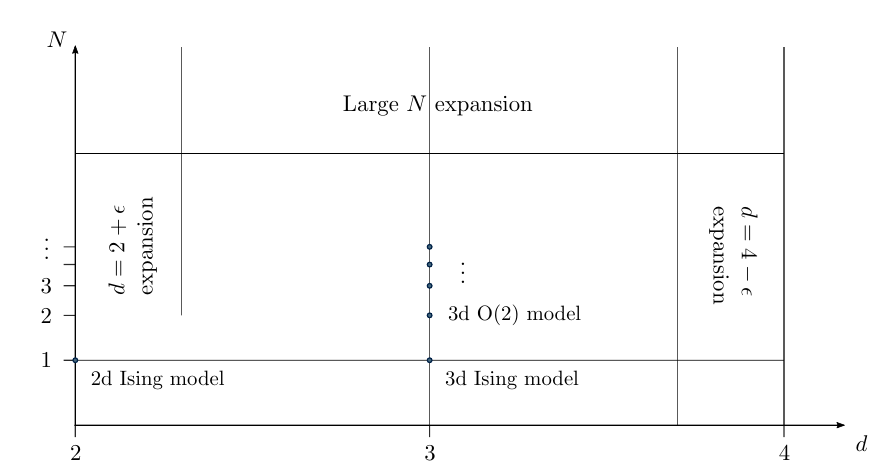}
\caption[Various limits for the critical $\OO N$ model.]{Various limits for the critical $\OO N$ model. The CFT-data agrees in the overlapping regions between two expansion limits.}\label{fig:ONlimits}
\end{figure}
Important for this thesis are the $4-\epsilon$ expansion and the large $N$ expansion, which we will discuss shortly. In addition, there is an expansion in $d=2+\epsilon$ dimensions, where the critical $\OO N$ model for $N>2$ is related to the UV fixed-point of a non-linear sigma model with target space $\OO N$, see e.g.\ chapter~31 of \cite{zinn2002quantum}. In that expansion, anomalous dimensions \cite{Giombi2016} and central charges \cite{Diab:2016spb} have been computed in a series in $\epsilon$. The behaviour near $N=d=2$ is not fully understood, and the limits $d\to2$ and $N\to2$ do not appear to commute \cite{Cardy:1980at}\footnote{I thank Slava Rychkov for mentioning this reference to me.}. The large $N$ expansion can be continued beyond $d=4$ to match a cubic model of $N+1$ fields in $d=6-\epsilon$ dimensions \cite{Fei:2014yja}, where perturbative CFT-data is known \cite{Giombi2016}. Unitarity in the five-dimensional theory a disputed topic, see \cite{Giombi:2019upv} for a recent discussion taking into account instanton contributions.

In the $4-\epsilon$ expansion, the spectrum of operators in the $\varphi$ four-point function is similar to the $N=1$ case described in section~\ref{sec:WFspectrum}. The $\varphi^i\times\varphi^j$ OPE contains three irreducible representations: singlet ($S$) and rank two traceless symmetric ($T$) and antisymmetric ($A$) tensors, where the latter is odd under $i\leftrightarrow j$ and therefore contains intermediate operators of odd rather than even spin. We focus on the singlet representation, which has the most interesting operator content. In the $\epsilon$ expansion, the spectrum of singlet operators looks similar to figure~\ref{fig:spectrumWF}, with the modification that the degeneracy of higher twist operators grows faster. However, at large $N$ the spectrum shows an interesting behaviour which we will now describe.

It has for long been understood how to develop a Lagrangian description for the critical $\OO N$ model at large $N$ and generic spacetime dimension $d=2\mu$, through the introduction of the Hubbard--Stratonovich auxiliary field $\sigma$, see e.g.\ \cite{Fei:2014yja} for a detailed discussion. This is accomplished by adding to the Lagrangian of $N$ free scalars $\varphi^i$ the interaction terms
\beq{\label{eq:HSextraterms}
S_{\mathrm I}=\int\mathrm d^dx\left(\frac1{2\sqrt N}\sigma\varphi^i\varphi^i-\frac{1}{4\lambda N}\sigma^2\right).
}
One can check that integrating out the field $\sigma$ gives back the usual $\lambda(\varphi^2)^2$ interaction. Alternatively, $\sigma$ can be promoted to a dynamical field and a perturbation theory can be developed with $1/\sqrt N$ as the effective coupling constant, where the second term becomes irrelevant in the IR. The large $N$ expansion of the $\OO N$ model can be used to generate approximate results at finite $N$, but it has also been conjectured to have a holographic dual given by type A Vasiliev theory $hs_4$ \cite{Vasiliev:1995dn} when limiting to correlators of $\OO N$ singlets.

In the spectrum, the operator $\varphi^2_S=\varphi^i\varphi^i$ gets replaced by $\sigma$, which has dimension $\Delta_\sigma=2+O(N^{-1})$. Generic operators are then constructed from $\sigma$, $\varphi^i$ (with $\Delta^{(0)}_\varphi=\mu-1$) and $\de^\mu$. In the overlap between the $4-\epsilon$ expansion and the large $N$ expansion, the operators in the two descriptions are in one-to-one correspondence with each other, and the scaling dimensions agree. For instance, for the first non-trivial singlet scalar we have
\beqa{
\Delta_{\varphi^2_S}&=2-\epsilon+\frac{N+2}{N+8}\epsilon+\frac{(N+2)(44+13N)}{2(N+8)^3}\epsilon^2+O(\epsilon^3), && \text{($\epsilon$ expansion),}
\\\label{eq:Deltasigma}
\Delta_{\sigma}&=2-\frac{4(\mu-1)(2\mu-1)}{2-\mu}\frac{\gamma_\varphi^{(1)}}N+O(N^{-2}), && \text{(large $N$ expansion).}
}
Inserting the literature value
\beq{\label{eq:gammaphival}
\gamma^{(1)}_{\varphi }= \frac{(\mu-2)\Gamma(2\mu-1)}{\Gamma(\mu+1)\Gamma(\mu)^2\Gamma(1-\mu)}
}
we can explicitly check that for $\mu=2-\frac\epsilon2$ both expressions expand to $2-\frac{6\epsilon}N+\frac{13\epsilon^2}{2N}+\ldots$. In table~\ref{tab:operatornames} we list a few operators and give their names in the different expansions, including their conventional names in the 3d Ising model.

In figure~\ref{fig:spectrumON} we display the large $N$ spectrum of $\OO N$ singlet operators, displaying bands corresponding to twists within $2\Delta_\varphi+2n+O(N^{-1})$ for $n\in\mathbb N$. It is clear that $\sigma$ is outside the first of these bands, whereas all the spinning weakly broken currents $\mathcal J_{S,\ell}=\varphi^i\de^\ell\varphi^i$ have twist $2(\mu-1)+O(N^{-1})$ and fall within the first band.

\begin{figure}
\centering
\includegraphics[width=\textwidth]{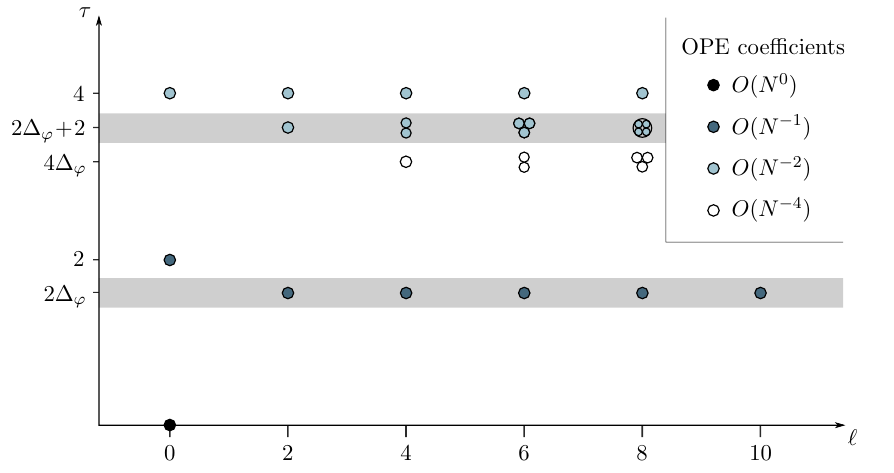}
\caption[Operators, labels and mixing in the critical $\OO N$ model at large $N$.]{Operators, labels and mixing in the critical $\OO N$ model at large $N$, displaying operators in the singlet representation.}\label{fig:spectrumON}
\end{figure}

It is more interesting to look at what happens near the next band, $n=1$. This corresponds to the $\OO N$ version of the $n=1$ band in figure~\ref{fig:spectrumWF}, where now the number of degenerate operators is $d^{\OO N}_\ell=1,2,4,6,8$ instead of $d^{N=1}_\ell=1,1,2,3,4$ for $\ell=0,2,4,6,8$. By using the techniques from \cite{Hogervorst:2015akt,Hogervorst:2015tka} to find the order $\epsilon$ anomalous dimension of a set of operators of the schematic form $\de^\ell\varphi^4_S=\de^\ell\varphi^i\varphi^i\varphi^j\varphi^j$, we can study the fate of these operators upon expanding at large $N$. It turns out that exactly one operator at each spin gets $\tau_\ell=4+O(N^{-1})$, which we interpret as the leading GFF operator $[\sigma,\sigma]_{0,\ell}$. Other (degenerate) operators have $\tau_\ell=2+2\Delta_\varphi+O(N^{-1})$ and we identify them as subleading GFF operators of the form $[\varphi,\varphi]_{S,1,\ell}$. Finally, the remaining (degenerate) operators have $\tau_\ell=4\Delta_\varphi+O(1/N)$, for which we use the notation $\de^\ell\varphi^4_S$.

\begin{table}
\centering
\caption{Some operators with low twist in the critical $\OO N$ model and their conventional names in the different regimes.}\label{tab:operatornames}
{\small
\vspace{4pt}
\renewcommand{\arraystretch}{1.3}
\begin{tabular}{|cclll|}\hline
Irrep & Spin & $\epsilon$ expansion & Large $N$ & 3d Ising ($N=1$)
\\\hline
$S$&$0$&$\1$&$\1$&$\1$
\\\hline
$V$&$0$&$\varphi^i$&$\varphi^i$&$\sigma$
\\\hline
$S$&$0$&$\varphi^2_S=\varphi^i\varphi^i$&$\sigma$&$\epsilon$
\\
$S$&$2$&$\mathcal J_{S,2}=T^{\mu\nu}$&$\mathcal J_{S,2}=T^{\mu\nu}$&$T^{\mu\nu}$
\\
$S$&$4$&$\mathcal J_{S,4}=\varphi^i\partial^4\varphi^i$&$\mathcal J_{S,4}=\varphi^i\partial^4\varphi^i$&$C^{\mu\nu\rho\sigma}$
\\
$S$&$\ell$ even&$\mathcal J_{S,\ell}=\varphi^i\partial^\ell\varphi^i$&$\mathcal J_{S,\ell}=\varphi^i\partial^\ell\varphi^i$&$\O^{\mu_1\cdots\mu_\ell}$
\\\hline
$T$&$\ell$ even&$\mathcal J_{T,\ell}=\varphi^{\{i}\partial^\ell\varphi^{j\}}$&$\mathcal J_{T,\ell}=\varphi^{\{i}\partial^\ell\varphi^{j\}}$& ---
\\\hline
$A$&$\ell$ odd&$\mathcal J_{A,\ell}=\varphi^{[i}\partial^\ell\varphi^{j]}$&$\mathcal J_{A,\ell}=\varphi^{[i}\partial^\ell\varphi^{j]}$& ---
\\\hline
$S$ & $0$ & $\varphi^4_S=(\varphi^i\varphi^i)^2$ & $[\sigma,\sigma]_{0,0}=\sigma^2$ & $\epsilon'$
\\\hline
\end{tabular}
}
\end{table}

Near the band corresponding to $n=2$ in figure~\ref{fig:spectrumWF} the situation is similar, where now operators in the large $N$ limit take one of the following twists: $6$, $2\Delta_\varphi+4$, $4\Delta_\varphi+4$ and $6\Delta_\varphi$. We have omitted it in figure~\ref{fig:spectrumON} to keep the figure less cluttered. Also at higher values of $n$ the situation is similar. The only operators at the bands with $n\geqslant1$ that will play a role in this thesis are the GFF operators $[\sigma,\sigma]_{n,\ell}$. 
In figure~\ref{fig:spectrumON} we have also indicated at what order in $1/N$ the various (squared) OPE coefficients enter within the $\varphi$ four-point function. The essential assumption needed is that $a_\sigma=c_{\varphi\varphi\sigma}^2$ is of order $1/N$. This will in fact imply that all operators $[\sigma,\sigma]_{n,\ell}$ have (squared) OPE coefficients at order $1/N^{2}$.

The $\OO N$ model is also a prototype for $\phi^4$ theories with various global symmetries, some of which admit expansions similar to the large $N$ here. Based on the results in this thesis and in \cite{Paper4}, we can treat all such theories in a unified way, which we describe in section~\ref{sec:paper5short}.

\section{Large spin and the lightcone}\label{sec:largespinlightcone}

The lightcone bootstrap, and therefore large spin perturbation theory, is developed from an interplay between large spin expansions of CFT-data and an expansion of the crossing equation near the double lightcone limit. In this section we will review the development of these ideas to be able to provide a full description of the method in the subsequent sections.

\subsection{Large spin expansion of CFT-data}
\label{sec:conformalspin}

Although the scaling dimension $\Delta$ and the spin $\ell$ are natural labels for primary operators based on the conformal algebra, we have seen that the collinear expansion of the conformal blocks \eqref{eq:collblocks} motivated the introduction of another pair of labels: the twist $\tau=\Delta-\ell$, and the variable $\hb=\frac{\Delta+\ell}2$. The introduction of twist implies that we can parametrise the operators in a given twist family by the spin $\ell$. In this section we will give further motivation for the linear change of variable to $\hb=\frac\tau2+\ell$.

In relation to the experiments on deep inelastic scattering of hadrons, anomalous dimensions were computed for leading twist operators in QCD in the early 1970's. There, as well as in some other theories, the anomalous dimension of operators within a twist family could we written as a function in spin; they were ``analytic in spin''. These functions $\gamma_\ell$ were observed to have some universal properties. For instance, based on high-energy bounds on the scattering cross-section, Nachtmann proved that $\gamma_\ell$ must be an upward convex function, referred to as Nachtmann's theorem or convexity \cite{Nachtmann:1973mr}.

Another empirically motivated result from that time is \emph{reciprocity}, formulated in the context of deep inelastic scattering by Gribov and Lipatov \cite{Gribov:1972rt}. It concerns the large spin expansion of $\gamma_\ell$, i.e.\ the potentially asymptotic expansion around the point $\ell=\infty$. In this limit, the spin dependence of $\gamma_\ell$ was found to come purely through the combination
\beq{
J^2=j(j+1)=\hb(\hb-1)=\Big(\frac{\Delta+\ell}2
\Big)\Big(\frac{\Delta+\ell}2-1\Big).
}
This combination is exactly the eigenvalue of the collinear Casimir \eqref{eq:Dbardeff}, and due to the equivalence on the level of complexified Lie algebras of $\SL2\R$ and the three-dimensional rotation group $\SO3$ it was later referred to as the \emph{conformal spin}\footnote{It is difficult to find the first use of the expression \emph{conformal spin}. The use dates back far, for instance in \cite{Balitsky:1987bk} it is used with clear reference to the collinear Casimir equation. I thank V.M. Braun and A.N. Manashov for discussions on this topic.}~\footnote{Similar to the $\SO3$ case, there is a slight abuse of notation, where both $j$ and $J^2$ are referred to as the conformal spin. Notice, however, that we are now considering a non-compact real form of the algebra, which means that $j$ is no longer restricted to integer values.}. 

The Gribov--Lipatov reciprocity was originally phrased in terms of splitting functions $P(x)$, which are dual to the anomalous dimensions through a Mellin transform\footnote{This is a peculiar use of the word Mellin transform, and it does not agree with the usual definition used in the context of Mellin amplitudes and Mellin space bootstrap. The Mellin transform \eqref{eq:mellinsplitting} is defined up to some regularisation of the $x\to1$ limit, see e.g.\ \cite{Vermaseren:1998uu} for a precise definition.}
\beq{\label{eq:mellinsplitting}
\gamma_j =- \int\limits_0^1 d x \, x^{j-1} P(x),
}
where $x$ is the Bjorken variable. Then reciprocity takes the form 
\beq{
P(x)=-x\, P\parr{\frac1x}.
}
A proof of the equivalence between the two statements can be found in \cite{Basso:2006nk}.
We illustrate the reciprocity principle by the anomalous dimension $\gamma_{\mathrm{univ.}}(\ell)=2S_1(\ell)$ in \NN4 SYM, where now $j=\hb+1=\ell$ to leading order. We have for the harmonic numbers 
\beq{\label{eq:S1expansion}
S_1(j)=\gamma_{\mathrm E}+\log\sqrt{j(j+1)}+\frac1{6j(j+1)}-\frac1{30(j(j+1))^2}+\frac{4}{315(j(j+1))^3}+\ldots,
}
where $\gamma_{\mathrm E}$ is the Euler--Mascheroni constant. The exact form of this and similar expansions at large spin is discussed in \cite{Albino2009}.

Reciprocity was initially thought to be broken at two-loop order in QCD, but the principle was restored by realising that the correct variable to use is the \emph{full conformal spin}, $\hb_{\mathrm f}=\tau_0/2+\ell+\gamma_\ell/2$ rather than the \emph{bare} counterpart $\hb_{\mathrm b}=\tau_0/2+\ell$ \cite{Dokshitzer:1995ev}. It can thus be phrased in the following way.
\begin{prop}\label{prop:reciprocitygamma}
Anomalous dimensions of operators in a twist family with approximate twist $\tau_0$ satisfy the equation
\beq{\label{eq:reciprocitygamma}
\gamma_\ell=\mathsf g\parr{\frac{\tau_0}2+\ell+\frac12\gamma_\ell},
}
where $\mathsf g(\bar{\mathsf h})$ has a large $\bar{\mathsf h}$ expansion symmetric under $\bar{\mathsf h}\leftrightarrow 1-\bar{\mathsf h}$.
\end{prop}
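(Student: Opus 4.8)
The plan is to read off reciprocity from the analytic-in-spin structure of the CFT-data, for which the cleanest route is the Lorentzian inversion formula (equivalently, the classical reciprocity argument of \cite{Dokshitzer:1995ev}). First I would reframe the claim: the transformation $\bar{\mathsf h}\leftrightarrow 1-\bar{\mathsf h}$ is precisely the invariance of the collinear Casimir eigenvalue $J^2=\bar{\mathsf h}(\bar{\mathsf h}-1)$ recorded in \eqref{eq:collcasrel}, so \eqref{eq:reciprocitygamma} is equivalent to the assertion that $\gamma_\ell$, viewed as a function of the full conformal spin $\bar{\mathsf h}_{\mathrm f}=\tfrac{\tau_0}2+\ell+\tfrac12\gamma_\ell=\tfrac{\Delta+\ell}2$, admits an asymptotic large-spin expansion built only from $J^2$ (integer powers of $1/J^2$ together with $\log J^2$). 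Establishing this $J^2$-organisation of the expansion is the entire content of the statement; the model case is already \eqref{eq:S1expansion}, where $S_1$ is manifestly a series in $1/J^2$.

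Second, I would reduce the extraction of $\gamma_\ell$ to a one-dimensional problem. In the double lightcone limit \eqref{eq:lightconepreciselimit} the leading-twist data is generated by the $\SL2\R$ inversion of the double discontinuity, an operation governed entirely by the collinear Casimir $\Dbar=D_{\zb}$ of \eqref{eq:Dbardeff}, which acts on the blocks $k_{\bar{\mathsf h}}(\zb)$ with eigenvalue $J^2$. A single crossed-channel primary, after taking $\dDisc$ and expanding to the required order in $z$, contributes a power $(1-\zb)^{p}$, whose inversion is the seed integral
\[
\int_0^1 \frac{d\zb}{\zb^2}\,\kappa_{\bar{\mathsf h}}\,k_{\bar{\mathsf h}}(\zb)\,\dDisc\!\big[(1-\zb)^{p}\big],
\]
with $\kappa_{\bar{\mathsf h}}$ the standard inversion normalisation and $\dDisc\!\big[(1-\zb)^{p}\big]=2\sin^2(\pi p)\,(1-\zb)^{p}$. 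This evaluates to a ratio of Gamma functions in $\bar{\mathsf h}$, and the key technical step is to show that its large-$\bar{\mathsf h}$ expansion, together with $\kappa_{\bar{\mathsf h}}$, reorganises into a series in $J^2=\bar{\mathsf h}(\bar{\mathsf h}-1)$. This follows from the reflection and duplication identities of the Gamma function, which are the analytic shadow of the dihedral symmetry of the Casimir eigenvalues noted in the footnote to \eqref{eq:quarticCasimir}.

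Third, I would account for the full-versus-bare distinction that historically restored reciprocity. Because the inversion formula produces poles located at the physical scaling dimension, the generating function is naturally a function of $\bar{\mathsf h}_{\mathrm f}$ rather than of the bare $\bar{\mathsf h}_{\mathrm b}=\tfrac{\tau_0}2+\ell$. Denoting the symmetric function assembled in the previous step by $\mathsf g$, evaluating it at $\bar{\mathsf h}_{\mathrm f}$ and substituting $\bar{\mathsf h}_{\mathrm f}=\tfrac{\tau_0}2+\ell+\tfrac12\gamma_\ell$ yields the implicit equation \eqref{eq:reciprocitygamma}.

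The main obstacle is upgrading the $J^2$-organisation from leading order to all orders in $1/\bar{\mathsf h}$, rather than checking it term by term. Here I would argue structurally: since every operation in the extraction---the action of $\Dbar$, the inversion measure, and the seed integrals---is built covariantly from the collinear Casimir with eigenvalue $J^2$, the output can only depend on $\bar{\mathsf h}$ through $J^2$, so the symmetry is automatic once the seed integral is shown to be symmetric. The delicate points are that the subleading collinear corrections to \eqref{eq:collblocks}, which carry explicit $d$-dependence (appendix~\ref{app:subcollinearblocks}), enter only through $J^2$-symmetric combinations, and that the $\sin^2(\pi p)$ factor from $\dDisc$ does not spoil the reflection symmetry of the Gamma-function coefficients. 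Controlling these two points is where the real work lies.
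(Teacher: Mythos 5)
The first half of your proposal coincides with the route the paper actually takes (the proposition is proved there as part of theorem~\ref{thm:reciprocity} in section~\ref{sec:reciprocityrevisited}): expand the double-discontinuity in powers of $\xit=(1-\zb)/\zb$, invert each power to the Gamma-function ratio \eqref{eq:afterintegralterm}, and observe that each such term expands at large $J$ in integer powers of $1/J^2$, giving \eqref{eq:Uphbexpansion}. Your framing of $\bar{\mathsf h}\leftrightarrow 1-\bar{\mathsf h}$ as the reflection symmetry of the Casimir eigenvalue $J^2$ is also exactly the paper's.

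The gap is in your final, ``structural'' step. The symmetric objects produced by the inversion are the functions $U^{(p)}_\hb$, \emph{not} the CFT-data: the anomalous dimension is extracted through $a_\ell\gamma_\ell^p = U^{(p)}_\hb + \frac12\de_\hb U^{(p+1)}_\hb + \frac18\de_\hb^2 U^{(p+2)}_\hb+\ldots$ \eqref{eq:aellfromU}, and $\de_\hb$ is not Casimir-covariant: $\de_\hb J^2 = 2\hb-1=\sqrt{1+4J^2}$ is odd under the reflection, so the derivative terms explicitly violate the $J^2$-organisation. Hence your claim that ``every operation is built from the collinear Casimir, so the output can only depend on $\hb$ through $J^2$'' fails at precisely the point where the proposition becomes non-trivial: $\gamma_\ell$ as a function of the \emph{bare} $\hb$ does not admit a symmetric expansion, and the content of \eqref{eq:reciprocitygamma} is that the violating terms are exactly absorbed when $\gamma_\ell$ is repackaged as $\mathsf g(\hb+\frac12\gamma_\ell)$ --- Taylor-expanding $\mathsf g(\hb+\frac12\gamma_\ell)=\mathsf g(\hb)+\frac12\gamma_\ell\,\mathsf g'(\hb)+\ldots$ generates its own derivative terms that must cancel those from \eqref{eq:aellfromU} order by order in $g$. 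This bookkeeping is the heart of the paper's proof, and the cancellation requires a hypothesis your proposal never states: non-degeneracy of the twist family, since one needs relations such as $U^{(2)}_\hb/U^{(0)}_\hb=\parr{U^{(1)}_\hb/U^{(0)}_\hb}^2$, i.e.\ $\expv{a_\ell\gamma_\ell^2}\expv{a_\ell}=\expv{a_\ell\gamma_\ell}^2$, which fail under mixing where only the averages \eqref{eq:mixingdef} are analytic. Relatedly, the two points you flag as ``where the real work lies'' are benign --- $\sin^2(\pi p)$ is $\hb$-independent and cannot disturb the reflection structure, and the leading-twist extraction uses only $k_\hb$, not the subcollinear corrections --- while your third step, asserting that the generating function is ``naturally a function of $\bar{\mathsf h}_{\mathrm f}$'' because the inversion poles sit at physical dimensions, presupposes the conclusion: in the perturbative expansion the poles sit at $h_0$ with increasing multiplicity, and resolving them into physical locations is exactly the derivative bookkeeping you skipped.
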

\vspace{2.5ex}

\noindent By the assumption $\gamma_\ell=O(g)$, the relation \eqref{eq:reciprocitygamma} can be studied order by order in $g$ and the expression for $\gamma_\ell$ beyond leading orders will contain derivatives of the function $\mathsf g$. Reciprocity therefore assumes that there exists an analytic continuation in spin making these derivatives well-defined.

The reciprocity relation was subsequently observed to hold in perturbative results at higher order, such as QCD and \NN4 SYM at three loops \cite{Moch:2004pa,Vogt:2004mw} and \NN4 SYM at seven loops in the planar ($N\to\infty$) limit \cite{Marboe:2016igj}. Indeed, reciprocity and the related principle of transcendentality was a leading organisational principle in this work \cite{Beisert:2006ez}. Reciprocity was also observed to persist recursively in other conformal field theories such as the critical $\OO N$ model, in the spirit of proposition~\ref{prop:reciprocitygamma}.

While traditional diagrammatic methods have generated results at high loop order for anomalous dimensions, OPE coefficients are much harder to compute. However, explicit results for correlators at loop order in \NN4 SYM generated OPE coefficients of spinning operators by direct conformal block decomposition \cite{Dolan:2004iy}, for instance in the $\O_{\boldsymbol{20'}}$ correlator at three-loops \cite{Eden:2012rr}. It was realised that the large spin expansion of OPE coefficients has similar properties to the anomalous dimensions \cite{AldayBissi2013}, and a combined reciprocity principle was proven in \cite{AldayBissiLuk2015} for any conformal field theory. We will re-derive this and give a precise statement in theorem~\ref{thm:reciprocity} in section~\ref{sec:reciprocityrevisited}.

\subsection{Lightcone limit and crossing}\label{sec:lightconecrossing}

The relation between the large spin limit of CFT-data and the double lightcone expansion of conformal four-point functions is the key ingredient in this thesis. In the discussion until this point, we have mostly focussed on the whole twist family and the collinear limit $z\to0$. Let us now specialise further and look at the double lightcone limit $z\to0$, $\zb\to1$. This limit emphasises the asymptotic behaviour at large spin of the CFT-data, which corresponds to expansions like \eqref{eq:S1expansion}.

In \cite{AldayMaldacena2007}, this limit was investigated for \NN4 SYM, where the anomalous dimensions of leading twist operators admit the particular expansion \eqref{eq:S1expansion}, which is dominated by the term $\log \ell$~\footnote{The prefactor of this leading logarithm agrees with the \emph{cusp anomalous dimension} and is known at four loops \cite{Henn:2019swt}, and non-perturbatively in the planar limit \cite{Beisert:2006ez}.}. In that paper, configurations corresponding to operators of large spin $\ell$ were analysed in terms of states in an auxiliary theory in $\mathrm{AdS}_{3}\times S^1$ consisting of two particles at a given separation distance $\chi=\log \ell$. In this picture, twists in \NN4 SYM correspond to energies in the auxiliary theory. For leading twist operators, which are single-trace, a flux tube connecting the operators gives rise to an energy linear in $\chi$, explaining the logarithmic scaling of $\gamma_\ell$. Double-trace operators, on the contrary, correspond to configurations of where the interaction energy decays as $E\sim e^{-\alpha\chi}$, where $\alpha$ is equal to the smallest twist in the CFT spectrum: $\alpha=\tau_{\mathrm{min}}$. This gives the generic scaling $\gamma_\ell\sim \ell^{-\tau_{\mathrm{min}}}$.

In two important papers from 2012 \cite{Fitzpatrick:2012yx,Komargodski:2012ek}, the observations from \cite{AldayMaldacena2007} were generalised to arbitrary CFTs and were proved using explicit computations in the double lightcone limit. In \cite{Komargodski:2012ek} connections were made between the picture of \cite{AldayMaldacena2007} and the older results from deep inelastic scattering and Nachtmann's theorem. In \cite{Fitzpatrick:2012yx} a more direct approach was taken, and the results were then related to physics in AdS, noting that in any CFT, even beyond the usual holographic limits, double-trace operators for sufficiently large spin can be interpreted as states which correspond to two disjoint ``blobs'' orbiting each other. The most important results in the two almost simultaneous papers were the same, and we review and prove two of them here.
\begin{prop}\label{prop:DTexist}
In any conformal field theory in $d>2$ dimensions, containing operators $\O_1$, $\O_2$ with twists $\tau_{1}$ and $\tau_2$, the value $\tau_\infty=\tau_1+\tau_2$ is an accumulation point in twist, i.e.\ there is a family of operators $\O_\ell$ where $\tau_\ell\to\tau_1+\tau_2$ as $\ell\to\infty$.
\end{prop}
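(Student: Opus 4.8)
The plan is to study the correlator $\expv{\O_1(x_1)\O_2(x_2)\O_2(x_3)\O_1(x_4)}$ in the double lightcone limit $z\to0$, $\zb\to1$, and to show that crossing symmetry forces the existence of an infinite family of operators whose twist accumulates at $\tau_1+\tau_2$. The engine driving the whole argument is the contribution of the identity operator in the crossed channel. First I would write down the conformal block expansion \eqref{eq:CBexp} in the direct ($s$-) channel, where operators appearing in the $\O_1\times\O_2$ OPE contribute, and organise the sum by twist using the collinear behaviour $G^{(d)}_{\Delta,\ell}(u,v)\sim z^{\tau/2}k_{\hb}(\zb)$. On the crossed-channel side, the identity contributes the term $1$, which after stripping the crossing kinematic prefactor $(u/v)^{\Delta_\phi}$ (suitably adapted to the non-identical external dimensions $\Delta_1,\Delta_2$) produces a power-law singularity in $v\to0$, i.e.\ in $1-\zb\to1$. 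The key point is that this crossed-channel singularity must be reproduced by the direct-channel sum.

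The central mechanism is a mismatch of singularities. As $z\to0$, each individual direct-channel block contributes a finite power $z^{\tau/2}$, so a finite collection of operators can only produce a bounded power of $z$. The crossed-channel identity, however, generates a term behaving like $z^{(\tau_1+\tau_2)/2}$ times a function of $\zb$ that is singular as $\zb\to1$ (a power of $1-\zb$, possibly with logarithms). To match an $\zb\to1$ singularity of the form $(1-\zb)^{-p}$ on the direct-channel side, one cannot use any finite sum of $\SL2\R$ blocks $k_{\hb}(\zb)$, since each such block is regular at $\zb=1$ up to at most a logarithmic singularity. The resolution, following \cite{Fitzpatrick:2012yx,Komargodski:2012ek}, is that an infinite tower of operators with growing spin $\hb\to\infty$ is required: the large-$\hb$ asymptotics of $k_{\hb}(\zb)$ build up the singularity at $\zb=1$, and the fact that the leading $z$-power on both sides is $z^{(\tau_1+\tau_2)/2}$ forces these operators to have twist $\tau_\ell\to\tau_1+\tau_2$.

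Concretely, I would proceed as follows. First, establish that the sum over the direct-channel leading-twist family, $\sum_\ell a_\ell\, z^{\tau_\ell/2}k_{\hb}(\zb)$, can reproduce a crossed-channel power singularity $(1-\zb)^{-(\Delta_1+\Delta_2)/2+\ldots}$ only through its large-$\hb$ tail, using the standard integral/asymptotic estimate that the dominant spins scale as $\ell\sim(1-\zb)^{-1/2}$ (as already quoted in the excerpt). Second, extract the leading $z$-power on the crossed side from the identity block and match it against the direct side; this fixes the accumulation twist to be $\tau_\infty=\tau_1+\tau_2$. Third, argue that the density of such operators must be infinite, since a finite number would give a convergent, hence subleading, contribution. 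The main obstacle I anticipate is making the interchange of the infinite sum and the $z\to0$, $\zb\to1$ limits rigorous, and controlling the large-spin asymptotics of the collinear blocks $k_{\hb}(\zb)$ uniformly enough to pin down both the location of the accumulation point and the fact that infinitely many operators contribute. This is exactly the technical heart of the lightcone bootstrap, and handling it cleanly (rather than just matching leading powers heuristically) is where the real work lies.
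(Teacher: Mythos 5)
Your proposal is correct and follows essentially the same route as the paper: the mixed correlator $\expv{\O_2\O_1\O_1\O_2}$ in the double lightcone limit, with the crossed-channel identity producing a power divergence $(1-\zb)^{-\Delta_1}$ that no finite sum of direct-channel collinear blocks (each at most $\log(1-\zb)$-divergent) can reproduce, so that infinitely many operators are required and matching the leading $z$-power $z^{(\Delta_1+\Delta_2)/2}$ pins the accumulation twist to $\tau_1+\tau_2$. (Note only the typo ``$1-\zb\to1$'', which should read $\zb\to1$; the paper likewise stops at the qualitative singularity-mismatch argument, deferring the quantitative large-spin kernel estimate to the following proposition.)
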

\begin{proof}
Consider the mixed correlator $\G_{2112}(u,v)\sim\expv{\O_2(x_1)\O_1(x_2)\O_1(x_3)\O_2(x_4)}$. In the standard normalisation\footnote{More precisely, we use conventions such that $\G_{ijkl}(u,v)=x_{12}^{\Delta_i+\Delta_j}x_{34}^{\Delta_k+\Delta_l} x_{13}^{\Delta_k-\Delta_l}x_{24}^{\Delta_j-\Delta_i}\times x_{14}^{\Delta_i-\Delta_j-\Delta_k+\Delta_l}\expv{\O_i(x_1)\O_j(x_2)\O_k(x_3)\O_l(x_4)}$.}, crossing for this correlator reads
\beq{
\G_{2112}(u,v)=\frac{u^{\frac{\Delta_1+\Delta_2}2}}{v^{\Delta_1}}\G_{1122}(v,u).
}
In the direct channel (left-hand side), the collinear conformal blocks, using \eqref{eq:collineardifferentmod}, expand in the double lightcone limit as $\frac{-\Gamma(2\hb)}{\Gamma(\hb+\frac{\Delta_{12}}2)\Gamma(\hb-\frac{\Delta_{34}}2)}z^{\tau/2}\log(1-\zb)$ plus regular and higher order terms\footnote{For the case $\Delta_1=\Delta_2$ we give the complete expansion in \eqref{eq:gausshyperexp}.}. The crossed channel (right-hand side), contains the contribution $1$ from the identity operator, multiplied by the crossing factors. This leads to the equation
\beq{\label{eq:impossible}
-\sum_\O a_\O z^{\tau_\O/2} \frac{\Gamma(2\hb_\O)\log(1-\zb)}{\Gamma(\hb_\O+\frac{\Delta_{12}}2)\Gamma(\hb_\O-\frac{\Delta_{34}}2)}= \frac{z^{\frac{\Delta_1+\Delta_2}2}}{(1-\zb)^{\Delta_1}}+\text{reg.},
}
where we sum over all possible direct-channel operators. Since each term on the left-hand side only contain a $\log$ divergence, the power divergence on the right-hand side must arise from infinitely many terms. By further matching the correct $z$ dependence we find that we must have, for any interval $\tau_1+\tau_2\pm\delta$, an infinite number of operators $\O$ with $\tau_\O$ in that interval. This proves proposition~\ref{prop:DTexist}.
\end{proof}
\vspace{2.5ex}

\noindent The involved expansions around large spin were analysed quantitatively in \cite{Fitzpatrick:2012yx,Komargodski:2012ek} by approximating the sums \eqref{eq:impossible} over spin with an integral, which is valid up terms regular or at most logarithmically divergent in $w=1-\zb\to0$. We refer to this as the \emph{kernel method}, and provide more details in section~\ref{sec:kernel.method}. The leading $w$ divergence in the corresponding expansion can be computed by
\beq{\label{eq:kernelshort}
\sum_\hb \frac{2\Gamma(\hb)^2}{\Gamma(2\hb-1)}u\!\parr{\hb(\hb-1)} k_\hb(1-w)\sim\int\limits_0^\infty \df \hat J\ \frac{4\hat J}{w}\,  u\parr{\frac{\hat J^2}{w}} K_0(2\hat J), 
} where $u(J^2)$ denotes any additional spin dependence relative to the free theory OPE coefficients in four dimensions\footnote{We note how the ratios of Gamma functions cancel between \eqref{eq:impossible} (restricted to identical external operators) and \eqref{eq:kernelshort}, up to from a factor $2\hb-1$. That factor is in turn consumed by the Jacobian of the change of variables $\hb\rightsquigarrow \hat J=\sqrt{\hb(\hb-1)}/w$.} and $K_0$ is a modified Bessel function of the second kind. A direct application of the kernel method for the case $u(J^2)=1$ gives the sum $\frac1w$. Taking instead $u(J^2)\sim J^{2\alpha}$ we get a sum which generates a leading divergence of the form $\frac1{w^{1+\alpha}}$. We will use this result to prove the next proposition.

\begin{prop}\label{prop:DTgamma}
The double-twist operators $[\O_1,\O_2]_{0,\ell}$ according to proposition~\ref{prop:DTexist} have anomalous dimensions which have asymptotic behaviour at large $\ell$ of the form
\beq{\label{eq:gammafrom1212}
\gamma_\ell\sim -\frac{a_{\mathrm{min}}}{\ell^{\tau_{\mathrm{min}}}},
}
where $\tau_{\mathrm{min}}$ is the twist of the smallest twist operator $\O_{\mathrm{min}}\neq\1$ appearing in both OPEs $\O_1\times \O_1$ and $\O_2\times \O_2$, and $a_{\mathrm{min}}$ is the corresponding OPE coefficient $a_{\mathrm{min}}=c_{11\mathrm{min}}c_{22\mathrm{min}}$.
\end{prop}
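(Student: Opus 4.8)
The plan is to refine the argument behind proposition~\ref{prop:DTexist} by keeping, in the crossed channel of $\G_{2112}$, not only the identity but also the next operator $\O_{\mathrm{min}}$. Recall that the crossed channel is governed by $\G_{1122}(v,u)$, so only operators appearing in both $\O_1\times\O_1$ and $\O_2\times\O_2$ contribute, each weighted by $c_{11\O}c_{22\O}$; the lowest-twist such operator beyond the identity is precisely $\O_{\mathrm{min}}$, with weight $a_{\mathrm{min}}=c_{11\mathrm{min}}c_{22\mathrm{min}}$. The identity reproduces the leading double-twist coefficients $a^{(0)}_\ell$ and fixes $\tau_\infty=\tau_1+\tau_2$, so the anomalous dimensions are the first correction, and the natural strategy is to isolate the term in crossing that is linear in $\gamma_\ell$ and match it against the $\O_{\mathrm{min}}$ contribution.

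The key observation is that both sides carry a distinguished $\log z$ term whose coefficients can be matched directly. On the direct side, writing $z^{\tau_\ell/2}=z^{(\tau_1+\tau_2)/2}(1+\tfrac12\gamma_\ell\log z+\ldots)$ as in \eqref{eq:ordergexpansion} shows that the coefficient of $z^{(\tau_1+\tau_2)/2}\log z$ is $\tfrac12\sum_\ell a^{(0)}_\ell\gamma_\ell\,k_{\hb}(\zb)$. On the crossed side, the identity block is constant and produces no $\log z$, while the block of $\O_{\mathrm{min}}$, evaluated near the direct-channel lightcone through \eqref{eq:collineardifferentmod}, develops a logarithm from $k_{\hb_{\mathrm{min}}}(1-z)\sim-\tfrac{\Gamma(2\hb_{\mathrm{min}})}{\Gamma(\hb_{\mathrm{min}})^2}\log z$ as $z\to0$. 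Including the crossing prefactor $u^{(\Delta_1+\Delta_2)/2}v^{-\Delta_1}$ and using $u\to z$, $v\to w:=1-\zb$ in the double lightcone limit, the crossed side contributes, at order $z^{(\Delta_1+\Delta_2)/2}\log z$, a term proportional to $a_{\mathrm{min}}\,w^{\tau_{\mathrm{min}}/2-\Delta_1}$. Thus the problem reduces to demanding that the spin sum $\sum_\ell a^{(0)}_\ell\gamma_\ell\,k_{\hb}(\zb)$ reproduce this power of $w$ as $w\to0$.

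This last step is exactly the setting of the kernel method, equations~\eqref{eq:impossible}--\eqref{eq:kernelshort}: the small-$w$ behaviour of such a sum is controlled by the large-$\hb$ (equivalently large-$\ell$) asymptotics of the summand. The identity already fixes $\sum_\ell a^{(0)}_\ell k_{\hb}(\zb)\sim w^{-\Delta_1}$, so inserting an extra factor behaving as $\gamma_\ell\sim J^{-\beta}\sim\ell^{-\beta}$ shifts the leading power to $w^{-\Delta_1+\beta/2}$. Matching to $w^{\tau_{\mathrm{min}}/2-\Delta_1}$ forces $\beta=\tau_{\mathrm{min}}$, and tracking the overall normalisation---the $\Gamma$-factor from the lightcone limit of the $\O_{\mathrm{min}}$ block together with the Jacobian of the change of variables $\hb\rightsquigarrow\hat J$---fixes the coefficient, yielding $\gamma_\ell\sim-a_{\mathrm{min}}\,\ell^{-\tau_{\mathrm{min}}}$. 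The main obstacle is precisely this inversion: the kernel method relates summand asymptotics to $w$-powers only up to contributions that are regular or logarithmic in $w$, so one must argue that the leading divergence uniquely determines the leading large-$\ell$ asymptotics of $\gamma_\ell$ (a Tauberian-type statement), and then carefully assemble the several $\Gamma$-function factors to obtain the precise constant $-a_{\mathrm{min}}$ rather than merely the power $\ell^{-\tau_{\mathrm{min}}}$.
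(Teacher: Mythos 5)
Your proposal is correct and follows essentially the same route as the paper: keep the identity and $\O_{\mathrm{min}}$ in the crossed channel of $\G_{2112}$, extract the $\log z$ term multiplying $z^{(\Delta_1+\Delta_2)/2}$, and use the kernel method to translate the relative power $a_{\mathrm{min}}\,w^{\tau_{\mathrm{min}}/2}$ between the $a_\hb$ and $a_\hb\gamma_\hb$ sums into the large-spin asymptotics $\gamma_\ell\sim -a_{\mathrm{min}}\ell^{-\tau_{\mathrm{min}}}$. Your explicit flagging of the Tauberian-type caveat and the $\Gamma$-factor bookkeeping (which the paper relegates to a footnote noting their cancellation against the Jacobian of $\hb\rightsquigarrow\hat J$) is a faithful elaboration of the same argument, not a different one.
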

\begin{proof}
For this proof we consider the divergence in $w$ introduced by the contributions from the operators $\1$ and $\O_\mathrm{min}$ appearing in the crossed channel. Including crossing factors these contributions take the form
\beq{
\1: \quad \frac{z^{\frac{\Delta_1+\Delta_2}2}}{w^{\Delta_1}}, \qquad \O_{\mathrm{min}}:\quad a_{\mathrm{min}}\frac{z^{\frac{\Delta_1+\Delta_2}2}w^{\frac{\tau_{\mathrm{min}}}2}}{w^{\Delta_1}}\parr{-\log z+\text{reg.}},
}
where $w=1-\zb$.
We match this with an expansion of the form of \eqref{eq:expwithoutmixing}: $\sum_\hb a_\hb(1+\frac12\gamma_\hb\log z)z^{\tau_0/2}k_\hb(\zb)$. We see that the anomalous dimensions correspond to the relative power $-a_{\mathrm{min}}w^{\frac{\tau_\mathrm{min}}2}$ between the terms $a_\hb$ and $a_\hb\gamma_\hb$, which translates exactly to the result \eqref{eq:gammafrom1212} using the kernel method. 
\end{proof}
\vspace{2.5ex}

\noindent The principles behind \cite{Fitzpatrick:2012yx,Komargodski:2012ek}, essentially the argument in the proof of proposition~\ref{prop:DTgamma}, were subsequently refined and extended to higher orders in the large spin expansion, providing understanding of which crossed-channel operators correspond to particular terms in the anomalous dimensions in various theories \cite{Alday:2015ota,AldayZhiboedov2015}. Collectively these methods became known as the \emph{lightcone bootstrap}, used in parallel with the more general \emph{analytic bootstrap}. Thanks to its universal assumptions, the lightcone bootstrap could be used for rigid derivations of facts valid in a wide range of theories. 
Starting from some considerations in \cite{Komargodski:2012ek}, this was used to rederive general properties of correlators in holographic CFTs, where the expansion parameter is $1/C_T$ \cite{Kaviraj:2015xsa}. If one further assumes that the only light operator corresponding to a single-particle state in $\mathrm{AdS}$ is the stress tensor, one gets a CFT definition of Einstein gravity. In \cite{Kulaxizi:2019tkd} CFT-data were derived for the double-twist operators in such a theory, the ``double stress tensors''. 
Another fruitful direction has been the relation to conformal collider physics \cite{Li:2015itl}, leading to a proof of the average null energy condition \cite{Hartman:2016lgu}. Finally, a demonstration of the lightcone bootstrap beyond any perturbative limit came in the elegant paper \cite{Simmons-Duffin:2016wlq}. There the CFT-data was computed for a large number of operators in the 3d Ising model using numerical bootstrap, and the spectrum was then analysed from the lightcone bootstrap. While the twist family $[\sigma,\sigma]_{0,\ell}$ was easily understood, the two families $[\epsilon,\epsilon]_{0,\ell}$ and $[\sigma,\sigma]_{1,\ell}$ have approximately equal $\tau_\infty$~\footnote{As indicated in table~\ref{tab:operatornames}, the operators $\sigma$ and $\epsilon$ in the 3d Ising model are identified with $\phi$ and $\phi^2$ in the $\epsilon$ expansion. The values $\Delta_\sigma= 0.5181489(10)$ and $\Delta_\epsilon={1.412625(10)}$ \cite{Kos:2016ysd} generate the two values $\tau_\infty=2.825$ and $\tau_\infty=3.036$.} and participate in a non-trivial non-perturbative mixing, which generates an eigenvalue repulsion of the two families at low spin.

Large spin perturbation theory (LSPT), proposed in 2016 in \cite{Alday2016} and demonstrated with a number of examples in \cite{Alday2016b}, builds on the lightcone bootstrap with the following additional ingredients. In LSPT, the anomalous dimensions and OPE coefficients are treated on the same footing, whereas previous work had focussed mostly on the former. Another feature is that the crossed-channel operators generating corrections to the CFT-data may be introduced in terms of an ansatz where no assumptions need to be made on for instance their anomalous dimensions. This introduces free parameters in the theory, which can be fixed at later stages through consistency conditions. These ingredients are tied together with a computational procedure of computing CFT-data from the crossed-channel operators, which we call an \emph{inversion procedure}. We will give a more concrete presentation of large spin perturbation theory at the end of this chapter, after we have introduced Caron-Huot's Lorentzian inversion formula \cite{Caron-Huot2017}, which provides one such inversion procedure.

The ideas generated from the analytic bootstrap and lightcone bootstrap have become a powerful tool for practical computations. This has become particularly useful in applications to holographic CFTs, in particular \NN4 SYM at strong coupling. Specifically, studying the boundary CFT at second order perturbation theory in the planar and strong coupling limit has generated results corresponding to loop supergravity and string corrections in $\mathrm{AdS}$ \cite{Aharony:2016dwx,AldayBissi2017,Aprile:2017bgs,Aprile:2017xsp,Aprile:2018efk,Caron-Huot:2018kta,Alday:2018pdi,Aprile:2019rep,Alday:2019nin}. The main obstacle that was overcome in these works was the resolution of mixing of degenerate operators, and it was shown on general grounds in \cite{Alday:2019qrf} that the growth in degeneracy is related to the number of extra dimensions in the dual gravity/string theory.

\section{The Lorentzian inversion formula}\label{sec:Inversionformula}

A major concern with the lightcone bootstrap, and indeed large spin perturbation theory, was the assumption, based on empirical observation, that CFT-data could be written as analytic functions of spin, however with spin zero often excluded. For instance, the reciprocity statement in proposition~\ref{prop:reciprocitygamma} relies on being able to differentiate the function $\mathsf g$. At best, the lightcone bootstrap could argue that the expansions around infinite spin correspond to the asymptotic behaviour. Even with a large spin expansion like \eqref{eq:S1expansion} known to all orders, it would not be certain that the anomalous dimension would take the precise value $S_1(\ell)$ for small or any finite value of $\ell$. 

The situation was greatly improved by a paper by Caron-Huot in 2017 with the title \emph{Analyticity in spin in conformal theories} \cite{Caron-Huot2017}. There it was not only shown that the CFT-data is analytic in spin, but an explicit integral formula was provided for performing the inversion procedures described above. With such a formula, one can directly check that asymptotic series like \eqref{eq:S1expansion}, with appropriate non-perturbative completions, indeed correspond to functions which give correct values at finite 
spin\footnote{We refer generically to the large spin expansion as the asymptotic behaviour of CFT-data. However, here we use asymptotic in the precise meaning of a series expansion with zero radius of convergence.}.

The inversion formula plays a central role in this thesis, we will devote this whole section to it. We start with an overview of its derivation, leaving the details to \cite{Caron-Huot2017}. Then we will extract from the general formula a specific, one-dimensional formula adopted for CFTs with a small expansion parameter. Since this is the main formula of the thesis, we give a detailed derivation keeping track of all factors. Finally, the Lorentzian inversion formula will allow us to rederive reciprocity and give a precise formulation thereof.

\subsection{Caron-Huot's inversion formula}

We begin by summarising the derivation of the Lorentzian inversion formula in \cite{Caron-Huot2017}. For simplicity we consider the case of external identical scalar operators. Details of the computation, as well as the extension to non-identical scalars, can be found in the original reference.

The starting point is the \emph{Euclidean inversion formula}, which follows as a property of harmonic analysis on the Euclidean conformal group $\SO{d+1,1}$ \cite{Dobrev:1977qv}, for a recent treatment see \cite{Karateev:2018oml}. The objects of study there are conformal partial waves, which form a basis for the space of Euclidean correlators. Each conformal partial wave $\Psi_{\Delta,\ell}(z,\zb)$ is a function labelled by an positive integer spin $\ell$ and a continuous dimension $\Delta$ taking values on the principal series $\Delta\in\frac d2+i \R$. The conformal partial wave can be constructed from the corresponding conformal block, together with the conformal block with the shadow dimension:
\begin{equation}
\Psi_{\Delta,\ell}=\frac12\parr{G^{(d)}_{\Delta,\ell}(z,\zb)+N^{(d)}_{\Delta,\ell}G^{(d)}_{d-\Delta,\ell}(z,\zb)},
\end{equation}
for some relative constant $N^{(d)}_{\Delta,\ell}$. The conformal partial waves satisfy an orthogonality relation $\expv{\Psi_{\Delta,\ell},\Psi_{\Delta',\ell'}}\sim \delta_{\ell,\ell'}\delta(-i(\Delta-\Delta'))$, where the inner product is given by a two-dimensional integration over the complex Euclidean $z$ plane with an appropriate measure factor. The Euclidean inversion formula is the corresponding Fourier transform for a Euclidean correlator and results in a function $C(\Delta,\ell)$ given by
\begin{equation}\label{eq:EuclInversion}
C(\Delta,\ell)\sim \int \df z\df \zb\, \mu(z,\zb) \overline\Psi_{\Delta,\ell}(z,\zb)\G_{\mathrm{Eucl.}}(z,\zb), 
\end{equation}
where $ \mu(z,\zb)$ is a measure factor. The function $C(\Delta,\ell)$ carries the dynamical information of the correlator; for each spin $\ell$ it has residues at physical operator dimensions $\Delta=\Delta_0$ and the residues are proportional to the OPE coefficients of the corresponding operators within the correlator $\G(z,\zb)$. We give the precise relation in \eqref{eq:Cdeltaell} below. By looking at the inverse transform 
\begin{equation}
\G_{\mathrm{Eucl.}}(z,\zb)\sim \sum_{\ell=0}^\infty\int\limits_{\Delta\in \frac d2+i\R}\frac{\df \Delta}{2\pi i}C(\Delta,\ell) \Psi_{\Delta,\ell}(z,\zb)
\end{equation}
one can reproduce the Euclidean correlator. By closing the $\Delta$ contour, evaluating the residues, and disentangling the contributions from the shadow blocks one recovers the usual conformal block decomposition \eqref{eq:CBexp} in the OPE limit.

The \emph{Lorentzian inversion formula} presented by Caron-Huot for identical scalar external operators $\phi$ takes the form \cite{Caron-Huot2017}
\begin{equation}\label{eq:invformulagen}
C(\Delta,\ell)=\left(
1\pm(-1)^\ell
\right)\frac{\kappa_{\Delta+\ell}}4 \int\limits_0^1\df z\int\limits_0^1\df \zb \mu(z,\zb)G^{(d)}_{d-1+\ell,1-d+\Delta}(z,\zb)\dDisc[\mathcal G(z,\zb)], 
\end{equation}
where we now keep track of all factors, given by $\mu(z,\zb)=|z-\zb|^{d-2}(z\zb)^{-d}$ and $\kappa_\beta=\frac{\Gamma(\beta/2)^4}{2\pi^2\Gamma(\beta)\Gamma(\beta-1)}$.
The kernel $G^{(d)}_{d-1+\ell,1-d+\Delta}(z,\zb)$ is functionally a conformal block, but it corresponds to a non-physical operator with scaling dimension $d-1+\ell$ and spin analytically continued to the value $\Delta+1-d$. This combination has the same eigenvalues \eqref{eq:casimireig2} and \eqref{eq:casimireig4} under the Casimir operators as the block for dimension $\Delta$ and spin $\ell$. The integration domain is now the spacelike Lorentzian kinematics, i.e.\ the square in figure~\ref{fig:diamond}. Finally, the $\pm$ sign is the same as the transformation of the correlator under $1\leftrightarrow2$.

The derivation of the Lorentzian inversion integral takes as a starting point the Euclidean formula, \eqref{eq:EuclInversion}, with the correct normalisation factors inserted. The idea is to analytically continue $z$ and $\zb$ to independent complex variables, and perform contour deformations. This requires dropping contributions from arcs at infinity, which turns out to be valid for $\ell>1$ and relies on analytic properties of the conformal partial wave and of the correlator. While the conformal partial waves have known analytic properties, the constraints from the correlator require physical input. Specifically, we require that the correlator belongs to a unitary CFT, and as such it is bounded in the Regge limit. More precisely, the correlator is more bounded than any individual block \eqref{eq:reggescaling} for $\ell>1$, which means that the contributions from operators with spin $\ell\geqslant2$ must all be related.

The result of the contour manipulations is a sum over four terms, where the correlator is evaluated at Lorentzian kinematics $z,\zb\in(0,1)$, but on different sheets in the complex $\zb$ plane. The terms combine into the double-discontinuity
\begin{equation}\label{eq:ddiscdef}
\dDisc[\G(z,\zb)]=\G(z,\zb) -\frac{1}{2} \G^\circlearrowleft(z,\zb)-\frac{1}{2}\G^\circlearrowright(z,\zb),
\end{equation}
defined as the correlator minus its two analytic continuations around $\zb=1$.
From a spacetime point of view, the double-discontinuity corresponds to the double commutator of the correlator
\begin{equation} \label{eq:doublecommutator}
\dDisc[\G(z,\zb)]=(z \zb)^{\Delta_\phi}\braccket0{[\phi(0,0),\phi(z,\zb)][\phi(1,1),\phi(\infty)]}0.
\end{equation}
The appearance of the double commutator is more obvious from the alternative derivation of the inversion formula given in \cite{Simmons-Duffin:2017nub}. Also there, the starting point is the Euclidean inversion formula \eqref{eq:EuclInversion}. The conformal partial wave is given a shadow representation, introducing a further integral over a point $x_5$. Under some partial gauge fixing, the integral variables become $x_3$ and $x_4$. Subsequent contour deformations move these points from the Euclidean configuration via a Wick rotation to their Lorentzian configuration. This results in four terms that combine into the double commutator \eqref{eq:doublecommutator}. Following the contour deformations in terms of the cross-ratios shows that for two of the terms, $\zb$ moves in its complex plane around branch cut at $\zb\geqslant1$ (in opposite directions), which produces the double-discontinuity \eqref{eq:ddiscdef}.

\subsection{The perturbative inversion formula}\label{sec:pertinversion}

We now derive a one-dimensional version of the inversion formula \eqref{eq:invformulagen}, which will be the main formula of this thesis. In particular, by focussing on a particular power $z^{\tau_0/2}$, the one-dimensional inversion formula will give the CFT-data corresponding to a twist family of reference twist $\tau_0$. We will present two equivalent versions, \eqref{eq:Tgenfdef} and \eqref{eq:Ugenfdef}, which are valid in perturbation theory, and give an explicit form of the CFT-data for a family of operators with that reference twist. 

We simplify the discussion by looking at the leading twist family, which dominates the small $z$ limit. Higher twist families are found by suitable projections, and we defer this to section~\ref{sec:projections}. We follow the manipulations of section~4 of \cite{Caron-Huot2017} and write \eqref{eq:invformulagen} as\footnote{We have limited the integral to $\zb>z$, at the expense of an extra factor of $2$, see \cite{Caron-Huot2017} for details.}
\beq{
C(\Delta,\ell)=\int\limits_0^1\frac{\df z}{2z}z^{\frac{\ell-\Delta}2}\int\limits_z^1\df \zb\, 2K_{\Delta,\ell}(z,\zb)\dDisc[\G(z,\zb)],
}
where we have factored out a potentially non-integer power of $z$ such that the remaining $z$ dependence can be expanded in a power series:
\beq{
K_{\Delta,\ell}(z,\zb)=\sum_{k=0}^\infty z^k K^{(k)}_{\Delta,\ell}(\zb).
}
This means that for each power $z^{\tau/2}$ in $\dDisc[\G(z,\zb)]$, the integral over $z$ results in a pole
\beq{\label{eq:Cdeltaell}
C(\Delta,\ell)\sim-\frac{a_\ell}{\Delta-(\tau+\ell)},
}
as well as poles from $k>0$. Taking the residue in $\Delta$ for fixed integer $\ell$ shows the existence of an operator with dimension $\tau+\ell$ and OPE coefficient $a_\ell$. 

The kernel contains the non-physical conformal block $G^{(d)}_{d-1+\ell,1-d+\Delta}(z,\zb)$, which we expand in the collinear limit, \eqref{eq:collblocks}. Changing variables to $h=\frac{\Delta-\ell}2$ and $\hb=\frac{\Delta+\ell}2$, this leads to an integral of the form
\beq{\label{eq:Chathhb}
\hat C(h,\hb)=\int\limits_0^1\frac{\df z}{z}z^{-h}\int\limits_0^1 \frac{\df \zb}{\zb^2}\kappa_{2\hb}k_\hb(\zb)\dDisc[\G(z,\zb)],
}
where $C(\Delta,\ell)=\frac12\hat C \parr{\tfrac{\Delta-\ell}2,\tfrac{\Delta+\ell}2}$ and we extended the limit of the inner integral to $0$. 

When using \eqref{eq:Chathhb} to read off the OPE coefficients, there will be an extra Jacobian factor induced by the change of variables. If we are interested in the OPE coefficient for a particular spin $\ell_0$, we integrate the residue of $C(\Delta,\ell)$ against a delta function
\beqa{\nonumber
a_{\ell_0}&=-\int \df \ell\oint \frac{\df \Delta}{2\pi i}C(\Delta,\ell)\delta(\ell-\ell_0)\\&=-\int \df \hb\oint \frac{\df h}{2\pi i}\hat C(h,\hb)\delta(\hb-h-\ell_0).\label{eq:aellfromcontours}
}
The locus $h_L$ of the pole in $h$ will depend on $\hb$, and evaluating the $\delta$ function means that we need to divide by the factor $\mathsf{Jac}=\pp\hb(\hb-h_L(\hb))$ evaluated at $\hb=h_L+\ell_0$:
\beq{\label{eq:aellfromresidue}
a_{\ell_0}=-\frac1{\mathsf{Jac}}\ \res_{h=h_L(\hb)}\left.\hat C(h,\hb)\right|_{\hb=h_L+\ell_0}.
}

Let us now specify to the case where we have a small expansion parameter $g$, which means that we can derive an explicit relation between the integral and the CFT-data. More precisely, we assume that the spectrum of the theory expands in a series in $g$, where $g=0$ corresponds to twist degeneracy, i.e.\ at $g=0$ all operators in a twist family has identical twist. With this assumption, we collect in the correlator $\G(z,\zb)$ all powers $z^h$ that are infinitesimally close to some value $z^{h_0}$, i.e.\ $h=h_0+h_1g+\ldots$, at the expense of introducing logarithms $z^h=z^{h_0}(1+gh_1\log z+\ldots)$. This defines a generating function
\beqa{\nonumber
\mathbf T(\log z,\hb)
&=T^{(0)}_\hb+\frac12T^{(1)}_\hb\log z+\frac18T^{(2)}_\hb\log^2 z+\ldots
\\&=2\kappa_{2\hb}\int\limits_0^1\frac{\df \zb}{\zb^2} \, k_\hb(\zb)\left.\dDisc[\G(z,\zb)]\right|_{z^{h_0}}, \label{eq:Tgenfdef}
}
where we have chosen the rational prefactors of $T^{(p)}_\hb$ as $2^{-p}/p!$. We will refer to \eqref{eq:Tgenfdef} as the \emph{perturbative inversion formula}.
The exact relation to the CFT-data is given by the following theorem, formulated in analogy with \cite{Alday:2017vkk}.
\begin{thm}\label{thm:mainT}
Study a correlator $\G(z,\zb)$ of identical scalar operators $\phi$ in an expansion in $g$, in a theory where $g=0$ corresponds to twist degeneracy.
If the double-discontinuity $\dDisc[\G(z,\zb)]$ of a correlator of identical scalars $\phi$, in an expansion in $g$, contains a leading power $z^{h_0}$, then the following holds. \begin{enumerate}
\item The OPE $\phi\times \phi$ contains an infinite family of operators $\O_\ell$ for $\ell=2,4,\ldots$, of twist $\tau=2h_0+\gamma_\ell$ with $\gamma_\ell\sim O(g)$.
\item The OPE coefficients $a_\ell=c_{\phi\phi\O_\ell}^2$ and the anomalous dimensions $\gamma_\ell$ of these operators are analytic functions of spin, given by the formula
\beq{\label{eq:aellfromT}
a_\ell (\gamma_\ell)^p=\left.T^{(p)}_\hb+\frac12 \de_\hb T^{(p+1)}_\hb+\frac18 \de^2_\hb T^{(p+2)}_\hb+\ldots\ \right|_{\hb=\hb_0}, \quad \hb_0=h_0+\ell,
}
for $T^{(p)}_\hb$ given by \eqref{eq:Tgenfdef}.
\end{enumerate}
In the case of mixing of operators within the twist $\tau=2h_0$ family, \eqref{eq:aellfromT} is modified by
\beq{
a_\ell (\gamma_\ell)^p\rightsquigarrow \expv{a_\ell \gamma_\ell^p}:=\sum_{i=1}^{d_\ell} a_{\ell,i}\gamma^p_{\ell,i},
}
where $a_{\ell,i}$ and $2h_0+\gamma_{\ell,i}$ denote the OPE coefficients and twists of the $d_\ell$ operators of equal spin and approximate twist $2h_0$. The statement is now that the functions $\expv{a_\ell \gamma_\ell^p}$ are analytic in spin.
\end{thm}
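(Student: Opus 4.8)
The plan is to reduce the degenerate statement to the non-degenerate case and then exploit linearity of the inversion integral. First I would carry out the $z$-integral in the perturbative inversion formula \eqref{eq:Chathhb}--\eqref{eq:Tgenfdef} explicitly. Since $\int_0^1\frac{\df z}{z}\,z^{h_0-h}(\log z)^p\propto (h-h_0)^{-(p+1)}$, projecting onto the leading power $z^{h_0}$ turns the $\log z$-expansion carried by $\mathbf T(\log z,\hb)=\sum_p\frac{(\log z)^p}{2^p p!}T^{(p)}_\hb$ into a tower of poles of $\hat C(h,\hb)$ at $h=h_0$, with the coefficient of $(h-h_0)^{-(p+1)}$ fixed by $T^{(p)}_\hb$. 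By the assumption of twist degeneracy at $g=0$, these poles must organise, order by order in $g$, into a single shifted simple pole at $h=h_L(\hb)=h_0+\tfrac12\gamma(\hb)$, whose residue defines the analytic-in-spin OPE coefficient $A(\hb)$. Matching the two Laurent expansions identifies $T^{(p)}_\hb=A(\hb)\gamma(\hb)^p$, up to the overall normalisation carried by $\kappa_{2\hb}$ and the $C=\tfrac12\hat C$ convention, which I would fix once and for all by checking the leading term against \eqref{eq:Cdeltaell}.

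Next I would translate the residue at fixed $\hb$ into the physical CFT-data at integer spin through \eqref{eq:aellfromresidue}. The physical operator sits at the \emph{full} conformal spin $\hb_\ast=\hb_0+\tfrac12\gamma(\hb_\ast)$, a reciprocity-type implicit relation (cf. proposition~\ref{prop:reciprocitygamma}), and passing from the residue at fixed $\hb$ to the residue at fixed $\ell$ supplies the Jacobian $\mathsf{Jac}=1-\tfrac12\gamma'(\hb_\ast)$, giving $a_\ell\gamma_\ell^p=A(\hb_\ast)\gamma(\hb_\ast)^p/\mathsf{Jac}$. The derivative tower in \eqref{eq:aellfromT} is then exactly a Lagrange--B\"urmann reversion: for $\hb_\ast=\hb_0+\phi(\hb_\ast)$ one has $\sum_{n\geqslant0}\frac1{n!}\de_\hb^n\!\big[g(\hb)\phi(\hb)^n\big]\big|_{\hb_0}=g(\hb_\ast)/(1-\phi'(\hb_\ast))$, which with $\phi=\tfrac12\gamma$ and $g=A\gamma^p$ reproduces $a_\ell\gamma_\ell^p=\sum_n\frac1{2^n n!}\de_\hb^n T^{(p+n)}_\hb\big|_{\hb_0}$. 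This is equivalently the statement that the inverted data must reproduce the $(\log z+\de_\hb)$ structure of the direct-channel block expansion \eqref{eq:expwithoutmixing}.

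For the final, degenerate statement I would use that both $\dDisc$ and the inversion integral are linear. With $d_\ell$ operators of equal approximate twist $2h_0$ and spin $\ell$, the function $\hat C(h,\hb)$ is a sum of $d_\ell$ shifted simple poles $h=h_0+\tfrac12\gamma_i(\hb)$, so the same Laurent matching now delivers only the symmetric combination $T^{(p)}_\hb=\sum_{i=1}^{d_\ell}A_i(\hb)\gamma_i(\hb)^p$. Applying the reversion identity to each branch and summing over $i$ gives $\expv{a_\ell\gamma_\ell^p}=\sum_n\frac1{2^n n!}\de_\hb^n T^{(p+n)}_\hb|_{\hb_0}$, which is precisely \eqref{eq:aellfromT} under $a_\ell\gamma_\ell^p\rightsquigarrow\expv{a_\ell\gamma_\ell^p}$. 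Analyticity in spin of $\expv{a_\ell\gamma_\ell^p}$ then follows because Caron-Huot's formula guarantees $\hat C(h,\hb)$, hence every $T^{(p)}_\hb$, is analytic in $\hb$ above the relevant bound, and finite sums together with $\hb$-derivatives preserve analyticity.

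I expect the main obstacle to lie in the first two steps: justifying the resummation of the tower of poles at $h=h_0$ into genuine shifted poles, and establishing that the analytic branches $A_i(\hb),\gamma_i(\hb)$ exist and render the reversion convergent order by order in $g$. In the degenerate case there is an additional subtlety I would have to treat with care: the individual branches $\gamma_i(\hb)$ need not be single-valued analytic functions of $\hb$—they can be exchanged across branch points where two families cross, as in the $[\epsilon,\epsilon]_{0,\ell}$/$[\sigma,\sigma]_{1,\ell}$ mixing—so only the symmetric functions $\sum_i A_i\gamma_i^p$, which are exactly what the inversion formula produces as $T^{(p)}_\hb$, are guaranteed analytic. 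This is precisely why the theorem asserts analyticity of $\expv{a_\ell\gamma_\ell^p}$ but not of the individual $a_{\ell,i}$, $\gamma_{\ell,i}$. Tracking the overall constants through the $\zb$- and $z$-integrations is routine but error-prone, and I would defer it to the careful write-up.
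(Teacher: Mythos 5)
Your proposal is correct, and its first half coincides with the paper's proof: you perform the $z$-integral in \eqref{eq:Chathhb} to convert the $\log^p z$ tower into poles $(h-h_0)^{-(p+1)}$ weighted by $T^{(p)}_\hb$, and you match this Laurent expansion, order by order in $g$, against a sum of shifted simple poles exactly as in \eqref{eq:definitionoftheTs}--\eqref{eq:fromexpandingmixing}, identifying $T^{(p)}_\hb=\sum_i A_i(\hb)\gamma_i(\hb)^p$. Where you genuinely diverge is in extracting the CFT-data at integer spin. The paper deliberately \emph{avoids} the Jacobian route \eqref{eq:aellfromresidue} (remarking that the Jacobian ``becomes complicated at higher order in $g$'') and instead integrates $\hat C(h,\hb)$ against $\delta(\hb-h-\ell)$ as in \eqref{eq:aellfromcontours}: the higher-order poles produce $\de_h$-derivatives of the integrand, which the delta function converts into $\de_\hb$-derivatives of $T^{(p+n)}_\hb$, and the $p>0$ cases follow by inserting $(h-h_0)^p$ before taking residues. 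You instead keep the Jacobian $\mathsf{Jac}=1-\tfrac12\gamma'(\hb_\ast)$ explicitly and resum the derivative tower via the Lagrange--B\"urmann reversion identity $\sum_{n\geqslant0}\tfrac1{n!}\de_\hb^n\big[g(\hb)\phi(\hb)^n\big]\big|_{\hb_0}=g(\hb_\ast)/(1-\phi'(\hb_\ast))$ with $\phi=\tfrac12\gamma$; I have checked that with $g=A\gamma^p$ this reproduces \eqref{eq:aellfromT} with the correct coefficients $2^{-n}/n!$, so your mechanism is sound and in fact yields the all-orders formula in one closed identity rather than order by order in $g$, which is a genuine structural gain --- it makes manifest that \eqref{eq:aellfromT} is nothing but the inversion of the implicit reciprocity relation $\hb_\ast=\hb_0+\tfrac12\gamma(\hb_\ast)$. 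The price is paid in the degenerate case: your argument applies the reversion branch by branch and then sums over $i$, which presupposes that the individual $A_i(\hb)$, $\gamma_i(\hb)$ are differentiable in $\hb$, precisely what can fail at level crossings, as you yourself note. The paper's delta-function manipulation never decomposes into branches --- it acts only on the symmetric combinations $T^{(p)}_\hb$ throughout --- and is therefore the more robust derivation of the statement that only $\expv{a_\ell\gamma_\ell^p}$, not the individual branches, are analytic in spin; your per-branch step should be regarded as valid away from crossings, with the final identity between symmetric, analytic quantities extending by continuation, or else replaced by the paper's branch-free contour argument.
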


\begin{proof}
Performing the $z$ integral in \eqref{eq:Chathhb} with $\G(z,\zb)=\sum_p z^{h_0}G_p(\zb)\log^pz$ gives
\beqa{\nonumber
\hat C(h,\hb)
&=\sum_p\int\limits_0^1\frac{\df z}{z}z^{-h+h_0}\log^pz\int\limits_0^1 \frac{\df \zb}{\zb^2}2\kappa_{2\hb}k_\hb(\zb)\dDisc[G_p(\zb)]
\\&=-\sum_p\frac{
2^{-p}
}{(h-h_0)^{p+1}}T^{(p)}_\hb .
\label{eq:ChhinTs}
}
Assume now that each function $T^{(p)}_\hb$ admits an expansion in $g$ starting at order $g^p$~\footnote{It may be that the leading contribution is not at $g^0$ but at some $g^\alpha$. Such overall contribution can be factored out and the argument below holds.}. To make this dependence visible we will make the temporary replacement $T^{(p)}_\hb\rightsquigarrow  g^p\, T^{(p)}_\hb$, and omit terms higher order in $g$ at each $p$. 
This means that \eqref{eq:ChhinTs} takes the form
\beq{\label{eq:definitionoftheTs}
\hat C(h,\hb)=-\frac{T^{(0)}_\hb}{h-h_0}-\frac12\frac{gT^{(1)}_\hb}{(h-h_0)^2}-\frac14\frac{g^2T^{(2)}_\hb}{(h-h_0)^3}+\ldots.
}
Non-perturbatively, we expect only single-poles, which means that the presence of higher order poles must be a result of the expansion in $g$. Consider the function $\hat C(h,\hb)$ near the $d_\ell$ degenerate operators at spin $\ell$. Assuming twist degeneracy we expect the dependence
\beq{
\hat C(h,\hb)=-\sum_{i=1}^{d_\ell} \frac{a_i}{h-(h_0+\frac g2\gamma_i)},
}
where we have explicitly factored out $g$ and where we omit in $\gamma_i$ the terms higher order in $g$. Expanding this around small $g$ gives
\beq{\label{eq:fromexpandingmixing}
\hat C(h,\hb)\sim\sum_i \frac{a_i}{h-h_0}+\frac g2\frac{a_i\gamma_i}{(h-h_0)^2}+\frac{g^2}{4}\frac{a_i\gamma_i^2}{(h-h_0)^3}+\ldots,
}
matching the pole structure of \eqref{eq:definitionoftheTs}. In principle, we can now use \eqref{eq:aellfromresidue} to read off the OPE coefficients. This, however, requires computing the Jacobian factor, which becomes complicated at higher order in $g$. Instead we will make direct use of \eqref{eq:aellfromcontours} to extract $\expv{a_\ell}$. When evaluating the contour integrals in $h$, the higher order poles generate derivatives of the integrand:
\beqa{
&\expv{a_{\ell}}\\&=\!\int \!\df \hb\parr{\!T^{(0)}_\hb\delta(\hb-h-\ell)\!+\!\frac g2\de_h\!\parrk{T^{(1)}_\hb\delta(\hb-h-\ell)}\!+\!\frac12\!\parr{\frac g2}^2\!\de^2_h\!\parrk{T^{(2)}_\hb\delta(\hb-h-\ell)}\!+\!\dots\!}\!\!.\nonumber
}
When integrating against $\hb$, the delta function turns the derivatives into derivatives with respect to $\hb$, and we arrive at
\beq{
\expv{a_{\ell}}=\left.T^{(0)}_\hb+\frac g2 \de_\hb T^{(1)}_\hb+\frac{g^2}8 \de^2_\hb T^{(2)}_\hb+\ldots\right|_{h_0+\ell}.
}
This proves the $p=0$ case of \eqref{eq:aellfromT}. The case for higher $p$ can be shown by multiplying the integrand in \eqref{eq:aellfromcontours} by $(h-h_0)^p$, and performing the same contour integration. From \eqref{eq:fromexpandingmixing} we see that this now corresponds to extracting $\expv{a_\ell\gamma_\ell^p}$, and we get 
\beq{
\expv{a_{\ell}\gamma_{\ell}^p}=\left.g^pT^{(p)}_\hb+\frac {g^{p+1}}2 \de_\hb T^{(p+1)}_\hb+\frac{g^{p+2}}8 \de^2_\hb T^{(p+2)}_\hb+\ldots\right|_{h_0+\ell}
,}
which finishes our proof.
\end{proof}
\vspace{2.5ex}

\noindent For later convenience we define
\beq{\label{eq:Ugenfdef}
\mathbf U(\log z,\hb)
=\frac{\Gamma(\hb)^2}{\pi^2\Gamma(2\hb)}\int\limits_0^1\frac{\df \zb}{\zb^2} \, k_\hb(\zb)\left.\dDisc[\G(z,\zb)]\right|_{z^{h_0}} ,
}
where $\mathbf U(\log z,\hb)=U^{(0)}_\hb+\frac12U^{(1)}_\hb\log z+\frac18U^{(2)}_\hb\log^2 z+\ldots$. The CFT-data is now given by 
\beq{\label{eq:aellfromU}
A_\ell (\gamma_\ell)^p=\left.U^{(p)}_\hb+\frac12 \de_\hb U^{(p+1)}_\hb+\frac18 \de^2_\hb U^{(p+2)}_\hb+\ldots\right|_{\hb=\hb_0}, \quad \hb_0=h_0+\ell,
}
where $A_\ell$ are related to the usual OPE coefficients by
\beq{\label{eq:aArel}
a_\ell=\frac{\Gamma\parr{\frac{\Delta+\ell}2}^2}{\Gamma(\Delta+\ell-1)}A_\ell.
}
The normalisation of $\mathbf U(\log z,\hb)$ is defined such that the OPE coefficients of a free scalar field in four dimensions correspond to $A_\ell=2$~\footnote{In the original articles \cite{Paper2,Paper3,Paper4}, the normalisation of $U^{(p)}_\hb$ differs from here with a factor of $2\hb-1$.}. The functions $\mathbf T$ and $\mathbf U$ carry the same information, but in the following we find it useful to work with the $\mathbf U$.

\subsection{Reciprocity revisited}\label{sec:reciprocityrevisited}

As promised, let us now return to the statement about reciprocity, namely that CFT-data admit expansions around large spin organised in terms of integer powers of $J^2=\hb(\hb-1)$. From the discussion above, we have concluded that the CFT-data of a twist family can be described by the function $\mathbf U(\log z,\hb)$ computed from the perturbative inversion formula \eqref{eq:Ugenfdef}. We assume that the double-discontinuity $\left.\dDisc[\G(z,\zb)]\right|_{z^{h_0}} $ takes the form of a power series expansion in $(1-\zb)$, multiplied by an overall factor $(1-\zb)^\alpha$, however in general it can be a sum of several superimposed such series, potentially with logarithmic insertions. Since it is the $\zb\to1$ limit that is responsible for the large spin expansion, we can always re-expand this series in terms of $\frac{1-\zb}\zb$, giving
\beq{\label{eq:sumgeneraldisk}
\left.\dDisc[\G(z,\zb)]\right|_{z^{h_0}} =\sum_{k=0}^\infty c_k\parr{\frac{1-\zb}\zb}^{\alpha+k}.
}
In section~\ref{sec:invprocedures} we will explicitly show that integrating the terms in this sum against the kernel in \eqref{eq:Ugenfdef} gives the result
\beq{\label{eq:afterintegralterm}
\frac{\Gamma(\hb)^2}{\Gamma(2\hb)}\int\limits_0^1\frac{\df \zb}{\zb^2} \, k_\hb(\zb)\parr{\frac{1-\zb}\zb}^{\alpha+k}=\frac{\Gamma\parr{\hb-(\alpha+k+1)}\Gamma(\alpha+k+1)^2}{\Gamma\parr{\hb+(\alpha+k+1)}}.
}
Thus expanding the integral of the sum \eqref{eq:sumgeneraldisk} gives a sum of terms \eqref{eq:afterintegralterm}. Each such term expands for large $J=\sqrt{\hb(\hb-1)}$ as $J^{-2-2\alpha-2k}$ times integer powers of $J^{-2}$. We therefore conclude that the whole sum \eqref{eq:sumgeneraldisk} expands as $J^{-\nu}$ multiplied by integer powers of $J^{-2}$, where $\nu=2+2\alpha$, which may not be an even integer.

Allowing for logarithms and superimposed series we have in general
\beq{\label{eq:Uphbexpansion}
U^{(p)}_\hb=\sum_i\frac1{J^{\nu_i}}\sum_{k=0}^\infty\frac{u^{(p)}_{i,k}(\log J)}{J^{2k}}.
}
This will be used to derive the following precise version of the reciprocity principle, equivalent to \cite{AldayBissiLuk2015}.
\begin{thm}\label{thm:reciprocity}
For non-degenerate operators in a twist family, parametrised by $\ell$, the anomalous dimensions $\gamma_\ell$ and the OPE coefficients $a_\ell$ satisfy the recursive relations
\beqa{\label{eq:recip1}
\gamma_\ell&=\mathsf g\hspace{-1.5pt}\parr{\hb+\tfrac12\gamma_\ell},\\
A_\ell&=\parr{1-\frac12\mathsf g\hspace{1.5pt}'\hspace{-1.5pt}\parr{\hb+\tfrac12\gamma_\ell}}\mathsf A\hspace{-1.5pt}\parr{\hb+\tfrac12\gamma_\ell},
\label{eq:recip2}
}
with $a_\ell$ and $A_\ell$ are related by \eqref{eq:aArel}, where the functions $\mathsf A$, $\mathsf g$ have asymptotic expansions of the form
\beq{
\mathsf A(\bar{\mathsf h})=\sum_{i}\frac1{\mathsf J^{\alpha_i}}\sum_{k=0}^\infty \frac{a_{i,k}(\log \mathsf J)}{\mathsf J^{2k}}, \qquad
 \mathsf g(\bar{\mathsf h})=\sum_{i}\frac1{\mathsf J^{\beta_i}}\sum_{k=0}^\infty \frac{b_{i,k}(\log \mathsf J)}{\mathsf J^{2k}},
 \qquad \mathsf J^2=\bar{\mathsf h}(\bar{\mathsf h}-1).
}
\end{thm}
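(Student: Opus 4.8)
The plan is to repackage the CFT-data, computed by the perturbative inversion formula \eqref{eq:aellfromU}, in terms of the genuine conformal spin of the physical operators, $\bar{\mathsf h}=\tfrac{\Delta_\ell+\ell}2=\hb_0+\tfrac12\gamma_\ell$, rather than the bare value $\hb_0=h_0+\ell$. Restricting to non-degenerate operators guarantees that $\gamma_\ell$ and $A_\ell$ are honest single-valued functions of $\ell$. The decisive structural input is \eqref{eq:Uphbexpansion}: each $U^{(p)}_\hb$ is, up to an overall (possibly non-integer) power and logarithms, a function of the bare conformal spin $J^2=\hb(\hb-1)$. The combination in \eqref{eq:aellfromU}, namely $A_\ell(\gamma_\ell)^p=\sum_{k\geqslant0}\tfrac{1}{2^k k!}\de_\hb^k U^{(p+k)}_\hb$ evaluated at $\hb=\hb_0$, is exactly a Taylor series, and I expect it to resum into a shift $\hb_0\to\bar{\mathsf h}$ of this bare conformal spin together with a Jacobian. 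Establishing this resummation is the heart of the argument.

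First I would assemble the generating identity. Multiplying \eqref{eq:aellfromU} by $\tfrac{(\log z)^p}{2^p p!}$ and summing over $p$ produces $A_\ell\, z^{\gamma_\ell/2}$ on the left, while a reindexing of the resulting double sum collapses the right-hand side into
\[
A_\ell\, z^{\gamma_\ell/2}=\left.\sum_{m\geqslant0}\frac{1}{2^m m!}\parr{\log z+\de_\hb}^m U^{(m)}_\hb\right|_{\hb=\hb_0}.
\]
The operator $\log z+\de_\hb$ is precisely the one whose exponential $e^{\frac12\gamma(\log z+\de_\hb)}$ acts on a collinear block $z^{h_0}k_\hb(\zb)$ to give the anomalous shift $z^{h_0+\gamma/2}k_{\hb+\gamma/2}(\zb)$, i.e.\ it simultaneously moves the twist and the conformal spin. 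Since each $U^{(m)}_\hb$ depends only on $J^2=\hb(\hb-1)$, the $\de_\hb$-derivatives act through $J^2$ alone, and I would show that the whole tower reorganises into the same functions evaluated at the shifted argument $\bar{\mathsf h}$, whose conformal spin is $\mathsf J^2=\bar{\mathsf h}(\bar{\mathsf h}-1)$.

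Concretely, I would define $\mathsf g(\bar{\mathsf h})$ as the anomalous dimension expressed through the full conformal spin and read off the fixed-point relation \eqref{eq:recip1}, $\gamma_\ell=\mathsf g(\hb_0+\tfrac12\gamma_\ell)$; since $\gamma_\ell=O(g)$ this implicit equation has a unique solution order by order in $g$ by Lagrange inversion, so $\mathsf g$ is well defined. The expansion of $\mathsf g$ in integer powers of $\mathsf J^{-2}$ then follows because the bare-$J^2$ expansion of the $U^{(p)}_\hb$ maps, under $\hb_0\mapsto\bar{\mathsf h}$, into an expansion in $\mathsf J^2$. For the OPE coefficient I would use that $\bar{\mathsf h}$ is the true label of the operator, so converting the OPE density from the bare variable $\hb_0$ to $\bar{\mathsf h}$ costs the Jacobian $\tfrac{d\hb_0}{d\bar{\mathsf h}}=1-\tfrac12\mathsf g'(\bar{\mathsf h})$, obtained by implicit differentiation of $\bar{\mathsf h}=\hb_0+\tfrac12\mathsf g(\bar{\mathsf h})$. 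This is the factor appearing in \eqref{eq:recip2}, and it is the exact analogue of the Jacobian $\mathsf{Jac}$ in \eqref{eq:aellfromresidue}; defining $\mathsf A(\bar{\mathsf h})$ as the remaining, intrinsic part then yields \eqref{eq:recip2}.

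The main obstacle I anticipate is proving the resummation claim to all orders: that the infinite collection of $\de_\hb$-corrections in \eqref{eq:aellfromU} conspires exactly into (i) the replacement of the bare conformal spin by the full one inside $\mathsf g$ and $\mathsf A$, and (ii) the single Jacobian factor $1-\tfrac12\mathsf g'$, with no residual terms. The subtlety is that the shift $e^{\frac12\gamma\de_\hb}$ is \emph{field-dependent}, because $\gamma$ itself depends on $\hb_0$ through $\bar{\mathsf h}$, so the naive exponentiation must be corrected self-consistently; controlling this is precisely what the Lagrange-inversion bookkeeping organises, and I would verify it by checking that the generating identity above is invariant under the change of variables $\hb_0\leftrightarrow\bar{\mathsf h}$ order by order in $g$.
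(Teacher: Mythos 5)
Your proposal is sound and, at bottom, runs on the same mechanism as the paper's (sketched) proof: repackage the data extracted at the bare $\hb$ in terms of the full conformal spin $\bar{\mathsf h}=\hb+\tfrac12\gamma_\ell$, identify the derivative tower in \eqref{eq:aellfromU} with an argument shift plus a single Jacobian, and verify order by order in $g$ that no reciprocity-violating remainder survives, with non-degeneracy as the essential hypothesis. What you do differently is the organisation, and it is a genuine improvement in transparency. The paper's sketch tracks term by term the violating pieces generated by $\de_\hb$ (note $\de_\hb J^2=2\hb-1=\sqrt{1+4J^2}$, odd under $\hb\to1-\hb$) and observes that they cancel precisely when relations such as $U^{(2)}_\hb/U^{(0)}_\hb=\parr{U^{(1)}_\hb/U^{(0)}_\hb}^2$ hold, i.e.\ in the absence of mixing. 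Your generating identity $A_\ell\, z^{\gamma_\ell/2}=\sum_m\frac1{2^m m!}\parr{\log z+\de_\hb}^m U^{(m)}_\hb\big|_{\hb_0}$ (the reindexing is correct) recasts that same verification: it makes manifest that the non-degenerate case is equivalent to the factorisation $U^{(p)}_\hb=U^{(0)}_\hb\,\mathsf g^{\,p}$, i.e.\ to $\hat C(h,\hb)$ resumming into a single simple pole at $h=h_0+\tfrac12\mathsf g(\hb)$, after which the shift and the Jacobian follow from the paper's own residue formula \eqref{eq:aellfromresidue} rather than from term-chasing, with Lagrange inversion controlling the field-dependent shift. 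Since $\mathsf g$ and the residue function are then derivative-free combinations of $J^2$-functions, their evaluation at $\bar{\mathsf h}$ automatically yields $\mathsf J^2$-expansions, which is the conclusion of the theorem.

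Two cautions. First, your phrase ``the $\de_\hb$-derivatives act through $J^2$ alone'' is the one spot where the wording could hide the actual difficulty: a derivative of a $J^2$-function is \emph{not} a $J^2$-function, and the nontrivial content of the theorem is that all such odd pieces are absorbed into the argument shift and one Jacobian factor — exactly the cancellation the paper checks, and exactly what fails under mixing. Relatedly, your appeal to non-degeneracy via ``single-valued functions of $\ell$'' is too weak a diagnosis, since by theorem~\ref{thm:mainT} the averages $\expv{a_\ell\gamma_\ell^p}$ are analytic in spin even with degeneracy; the hinge is the factorisation of the $U^{(p)}$ tower (equivalently, the factorised left-hand side $A_\ell z^{\gamma_\ell/2}$ of your generating identity, which has no analogue for $\sum_i a_{\ell,i}z^{\gamma_{\ell,i}/2}$). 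Second, you rightly flag the all-orders resummation as the unproven step — but note that the paper's own argument is explicitly a proof sketch at the same order-by-order level, so on that score the two treatments are on equal footing.
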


\begin{proofsketch}Notice that both the functions $U^{(p)}_\hb$ and the functions $\mathsf g$ and $\mathsf A$ admit the same kind of reciprocity-respecting expansions. However, any derivative of such functions will break this, since $\de_\hb J^2=2\hb-1=\sqrt{1+4J^2}$. We therefore need to check that these violating terms are exactly cancelled by the process of extracting the CFT-data from $U^{(p)}_\hb$ and re-packaging it in the form \eqref{eq:recip1} and \eqref{eq:recip2}

We need to perform this proof order by order in perturbation theory, using the fact that $\gamma_\ell$, and therefore $\mathsf g$, are of order $g$. The leading dependence of $\mathsf g $ and $\mathsf A$ is given by $\gamma_\ell $ and $A_\ell$ respectively. They are in turn related to the functions $U^{(p)}_\hb$ at leading order in \eqref{eq:aellfromU}, which is free from derivatives with respect to $\hb$: $A_\ell=U^{(0)}_\hb$ and $\gamma_\ell=U^{(1)}_\hb/U^{(0)}_\hb$. Hence the correct expansions of $\mathsf g $ and $\mathsf A$ at leading order follow directly from the expansions \eqref{eq:Uphbexpansion} of the $U^{(p)}_\hb$.

At subleading order in the expansion parameter $g$, the derivatives with respect to $\hb$ in \eqref{eq:aellfromU} induce terms that break the $J^2$ expansion. By carefully following the propagation of all terms one can check that these terms cancel if and only if one assumes that the operators are non-degenerate. This is because we have to impose relations like ${U^{(2)}_\hb}\big/{U^{(0)}_\hb}=\gamma_\ell^2=\parr{{U^{(1)}_\hb}\big/{U^{(0)}_\hb}}^2$ which are not true for operators with mixing.
\end{proofsketch}

\section{Large spin perturbation theory}\label{sec:LSPTtwo}

Large spin perturbation theory aims to produce perturbative results in conformal field theories by using the crossing equation and inversion procedures for CFT-data. These results are either specific for a given model, or generic for classes of CFTs satisfying some stipulated assumptions. This is achieved through an initial ansatz of crossed-channel operators generating the entire double-discontinuity of the four-point function at a given order in perturbation theory, and through a systematic inversion procedure. These steps are supplemented by imposing consistency conditions and may be iterated at higher orders in the perturbation.

The results of large spin perturbation theory consist of a set of CFT-data, or alternatively of an explicit expression for the correlator. These are essentially equivalent; given the correlator, the CFT-data is found by a conformal block decomposition, and given the CFT-data the correlator can be reconstructed by explicitly summing conformal blocks, often referred to as \emph{resummation}.
The discovery of the Lorentzian inversion formula adds a new dimension to this equivalence, as depicted in figure~\ref{fig:inversionflowchart}, where we now note that the double-discontinuity of the correlator is equivalent to the function $\mathbf U(\log z,\hb)$ through the Lorentzian inversion formula\footnote{The reverse arrow corresponds to the kernel method.}. This leads to the a commuting diagram, where the central rectangle of figure~\ref{fig:inversionflowchart} conveys the picture that the whole correlator $\G(u,v)$ is essentially determined by its double-discontinity $\dDisc[\G(u,v)]$. This fact was formulated in \cite{Carmi:2019cub} in terms of a dispersion relation. The only ambiguities come from terms at low spin, which are beyond the range of validity of the Lorentzian inversion formula (spin $0$ and potentially spin $1$). 
\begin{figure}
  \centering
\includegraphics{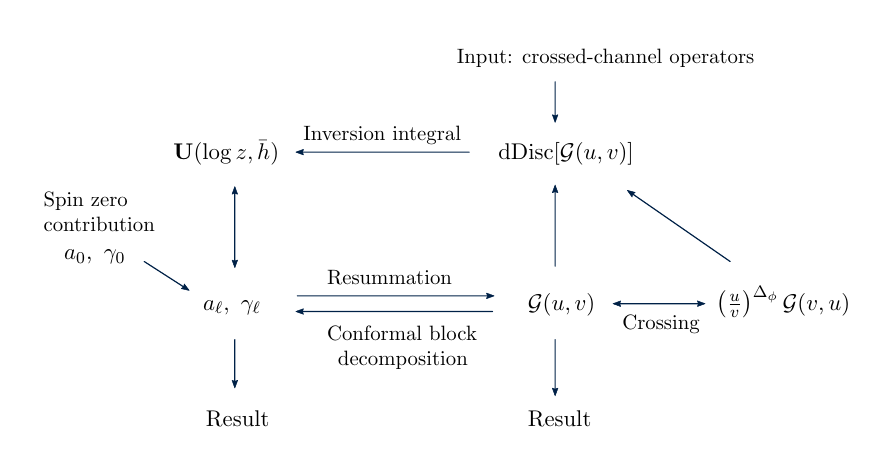}
\caption[A flowchart describing the method of large spin perturbation theory.]{A flowchart describing the method of large spin perturbation theory.
The input is CFT-data for a small set of crossed-channel operators. The output is results for the CFT-data or various twist families, or by resummation expressions for the correlator. Using crossing, the twist families contribute in the crossed channel and the process can be iterated.
}\label{fig:inversionflowchart}. 
\end{figure}

It is now clear why large spin perturbation theory turns out to be an effective method. At each order in perturbation theory, the entire double-discontinuity can be generated from just a small subset of crossed-channel operators. The reason is that the double-twist operators themselves have suppressed double-discontinuities in the crossed-channel. This can be realised by considering a crossed-channel operator with twist $2\Delta_\phi+2n+\gamma_{n,\ell}$ in the double lightcone limit. By making the same expansion as in the proof of proposition~\ref{prop:DTgamma}, and taking the double-discontinuity, we get the term
\beq{\label{eq:ddiscofdoubletwistinit}
\dDisc\left[\frac{\zb^{\Delta_\phi}}{(1-\zb)^{\Delta_\phi}}(1-\zb)^{\frac12\parr{2\Delta_\phi+2n+\gamma_{n,\ell}}}\right]\sim\zb^{\Delta_\phi}\frac{\pi^2}{2}\parr{\gamma_{n,\ell}}^2,
}
where we used that $\dDisc[\log(1-\zb)^2]=4 \pi^2$. Hence we can see that the first non-zero double-discontinuity appears at an order suppressed by the squared anomalous dimension. In the theories that we reviewed in section~\ref{sec:pertCFT}, we marked out these operators in the respective spectra by the grey bands in the figures~\ref{fig:spectrumWF}, \ref{fig:spectrumNN4} and \ref{fig:spectrumON}.

Our strategy will thus be as follows. Work at a given order in perturbation theory and identify which operators have a non-zero double-discontinuity in the crossed channel. Then create an ansatz for the double-discontinuity generated by these operators---in a specific theory one may want to use additional information about these operators, in a generic theory this introduces some undetermined constants. Following through the inversion procedure gives the CFT-data of twist families at this order. Next one can proceed to higher orders. New operators may appear, which expand the ansatz. Eventually the double-twist operators themselves will also appear but their contribution can be derived from results at lower order through crossing. This induces an iterative procedure, cycling through the diagram in figure~\ref{fig:inversionflowchart} multiple times.

We have presented the Lorentzian inversion integral as the prototype way of recovering the CFT-data from the correlator. However, there are other inversion procedures as well, such as those used in the original papers on large spin perturbation theory \cite{Alday2016,Alday2016b}. In this thesis we will use such alternative procedures in chapter~\ref{ch:paper1}. Before the role of the double-discontinuity was made clear, terms in the double lightcone limit were classified as either regular or singular, where singular terms referred to those that cannot be constructed from a finite sum of direct-channel blocks. In chapter~\ref{ch:paper1} we refer to these terms as having an \emph{enhanced singularity}. These terms are exactly those which develop a power-divergence in $\zb\to1$ after repeated action by the collinear Casimir \cite{AldayZhiboedov2015,Simmons-Duffin:2016wlq}. By constructing building blocks, called \emph{twist conformal blocks} or \emph{H-functions}, as sums of conformal blocks modulated by powers of $J^{-2}$, the enhanced singularities can be matched between a correlator and the corresponding CFT-data, turning the inversion into an algebraic problem \cite{AldayZhiboedov2015,Alday2016b}.

After giving a practical guide to large spin perturbation theory in the next chapter, we demonstrate the power of the method in chapter~\ref{ch:paper2}, where we apply it to the Wilson--Fisher fixed-point and derive results up to order $\epsilon^4$ \cite{Paper2}. We comment on the generalisation to $\OO N$ symmetry \cite{Paper3}. Then we show in chapter~\ref{ch:paper1} how the method facilitates the computation of the most general four-point function at order $g\sim g_{\mathrm{YM}}^2$ of a scalar of dimension $2+O(g)$ in a four-dimensional conformal gauge theory \cite{Paper1}. We give further applications in chapter~\ref{ch:more}: The $\OO N$ model at large $N$ \cite{Paper4}, general $\phi^4$ theories in both an $\epsilon$ expansion and a large $N$ expansion \cite{Paper5}, multicritical theories and an adaptation of chapter~\ref{ch:paper1} to the Wilson--Fisher model.

\chapter{A practical guide to large spin perturbation theory}
\label{ch:practical}

The previous chapter contained background material leading up to a formulation of large spin perturbation theory in section~\ref{sec:LSPTtwo}, where the main idea was presented in the diagram of figure~\ref{fig:inversionflowchart}. Large spin perturbation theory is a systematic framework for studying perturbative conformal field theories and the procedure applies to a wide range of theories. Anyone who wants to apply it to a new theory with a new set of assumptions will follow through the diagram by executing the steps listed at the end of chapter~\ref{ch:intro}. 

In the later chapters of this thesis we will give complete examples of applying large spin perturbation theory to specific cases. However, heading straight into these examples would obscure the many common features that emerge only after studying several different theories. The purpose of the present chapter is therefore to highlight these general aspects in order to give more information about each part of the procedure outlined above. This includes introducing useful notation and giving some specific statements in terms of some propositions and standard inversions.

At the centre of the diagram in figure~\ref{fig:inversionflowchart} sits the \emph{the perturbative inversion formula}. Although other inversion procedures exist, it is the main tool of this thesis and we repeat it here:
\beq{\label{eq:masterformula}
\mathbf U(\log z,\hb)=\frac{\Gamma(\hb)^2}{\pi^2\Gamma(2\hb)}\int\limits_0^1\frac{\df \zb}{\zb^2}  k_\hb(\zb)\left.\dDisc[\G(z,\zb)]\right|_{z^{h_0}},
}
where $\tau_0=2h_0$ is the reference twist of the twist family under consideration. To appreciate how the formula works in practice, we give some concrete computational examples in the first section of this chapter. In the subsequent sections we then follow the steps of chapter~\ref{ch:intro}. In section~\ref{corrstwists} we give some generic statements about the structure of the OPE, both in the direct and the crossed channel. In section~\ref{sec:fromcrossedchannel} we focus on how to compute the double-discontinuities that arise from the crossed-channel operators. In section~\ref{sec:invprocedures} we survey the most useful ways of executing the inversion integral \eqref{eq:masterformula} and give some concrete examples of inversions. We finish the chapter with section~\ref{sec:applicationsofLSPT} containing a literature review of applications of large spin perturbation theory to date.

\section{Invitation: sums and inversions}\label{sec:invitation}

The inversion formula \eqref{eq:masterformula} is the main tool for performing the inversion procedure that plays the central role in large spin perturbation theory. In this section we will give some concrete examples of how the inversion procedures work in practice. The examples we consider here will be used later in the thesis, typically for leading order computations. At higher order in $g$, more complicated functions will appear and to explicitly perform the inversion procedure will require a variety of methods, explained later on.

The central square of figure~\ref{fig:inversionflowchart} represents the computational machinery of large spin perturbation theory. Working in the collinear limit, the $z$ dependence decouples and we are in practice left with sums and inversions of $\SL2\R$ blocks. This is essentially a one-dimensional problem, where the CFT-data is parametrised by spin $\ell$, or equivalently by $\hb=\tau_0/2+\ell$, and the correlator is a function of $\zb$. The CFT-data of operators with spin $\ell>0$ is represented by $U(\hb)$, which is computed by the inversion integral
\beq{\label{eq:sl2rinversion}
 U(\hb)=\frac{\Gamma(\hb)^2}{\pi^2\Gamma(2\hb)}\int\limits_0^1\frac{\df \zb}{\zb^2}  k_\hb(\zb)\dDisc[G(\zb)],
}
where the double-discontinuity is still taken around $\zb=1$. 
We will now give some explicit examples of the second line of figure~\ref{fig:inversionflowchart}, namely resummation and conformal block decomposition. For simplicity we assume that we are working with operators on the unitarity bound in four dimensions, which means that the $\SL2\R$ block decomposition corresponding to \eqref{eq:sl2rinversion} is
\beq{\label{eq:sl2rsum}
\sum_{\ell=\ell_0,\ell_0+2,\ldots} \frac{\Gamma(\ell+1)^2}{\Gamma(2\ell+1)}U(\ell+1) k_{\ell+1}(\zb)=G(\zb).
}
In the free four-dimensional theory, it is natural to begin the sum at $\ell_0=0$. However, since $J^2=\ell(\ell+1)$ becomes zero for $\ell=0$ we take $\ell_0=2$. The difference would be the spin zero $\SL2\R$ block: $k_1(\zb)=-\log(1-\zb)$.

\subsection[Elementary sums of $\SL2\R$ blocks]{Elementary sums of $\boldsymbol{\SL2{\mathrm{R}}}$ blocks}\label{sec:elementarysums}
Let us start with the simplest possible sum, where $U(\hb)$ is a constant. For later convenience, we choose the constant to be $2$. In this case, the sum \eqref{eq:sl2rsum} can be performed directly with computer algebra software like Mathematica \cite{Mathematica}, by using the following manipulations. 
First we use a convenient integral representation for the $\SL2\R$ block in the integral kernel,
\beq{\label{eq:HyperInt1}
k_\hb(\zb)=\frac{\Gamma(2\hb)}{\Gamma(\hb)^2}\zb^\hb\int\limits_0^1\frac{\df t}{t(1-t)}\parr{\frac{t(1-t)}{1-t \zb}}^{\hb}.
}
Then the sum over $\ell=2,4,\ldots$ can be performed to give a rational function in $t$ and $\zb$. Finally, integrating over $t$ gives the result
\beq{\label{eq:sumnumber1}
\sum_{\ell=2,4,\ldots}\frac{\Gamma(\ell+1)^2}{\Gamma(2\ell+1)} 2k_{\ell+1}(\zb)=\frac{1}{1-\zb}+\underbrace{\zb-1+2\log(1-\zb)}_{\text{regular}},
}
where we have marked the terms that are regular in the limit $\zb\to1$. By regular we mean terms which have no double-discontinuity in this limit, corresponding to terms which have a regular series expansion at $\zb\to1$, or a series expansion multiplied by a single factor of $\log(1-\zb)$. Notice that we could have absorbed the logarithm by extending the sum to include the $\ell=0$ block $k_1(\zb)=-\log(1-\zb)$.

Another explicit example of a sum is the case where $U(\hb)=2/J^2$, for $J^2=\hb(\hb-1)$. We have
\beq{\label{eq:sumnumber2}
\sum_{\ell=2,4,\ldots}\frac{\Gamma(\ell+1)^2}{\Gamma(2\ell+1)} \frac{2}{\ell(\ell+1)}k_{\ell+1}(\zb)=\frac12\log^2(1-\zb)+\underbrace{2\,\mathrm{Li}_2(\zb)+2\log(1-\zb)}_{\text{regular}},
}
where $\mathrm{Li}_p(x)$ denotes the polylogarithm.
This sum is in fact much harder to find than \eqref{eq:sumnumber1} by explicit computations. Indeed, very few sums of conformal blocks can be computed directly, which means that typically \emph{resummation} is a more difficult task than \emph{conformal block decomposition}\footnote{However, the conformal block decomposition gives only the OPE coefficients, here corresponding to $U(\ell+1)$, one by one in $\ell$, and it may be a non-trivial task to deduce the closed form. In practice, these tasks are often accomplished by a combination of educated guessing, the use of functions such as Mathematica's \texttt{FindSequenceFunction} \cite{Mathematica} and searches in the Online encyclopedia of integer sequences \cite{OEIS}.}. A practical way of performing the sum \eqref{eq:sumnumber2} is therefore to make use of the conformal block decomposition. It turns out that in some cases, sums of $\SL2\R$ blocks $k_\hb(\zb)$ organise according to a transcendentality principle, and in general take the form of rational functions of $\zb$ multiplied by polylogarithms. From the result it is clear that an ansatz of polylogarithms of maximal combined degree $2$ would be enough to perform the sum \eqref{eq:sumnumber2} (recall that $\mathrm{Li}_1(\zb)=-\log(1-\zb)$).

We finish by giving a couple of examples of sums, computable in the same way, where $U(\hb)$ involve the harmonic numbers:
\beqa{\label{eq:sumnumber3}
\sum_{\ell=2,4,\ldots}\frac{2\Gamma(\ell+1)^2}{\Gamma(2\ell+1)} S_1(\ell) k_{\ell+1}(\zb)&=-\frac{\log(1-\zb)}{2(1-\zb)}+\underbrace{\frac12(1+\zb)\log(1-\zb)}_{\text{regular}},
\\\label{eq:sumnumber4}
\sum_{\ell=2,4,\ldots}\frac{2\Gamma(\ell+1)^2}{\Gamma(2\ell+1)} \frac{S_1(\ell)}{\ell(\ell+1)} k_{\ell+1}(\zb)&=-\frac{\log^3(1-\zb)}{12}.
}
Here the latter sum has no regular part.

\subsection[Elementary inversions for $\SL2\R$ blocks]{Elementary inversions for $\boldsymbol{\SL2{\mathrm{R}}}$ blocks}\label{sec:elementaryinversions}
In the explicit sums given above, we have indicated the regular parts, which have no double-discontinuity. Let us now show how the functions $U(\hb)$ used to produce these sums can be recovered from the inversion integral, using only the enhanced singular part.

It is easiest to start with the terms involving no negative powers of $1-\zb$, and we begin by analysing the sum in \eqref{eq:sumnumber2}. We let $G(\zb)=\frac12\log^2({1-\zb})+2\, \mathrm{Li}_2(\zb)+2\log(1-\zb)$ and start by computing the double-discontinuity. As discussed above, only the first term has a double-discontinuity, and a direct use of the definition \eqref{eq:ddiscdef} gives
\beq{
\dDisc[\log^2(1-\zb)]=\log^2(1-\zb)-\frac12\parr{\log(1-\zb)+2\pi i}^2-\frac12\parr{\log(1-\zb)-2\pi i}^2=4\pi^2,
}
which implies that $\dDisc[G(\zb)]=2\pi^2$. This means that $U(\hb)$ is given by the integral
\beq{\label{eq:inversion2comp}
U(\hb)=\frac{\Gamma(\hb)^2}{\pi^2\Gamma(2\hb)}\int\limits_0^1\frac{\df\, \zb}{\zb^2}  k_\hb(\zb)2\pi^2
=
2\int\limits_{[0,1]^2}\frac{\df t \df \zb}{t(1-t)\zb^2}\parr{\frac{t(1-t)\zb}{1-t \zb}}^\hb,
}
where we used the integral representation~\eqref{eq:HyperInt1} for $k_\hb$. Evaluating first the $\zb$ integral and then the $t$ integral gives $U(\hb)=\frac2{\hb(\hb-1)}$, in exact agreement with \eqref{eq:sumnumber2}. To systematise the notation it is useful to write the result of the inversion as
\beq{\label{sec:INVdef}
\INV[G(\zb)]=\frac{\Gamma(\hb)^2}{\pi^2\Gamma(2\hb)}\int\limits_0^1\frac{\df \zb}{\zb^2}  k_\hb(\zb)\dDisc[G(\zb)].
}
In this notation we have shown that 
\beq{\label{eq:inversion2INV}
\INV[\log^2(1-\zb)]=\frac{4}{J^2}.
}
Using $\dDisc[\log^3(1-\zb)]=12\pi^2\log(1-\zb)$ we can use the same integral representation as in \eqref{eq:inversion2comp} to show that
\beq{\label{eq:inversion3INV}
\INV\parrk{\frac{\log^3(1-\zb)}{12}}=-\frac{2S_1(\hb-1)}{\hb(\hb-1)}.
}

We now turn to the inversion problem corresponding to the sum \eqref{eq:sumnumber1}. The inversion of negative integer powers of $1-\zb$ requires a regularisation, and we therefore start by considering the inversion of the general power $G(\zb)=\parr{\frac{\zb}{1-\zb}}^p$. Since we have
\beq{
\dDisc\parrk{\parr{\frac{\zb}{1-\zb}}^p\,}=2\sin^2(\pi p)\parr{\frac{\zb}{1-\zb}}^p,
}
we get, again using the integral representation \eqref{eq:HyperInt1},
\beq{
U(\hb)=2\sin^2(\pi p)\frac{\Gamma(1-p)^2\Gamma(\hb+p-1)}{\pi^2\Gamma(\hb-p+1)}.
}
We see that this expression vanishes for $p=0,-1,-2,\ldots$ which corresponds exactly to the cases where $G(\zb)$ becomes regular. In the limit $p\to1$ the pole at $\Gamma(0)$ cancels with the zero at $\frac1\pi\sin \pi$, and we recover $U(\hb)=2$, in agreement with \eqref{eq:sumnumber1}. This can also we written
\beq{\label{eq:inversion1INV}
\INV\parrk{\frac1{1-\zb}}=2.
}
We save the inversion corresponding to the sum~\eqref{eq:sumnumber3} until we have discussed the $\SL2\R$ Casimir operator.

\subsection{Inversion integral and the Casimir}\label{sec:tastercasimir}

A very important tool in computing inversions is the $\SL2\R$ Casimir operator $\Dbar=(1-\zb)\zb^2\de^2_\zb-\zb^2\de_\zb$, introduced in section~\ref{sec:blockology}, which on the $\SL2\R$ blocks has eigenvalue $J^2=\hb(\hb-1)$:
\beq{
\Dbar\, k_\hb(\zb)= J^2\,k_\hb(\zb).
}
Acting with $\Dbar$ on a sum of \eqref{eq:sl2rsum} for a given $U(\hb)$ will therefore give the corresponding sum with $U(\hb)$ replaced by $\hb(\hb-1)U(\hb)$. The same holds for the inversion integral, and we get the useful equation
\beq{\label{eq:DbarJcommute}
\INV\parrk{\Dbar \,G(\zb)}=J^2\INV\parrk{G(\zb)}.
}
This can be used to demonstrate the last of the four sums discussed in section~\ref{sec:elementarysums}. We note first that $\Dbar\log^3({1-\zb})=\frac{\log(1-\zb)}{1-\zb}$. Then the relation \eqref{eq:DbarJcommute} combined with the inversion \eqref{eq:inversion3INV} gives that
\beq{
\INV\parrk{\frac{\log(1-\zb)}{1-\zb}}=J^2\INV\parrk{\frac12\log^3(1-\zb)}=-4S_1(\hb-1).
}
We summarise the four elementary inversions discussed in this section in table~\ref{tab:initialinversions}. In appendix~\ref{integrals} we collect more results of this kind, useful for inversions near four dimensions and in particular for chapter~\ref{ch:paper2}.

\begin{table}
\centering
\renewcommand{\arraystretch}{1.25}
\caption{Some elementary inversion results, where $w=1-\zb$ and $J^2=\hb(\hb-1)$, showing pairs $G(\zb)$ and $U(\hb)=\INV[G(\zb)]$.}\label{tab:initialinversions}
{\small
\vspace{4pt}
\begin{tabular}{|c|c|c|c|}\hline
${G(\zb)}$ & ${U(\hb)}$&${\Dbar G(\zb)}$ & ${J^2U(\hb)}$  
\\\hline
  $\log^2 w$ & $\dfrac{4}{J^2}$
 &
 $\dfrac2w+\text{reg.}$ & $4$
 \rule{0pt}{3.5ex} \rule[-3.5ex]{0pt}{0pt} 
 \\
 $\log^3 w$ & $-\dfrac{24S_1(\hb-1)}{J^2}$ 
 &
$\dfrac{6\log w}w+\text{reg.}$ & $-24 S_1(\hb-1)$
 \rule{0pt}{3.5ex} \rule[-3.5ex]{0pt}{0pt} 
 \\\hline
\end{tabular}
}
\end{table}

In fact, the relation \eqref{eq:DbarJcommute} between the Casimir $\Dbar$ and its eigenvalue $J^2$ can be used to derive the exact form of the $\SL2\R$ inversion integral \eqref{sec:INVdef}. We give this argument in section~\ref{sec:derivationCollinearFormula}. However, such derivation relies on the assumption that the CFT-data of the underlying theory is analytic in spin down to some finite value. The extraction of the perturbative inversion formula from Caron-Huot's general formula, which we worked out in section~\ref{sec:pertinversion}, is therefore necessary to establish analyticity and to determine the limit $\ell_0$ below which we can not trust the result.

\subsection{An algebraic method}
\label{sec:Hfunctionsmaintext}

The discussion so far shows that there is a direct correspondence between the functions $U(\hb)$ and the enhanced divergent part of $G(\zb)$. This implies that by matching appropriate terms on both sides we can turn the inversion problem into an algebraic problem. This programme was initiated in \cite{AldayZhiboedov2015} and was the main method used in the original papers on large spin perturbation theory \cite{Alday2016,Alday2016b}. In chapter~\ref{ch:paper1} based on \cite{Paper1}, we develop this idea further. We define what we call H-functions, which are a special case of the more general twist conformal blocks introduced in \cite{Alday2016}. The H-functions are sums of conformal blocks modulated by a specified function of $J^2$ \footnote{The H-functions used in chapter~\ref{ch:paper1} have an additional factor $2c\zb$ compared to here.}
\beq{\label{eq:def.Hbarpre}
\overline H^{(m,\log^n)}(\zb)=
\sum_{\hb} \frac{\Gamma(\hb)^2}{\Gamma(2\hb-1)} \frac{\log^nJ}{J^{2m}}\, k_{\hb}(\zb)+\text{reg.}
}
By expanding the CFT-data as
\beq{
U(\hb)=\sum_{m,n}A_{(m,\log^n)}\frac{\log^nJ}{J^{2m}}.
}
we can write the sum~\eqref{eq:sl2rsum} as
\beq{
\sum_{m,n}A_{(m,\log^n)}\overline H^{(m,\log^n)}(\zb)=G(\zb)+\text{reg.}
}
The H-functions can be computed by various techniques, but once they have been found the inversion procedure can be turned into a simple algebraic problem involving solving systems of linear equations. We give further details of this method, including an explicit toy example, in section~\ref{sec:twist.conformal.blocks}.

\subsection{The free 4d scalar}\label{sec:free4dscalar}
We finish this section by a concrete example, namely the free scalar field theory in four dimensions. Using Wick contractions the four-point function of the field $\phi$ takes the form
\beq{\label{eq:freefieldcorrelator}
\G(u,v)=1+\frac uv+u.
}
We will now demonstrate that this correlator can be determined completely from its double-discontinuity, which is the middle term $\frac uv$. This term corresponds precisely to the exchange of the identity operator in the crossed channel, which demonstrates the machinery of large spin perturbation theory as prescribed in figure~\ref{fig:inversionflowchart}: The identity operator generates a double-discontinuity, inverting this produces the CFT-data of direct-channel operators which resums into the correlator~\eqref{eq:freefieldcorrelator}, constituting our result.

Expanding $\dDisc[\G(u,v)]$ in the collinear limit we get
\beq{\label{eq:freeddisc}
\dDisc[\G(u,v)]=\dDisc\parrk{\frac{z}{1-\zb}}+O(z^2).
}
From the power $z$ we see that the double-discontinuity must correspond to operators of reference twist $\tau_0=2$, and from the lack of dependence on $\log z$ we see that $U^{(p)}_\hb=0$ for $p=1,2,\ldots$, meaning that the corresponding operators have no anomalous dimension.
The inversion \eqref{eq:inversion1INV} gives immediately that
\beq{
U^{(0)}_\hb=2,
}
from which we derive the four-dimensional free field OPE coefficients using \eqref{eq:aellfromU} and 
\eqref{eq:aArel}:
\beq{\label{eq:OPEfree}
a_\ell=\frac{2\Gamma(\ell+1)^2}{\Gamma(2\ell+1)}.
}
The next step is to perform the resummation of the correlator. The four-dimensional conformal blocks were given in \eqref{eq:CB4d} and for $\tau=2$ they take the particularly simple form
\beq{
G_{2,\ell}(z,\zb)=\frac{z\zb}{z-\zb}\parr{k_{\ell+1}(z)-k_{\ell+1}(\zb)}.
}
This means that we can use the sum \eqref{eq:sumnumber1} to compute\footnote{In the free theory, the formula \eqref{eq:OPEfree} analytically continues to spin zero. In principle, this OPE coefficient could take any other value, and the ambiguity at spin zero must be checked by independent methods such as a direct analysis of crossing.}
\beq{
\sum_{\ell=0,2,\ldots}\frac{2\Gamma(\ell+1)^2}{\Gamma(2\ell+1)}G_{2,\ell}(z,\zb)=\frac{z \zb}{(1-z)(1-\zb)}+z \zb.
}
Adding the direct-channel term $1$ corresponding to the identity operator, we have reconstructed the correlator \eqref{eq:freefieldcorrelator}. We shall also check subleading powers in $z$ omitted in \eqref{eq:freeddisc}. A careful analysis shows that they correspond to subleading contributions of the $\tau=2$ operators, which means that no further operators need to be considered.

We conclude this section by discussing how to generate corrections to the free theory. From \eqref{eq:ddiscofdoubletwistinit} it is clear that any new double-discontinuity must arise at order $g^2$ for some coupling $g$. This means that no operator in the leading twist family will receive anomalous dimensions until order $g^2$. The only exception is at spin zero, where analyticity in spin does not hold. We can therefore define $g=\gamma_{\phi^2}$ and conclude that all CFT-data at order $g^2$ will be depending on this constant. The leading contribution to $U^{(1)}_\hb$ from the operator $\phi^2$ is proportional to $-g^2\dDisc[\log^2({1-\zb})]$. Using the inversion \eqref{eq:inversion2INV} we get $U^{(1)}_\hb\sim -g^2/J^2$, and ultimately we get
\beq{
\gamma_\ell=-\frac{g^2}{\ell(\ell+1)}+O(g^3).
}
Noting that $\gamma_{\phi^2}=\frac\epsilon3+O(\epsilon^2)$ this agrees precisely with the result \eqref{eq:gammaWFlitt} quoted in section~\ref{sec:WFspectrum}. In chapter~\ref{ch:paper2} we continue reconstructing the Wilson--Fisher model from large spin perturbation theory and ultimately compute all CFT-data of the spinning operators to order $\epsilon^4$. Further explicit results in the Wilson--Fisher model can be found in \cite{Bissi:2019kkx}, where the whole correlator at order $\epsilon^2$ is given.

\section{Correlators and twist families}\label{corrstwists}
The first step in an application of large spin perturbation theory is to specify a given conformal field theory and a correlator to study. Large spin perturbation theory will then generate CFT-data for twist families in this correlator. Of course, the details depend on the specific choice of theory and correlator, but there are some universal features that we will describe here.

\subsection{Direct channel structure}\label{sec:directchannelstructure}

We limit ourselves to the simplest case and consider the four-point function of identical external operators. The generic content of twist families appearing in such a correlator is described by the following three propositions, however any specific theory may of course contain other twist families as well. In these propositions, we allow $\phi$ to have generic dimensions, not necessarily close to the unitarity bound.

\begin{prop}\label{prop:doubletwistsexist}
The $\phi\times\phi$ OPE contains operators $\phi\de^\ell\phi=[\phi,\phi]_{0,\ell}$ with $\tau_\ell=2\Delta_\phi+\gamma_\ell$.
\end{prop}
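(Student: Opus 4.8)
The statement is precisely the $\O_1=\O_2=\phi$ specialisation of Proposition~\ref{prop:DTexist}: since $\phi$ is a scalar its twist equals its dimension, $\tau_\phi=\Delta_\phi$, so the accumulation value is $\tau_\infty=2\Delta_\phi$, and the operators realising it form the leading ($n=0$) double-twist family, which by the naming conventions of section~\ref{sec:mixing} is labelled $[\phi,\phi]_{0,\ell}$. Writing their twists as $\tau_\ell=2\Delta_\phi+\gamma_\ell$ is then just a definition of $\gamma_\ell$. The plan is therefore either to invoke Proposition~\ref{prop:DTexist} directly, or---more in the spirit of this thesis---to rederive the family from the perturbative inversion formula of Theorem~\ref{thm:mainT}, which has the advantage of simultaneously exhibiting the OPE coefficients.

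Following the second route, I would first write the crossing equation \eqref{eq:crossing}, $\G(u,v)=(u/v)^{\Delta_\phi}\G(v,u)$, and isolate the identity contribution in the crossed channel, $\G(v,u)\supset1$, which feeds the term $(u/v)^{\Delta_\phi}$ into the direct-channel correlator. In the collinear limit $z\to0$ this behaves as $z^{\Delta_\phi}\,\zb^{\Delta_\phi}(1-\zb)^{-\Delta_\phi}$, fixing the reference twist through $h_0=\Delta_\phi$, i.e.\ $\tau_0=2\Delta_\phi$. Crucially there is no accompanying factor of $\log z$, so in the language of Theorem~\ref{thm:mainT} only $U^{(0)}_\hb$ is sourced at this order and no anomalous dimension is generated, consistent with the accumulation $\tau_\ell\to2\Delta_\phi$ guaranteed by Proposition~\ref{prop:DTexist}.

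Next I would feed the $\zb$-dependence $(\zb/(1-\zb))^{\Delta_\phi}$ through the inversion integral \eqref{eq:Ugenfdef}. Using $\dDisc[(\zb/(1-\zb))^p]=2\sin^2(\pi p)(\zb/(1-\zb))^p$ together with the general-power inversion of section~\ref{sec:elementaryinversions}, one obtains a nonzero $U^{(0)}_\hb$ whose translation into OPE coefficients via \eqref{eq:aellfromU}--\eqref{eq:aArel} reproduces the generalised free field coefficients $a^{\mathrm{GFF}}_{0,\ell}$ of \eqref{eq:aGFF}. A nonvanishing $a_\ell$ for infinitely many $\ell$ establishes that the family genuinely appears in the OPE. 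The free four-dimensional scalar of section~\ref{sec:free4dscalar}, where the identity's double-discontinuity $z/(1-\zb)$ inverts to $U^{(0)}_\hb=2$, is the prototype of this computation.

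The main obstacle is the degenerate case in which $\Delta_\phi$ is an integer (e.g.\ the free scalar, $\Delta_\phi=1$): then $\sin^2(\pi\Delta_\phi)=0$ and the naive double-discontinuity vanishes, so one must extract $U^{(0)}_\hb$ by the analytic continuation in $p$ described beneath \eqref{eq:inversion1INV}, where the zero of $\sin^2(\pi p)$ cancels the pole of the resulting $\Gamma$-functions as $p\to\Delta_\phi$. A secondary caveat, inherited from the range of validity of the Lorentzian inversion formula, is that the argument establishes the family and the analyticity of its CFT-data only down to some finite spin $\ell_0$ (with spin $0$ possibly excluded), so the proposition is to be read as a statement about $\ell$ above this bound.
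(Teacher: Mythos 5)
Your proposal is correct, and its first branch---specialising Proposition~\ref{prop:DTexist} to $\O_1=\O_2=\phi$ and reading $\tau_\ell=2\Delta_\phi+\gamma_\ell$ as the definition of $\gamma_\ell$---is exactly the paper's proof, which consists of a single sentence doing precisely that. Your second, inversion-based route is genuinely different from what the paper does here: the argument underlying Proposition~\ref{prop:DTexist} works in the double lightcone limit, matching the crossed-channel identity's power divergence $z^{\Delta_\phi}(1-\zb)^{-\Delta_\phi}$ against direct-channel blocks (each of which is at most log-divergent), so that infinitely many operators with twists accumulating at $2\Delta_\phi$ are forced by a purely nonperturbative counting argument that needs no analyticity in spin. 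Running the identity through the perturbative inversion formula instead buys more: it produces $U^{(0)}_\hb=\AA[\Delta_\phi](\hb)$ as in inversion~\ref{inv:identity}, hence the explicit leading-twist GFF coefficients \eqref{eq:aGFF}, and the absence of $\log z$ shows directly that no anomalous dimension is sourced at this order rather than merely defining $\gamma_\ell$. The costs are the ones you correctly flag---validity only for $\ell>\ell_0$ (spin $0$, and possibly spin $1$, excluded) and the analytic continuation needed at integer $\Delta_\phi$ where $\sin^2(\pi\Delta_\phi)$ vanishes---plus one worth stating explicitly: Theorem~\ref{thm:mainT} presupposes the perturbative framework (twist degeneracy at $g=0$) and the identification of the identity as the sole double-discontinuity at leading order, whereas the lightcone argument behind Proposition~\ref{prop:DTexist} holds in any CFT in $d>2$; to make your route fully nonperturbative you would additionally need positivity of the double-discontinuity to guarantee that other contributions cannot cancel the poles generated by the identity.
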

\noindent This is essentially the statement we proved in proposition~\ref{prop:DTexist}. In addition, unless we are in the free theory, we have GFF operators with twists $\tau_{n,\ell}=2\Delta_\phi+2n+\gamma_{n,\ell}$ for all positive $n$. If $\phi$ is near the unitarity bound, i.e.\ $\Delta_\phi=\mu-1+O(g)$ with $\mu=d/2$, the OPE coefficients of the GFF operators for $n\geqslant1$ are suppressed with a factor $g$ compared to $n=0$. This can be seen from the explicit expression \eqref{eq:GFFOPEschem}.

\begin{prop}\label{prop:currentsexist} 
In theories where the expansion parameter $g$ corresponds to a coupling constant, the $\phi\times\phi$ OPE contains weakly broken conserved currents $\mathcal J_\ell$ with $\tau_\ell=d-2+\tilde\gamma_\ell$, where in the non-degenerate case we have $\tilde\gamma_2=0$. 
\end{prop}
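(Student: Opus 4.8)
The plan is to treat the two claims separately: the existence of weakly broken currents follows from the higher-spin structure of the free-theory limit, while $\tilde\gamma_2=0$ is forced by the protected nature of the stress tensor.

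First I would use that, because $g$ is a genuine coupling constant, the point $g=0$ is a free (Gaussian) CFT. A free theory enjoys an enhanced higher-spin symmetry and therefore contains an infinite tower of conserved currents of every even spin, each with twist exactly $d-2$, saturating the spinning unitarity bound~\eqref{eq:boundspinning} and annihilated by $\dsat$. These currents appear in the $\phi\times\phi$ OPE: the stress tensor does so with the Ward-fixed coefficient~\eqref{eq:centralchargeOPE}, and the remaining tower is generically present because the free correlator is built from the same elementary fields. When $\phi$ itself sits at the unitarity bound, $\Delta_\phi=\mu-1+O(g)$, these currents are nothing but the leading double-twist family $[\phi,\phi]_{0,\ell}=\phi\de^\ell\phi$ of Proposition~\ref{prop:doubletwistsexist}. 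Turning on $g$ deforms the theory away from the free point and breaks the higher-spin symmetry, so that for $\ell>2$ the currents can no longer be exactly conserved; their twists are lifted to
\beq{
\tau_\ell=d-2+\tilde\gamma_\ell,\qquad \tilde\gamma_\ell=O(g),
}
which is precisely the statement that they become weakly broken currents $\mathcal J_\ell$.

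Next I would isolate the spin-$2$ case. Any local CFT possesses a conserved stress tensor $T^{\mu\nu}$ of spin $2$ and dimension $\Delta_T=d$, hence twist exactly $d-2$, which appears in $\phi\times\phi$ with the nonvanishing coefficient~\eqref{eq:centralchargeOPE}. In the non-degenerate case there is a \emph{unique} operator at spin $2$ and reference twist $d-2$, so it must be identified with $T^{\mu\nu}$. Since $\de_\mu T^{\mu\nu}=0$ is an exact operator equation---a consequence of the conformal Ward identities, not of perturbation theory---the dimension of $T^{\mu\nu}$ is protected to all orders, giving $\tilde\gamma_2=0$.

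The hard part is the identification step underlying $\tilde\gamma_2=0$, and in particular the role of the non-degeneracy hypothesis. For $\ell>2$ exact conservation would force the theory to be free (cf.\ the higher-spin argument behind \cite{Maldacena:2011jn}), so $\tilde\gamma_\ell\neq0$ generically; only $\ell=2$ (and spin-$1$ global currents, where present) is protected. When several operators share spin $2$ and approximate twist $d-2$, only one linear combination is the conserved stress tensor and the others may acquire $O(g)$ anomalous dimensions, so the clean conclusion $\tilde\gamma_2=0$ genuinely needs the unique-operator assumption. The subtlety to handle with care is thus not a computation but an identification: verifying that in the non-degenerate regime the spin-$2$ leading-twist slot contains exactly $T^{\mu\nu}$ with its Ward-fixed OPE coefficient, so that exact conservation can legitimately be invoked.
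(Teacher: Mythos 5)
Your proposal is correct, but it runs along a genuinely different line from the paper's argument. The paper does not invoke the free point at all: it observes that analyticity in spin (from the Lorentzian inversion formula, valid for $\ell>\ell_0=1$ in a weak-coupling expansion) forces \emph{every} operator with spin $\ell>1$ to sit in a twist family, and since the stress tensor is guaranteed to appear in the $\phi\times\phi$ OPE with the Ward-fixed coefficient \eqref{eq:centralchargeOPE}, it must be a member of the leading twist family; the family of weakly broken currents is then this family, with $\tilde\gamma_2=0$ at the spin-$2$ slot in the non-degenerate case. You instead anchor the tower in the enhanced higher-spin symmetry of the Gaussian theory at $g=0$ and deform: this buys concreteness (an explicit identification of the $\mathcal J_\ell$ with the free currents, and the remark, via the logic behind \cite{Maldacena:2011jn}, that $\tilde\gamma_\ell\neq0$ generically for $\ell>2$), but it needs extra Lagrangian input that the paper's route avoids --- that the free-theory currents appear in the $\phi\times\phi$ OPE with nonvanishing coefficients and that the spectrum deforms continuously in $g$. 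Two further points of comparison: first, the paper's argument explains \emph{why} the hypothesis ``$g$ is a coupling constant'' is needed --- at strong coupling the analyticity threshold $\ell_0$ can shift upward \cite{Alday:2017vkk}, so the stress tensor need not join a twist family (cf.\ the Einstein-gravity discussion in \cite{Caron-Huot2017}), whereas your route explains the same hypothesis differently, via the existence of a free endpoint; second, the inversion-formula route delivers for free that the family's CFT-data, in particular $\tilde\gamma_\ell$, is a single analytic function of spin, which your deformation argument does not by itself establish. Your handling of the non-degeneracy caveat at spin $2$ matches the paper's intent exactly.
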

\noindent Proposition~\ref{prop:currentsexist} can be proved by the following argument. Analyticity in spin means that any operator with spin $\ell>\ell_0=1$ must be member of a twist family. Since the stress tensor always appears in the OPE of any two identical operators, with a non-zero OPE coefficient given by \eqref{eq:centralchargeOPE}, it must be a member of the leading twist family. If $\tilde\gamma_\ell$ is order $g$, proposition~\ref{prop:currentsexist} follows. However, in expansions around strong coupling the argument may break down. The reason is that at each order in the strong coupling expansion, the limit $\ell_0$ of analyticity may be shifted upwards to another small integer \cite{Alday:2017vkk}. Non-perturbatively, as well as in a weak coupling expansion, the limit $\ell_0=1$ holds. In section~5.1 of \cite{Caron-Huot2017}, this is discussed further in the context of a CFT dual to Einstein gravity, where it is only non-perturbatively that the stress tensor belongs to a twist family.

\begin{prop}\label{prop:moredoublesexist}
Assume that the $\phi\times\phi$ OPE contains an operator $\O$ with OPE coefficient $a_\O=c^2_{\phi\phi\O}$, where $\Delta_\O\neq 2\Delta_\phi+ O(g)$ and $\Delta_\O\neq \Delta_\phi+m+ O(g)$ for $m\in \Z$. Then the $\phi\times\phi$ OPE contains the operators $[\O,\O]_{n,\ell}$ with OPE coefficients at order $a_\O^2$.
\end{prop}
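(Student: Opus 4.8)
The plan is to obtain the family $[\O,\O]_{n,\ell}$ directly from the perturbative inversion formula of Theorem~\ref{thm:mainT}. If I can exhibit, at order $a_\O^2$, a contribution to $\dDisc[\G(z,\zb)]$ whose leading collinear power is $z^{\Delta_\O}$ (and, more generally, the tower $z^{\Delta_\O+n}$), then Theorem~\ref{thm:mainT} immediately produces an analytic-in-spin family of direct-channel operators of twist $2\Delta_\O+2n$ whose OPE coefficients are read off from that double-discontinuity through \eqref{eq:aellfromU}. The whole task therefore reduces to identifying this particular piece of the double-discontinuity and showing it does not vanish.

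First I would locate the source of the power $z^{\Delta_\O}$. A single crossed-channel exchange of $\O$, multiplied by the crossing factor $(u/v)^{\Delta_\phi}$ and expanded with \eqref{eq:collblocks}, contributes only the power $z^{\Delta_\phi}$ and therefore feeds the double-twist family $[\phi,\phi]$, exactly as in the proofs of propositions~\ref{prop:DTexist} and~\ref{prop:DTgamma}. The twist-$2\Delta_\O$ family instead originates from the two-$\O$ exchange, i.e.\ the process in which $\O$ is emitted at each of the two $\phi\phi$ ends of the double-discontinuity. In the double-commutator form \eqref{eq:doublecommutator} this is the contribution of two-particle $\O\O$ intermediate states, whose threshold produces the collinear power $z^{\Delta_\O}$. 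Carrying one factor of $c_{\phi\phi\O}$ at each end, the effective coupling $c_{\phi\phi[\O,\O]}$ is of order $c_{\phi\phi\O}^2=a_\O$, so that the double-discontinuity, being quadratic in this coupling, enters at order $a_\O^2$, matching the squared OPE coefficient to be extracted. Re-expanding this contribution in $(1-\zb)/\zb$ as in \eqref{eq:sumgeneraldisk} then generates the entire tower $z^{\Delta_\O+n}$.

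It remains to verify that this order-$a_\O^2$ double-discontinuity is nonzero, and this is where the two hypotheses enter. Because $\dDisc[(1-\zb)^a]\propto\sin^2(\pi a)$ by \eqref{eq:ddiscdef}, the threshold contribution survives only if the relevant $\zb\to1$ exponent is non-integer; the condition $\Delta_\O\neq\Delta_\phi+m+O(g)$ guarantees exactly this, preventing the two-$\O$ contribution from collapsing to a single power of $\log(1-\zb)$ with vanishing double-discontinuity. The condition $\Delta_\O\neq 2\Delta_\phi+O(g)$ in turn keeps the power $z^{\Delta_\O}$ separated from the leading double-twist power $z^{\Delta_\phi}$, so that inverting the $z^{\Delta_\O+n}$ piece cleanly isolates the new family rather than contaminating $[\phi,\phi]$.

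The main obstacle I expect is controlling the full double-lightcone expansion of the two-$\O$ contribution rather than just its leading threshold term: one must show that the coefficients $c_k$ in \eqref{eq:sumgeneraldisk} are nonvanishing at order $a_\O^2$ so that a family exists for each $n$, and disentangle this parametrically small piece from the much larger $[\phi,\phi]$ contributions. A secondary difficulty is the proliferating degeneracy at higher $n$, where the sharp statement the inversion formula delivers is only about the averaged combinations $\expv{a_\ell\gamma_\ell^p}$ rather than about individual operators; the proposition is accordingly an existence-and-order statement, which is all that \eqref{eq:aellfromU} is needed to establish.
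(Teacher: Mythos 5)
There is a genuine gap, and it sits exactly at the step your whole argument rests on. You assert that the double-discontinuity of the identical-$\phi$ correlator contains a nonvanishing piece of order $a_\O^2$ at the power $z^{\Delta_\O}$, justified by ``two-particle $\O\O$ intermediate states, carrying one factor of $c_{\phi\phi\O}$ at each end''. But within the bootstrap axioms the overlap of $\phi\phi$ with two-$\O$ states in the double commutator \eqref{eq:doublecommutator} is controlled by the coefficients $c_{\phi\phi[\O,\O]_{n,\ell}}$ themselves, and the claim that these are of order $a_\O$ is precisely the proposition to be proven; factorising the amplitude into one $c_{\phi\phi\O}$ per ``end'' is a diagrammatic heuristic (the cut rule of section~\ref{sec:heuristicmethod}), not a consequence of crossing. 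Equivalently, in the LSPT framework the double-discontinuity must be generated by \emph{established} crossed-channel data, and since the identical correlator has the same OPE in both channels, the crossed-channel $[\O,\O]$ exchange you implicitly invoke is the very family whose existence is at stake. As written, the argument is circular: the order-$a_\O^2$ threshold contribution is the conclusion restated as an input. (Your use of the hypotheses is also only half right in this setup: $\Delta_\O\neq\Delta_\phi+m$ does control the $\sin^2$ factor of the putative crossed-channel exponent, but $\Delta_\O\neq2\Delta_\phi+O(g)$ does more than ``separate powers'' --- see below.)

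Two non-circular completions exist, and neither is what you wrote. The paper's route avoids the identical correlator entirely: it considers the mixed correlator $\expv{\phi(x_1)\phi(x_2)\O(x_3)\O(x_4)}$, where the hypothesis $c_{\phi\O\phi}=c_{\phi\phi\O}\neq0$ puts a \emph{single} $\phi$ exchange in the crossed channel at order $a_\O$; because the external dimensions differ, the mixed scalar block re-expanded via \eqref{eq:generic2F1expansion} produces \emph{two} direct-channel powers, $u^{\Delta_\phi}$ and $u^{\Delta_\O}$, and the coefficient $C_2$ of the latter forces $c_{\phi\phi[\O,\O]_{0,\ell}}c_{\O\O[\O,\O]_{0,\ell}}\sim a_\O$; combining with $c_{\O\O[\O,\O]_{0,\ell}}=O(1)$ from proposition~\ref{prop:doubletwistsexist} applied in the $\O$ four-point function yields $c^2_{\phi\phi[\O,\O]_{0,\ell}}\sim a_\O^2$. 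In this derivation $\Delta_\O\neq2\Delta_\phi+O(g)$ is what keeps $C_2\neq0$ (it carries a factor $\Gamma(\Delta_\phi-\tfrac{\Delta_\O}2)^{-2}$), while $\Delta_\O\neq\Delta_\phi+m$ keeps the two superimposed series in \eqref{eq:generic2F1expansion} non-degenerate. Alternatively, if you insist on staying inside the $\phi$ four-point function, the legitimate source of your $z^{\Delta_\O}$ is the \emph{already established} tower $[\phi,\phi]_{n,\ell}$: by proposition~\ref{prop:DTgamma} the single $\O$ exchange gives $\gamma_\ell\sim a_\O J^{-\Delta_\O}$, so by \eqref{eq:ddiscofdoubletwistinit} their double-discontinuity is $\propto\expv{a\gamma^2}\sim a_\O^2 J^{-2\Delta_\O}$, and summing the family via proposition~\ref{prop:crossedsums} with $\kappa=2\Delta_\O$ generates exactly the power $z^{\Delta_\O}$ at order $a_\O^2$ --- this is the mechanism the paper exhibits for $[\sigma,\sigma]_{0,\ell}$ in section~\ref{sec:familiesofoperators}. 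That version rescues your strategy but is substantially heavier than the paper's one-block mixed-correlator computation, since it requires controlling the whole crossed-channel tower (not just the leading asymptotics proposition~\ref{prop:DTgamma} supplies) to establish nonvanishing of the summed coefficient for each $n$.
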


\begin{proof}
We follow the approach of appendix~E of \cite{Paper4} and consider the mixed correlator $\expv{\phi(x_1)\phi(x_2)\O(x_3)\O(x_4)}$. The crossing equation reads
\beq{
\G_{\phi\phi\O\O}(u,v)=\frac{u^{\Delta_\phi}}{v^{\frac{\Delta_\phi+\Delta_\O}2}}\G_{\O\phi\phi\O}(v,u).
}
By assumption $c_{\phi\O\phi}=c_{\phi\phi\O} $ is non-zero, which means that the crossed-channel OPE contains the operator $\phi$. It contributes to the double-discontinuity with a term proportional to its conformal block, which in the mixed correlator takes the form \cite{Dolan:2000ut}
\beq{
G^{(d)}_{\Delta,0|\Delta_i}=v^{\frac\Delta2}\sum_{m,n=0}^\infty \frac{\big(\frac{\Delta+\Delta_{12}}2\big)_m\big(\frac{\Delta-\Delta_{34}}2\big)_m\big(\frac{\Delta-\Delta_{12}}2\big)_{m+n}\big(\frac{\Delta+\Delta_{34}}2\big)_{m+n}}{(\Delta)_{2m+n}(\Delta+1-\mu)_m}\frac{v^m(1-u)^n}{m!n!}
}
with $\Delta_{ij}=\Delta_i-\Delta_j$ representing the crossed-channel external operator dimensions $\Delta_1=\Delta_4=\Delta_\O$, $\Delta_2=\Delta_3=\Delta_\phi$. We focus on the leading contribution to the CFT-data, which comes from the $m=0$ term. For this term, the sum over $n$ can be computed and gives, together with the crossing factor,
\beq{
\frac{u^{\Delta_\phi}}{v^{\frac{\Delta_\phi+\Delta_\O}2}}G^{(d)}_{\Delta,0|\Delta_i}=\frac{u^{\Delta_\phi}}{v^{\frac{\Delta_\O}2}}{_2F_1}\parr{\Delta_\phi-\frac{\Delta_\O}2,\Delta_\phi-\frac{\Delta_\O}2;\Delta_\phi;1-u}+O(v^{-\frac{\Delta_\O}2+1}).
}
Using \eqref{eq:generic2F1expansion} to expand the hypergeometric for small $u$, which is equivalent to small $z$, we get two contributions,
\beq{
\frac{u^{\Delta_\phi}}{v^{\frac{\Delta_\phi+\Delta_\O}2}}G^{(d)}_{\Delta,0|\Delta_i}=C_1\parr{ \frac{u^{\Delta_\phi}}{v^{\frac{\Delta_\O}2}}+O(u)}+C_2\parr{  \frac{u^{\Delta_\O}}{v^{\frac{\Delta_\O}2}}+O(u)},
}
for some constants $C_1$ and $C_2$ depending only on the involved operator dimensions. $C_2$ is regular and non-zero as long as the assumptions in the proposition are satisfied. This term signals the existence of operators $[\O,\O]_{0,\ell}$ in the direct channel, with OPE coefficients 
\beq{
c_{\phi\phi[\O,\O]_{0,\ell}}c_{\O\O[\O,\O]_{0,\ell}}\sim c_{\phi\O\phi}^2=a_\O
.}
Using that $c_{\O\O[\O,\O]_{0,\ell}}$ are of order $1$ by proposition~\ref{prop:doubletwistsexist}, we conclude that $c_{\phi\phi[\O,\O]_{0,\ell}}^2\sim a_\O^2$. The case $n>0$ follows by projections to higher twist and is valid as long as $\Delta_\O\neq \mu-1+O(g)$.
\end{proof}

\subsection{Spin zero}\label{sec:spinzerogendisc}
As we reviewed in section~\ref{sec:Inversionformula}, an important step in the derivation of the Lorentzian inversion formula \cite{Caron-Huot2017} is the contour deformations from the Euclidean integration domain $\C$ to the Lorentzian region $z,\zb\in[0,1]$, a manipulation that is valid for spin $\ell>1$. This means that the analytic expressions for CFT-data that result from the inversion formula may not correctly reproduce the CFT-data for operators at spin $\ell=0$ or $\ell=1$. Indeed, it is the case in some examples that the spin zero operator of a given twist family explicitly breaks the formula for e.g.\ anomalous dimensions. On the other hand, there are also many examples where the spin zero operator appears to obey the generic spin formula. A systematic determination of the conditions under which this happens remains an open problem. We summarise here a number of observations made in connection with the theories studied in this thesis.

As a first attempt to analyse the situation, we can study the convergence properties of the inversion integral from the one-dimensional perturbative inversion formula \eqref{eq:masterformula}. The limit $\zb\to0$ in the integration domain induces a pole at $\hb=1$. This has two implications: Firstly, the evaluation of the CFT-data at the spin corresponding to value of $\hb\approx1$ may not be defined, or least needs to be suitably regularised. Secondly, for any spin corresponding to $\hb<1$, the CFT-data has to be evaluated in its analytic continuation beyond the first pole, which is beyond the region of convergence of the inversion integral.
 
From the relation $\hb=\frac{\tau_0}2+\ell$ it is clear that the pole at $\hb=1$ in many cases affects only the leading twist family. This is for instance the case in \NN4 SYM, where we observed in section~\ref{sec:NN4spectrum} that the leading twist anomalous dimension in the average \eqref{eq:averagesKonishi} has a finite support solution at spin zero. In chapter~\ref{ch:paper1} we will assume that the CFT-data in the higher twist families can be extended to spin zero with no ambiguity. For instance, at the subleading twist, $\tau_0=4$, the pole is at $\ell=-1$ which is beyond the physical values of spin. However, there exist solutions to crossing which have finite support in spin at all twists. They were first constructed in \cite{Heemskerk2009} and correspond via holography to higher derivative interactions in AdS. For such theories, the Regge bounds used in establishing the inversion formula must be suitably modified.

The critical and multicritical models offer another venue for exploring spin zero. In the multicritical theories, i.e.\ the $\lambda\phi^{2\theta}$ theories for $\theta\geqslant3$ as displayed in figure~\ref{fig:operahouse}, the operator $\phi^2$ can be included in the family of weakly broken currents $\phi\de^\ell\phi$, and the CFT-data correctly extends to spin zero. Since $\hb=\mu-1+\ell$ and $\mu\leq\frac32$, this leads to an evaluation to the left of the pole at $\hb=1$, but the result is still consistent with the literature. We give more details of this in section~\ref{sec:multicritical}.

In the $\phi^4$ case with $\OO N$ symmetry, the situation is more subtle, but it turns out that in both the $\epsilon$ expansion and in large $N$ expansion, the CFT-data of broken currents can be extended to spin zero in a suitable way. In the $\epsilon$ expansion, spin zero appears at the pole $\hb=1$, but by shifting from the bare conformal spin $\hb$ to the full conformal spin $\hb_{\mathrm f}=\frac{\Delta+\ell}2$, the pole can be resolved at the expense of a factor of $\epsilon$. This means that the spin zero operator $\phi^2$ has an anomalous dimension of one order lower in $\epsilon$ than the broken currents, which is in agreement with the literature. We discuss this further in section~\ref{sec:mathingconditions} based on \cite{Paper2}. In the large $N$ expansion, the behaviour at spin zero depends on whether the representation contains an auxiliary field at spin zero. In the traceless symmetric, $T$, representation the CFT-data trivially extends to spin zero as in the multicritical models. In the singlet, $S$, representation, the scaling dimension extended to spin zero satisfies instead a shadow relation with respect to the auxiliary field $\sigma$:
\beq{
\Delta_{S,0}=d-\Delta_\sigma.
}
This relation, trivial at infinite $N$, appears to holds also in perturbation theory in $1/N$. We use this to analyse $\phi^4$ theories in section~\ref{sec:paper4short} for $\OO N$ case and section~\ref{sec:paper5short} in a generalised large $N$ expansion for generic global symmetry.

Finally, in \cite{Carmi:2019cub} it was discussed how the inversion formula can be improved by suitably subtracting the terms that determine the limit of convergence $\ell_0$. Of course, also in this approach the spin zero operators need to be added by hand, but the discussion may be useful in explaining why the operators at spin zero in the cases described above do inherit properties from the corresponding twist family.

\subsection{Global symmetries and crossing}\label{sec:globalsymmetries}

In the case of global symmetry, the formalism introduced here can be easily modified by introducing an extra label $R$ representing the irreducible representations involved in a given correlator. We typically consider the correlator of $\phi^I$, where $I$ is an index for the vector representation $V$ of a global symmetry group. Then the correlator can be projected onto the irreps $R$ in the tensor product $ V\otimes V$ by introducing tensor structures  $\mathsf T_R^{IJKL}$. This means that we write the correlator as
\beq{\label{eq:fourpointsglobal}
\expv{\phi^I(x_1)\phi^J(x_2)\phi^K(x_3)\phi^L(x_4)}=\frac{1}{x_{12}^{2\Delta_\phi}x_{34}^{2\Delta_\phi}}\sum_{R\in V\otimes V} \mathsf T_R^{IJKL} \G_R(u,v),
}
where each function $\G_R(u,v)$ has a conformal block decomposition of the form \eqref{eq:CBexp},
\beq{\label{eq:CBexpglobal}
\G_R(u,v)=\sum_{\O_R}c^2_{\phi\phi\O_R}G^{(d)}_{\Delta_{\O_R},\ell_{\O_R}}(u,v).
}
The representations $R$ will have different parity transformations under $x_1\leftrightarrow x_2$, and if the tensor structure $\mathsf T_R^{IJKL}$ is even (odd) under $I\leftrightarrow J$, the operators in the sum \eqref{eq:CBexpglobal} have even (odd) spin. The crossing equation can be written on the form
\beq{\label{eq:crossingwithglobalsym}
\G_R(u,v)=\parr{\frac uv}^{\Delta_\phi}\sum_{\rp\,\in V\otimes V}M_{R \rp}\G_{\rp}(v,u),
}
where the exact form of the matrix $M_{R\rp}$ has to be worked out from the tensor structures $\mathsf T_R^{IJKL}$ for a given symmetry group. We give the matrix for the $\OO N$ case in \eqref{eq:crossingmatrixON} in section~\ref{sec:paper4short}.

The types of operators described above now exist in various different representations. Double-twist operators $[\phi,\phi]_{R,n,\ell}$ according to proposition~\ref{prop:doubletwistsexist} exist in all representations $R\in V\otimes V$. In addition, for operators $\mathcal R_1$ and $\mathcal R_2$ in irreps $R_1$ and $R_2$ respectively, proposition~\ref{prop:moredoublesexist} generalises to operators $[\mathcal R_1,\mathcal R_2]_{R,n,\ell}$ in all representations $R\in R_1\otimes R_2 \cap V\otimes V$.

The twist families containing conserved currents are more interesting. Proposition~\ref{prop:currentsexist} only applies to the singlet $S$ representation. If this representation is normalised such that the contribution from the identity operator is $1$, then the OPE coefficient of the stress tensor at spin two in \eqref{eq:fourpointsglobal} has exactly the same relation to the central charge as in \eqref{eq:centralchargeOPE} above, namely 
\beq{\label{eq:centralchargesinglet}
\Delta_{S,2}=d,\quad a_{S,2}=\frac{d^2\Delta_\phi^2}{4(d-1)^2C_T}.
}
If the global symmetry is continuous, there are conserved currents in one or several of the odd representations, with $\Delta_{R,1}=d-1$, and corresponding current central charges related to the normalisations of the irreps.

\subsection{Further aspects}

\subsubsection{Mixing}

Mixing of operators within a twist family is a major hurdle for the analytic bootstrap and it has two serious consequences. One consequence affects the goal of large spin perturbation theory, namely to derive explicit results for the CFT-data. The existence of mixing means that one can, in general, only access averages such as $\expv{a\gamma}$ in \eqref{eq:mixingdef}. Since these averages are defined within the specific correlator, they are not very meaningful observables of the theory. On the other hand, these averages are precisely the building blocks needed to compute the mentioned correlator, which hence can be computed without resolving the mixing.

The second consequence happens when large spin perturbation theory is iterated to subleading orders in $g$. For instance, the contribution from the double-twist operators themselves is proportional to
\beq{\label{eq:mixedforg2}
\expv{(a\gamma^2)_{\ell}}=\sum_{i=1}^{d_{\ell}}a_{\ell,i}\gamma^2_{\ell,i}.
}
If $d_{\ell}>1$, knowing the averages $\expv a$ and $\expv{a\gamma}$ does not lead to \eqref{eq:mixedforg2} without knowing the individual anomalous dimensions.

As we saw in section~\ref{sec:NN4spectrum}, the mixing of leading twist operators in \NN4 SYM can be explicitly resolved, since the individual anomalous dimensions are known and related to the universal function $\gamma_{\mathrm{univ.}}(\ell)$. In section~\ref{sec:lightconecrossing}, we mentioned that in addition, for correlators of half-BPS operators in the planar expansion, the mixing has been resolved also for higher-twist operators. There, the structure of individual anomalous dimensions has a lot of symmetry, and has in fact been explained in terms of a, potentially accidental, ten-dimensional conformal symmetry \cite{Caron-Huot:2018kta}. 

In chapter~\ref{ch:paper2} we encounter the same difficulty, in this case involving mixing within the operators $\de^\ell\phi^4$ in the Wilson--Fisher fixed-point. In that case, a transcendentality principle facilitates an ansatz for the sum of the twist family, consistent with the non-degenerate cases at spins $\ell=0$ and $\ell=2$.

\subsubsection{Projections onto higher twist families}
\label{sec:projections}

So far, we have been concerned with extracting CFT-data for the leading twist family in a given CFT. Specifically, the perturbative inversion formula \eqref{eq:masterformula} involves a projection to the power $z^{\Delta_\phi}$, which corresponds to studying CFT-data of operators with reference twist $\tau_0=2\Delta_\phi$. By instead projecting onto another power, say $z^{h_0}$ one can extract CFT-data for a family with reference twist $\tau_0=2h_0$. Everything that we have described so far translates to the general case when $h_0$ and $\Delta_\phi$ are not related by an integer, i.e.\ when $h_0\neq \Delta_\phi+n+O(g)$ for $n\in \Z$. We will now outline what happens when $h_0$ and $\Delta_\phi$ are related by an integer.

Let us assume that the inversion problem has been studied at reference twist $\tau_0=2h_0$, which could be $2\Delta_\phi$, and that we are interested in operators in a subleading twist family with $\tau_0=2h_0+2n$ for some positive integer $n$. To this end we define a new function 
\beqa{\nonumber
\mathbf T_{\parrm{h_0}}(z,\log z,\hb)&=\sum_{n=0}^\infty z^{h_0+n}2\kappa_\hb\int\limits_0^1\frac{\df \zb}{\zb^2} k_\hb(\zb)\left.\dDisc[\G(z,\zb)]\right|_{z^{h_0+n}} 
\\
&=\sum_{n=0}^\infty z^{h_0+n}\mathbf T_{h_0+n}(\log z,\hb),
}
computed from the double-discontinuity of all powers of $z$ related to $z^{h_0}$ by an integer multiple. In this notation $\mathbf T(\log z,\hb)$ of \eqref{eq:Tgenfdef} corresponds to the $n=0$ term for $h_0=\Delta_\phi$.

Assume that we are interested in the reference twist $2h_0+2$ and that we have computed both $\mathbf T_{h_0}(\log z,\hb)$ and $\mathbf T_{h_0+1}(\log z,\hb)$. In the direct channel at the power $z^{h_0+1}$ we have both contributions from subleading collinear blocks of the twist $2h_0$ operators, and contributions from the new operators at twist $2h_0+2$. Using the form of the subcollinear blocks, given explicitly in 
appendix~\ref{app:subcollinearblocks}, 
we get
\beq{\label{eq:VUsum}
\sum_\hb \Big(\mathbf S(\log z,\hb)k_{\hb}(\zb)+\sum_{i=-1}^1 c_{1,i}(h_0+\tfrac\gamma2,\hb)\mathbf T_{h_0}(\log z,\hb) k_{\hb+i}(\zb) \Big)= \sum_\hb \mathbf T_{h_0+1}(\log z,\hb)k_{\hb}(\zb),
}
where $\mathbf S(\log z,\hb)$ denotes the contribution due to new primary operators at twist $2h_0+2$, and $\gamma$ is the anomalous dimensions at twist $2h_0$. The trick to extract $\mathbf S(\log z,\hb)$ is to use a linear change in variables in the sum such that all terms in \eqref{eq:VUsum} multiply $k_\hb(\zb)$. Ignoring regular terms, this re-writing means that we can read off the equation
\beq{
\mathbf S(\log z,\hb)=\mathbf T_{h_0+1}(\log z,\hb)-\sum_{i=-1}^1c_{1,i}(h_0+\tfrac\gamma2,\hb-i)\mathbf T_{h_0}(\log z,\hb-i).
}
The change in variables in the sum is allowed, because the difference between a sum and its shifted version corresponds to a single conformal block and therefore contains no enhanced divergence. The procedure outlined here generalises to higher order powers $z^{h_0+n}$, where corrections from subcollinear blocks of all lower twist families need to be projected away.

\section{Contributions from crossed-channel operators}
\label{sec:fromcrossedchannel}

In this section we give some more details on how to compute the double-discontinuity of a correlator in a given theory from crossed-channel operators. As outlined in figure~\ref{fig:inversionflowchart} in section~\ref{sec:LSPTtwo}, at any given order in the expansion parameter $g$, the double-discontinuity will be computed by considering operators appearing in the crossed-channel conformal block decomposition. Let us now refer back to figure~\ref{fig:diamond}, displaying the kinematic limits in Lorentzian signature. The conformal blocks of crossed-channel operators naturally expand in the crossed-channel OPE limit, i.e.\ for small $1-z$ and small $w=1-\zb$. The small $w$ expansion is convenient, since it can be carried over to the double-lightcone limit where it generates a large $J^2$ expansion for the CFT-data. The small $1-z$ expansion, however, is problematic, since the perturbative inversion formula \eqref{eq:masterformula} requires expanding in small $z$ and extracting the coefficient of a given power. This requires a summation and re-expansion of powers of $1-z$.

In summary, our strategy is therefore as follows:
\begin{enumerate}
\item At each order in $g$, identify which operators contribute with a non-zero double-discontinuity at that order.
\item Find expressions for the conformal blocks in the crossed-channel OPE limit for these operators.
\item Compute full sums over powers $(1-z)^k$, and then re-expand the result in small~$z$.
\item Select an appropriate power $z^{h_0}$, and construct the corresponding function $G(\zb)$ to invert, either in closed form or as a series in $w=1-\zb$.
\end{enumerate}
The headings below give some details about each step. Importantly, due to the re-expansion in $z$, the contribution to the double-discontinuity from operators in a twist family must be treated collectively. Since the crossed-channel large spin limit exactly corresponds to the direct-channel small $z$ limit, performing the sum over spins can introduce new non-trivial behaviour at small $z$ not exhibited by a single crossed-channel block. We will therefore separate the case of contribution from individual operators, which will be predominantly scalars, and from entire twist families.

\subsection{Crossed channel structure in perturbation theory}\label{sec:crossedstructure}
Recall that a crossed-channel operator $\O$ with twist $\tau_\O$ and OPE coefficient $a_\O=c^2_{\phi\phi\O}$ appears in the inversion integral with a prefactor $a_\O\sin^2(\tau_\O/2-\Delta_\phi)$. This followed from the discussion in section~\ref{sec:LSPTtwo} leading up to \eqref{eq:ddiscofdoubletwistinit}. From the linear expansion around the zeros of the sine function, the proposition below follows.
\begin{prop}\label{prop:firstappearance}
If $\tau_\O=2\Delta_\phi+2n+\kappa g^\delta+O(g^{\delta+1})$, with $n\in \Z$ and we allow for the case $\delta=0$, and if $a_\O = \tilde\kappa g^\alpha+O(g^{\alpha+1})$, then the operator $\O$ has the first non-vanishing double-discontinuity at order $g^a$, where
\begin{enumerate}
\item $a=2\delta+\alpha$ if $n\geqslant0$,
\item $a=\alpha$ if $n<0$.
\end{enumerate}
\end{prop}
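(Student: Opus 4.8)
\noindent The plan is to track how the crossed-channel operator $\O$ enters the inversion integral and to determine the power of $g$ at which its contribution to $\mathbf U$ first fails to vanish. Operator $\O$ contributes through its block $G^{(d)}_{\Delta_\O,\ell_\O}(v,u)$, whose leading behaviour as $v\to0$ is $v^{\tau_\O/2}$. Multiplying by the crossing factor $(u/v)^{\Delta_\phi}$ and passing to the double-lightcone limit $\zb\to1$, where $v=(1-z)(1-\zb)$ and $u\to z$, the $\zb$-dependence responsible for the double-discontinuity is the single power $(1-\zb)^{q}$ with
\[
q=\frac{\tau_\O}2-\Delta_\phi=n+\tfrac12\kappa\,g^{\delta}+O(g^{\delta+1}).
\]
First I would argue that only this leading term can set the first appearance: the subleading tower $v^{\tau_\O/2+k}$ shifts $n\mapsto n+k$ with $k\geq0$, which can only leave the order in $g$ unchanged or raise it (it moves towards the $n\geq0$ regime), while the accompanying $(1-u)^m$ factors merely redistribute powers of $z$ after the re-expansion without lowering the order.

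\noindent Next I would feed this power through the regularised inversion of section~\ref{sec:elementaryinversions}. Writing the discontinuous factor as $(\zb/(1-\zb))^{p}$ with $p=-q=\Delta_\phi-\tau_\O/2$, the inversion gives
\[
U_\O(\hb)\ \sim\ a_\O\,\frac{2}{\pi^2}\sin^2(\pi p)\,\Gamma(1-p)^2\,\frac{\Gamma(\hb+p-1)}{\Gamma(\hb-p+1)}
=\frac{2\,a_\O}{\Gamma(p)^2}\,\frac{\Gamma(\hb+p-1)}{\Gamma(\hb-p+1)},
\]
where the second equality uses the reflection formula $\sin(\pi p)\Gamma(1-p)=\pi/\Gamma(p)$. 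The key point this makes manifest is that the prefactor $\sin^2(\pi p)$ alone does not tell the whole story: combined with $\Gamma(1-p)^2$ it collapses to $1/\Gamma(p)^2$ with $\Gamma(p)=\Gamma(\Delta_\phi-\tau_\O/2)=\Gamma(-n-\tfrac12\kappa g^{\delta}+\cdots)$. Since for generic $\hb$ the remaining ratio $\Gamma(\hb+p-1)/\Gamma(\hb-p+1)$ is finite and non-zero, the leading order in $g$ is set entirely by $a_\O/\Gamma(p)^2$.

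\noindent The case split then follows by inspecting $\Gamma(p)$ at $g=0$. If $n\geq0$ then $p\to-n\leq0$, a non-positive integer, so $\Gamma(p)$ has a simple pole and $\Gamma(-n-\tfrac12\kappa g^\delta)\sim g^{-\delta}$, giving $1/\Gamma(p)^2\sim g^{2\delta}$ and hence a first appearance at order $g^{\alpha+2\delta}$; the borderline $\delta=0$ is automatically included, since then $p$ is generically non-integral and $1/\Gamma(p)^2=O(1)$. If $n<0$ then $p\to-n\geq1$, a positive integer, so $\Gamma(p)=(-n-1)!$ is finite and non-zero, $1/\Gamma(p)^2=O(1)$, and the contribution appears already at order $g^{\alpha}$: the vanishing of $\sin^2(\pi p)$ is exactly cancelled by the pole of $\Gamma(1-p)^2$.

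\noindent The main obstacle is conceptual rather than computational. The literal pointwise double-discontinuity of a negative-integer power $(1-\zb)^{n}$ vanishes, so the statement is only correct when read through the inversion integral, where the compensating $\Gamma(1-p)^2$ pole survives. I would therefore carry out the whole argument with $p$ analytically continued off the integers, taking the limit at the end, exactly as in the free-scalar identity computation of section~\ref{sec:free4dscalar} (which is precisely the case $n=-1$). The remaining loose ends---that the generic-$\hb$ ratio of Gamma functions does not accidentally vanish, and that no lower-order contribution hides in the subleading block terms---are routine once the leading-power analysis above is in place.
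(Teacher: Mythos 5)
Your proof is correct, and its conclusion and underlying mechanism coincide with the paper's, but you have actually supplied a step the paper's own proof leaves implicit. The paper disposes of this proposition in three lines: the crossed-channel operator enters the inversion integral with a prefactor $a_\O\sin^2\bigl(\pi(\tau_\O/2-\Delta_\phi)\bigr)$, and ``the linear expansion around the zeros of the sine function'' gives the result. Read literally, that argument only establishes case 1: the sine vanishes to order $g^{2\delta}$ at \emph{every} integer value of $n$, negative as well as non-negative, so by itself it would wrongly predict $a=2\delta+\alpha$ for all $n\in\Z$. Case 2 is implicitly underwritten by the paper's earlier elementary inversion \eqref{eq:Ahatminusp}, where it is observed that at positive integer $p$ the zero of $\sin^2(\pi p)$ is cancelled against the double pole of $\Gamma(1-p)^2$ coming from the power-divergent $\zb$-integral (the remark below \eqref{eq:inversion1INV} about the limit $p\to1$ is exactly this). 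Your reflection-formula rewriting
\begin{equation}
\frac{\sin^2(\pi p)\,\Gamma(1-p)^2}{\pi^2}=\frac{1}{\Gamma(p)^2}
\end{equation}
packages that cancellation cleanly and makes both cases, together with the $\delta=0$ borderline, fall out of a single expression evaluated at $p=-n-\tfrac12\kappa g^\delta+\ldots$; this uniformity is what your route buys over the paper's terser argument. Your two supplementary points are also sound and consistent with the paper's treatment: the observation that the pointwise double-discontinuity of an exact negative-integer power vanishes, so the statement must be read through the regularised inversion (analytic continuation in $p$, limit at the end, as in the identity-operator computation of section~\ref{sec:free4dscalar}), and the check that the subleading block powers $v^{\tau_\O/2+k}$ shift $n\mapsto n+k\geqslant n$ and therefore never lower the order of first appearance.
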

\vspace{2.5ex}

\noindent Proposition~\ref{prop:firstappearance} implies that the order at which an operator contributes is completely determined by the order of its OPE coefficient and the distance to the nearest double-twist dimension. Step~1 of the strategy is therefore to analyse the leading contribution of the OPE coefficients of particular operators in the theory, i.e.\ to identify $\alpha$ above. In general, this task requires some knowledge of or assumptions about the theory of consideration and here we give some general guiding principles.

For the contribution from crossed-channel twist families, the considerations in section~\ref{sec:directchannelstructure} apply. Double-twist operators and GFF operators $[\O,\O]_{n,\ell}$ appear with suppressions of order $\gamma^2$ and $a_\O^2$ respectively. Therefore the first operators (different from $\1$) to appear are typically either scalars, or twist families with reference twist $\tau_0<2\Delta_\phi+O(g)$. If a scalar operator is the leading contribution, its OPE coefficient becomes an effective coupling constant of the perturbative expansion. This is for instance the case in the $\OO N$ model at large $N$ in section~\ref{sec:paper4short}. In the case of a twist family below the double-twist threshold, one has to introduce an appropriate crossing symmetric ansatz for its contribution. This is the philosophy of our approach in chapter~\ref{ch:paper1}.

\subsubsection{Heuristic diagrammatic method}\label{sec:heuristicmethod}

If one studies a CFT with a Lagrangian description, information about $\alpha$ of the (squared) OPE coefficient of a given operator can be extracted from a heuristic diagrammatic method, similar to the cuts in Witten diagrams in AdS as described in 
\cite{Fitzpatrick:2011dm}. The idea relies on drawing position space Feynman diagrams contributing to the four-point function such that it is possible to make a cut through the diagram corresponding to the operator under consideration. From this the following rule can be formulated: \emph{If there is a cut through $k_1$ lines of $\O_1$, $k_2$ lines of $\O_2$ etc.\ in a diagram at order $g^\alpha$, then the operator $\square^n\de^\ell\O_1^{k_1}\O_2^{k_2}\cdots$ contributes in the OPE at order $g^{\alpha}$.}
There are two exceptions to this rule. If the dimensions of the $\O_i$ are near the scalar unitarity bound, operators with $n\geqslant1$ are further suppressed. If all field lines join to a single point at both sides of the cut, only the scalar $ O_1^{k_1}\O_2^{k_2}\cdots$ operator contributes at that order. We give an explicit example in figure~\ref{fig:multidiagram} in the case of multicritical theories.

\subsection{Individual operators}
The contribution from individual operators is given by considering the crossed-channel conformal block and re-expanding it in the direct-channel small $z$ limit. Of course, in even integer dimensions any conformal block can be evaluated exactly, by which the re-expansion in small $z$ is a trivial task. In generic dimension, explicit forms of the conformal blocks are not known except in specific cases.

The single most important case is the scalar, where the conformal block takes the form \eqref{eq:ScalarBlockAnyD} given in section~\ref{sec:blockology}. Taking this expression with $u$ and $v$ interchanged and performing the sum over $m$, the small $z$ limit can be computed. The sum over $n$ subsequently gives the result
\beqa{\nonumber
v^{-\Delta/2}\left.G^{(d)}_{\Delta,0}(v,u)\right|_{\text{small $z$}}&=-\frac{\Gamma(\Delta)}{\Gamma(\frac\Delta2)^2}\Big[
\pp a\,{_2F_1}\left.\left(\tfrac\Delta2+a,\tfrac\Delta2;\Delta+1-\mu;1-\zb\right)\right|_{a=0}
\\&\qquad+\left(2S_1(\tfrac\Delta2-1)+\log(z\zb)\right){_2F_1}\left(\tfrac\Delta2,\tfrac\Delta2;\Delta+1-\mu;1-\zb\right)
\Big],\label{eq:genericcrossedblock}
}
The details of these manipulations can be found in \cite{Paper4}. In section~\ref{eq:inversionandcasimir} we will make use of this expression evaluated at $\Delta=2$. In expansions around four dimensions, the hypergeometric functions reduce to polylogarithms similar to those encountered in section~\ref{sec:invitation}.

The double light-cone expansion of crossed-channel blocks for general spinning operators was considered in \cite{Li:2019dix}, and in particular, explicit expressions were derived for operators at the unitarity bound. We reproduce here the stress tensor case,
\beqa{\nonumber
v^{-(\mu-1)}\left.G^{(d)}_{d,2}(v,u)\right|_{\text{small $z$}}=\zb^{1-\mu}&\frac{\Gamma(2\mu+2)}{\Gamma(\mu+1)^2}\Big[ \log\parr{\frac \zb z}-2S_1(\mu)
\\&+ \frac{(4\mu-2)(1-\zb)}{\mu(\mu+1)}{_2F_1}\left(1,2;\mu+2;1-\zb\right)\Big],
}
where $\mu=d/2$.

\subsection{Families of operators}\label{sec:familiesofoperators}
Let us emphasise again that large spin perturbation theory requires that the contribution from operators in the same twist family be summed up before the re-expansion in small $z$ is performed. This is in general a formidable task requiring a variety of techniques. The general form of such a sum is given by the following proposition:

\begin{prop}\label{prop:crossedsums}
Let $\O_\ell$ be a family of crossed-channel operators. Then the sum over the corresponding crossed-channel blocks with GFF OPE coefficients \eqref{eq:aGFF}, modulated by $J^{-\kappa}$, has the small $z$ expansion which takes the form
\beq{\label{eq:sumwithkappas}
\sum_\ell a^{\mathrm{GFF}}_{0,\ell}|_\Delta \frac{1}{J^\kappa} G^{(d)}_{2\Delta+\ell,\ell}(1-\zb,1-z)=z^{-\Delta+\frac\kappa2}F_1(z,\zb)+F_2(z,\zb) .
}
In this expression, each of the $F_i(z,\zb)$ expands in non-negative integer powers of $z$ and $\log z$: $F_i(z,\zb)=\sum_{j,k}z^j\log^kz\, f_{ijk}(\zb)$.
\end{prop}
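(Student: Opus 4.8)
The plan is to reduce the two–variable sum to a family of one–variable $SL(2,\R)$ sums and then control each of them with the kernel method of section~\ref{sec:kernel.method} together with the collinear Casimir, always performing the spin sum \emph{before} re-expanding in small $z$. First I would expand the crossed–channel block in powers of $w=1-\zb$ using the subcollinear decomposition of appendix~\ref{app:subcollinearblocks}, writing
\[
G^{(d)}_{2\Delta+\ell,\ell}(1-\zb,1-z)=\sum_{m=0}^\infty w^{\Delta+m}\sum_{|i|\le m}B^{(d)}_{m,i}(\Delta,\hb)\,k_{\hb+i}(1-z),\qquad \hb=\Delta+\ell,
\]
where $B^{(d)}_{0,0}=1$ and the remaining coefficients are rational in the quantum numbers and in $d$. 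At every order in $w$ the $z$–dependence is reduced to finite linear combinations of $SL(2,\R)$ blocks $k_{\hb+i}(1-z)$, so it suffices to analyse the small–$z$ behaviour of sums of the form $\Sigma^{(\kappa)}_i(z)=\sum_\hb a^{\mathrm{GFF}}_{0,\ell}|_\Delta\,J^{-\kappa}k_{\hb+i}(1-z)$ with $J^2=\hb(\hb-1)$.

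Next I would insert the coefficients \eqref{eq:aGFF}, which at $n=0$ read $a^{\mathrm{GFF}}_{0,\ell}|_\Delta=\frac{2\Gamma(\hb)^2\Gamma(\hb+\Delta-1)}{\Gamma(\Delta)^2\Gamma(\hb-\Delta+1)\Gamma(2\hb-1)}$, and cast $\Sigma^{(\kappa)}_i$ in the canonical form $\sum_\hb\frac{2\Gamma(\hb)^2}{\Gamma(2\hb-1)}u(J^2)k_{\hb+i}(1-z)$ with $u(J^2)=\frac{\Gamma(\hb+\Delta-1)}{\Gamma(\Delta)^2\Gamma(\hb-\Delta+1)}J^{-\kappa}$. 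Since $u(J^2)\sim J^{2\alpha}$ at large $\hb$ with $\alpha=\Delta-1-\tfrac\kappa2$, applying \eqref{eq:kernelshort} with $z$ now in the role of $w$ gives a leading small–$z$ singularity $z^{-1-\alpha}=z^{-\Delta+\kappa/2}$, with coefficient the convergent Bessel moment $4\int_0^\infty\df\hat J\,\hat J^{\,2\Delta-1-\kappa}K_0(2\hat J)$. This already fixes the claimed leading exponent and the first coefficient of $F_1$.

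To reach the full statement I would propagate this structure to all orders in $z$ with the crossed collinear Casimir $\Dbar'$, the image of $\Dbar$ under $\zb\mapsto1-z$, which acts in the $z$ variable as $\Dbar'=z(1-z)^2\de_z^2+(1-z)^2\de_z$ and satisfies $\Dbar'k_\hb(1-z)=J^2k_\hb(1-z)$. Hence $\Dbar'\Sigma^{(\kappa)}_i=\Sigma^{(\kappa-2)}_i$, relating consecutive weights. The indicial equation of $\Dbar'$ at $z=0$ is $s^2=0$, a double root, so its homogeneous solutions are analytic plus a single $\log z$; inverting $\Dbar'$ on a source $z^{\sigma}$ produces $z^{\sigma+1}/(\sigma+1)^2$. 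Feeding in the source $z^{-\Delta+\kappa/2-1}$ from $\Sigma^{(\kappa-2)}_i$ regenerates exactly $z^{-\Delta+\kappa/2}$, while analytic sources stay analytic. Starting from the seed $\kappa=0$, where $\sum_\hb a^{\mathrm{GFF}}_{0,\ell}|_\Delta k_\hb(1-z)$ reproduces the leading collinear part of the known generalised free field crossed contribution and is manifestly of the required two–series form, the recursion in steps of two builds $\Sigma^{(\kappa)}_i=z^{-\Delta+\kappa/2}F_1+F_2$ with $F_{1,2}$ expanded in non-negative integer powers of $z$ and finitely many powers of $\log z$; summing over $m$ and $i$ with the coefficients $B^{(d)}_{m,i}$ then assembles the $\zb$–dependent $F_i(z,\zb)$.

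The hard part will be showing that the recursion closes rigidly on this two–series ansatz: I must verify that the physical sum selects only the particular solution dictated by its Regge and convergence behaviour, so that no second independent non-integer exponent and no unbounded tower of $\log z$ powers are generated at higher orders, and that the $w$–summation over $m$ with the $B^{(d)}_{m,i}$ converges order by order into well-defined functions $f_{ijk}(\zb)$. The kernel method secures the leading term and the Casimir recursion the rigidity, but the uniform bookkeeping of subleading powers and logarithms is where the genuine work lies.
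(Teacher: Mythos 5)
Your proposal is correct in its core, and it coincides with the paper's argument exactly where the paper is explicit: the paper obtains the $F_1$ term precisely as you do, by noting that the GFF coefficients \eqref{eq:aGFF} modulated by $J^{-\kappa}$ behave as $u(J^2)\sim J^{2\alpha}$ with $\alpha=\Delta-1-\tfrac\kappa2$, feeding this into the kernel formula \eqref{eq:kernelshort}, and normalising against the $\kappa=0$ case, which must reproduce the GFF power $z^{-\Delta}$; the subleading tower $z^{-\Delta+\kappa/2+j}$ then comes from expanding $u(J^2)$ in integer powers of $J^{-2}$ inside the same kernel integral. Where you genuinely diverge is the treatment of $F_2$ and the all-orders structure. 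The paper's justification of $F_2$ is a one-line block-by-block observation: by \eqref{eq:gausshyperexp} every \emph{individual} crossed-channel block already expands in non-negative integer powers of $z$ with a $\log z$ insertion, so terms of the $F_2$ type are present in each summand and cannot be excluded; no recursion is invoked. You instead build the full two-series form from the crossed collinear Casimir $\Dbar'$ (your computation of its form, its kernel $\{1,\log z\}$, and the relation $\Dbar'\Sigma^{(\kappa)}=\Sigma^{(\kappa-2)}$ are all correct, and in the spirit of the twist-conformal-block recursion \eqref{eq:casimirrelationforTCB}). This buys you something the paper does not attempt: an argument that no \emph{second} non-integer exponent can appear beyond $z^{-\Delta+\kappa/2}$, since the homogeneous and particular solutions at each step stay inside the two-series ansatz. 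Note also that your worry about the inversion ambiguities is partly moot for the proposition as stated: the ambiguity $c_1+c_2\log z$ per step lies inside the claimed form of $F_2$, so the \emph{form} is rigid even before the physical solution is selected; fixing $c_1,c_2$ matters only for determining the functions $f_{ijk}(\zb)$, not the statement.

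There is, however, one concrete gap in your recursion: stepping $\kappa\to\kappa+2$ from the seed $\kappa=0$ only reaches even non-negative integer $\kappa$, while the proposition is needed — and is used in section~\ref{sec:familiesofoperators} — for non-integer modulations such as $\kappa=2\mu$ and $\kappa=4\mu-4$ in the $\OO N$ application. The Casimir relation connects values of $\kappa$ within a residue class mod $2$ but cannot create the initial data for a new class, so for generic $\kappa$ your argument has no seed. The paper's route sidesteps this entirely because the kernel method applies to arbitrary real $\alpha$, and the $F_2$ structure is supplied block by block without any recursion. To close your argument you would either need to seed each residue class by extending the kernel computation to subleading orders in $z$ (the refinements of the expansion \eqref{eq:expanding.kernel}), or simply adopt the paper's block-by-block observation for $F_2$ and reserve the Casimir recursion for the rigidity statement it uniquely provides.
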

\vspace{2.5ex}

\noindent The first term, $F_1(z,\zb)$, follows from the kernel method, \eqref{eq:kernelshort}, combined with the fact that the case $\kappa=0$ reproduces the correct dependence $z^{-\Delta}$ for the GFF theory. The second term exists in each conformal block (see \eqref{eq:gausshyperexp}) and can therefore not be excluded from the sum\footnote{When considering mixed correlators, $F_2(z,\zb)$ gets an additional contribution proportional to $z^{\frac{\Delta_1-\Delta_2-\Delta_3+\Delta_4}2}$, which exists in each collinear block \eqref{eq:collineardifferentmod}, expanded using \eqref{eq:generic2F1expansion}.}.

Proposition~\ref{prop:crossedsums} is very useful for determining what direct-channel families a crossed-channel family of operators gives rise to. Consider for instance the contribution from the leading twist singlet operators $\mathcal J_{S,\ell}$ in the $\OO N$ model. In section~\ref{sec:paper4short} we show that their anomalous dimension takes the form
\beq{\label{eq:gammaSres}
\gamma_{S,\ell} = -\frac{2\gamma_{\varphi}^{(1)}}{N}\left(  \frac{ (\mu -1) \mu}{J^2} +  \gamma_{\varphi}^{(1)} \frac{\pi  \csc (\pi  \mu ) \Gamma (\mu +1)^2 \Gamma (\ell+1)}{J^2(\mu -2) \Gamma (\ell+2 \mu -3)}   \right)+O(N^{-2}),
}
where $J^2=(\mu-1+\ell)(\mu-2+\ell)$ and where the second term expands as $J^{-2(\mu-1)}$. The operators $\mathcal J_{S,\ell}$ contribute with a leading order double-discontinuity that arises from a sum of anomalous dimensions squared. This gives rise to three terms of the form \eqref{eq:sumwithkappas}, with $\kappa=4$, $\kappa=2\mu$ and $\kappa=4\mu-4$ respectively. Let us identify what direct-channel singlet operators this corresponds to. We need to multiply by the factor $z^{\Delta_\varphi}M_{SS}$ from crossing, where $\Delta_\varphi=\mu-1+O(N^{-1})$ and where $M_{SS}=\frac1N$ is the matrix element in the $\OO N$ model crossing matrix \eqref{eq:crossingmatrixON}. Since to leading order $a_{S,\ell}=\frac1N a_{0,\ell}^{\mathrm{GFF}}|_{\Delta=\mu-1}$, we get that the contribution must happen at order $1/N^4$. Multiplying the prefactor of $F_1(z,\zb)$ in \eqref{eq:sumwithkappas} by $z^{\mu-1}$ from crossing, we get contributions of the form $z^{\kappa/2}$, which correspond to twists $\kappa$ for the values given above. We identify the three values of $\kappa$ with twist families $[\sigma,\sigma]_{0,\ell}$, $[\varphi,\varphi]_{S,1,\ell}$ and $(\de^\ell\varphi^4)_S$ respectively. Finally, the term corresponding to $F_2(z,\zb)$ contributes to the operators $[\varphi,\varphi]_{S,0,\ell}$ themselves. All of this is consistent with figure~\ref{fig:spectrumON}.

Proposition~\ref{prop:crossedsums} has an important practical consequence for the organisation of the inversion problem. In the sum over a twist family, it enables the sum to be computed using subcollinear blocks in the crossed channel. The reason is that the crossed-channel collinear limit $\zb\to1$ generates a series in $w=1-\zb$. Under the inversion integral, this in turn produces a series in large $J^2$. This means that the inversion can proceed without finding the explicit form of the sums \eqref{eq:sumwithkappas}, and if the resulting large $J^2$ series can be matched to an explicit function, the goal is achieved.

Various refinements of the kernel method can be used to compute relevant limits of the function $F_1(z,\zb)$ of proposition~\ref{prop:crossedsums}. This played an important role in \cite{Simmons-Duffin:2016wlq} and is described there. 
The term $F_2(z,\zb)$, however, is in general harder to extract. 
In \cite{Paper4}, as we will return to in section~\ref{sec:paper4short}, this was achieved by the method of twist conformal blocks, combined with an additional differential equation on the unitarity bound.

A twist conformal block is defined as a sum over a single twist family: $\mathcal H^{(0)}(z,\zb)=\sum_\ell a_\ell G^{(d)}_{\tau_0+\ell,\ell}(z,\zb)$ \cite{Alday2016}. This generalises to the \emph{level $m$ twist conformal block} defined by
\beq{
\mathcal H^{(m)}(z,\zb)=\sum_\ell \frac{a_\ell}{J^{2m}} G^{(d)}_{\tau_0+\ell,\ell}(z,\zb).
}
The definition is similar to the H-functions introduced in section~\ref{sec:Hfunctionsmaintext}, but we now consider complete functions of both $z$ and $\zb$ rather than just the singular part. By making a shift in the quadratic Casimir $\mathcal C_2$, given in \eqref{eq:quadraticCasimir}, such that the eigenvalue is the conformal spin $J^2$, the twist conformal blocks at different levels are related by a differential equation
\beq{\label{eq:casimirrelationforTCB}
{\cas}^m\mathcal H^{(m)}(z,\zb)=\mathcal H^{(0)}(z,\zb), \quad \cas=\mathcal C_2-\frac{\tau_0(\tau_0+2-2d)}4.
}
If $\mathcal H^{(0)}(z,\zb)$ is known, the first few functions $\mathcal H^{(m)}(z,\zb)$ may be computed by solving the relation \eqref{eq:casimirrelationforTCB} in specific limits.

\section{Inversion procedures}\label{sec:invprocedures}

After identifying the operators that contribute at a given order and computing their double-discontinuities, the job is in principle done. The corresponding CFT-data is packaged in the function $\mathbf U(\log z,\hb)$ which is given by the inversion integral \eqref{eq:masterformula}. In practice, however, the extraction of the CFT-data requires the integral to be explicitly computed, which in general is a difficult task. In particular, it is desirable to extract $\mathbf U(\log z,\hb)$ as a closed-form expression in $\hb$, especially since the actual CFT-data is given in terms of derivatives of this function through \eqref{eq:aellfromU}.

The formulation of the inversion procedure as a one-dimensional integral over a compact domain has an important advantage compared to earlier and alternative procedures. Any candidate result for a specific inversion problem, regardless how it was extracted, can and should be checked by a direct numerical integration. This is done by high-precision evaluation of the integral for a number of finite values of $\hb$, not necessarily related to integer spin. This is especially important when the result has been derived using an indirect method such as a large spin expansion.

\subsection{Direct evaluation}\label{sec:directinversion}
We start by giving some examples of how the inversion integral can be solved by direct evaluation. The first is the result quoted already in section~\ref{sec:reciprocityrevisited} and discussed again in section~\ref{sec:elementaryinversions}, namely the inversion of a power $\xit^{-p}$ where $\xit=(1-\zb)/\zb$. Since the crossing factor $\parr{\frac uv}^{\Delta_\phi}$ takes exactly this form in the collinear limit for $p=\Delta_\phi$, this result corresponds to inverting the identity operator. We summarise it in the following way:
\begin{invtool}\label{inv:identity}
The identity operator $\1$ appearing in the crossed-channel OPE in the $\phi$ four-point function generates CFT-data given by $\mathbf U(\log z,\hb)=
\AA[\Delta_\phi](\hb)$, where we define
\beq{\label{eq:InversionIdentity}
\AA[\Delta_\phi](\hb)=\frac{2\Gamma(\hb+\Delta_\phi-1)}{\Gamma(\Delta_\phi)^2\Gamma(\hb-\Delta_\phi+1)}.
}
\end{invtool}\vspace{2.5ex}

\noindent We show this by first noting that $\dDisc[\xit^{-\Delta_\phi}]=2\sin^2(\pi\Delta_\phi)\xit^{-\Delta_\phi}$. Then, the integral representation \eqref{eq:HyperInt1} gives that
\beq{\label{eq:invXiDphi}
\INV[\xit^{-\Delta_\phi}]=\frac{2\sin^2(\pi\Delta_\phi)}{\pi^2}\int\limits_{[0,1]^2}\frac{\df t \df \zb}{t(1-t)}\parr{\frac{t(1-t)}{1-t \zb}}^\hb\frac{\zb^{\Delta_\phi+\hb-2}}{(1-\zb)^{\Delta_\phi}}.
}
The $\zb$ integral can be computed using the integral \eqref{eq:BetaIntegralmod} in appendix~\ref{app:identities}. The resulting hypergeometric function collapses since $_2F_1(a,b;b;x)=(1-x)^{-a}$, by which the $t$ integral takes the form of the Euler integral of first kind (Beta function), \eqref{eq:BetaIntegralDef}. Finally, using the identity \eqref{eq:subSinGamma} we replace the factors of $\sin(\pi \Delta_\phi)$ by Gamma functions and arrive at the result $\AA[\Delta_\phi](\hb)$. We may write this result on a general form as
\beq{\label{eq:Ahatminusp}
\INV[\xi^p]=\AA[-p](\hb).
}
The simplicity of the result \eqref{eq:Ahatminusp} can be compared to the situation for a general factor $(1-\zb)^p\zb^{-q}$, where the result is \cite{Li:2019dix}
\beq{\label{eq:inversionpq}
\INV\parrk{\frac{(1-\zb)^p}{\zb^q}}=\frac{2\Gamma(\hb-q-1)}{\Gamma(-p)^2\Gamma(\hb+p-q)}\frac{\Gamma(\hb)^2}{\Gamma(2\hb)\Gamma(1+p)}\,{ _3F_2}\!\left(\!\left. {\hb,\hb,\hb-q-1} ~\atop~{\!\!\!\!2\hb,\hb+p-q} \right| 1\right).
}
For $p=q$ this reduces to \eqref{eq:Ahatminusp} using the identity \eqref{eq:Buhringsrelation}.

\subsection{Large conformal spin expansions}
While direct evaluation of the inversion integral is limited to cases where suitable integral identities exist, the result for the inversion of a single power of $\xit=\frac{1-\zb}\zb$ can be used to generate a large $J^2$ expansion for the inversion of any function $G(\zb)$. First, we require that $G(\zb)$ admits an expansion in powers of $w=1-\zb$. This is a natural expansion for crossed-channel conformal blocks, since it corresponds to the crossed-channel OPE limit. Order by order, this expansion can be converted into a series expansion in $\xit=\frac{w}{1-w}$. Then inversion~\ref{inv:identity}, or equivalently \eqref{eq:inversionpq}, gives
\beq{\label{eq:sumexpansionbeforeinversion}
\INV\Big[\sum_{p=0}^\infty c_p\xit^{p-\alpha}\Big]=\sum_{p=0}^\infty c_p\AA[\alpha-p].
}
Since term $\AA[\alpha-p]$ expands as
\beq{
\AA[\alpha-p]=\frac{2}{\Gamma(\alpha-p)^2}\parr{\frac{1}{J^2}}^{1-\alpha+p}\parr{1+\frac{(p - \alpha) (1 + p - \alpha) (2 + p -\alpha)}{3J^2}+\ldots},
}
any truncated sum of the form \eqref{eq:sumexpansionbeforeinversion} will generate the same number of terms in the large $J$ expansion.

This method applies also to expansions of $G(\zb)$ which contain logarithmic terms. For instance, a term $\xit^p\log \xit$ can be generated from applying a derivative of the exponent. Using this, we get
\beq{
\INV[\xit^p\log\xit]=-\de_a\left.\AA[a-p]\right|_{a=0}.
}
The method described here therefore generates a large $J$ expansion for the functions $U^{(p)}_\hb$ that takes the form \eqref{eq:Uphbexpansion}. However, the ultimate goal is find these functions in a closed form. In a number of situations, this can be achieved by comparing with expansions of known functions. In theories near four dimensions, for instance, CFT-data typically takes the form of rational functions in $J^2$, multiplied by the harmonic number $S_1(\hb-1)$ and its generalisations. In practice, this is done by creating an ansatz consisting of suitable functions and matching this with the expansion created through \eqref{eq:sumexpansionbeforeinversion}.

One example of this is the inversion of the contribution from the scalar bilinear $\phi^2$ in the $\epsilon$ expansion. 
\begin{invtool}\label{inv:scalarEps}
In the $d=4-\epsilon$ expansion, the bilinear scalar
$\Delta_{\phi^2}=2\Delta_\phi+\gamma$ with OPE coefficient
$c^2_{\phi\phi\phi^2}$, assuming $\gamma=\gamma^{(1)}\epsilon+\gamma^{(2)}\epsilon^2+\ldots$, has the following inversion expanded to order $\epsilon^3$
\begin{align}\nonumber
\mathbf U(\log z,\hb) &= \, \frac{c^2_{\phi\phi\phi^2}}2 \gamma^2\frac{1}{J^2}\left(-1-\gamma+\epsilon+\gamma S_1(\hb-1)\right)\log z\\&+\frac{c^2_{\phi\phi\phi^2}}2 \gamma^2\frac{1}{J^4}\left(
-1+(J^2\zeta_2+1)\epsilon+(S_1(\hb-1)-J^2\zeta_2-1)\gamma
\right),\label{eq:scalarEps}
\end{align}
where $\zeta_n$ denote Riemann's zeta function.
\end{invtool}\vspace{2.5ex}
 
 \noindent 
We derive this in section~\ref{sec:orderepsthree}, starting from the scalar conformal block \eqref{eq:ScalarBlockAnyD}, putting $d=4-\epsilon$, $\Delta=2\Delta_\phi+\gamma$ and $\Delta_\phi=1-\frac\epsilon2+O(\epsilon^2)$. In that case the sums defining the scalar conformal block can be explicitly computed and generate the type of polylogarithms encountered in section~\ref{sec:invitation}. Expanding to order $\epsilon^3$ we get 
\beqa{\nonumber
\dDisc & \parrk{\parr{\frac uv}^{\Delta_\phi} \left.c_{\phi\phi\phi^2}G^{(4-\epsilon)}_{2\Delta_\phi+\gamma,0}(v,u)\right|_{z^{\Delta_\phi}}
}
\\&=\dDisc\Big[{c_{\phi\phi\phi^2}}\log^2(1-\zb)\Big(\Big(
-\frac{\gamma^2}8-\frac {\gamma^3}{48}\log(1-\zb)+\frac{\epsilon\gamma^2-\gamma^3}{8}
\Big)\log z
\nonumber\\&\quad+\frac{\gamma^2\log \zb}{8}+\frac{(\epsilon\gamma^2-\gamma^3)(\zeta_2-\log \zb)}{8}+\frac{\gamma^3}{48}(6\,\mathrm{Li}_2(1-\zb)+\log(1-\zb)\log \zb) \Big)\Big].
}
The techniques described above are then used to generate a large $J$ series for the inversion of this expression, which can be matched with a suitable ansatz of terms of the form $J^{-2p}$ and  $J^{-2p}S_1(\hb-1)$. The result is the expression~\eqref{eq:scalarEps} quoted in inversion~\ref{inv:scalarEps}. Alternatively, the large $J$ series can be generated by the H-function method. Of course, the results for the various inversions of single terms, for instance $\INV[\log^2(1-\zb)\log \zb]=-4/J^4$, can be recorded for later use by extending table~\ref{tab:initialinversions}. We give such an extension in appendix~\ref{integrals}, containing all inversions needed for studying the $\epsilon$ expansion to order $\epsilon^4$.

We stress again the importance of checking the inversion results by numerical integration. The large $J^2$ expansions generated by the series method are often asymptotic, meaning that they have zero radius of convergence around $J^{-2}=0$. For instance, this is the case with the expansion \eqref{eq:S1expansion} of the harmonic number $S_1(\hb-1)$ and similar functions. It is therefore possible that the true result of the inversion integral contains additional terms, exponentially suppressed as $J\to\infty$. Fortunately, this is not the case for the harmonic number and its generalisations, where the expansions generated by the series \eqref{eq:sumexpansionbeforeinversion} agree with the standard large spin expansions given in e.g.\ \cite{Albino2009}. When extracting CFT-data at finite spin, one should always use the closed form expression and not the large $J$ expansion. For instance, in \cite{Albayrak:2019gnz} it was shown that taking into account the finite spin corrections in the case of the the 3d Ising model improved the precision of the computations in \cite{Simmons-Duffin:2016wlq}, which were derived using the large spin asymptotics.

\subsection[Inversion and the $\SL2\R$ Casimir]{Inversion and the $\boldsymbol{\SL2{\mathrm R}}$ Casimir}\label{eq:inversionandcasimir}

In section~\ref{sec:tastercasimir} we demonstrated how the $\SL2\R$ Casimir operator $\Dbar=(1-\zb)\zb^2\de^2_\zb-\zb^2\de_\zb$ was used to relate the inversion $\INV[ G(\zb)]$ to the inversion $\INV[\Dbar G(\zb)]$ by a simple division by $J^2$. This principle can be very useful in proving the exact form of some inversions. 

To give another example, consider the inversion of a scalar operator of dimension $\Delta=2$ in the correlator of external operators of dimension $\Delta_\phi=\mu-1$, where $\mu=d/2$. In \eqref{eq:genericcrossedblock} we gave the double lightcone expansion of a single crossed-channel conformal block. Multiplying by $\parr{\frac{\zb}{1-\zb}}^{\mu-1}$ and specialising to $\Delta=2$ we need to invert
\beqa{
\nonumber
G(\log z,\zb)=
-\left(\frac{\zb}{1-\zb}\right)^{\mu-1}(1-\zb)\big[&(\log z+\log \zb)\,{_2F_1}\left(1,1;3-\mu;1-\zb\right)\\&+2\pp a\left.  {_2F_1}\left(1+a,1;3-\mu;1-\zb\right)\right|_{a=0}\big].
\label{eq:toinvertsigmatree}
}
While integrating these hypergeometric functions against the $k_\hb(\zb)$ appearing in the inversion integral appears to be a difficult task, the situation simplifies drastically when acting on this function by $\Dbar$:
\beq{
\Dbar G(\log z,\zb)=-\left(\frac{\zb}{1-\zb}\right)^{\mu-1}\!\!\left((\mu-2)^2(\log z\!+\!\log \zb)+(1-\zb){_2F_1}\left(1,1;3-\mu;1-\zb\right)\right).
}
It is now trivial to invert the term proportional to $\log z$, since it is just a pure power of $\xit=\frac{1-\zb}\zb$. The result for $U^{(1)}_{\hb}$ is therefore
\beq{
2\frac1{J^2}\INV[-(\mu-2)^2\xit^{1-\mu}]=-2(\mu-2)^2\frac{\AA[\mu-1](\hb)}{J^2}.
}
Also the inversion of the last term follows straightforwardly, since it takes the same form as the original $\log z$ term. By acting once again by the Casimir $\Dbar$ it is clear that it inverts to $(\mu-2)^2\AA[\mu-1](\hb)/J^4$. Finally, the term proportional to $\log \zb$ can be inverted by expanding the inversion integrand in powers of $\zb$ where $k_\hb(\zb)$ is regular, and inverting term by term. The result is given in terms of the combination\footnote{This combination of harmonic numbers is closely related to the function $\AA[\alpha](\hb)$ noting that
\beq{\nonumber
\de_\alpha \AA[\alpha](\hb)=\AA[\alpha](\hb)\parr{-2S_1(\alpha-1)+S_1(\hb-2+\alpha)+S_1(\hb-\alpha)}.
}
Moreover, $\SS[\alpha](\hb)$ has a large $J$ expansion that is free from terms $\log J$. This is in agreement with the fact that the function $G(\log z,\zb)$ has no terms scaling as $\log(1-\zb)$ in the limit $\zb\to1$.
}
\beq{
\SS[\alpha](\hb)=2S_1(\hb-1)-S_1(\hb-2+\alpha)-S_1(\hb-\alpha).
}
We can summarise our findings, first derived in \cite{Paper4}, in the following way.
\begin{invtool}\label{inv:scalarN}
The contribution from a scalar $\O$ with $\Delta_\O=2$ in the $\phi$ four-point function, where $\Delta_\phi=\mu-1$, in generic spacetime dimension $d=2\mu$ takes the form
\beq{
\mathbf U(\log z,\hb)= (\mu-2)^2c^2_{\phi\phi\O}\frac{\AA[\mu-1](\hb)}{J^2}\left(
-\log z+\SS[\mu-1](\hb)-\frac1{J^2}
\right).
}
\end{invtool}

\section{Applications of large spin perturbation theory}\label{sec:applicationsofLSPT}

We finish the practical guide to large spin perturbation theory by reviewing the numerous applications of the framework that have appeared in the literature. We limit ourselves to the work in the direct spirit of \cite{Alday2016} and its companion paper \cite{Alday2016b}, with or without the inversion integral, and we do not aim to cover the whole range of analytic bootstrap work that we briefly summarised in section~\ref{sec:lightconecrossing}.

In \cite{Alday2016b} the leading order implications of large spin perturbation theory were studied in a variety of examples.  This included leading corrections to anomalous dimensions in the $\OO N$ model at order $\epsilon^2$ and order $1/N$, as well as the leading order corrections to dimensions and OPE coefficients in a generic conformal guage theory.

The $\epsilon$ expansion is particularly suitable for large spin perturbation theory, which can be realised by studying figure~\ref{fig:spectrumWF} in section~\ref{sec:WFspectrum} in connection with the considerations in proposition~\ref{prop:firstappearance} above. Since all crossed-channel operators, except the identity, have twists of the form $2\Delta_\phi+2n+O(\epsilon)$, illustrated by the grey bands in figure~\ref{fig:spectrumWF}, their contributions to the double-discontinuity are suppressed by at least an order $\epsilon^2$. In addition, the weakly broken currents are not corrected until order $\epsilon^2$, and the higher twist operators have OPE coefficients of order $\epsilon^2$ or higher, and therefore the vast majority of operators do not contribute until order $\epsilon^4$. This means that the whole double-discontinuity up to order $\epsilon^3$ is generated from the identity operator $\1$ and the bilinear scalar $\phi^2$. All CFT-data to this order therefore follow from a direct application of inversions~\ref{inv:identity} and \ref{inv:scalarEps}. As we describe in detail in chapter~\ref{ch:paper2}, based on \cite{Paper2}, the whole double-discontinuity at order $\epsilon^4$ can also be computed in terms of the CFT-data at lower orders, by an iterative procedure in the spirit of figure~\ref{fig:inversionflowchart}. This consists of two contributions: the weakly broken currents themselves and operators of approximate twist four (the $n=1$ case in the discussion in section~\ref{sec:WFspectrum}). The latter contribution is found through an ansatz based on transcendentality, and some input from the literature is needed to fix some coefficients.

In \cite{Paper3}, which we do not have room to reproduce in this thesis, the problem at order $\epsilon^4$ was revisited in the $\OO N$ symmetric case, and by using the projections to subleading twists in the spirit of our section~\ref{sec:projections}, all dependence on literature values was circumvented. The resulting OPE coefficients of broken currents $\mathcal J_{R,\ell}$ at order $\epsilon^4$ were all new results, as well as the scaling dimensions in the rank two representations $T$ and $A$. From the OPE coefficients, new results at order $\epsilon^4$ for the central charges $C_T$ and $C_J$ were computed, which we give at the end of chapter~\ref{ch:paper2}.

A different approach is needed to study conformal gauge theories, where the simplest operator in the spectrum is a bilinear scalar of dimension $\Delta_\O=2+O(g)$. The perturbative structure of the four-point of such an operator was determined already in \cite{AldayBissi2013}, and was revisited in \cite{Alday2016b}. In the double lightcone limit, the most general expression at order $g$ takes the form
\beq{\label{eq:ansatzforgaugetheory}
\G^{(1)}(u,v)=\frac uv\parr{a_{11}\log u\log v+a_{10}\log u+a_{01}\log v+a_{00}}+O(u^2),
}
where crossing relates $a_{10}$ and $a_{01}$ through the external anomalous dimensions $g\gamma_{\mathrm{ext}}=\Delta_\O-2$. In chapter~\ref{ch:paper1}, based on \cite{Paper1}, we will show that this expression, complemented by a contribution at spin zero, generates the entire double-discontinuity of the correlator and can be completed to an explicit expression for the whole correlator. To understand why this is possible we refer back to figure~\ref{fig:spectrumNN4} in section~\ref{sec:NN4spectrum} for the special case of \NN4 SYM; the structure of the spectrum in a general conformal gauge theory is similar. From this figure it is clear that the identity operator together with the leading twist operators generate the entire double-discontinuity to order $g$. All other operators are suppressed by proposition~\ref{prop:firstappearance}, as indicated by the grey bands in figure~\ref{fig:spectrumNN4}. Since the CFT-data for leading twist operators can be extracted from the ansatz \eqref{eq:ansatzforgaugetheory}, the four constants $a_{ij}$ together with the anomalous dimension at spin zero generate a five-dimensional solution space for all CFT-data entering the correlator, and therefore for the most general form of the four-point function.

It is natural to extend this to next order, where the higher-twist operators themselves contribute to the double-discontinuity. Due to the complicated mixing of operators, this has not yet been achieved. However, as we mentioned already in section~\ref{sec:lightconecrossing}, more progress has been made in the planar limit at strong coupling. There, the expansion is typically phrased in holographic language and written as
\beq{
\G(u,v)=\G_{\mathrm{disc.}}(u,v)+\frac1{N^2}\G_{\mathrm{tree}}(u,v)+\frac1{N^4}\G_{\mathrm{loop}}(u,v)+O(N^{-6}),
}
where the subscripts refer to disconnected, tree-level and one-loop diagrams in supergravity. By studying systems of tree-level supergravity correlators of half-BPS operators $\O_p$ in the traceless symmetric $[0,p,0]$ representations of the R-symmetry $\SU4$, the mixing problem was resolved, and subsequently the loop supergravity correlator could be determined. This was first done for the $[0,2,0]={\boldsymbol{20'}}$ case \cite{AldayBissi2017,Aprile:2017bgs,Aprile:2017xsp} and later for general half-BPS operators \cite{Aprile:2018efk,Aprile:2019rep,Alday:2019nin}, as well as with string theory corrections \cite{Alday:2018pdi}, the latter corresponding to $1/\lambda$ for the `t Hooft coupling $\lambda=g^2_{\mathrm{YM}}N$.

CFTs in three dimensions were studied in \cite{Aharony:2018npf}, which considered CFTs with weakly broken higher spin symmetry and gauge group $\SU N$ for large $N$, the main example being Chern--Simons theories coupled to a fundamental complex scalar or a Dirac fermion. The object of study was the four-point function of the smallest-dimension scalar $J_0$, with dimension $\Delta_{0}=1$ or $\Delta_0=2$ depending on theory. Similar to weakly coupled 4d \NN4 SYM, the OPE contains broken currents $\mathcal J_\ell$ which generate the double-discontinuity at order $1/N$, as well as GFF operators $[J_0,J_0]_{n,\ell}$. Also here, mixing amongst the higher twist operators prevents a full determination at order $1/N^2$, which can only be determined in the case where there is no mixing. The general case at order $1/N^2$ remains an open problem.

In \cite{Paper4}, which we summarise in section~\ref{sec:paper4short}, we studied the critical $\OO N$ model at large $N$ based on some initial considerations in \cite{AldayZhiboedov2015} and \cite{Alday2016b}. Referring to figure~\ref{fig:spectrumON} in section~\ref{sec:ONspectrum}, we see that the leading double-discontinuities are generated by the identity operator $\1$ and the auxiliary field $\sigma$. At subleading orders, the contribution from the broken currents $\mathcal J_{R,\ell}$ needs to be computed, as well as contributions from the GFF operators $[\sigma,\sigma]_{n,\ell}$. As we review in section~\ref{sec:paper5short}, based on \cite{Paper5}, the leading computations in the large $N$ expansion, as well as the $\epsilon$ expansion, generalise to $\phi^4$ theories with other global symmetries. 

The tools developed for the $\OO N$ model can in fact be used to study multicritical $\phi^{2\theta}$ theories near their critical dimensions $d_{\mathrm c}(\theta)$. We describe how to do this in section~\ref{sec:multicritical}. Also cubic theories in $6-\epsilon$ dimensions have been studied using large spin perturbation theory in \cite{Alday2016b} and \cite{Goncalves:2018nlv}.

The inversion formula has been used to reproduce results in the heavy-light bootstrap \cite{Li:2019zba}. The purpose is to study universal properties of CFTs with holographic interpretation, where the expansion parameter is the inverse of the number of degrees of freedom, or equivalently $1/C_T$. The operators are divided into light $\mathcal L$ and heavy $\mathcal H$, where $\Delta_{\mathcal L}=O(1)$ and $\Delta_{\mathcal H}=O(C_T)$, and the starting point of the bootstrap analysis is the mixed correlator $\expv{\mathcal H\mathcal L\mathcal L\mathcal H}$. The contributions from the identity operator $\1$ and the stress tensor $T\munu$ in the crossed-channel generate double-twist operators $[\mathcal H,\mathcal L]_{n,\ell}$ with anomalous dimensions of the order $\gamma_{n,\ell}\sim J^{-(d-2)}/C_T$. The next step is to look at crossing for the correlator $\expv{\mathcal L\mathcal L\mathcal H\mathcal H}$, where the minimal set of direct-channel operators are $\1$, $T\munu$ and the \emph{double-stress tensors} $[T,T]_{0,\ell}$. In \cite{Li:2019zba} it was shown how the OPE coefficients for the double-stress tensors can be computed from the large spin perturbation theory, matching with the results of \cite{Kulaxizi:2019tkd}. More precisely, the kernel method can be used to determine the crossed-channel contribution from $[\mathcal H,\mathcal L]_{n,\ell}$. By proposition~\ref{prop:crossedsums} this gives the power $z^{d-2}$ which exactly matches the reference twist of the double stress tensors. The OPE coefficients are proportional to $1/C_T^2$ by proposition~\ref{prop:firstappearance} and finally the twists of $[\mathcal H,\mathcal L]_{n,\ell}$ correspond to the correct asymptotic spin dependence of the double stress tensor OPE coefficients\footnote{As mentioned in \cite{Li:2019zba}, an important assumption in deriving this result is that the operators $[\mathcal H,\mathcal L]_{n,\ell}$ are non-degenerate.}.

Some further applications and generalisations of large spin perturbation theory have been made. In \cite{Loon2017}, scalar correlators in fermionic theories were considered, with specific applications to the Gross--Neveu model in $2+\epsilon$ dimensions and the Gross--Neveu--Yukawa model in $4-\epsilon$ dimensions. Interestingly, the former case admits an all twist result that is very similar to our results in chapter~\ref{ch:paper1}, where the spacetime dimensionality is $2$ rather than $4$. In \cite{Elkhidir:2017iov} conformal blocks, crossing equation and large spin expansion were developed for the $\langle{\hspace{0.5pt}\phi\hspace{1pt}\psi\hspace{1pt}\phi\hspace{1pt}\bar\psi\hspace{1pt}}\rangle$ correlator for a scalar $\phi$ and a fermion $\psi$ in four dimensions. Similarly, in \cite{Albayrak:2019gnz} some initial considerations were made for the fermion four-point function in three dimensions. Alternative versions of the Lorentzian inversion formula have also been derived in the case of defect CFTs \cite{Lemos:2017vnx,Liendo:2019jpu} and CFTs at finite temperature \cite{Iliesiu:2018fao}, where the latter was used in \cite{Iliesiu:2018zlz} to study the thermal 3d Ising model.

In connection to the lightcone bootstrap, one might attempt to determine in full generality the exact contribution from a crossed-channel operator to a given direct-channel operator. This is known as the \emph{crossing kernel}, or $6j$ symbol \cite{Sleight:2018epi,Sleight:2018ryu,Liu:2018jhs,Cardona:2018qrt}. 
The results derived in these references are non-perturbative and do not simply translate to large spin perturbation theory. Often they are phrased as a contribution to $\gamma_\ell$, but in the language of this thesis, the results rather match $U^{(1)}_{\hb}/\AA[\Delta_\phi](\hb)$, which, contrary to $\gamma_\ell$, is additive in crossed-channel contributions. As we stressed in section~\ref{sec:fromcrossedchannel}, in a perturbative setting, the contribution from twist families cannot be computed by inverting operators one by one, so crossing kernels are not enough to perform large spin perturbation theory beyond leading order in perturbation.

\chapter[Wilson--Fisher model in the \texorpdfstring{$\boldsymbol \epsilon$}{epsilon} expansion]{Wilson--Fisher model in the $\boldsymbol \epsilon$ expansion}
\label{ch:paper2}

\section{Introduction}

In this chapter we will apply the method of large spin perturbation theory to the Wilson--Fisher (WF) model in $d=4-\epsilon$ dimensions. In \cite{Alday2016b} results were obtained for the anomalous dimensions of weakly broken currents to the first non-trivial order in $\epsilon$.
In a series of papers \cite{Gopakumar:2016wkt,Gopakumar:2016cpb, Dey:2017fab,Dey:2016mcs} a proposal has been put forward for an alternative method to compute CFT-data analytically. In this approach one uses Mellin space and crossing symmetry is built in. Consistency with the OPE then constrains the CFT-data. This method has been applied to the WF model in the $\epsilon$ expansion leading to impressive results. More precisely, the CFT-data for weakly broken currents has been obtained to cubic order in $\epsilon$. The purpose of this chapter is first to show how these results can be recovered from the perspective of large spin perturbation theory using the Lorentzian inversion formula. To cubic order the relevant divergences of the correlator arise, via crossing symmetry, from just two operators in the crossed channel: the identity operator and the bilinear scalar operator. This makes our derivation very simple: in the present framework it essentially involves a first-order computation. The simplicity of our method is also manifest when dealing with the $\mathrm O(N)$ model where the results to cubic order follow straightforwardly from those for $N=1$. A remarkable feature of our computation is that the convergence properties of the inversion integral allow to extrapolate the results down to spin zero. Conservation of the stress tensor together with a matching condition for spin zero lead to two non-trivial constraints, that allow to fix not only the dimension of the external operator but also the dimension of the scalar operator $\phi^2$. We then move on to the computation at fourth order. In this case the divergences of the correlator are more involved and arise from infinite towers of operators with arbitrarily large spins. The computation is complicated by the appearance of new operators in the OPE at quadratic order. A remarkable feature of these operators, together with intuition from perturbation theory, makes it possible to guess their contribution to the divergence, and hence to determine the CFT-data of weakly broken currents to fourth order. The results for the anomalous dimensions agree with those in the literature, computed by Feynman techniques, while the OPE coefficients are a new result. From the latter we deduce the central charge of the WF model to fourth order in the $\epsilon$ expansion:
\begin{equation}\label{eq:CTintrores}
\frac{C_T}{C_{T,\text{free}}}= 1 - \frac{5}{324} \epsilon^2 - \frac{233}{8748}\epsilon^3 - \left(\frac{100651}{3779136}-\frac{55}{2916}\zeta_3\right) \epsilon^4+\ldots,
\end{equation}
where we stress the fact that the contribution proportional to $\epsilon^4$ is also negative. 
 
 This chapter is organised as follows. The computation up to cubic order is presented in section \ref{thirdorder}. After introducing the basic ingredients we explain the connection between the inversion formula and large spin perturbation theory. Since we are dealing with leading twist operators, the inversion problem for $\SL2{\mathbb{R}}$ suffices, and we give a quick derivation of the $\SL2\R$ inversion formula. Then we proceed to obtain the CFT-data for leading twist operators, up to this order, from the double-discontinuity of the correlator. We also show how to generalise these results to the $\mathrm O(N)$ model. In section \ref{fourthorder} we tackle the problem to fourth order and give the full answer for the anomalous dimensions and OPE coefficients of leading twist operators. We finish with some conclusions. Appendix \ref{integrals} contains a database of the necessary inversion integrals to compute the CFT-data at hand, while appendix \ref{ddisc} contains expressions for double discontinuities at fourth order.   

\section[Lorentzian OPE inversion in the \texorpdfstring{$\epsilon$}{epsilon} expansion]{Lorentzian OPE inversion in the $\boldsymbol{\epsilon}$ expansion}
\label{thirdorder}

\subsection{Generalities}

Consider the four-point correlator of a scalar field $\phi$ in a $d$-dimensional CFT
\begin{equation}
\langle \phi(x_1) \phi(x_2)  \phi(x_3)  \phi(x_4) \rangle = \frac{{\cal G}(z,\bar z)}{x_{12}^{2\Delta_\phi}x_{34}^{2\Delta_\phi}}.
\end{equation}
It admits a decomposition in conformal blocks, which in the direct channel decomposition reads
\begin{equation}
{\cal G}(z,\bar z) = \sum_{\Delta,\ell} a_{\Delta,\ell} G^{(d)}_{\Delta,\ell}(z,\bar z),
\end{equation}
where $G^{(d)}_{\Delta,\ell}(z,\bar z)$ are the $d-$dimensional conformal blocks defined in section~\ref{sec:blockology}. We assume that there is a free point where the correlator reduces to that of generalised free fields (GFF)
\begin{equation}
{\cal G}^{(0)}(z,\bar z) = 1+ (z \bar z)^{\Delta_\phi} + \left(\frac{z \bar z}{(1-z)(1-\bar z)}\right)^{\Delta_\phi}.
\end{equation}
The intermediate operators are the identity and towers of bilinear operators of twist $2\Delta_\phi +2n$ and spin $\ell$. We will be interested in leading twist operators with $n=0$. In this case the GFF OPE coefficients \eqref{eq:aGFF} reduce to
\begin{equation}
a^{(0)}_{\ell} = \frac{2 \left((\Delta_\phi )_\ell\right){}^2}{\ell! (\ell+2 \Delta_\phi -1)_\ell}.
\end{equation}
 As we show below, these OPE coefficients are fixed by the structure of divergences of the correlator. Next we consider perturbations by a small parameter $g$. This introduces a correction to the scaling dimensions and OPE coefficients of the leading-twist operators
\begin{eqnarray}
\Delta_\ell &=& 2\Delta_\phi + \ell+ \gamma^{(1)}_\ell g + \ldots\\
a_{\ell}  &=& a^{(0)}_{\ell} +a^{(1)}_{\ell} g+ \ldots .
\end{eqnarray}
We will assume that at this order no new operators appear in the OPE $\phi \times  \phi$. From the analysis of \cite{Alday2016} it follows that the only solutions consistent with crossing symmetry have finite support in the spin. For generic $\Delta_\phi$ these solutions can be constructed following \cite{Heemskerk2009}. For the present chapter we will be interested in the case $\Delta_\phi=\frac{d-2}{2}$ at leading order in $g$. In this case it was proven in \cite{Alday2016b} that crossing symmetry admits a non-trivial solution only around $d=4$, with support for spin zero. We {\it define} the coupling constant $g$ as the anomalous dimension of the bilinear operator with spin zero
\begin{eqnarray}\label{eq:deltazero}
\Delta_0 = 2\Delta_\phi + g.
\end{eqnarray}
All other quantities will be computed in terms of this coupling constant. In \cite{Alday2016b} it was also shown that $\Delta_\phi$ can receive corrections only at order $g^2$. Note that the dimensionality of space-time can differ from four by at most something of order $g$, so that $d=4-\epsilon$ with $g \sim \epsilon$. The correction to the OPE coefficients can be found through an extension of the analysis of \cite{Alday2016b}. Again, the corresponding solution has support only for spin zero and one finds $a_{0}  =a^{(0)}_{0}(1- g +\ldots)$. In summary, for spin two and higher the corrections start at order $g^2$
\begin{eqnarray}
\Delta_\ell &=& 2\Delta_\phi + \ell+  \gamma^{(2)}_\ell g^2 + \ldots,\qquad\ell=2,4,\ldots,\nonumber\\
a_{\ell}  &=& a^{(0)}_{\ell} +  a^{(2)}_{\ell} g^2 + \ldots,\qquad\ell=2,4,\ldots ,
\end{eqnarray}
and the same is true for the external operator
\begin{eqnarray}
\Delta_\phi= \frac{d-2}{2} + \gamma^{(2)}_{\phi} g^2 + \ldots.
\end{eqnarray}
We would like to find the corrections consistent with crossing symmetry. Our method relies on the fact that the double-discontinuity of the correlator contains all the relevant physical information. Let us explain this in more detail.
\subsection{From large spin perturbation theory to an inversion formula}
\label{sec:derivationCollinearFormula}
Consider a basis of $\SL2\R$ conformal blocks $f_{\Delta,\ell}(\bar z)$. We find it convenient to introduce the following normalisation
\begin{equation}
f_{\Delta,\ell}(\bar z) = r_{\frac{\Delta+\ell}{2}} k_{\frac{\Delta+\ell}{2}}(\bar z),\qquad r_h= \frac{\Gamma(h)^2}{\Gamma(2h-1)},
\end{equation}
with $k_h(\bar z)=\bar z^h {_2F_1}(h,h,2h,\bar z)$. We are interested in solving the following inversion problem: find $A_\ell$ such that
\begin{equation}
\label{SL2R}
\sum_{\substack{\Delta=2\Delta_\phi+\ell,\\\ell=0,2,\ldots}} A_\ell f_{\Delta,\ell}(\bar z) = G(\bar z),
\end{equation}
for a given $G(\bar z)$ containing an enhanced singularity as $\bar z \to 1$. By enhanced singularity we mean a contribution which becomes power-law divergent upon applying the Casimir operator a finite number of times, and as such it cannot be obtained by a finite number of conformal blocks. This is equivalent to saying that $G(\bar z)$ contains a double-discontinuity. For a correlator the double-discontinuity is defined in \eqref{eq:ddiscdef}
\begin{equation}
 \dDisc [G(\bar z)] \equiv G(\bar z) -\frac{1}{2} G^\circlearrowleft(\bar z)-\frac{1}{2} G^\circlearrowright(\bar z).
\end{equation}
An algorithm to find $A_\ell$ as a series in $1/\ell$ to all orders was developed in \cite{Alday2016}. The idea is the following. First recall that the $SL(2,\mathbb{R})$ conformal blocks are eigenfunctions of a quadratic Casimir operator \eqref{eq:collcasrel}
\begin{equation}
\Dbar f_{\Delta,\ell}(\bar z)  = J^2 f_{\Delta,\ell}(\bar z) ,
\end{equation}
where $\Dbar =\bar z^2 \bar \partial(1-\bar z)\bar \partial$ and $J^2=\frac{1}{4}(\Delta+\ell)(\Delta+\ell-2)$. We then assume that $A_\ell \equiv A(J)$ admits an expansion in inverse powers of the conformal spin
\begin{equation}
A(J) = \sum_m \frac{a_m}{J^{2m}}
\end{equation}
and define the following family of functions
\begin{equation}
h^{(m)}(\bar z)= \sum_{\substack{\Delta=2\Delta_\phi+\ell,\\\ell=0,2,\ldots}}  \frac{f_{\Delta,\ell}(\bar z)}{J^{2m}} .
\end{equation}
From the explicit form of the blocks we can compute
\begin{equation}
h^{(0)}(\bar z)= \sum_{\substack{\Delta=2\Delta_\phi+\ell,\\\ell=0,2,\ldots}} f_{\Delta,\ell}(\bar z) = \frac12 \frac{\bar z}{1-\bar z}   +\, \text{regular},
\end{equation}
where the regular terms do depend on $\Delta_\phi$ but are not important for us. The sequence of functions $h^{(m)}(\bar z)$ can then be generated by the inverse action of the Casimir 
\begin{equation}
\label{recursion}
\Dbar h^{(m+1)}(\bar z) = h^{(m)}(\bar z).
\end{equation}
The inversion problem (\ref{SL2R}) then amounts to decomposing $G(\bar z)$ in the basis of functions $h^{(m)}(\bar z)$. The precise range of $m$ depends on the specific form of  $G(\bar z)$. The recursion (\ref{recursion}) can be used to systematically construct the functions $h^{(m)}(\bar z)$ and hence find the coefficients $a_m$. More specifically, one matches the double-discontinuity on both sides of (\ref{SL2R}). To make contact with the inversion formula of \cite{Caron-Huot2017} assume there exists a family of projectors $K^{(m)}(\bar z)$ such that
\begin{equation}
\int_0^1 \df\bar z K^{(m)}(\bar z) \dDisc \left[ h^{(n)}(\bar z) \right]= \delta^{mn}.
\end{equation}
Having the projectors $K^{(m)}(\bar z)$ we can write
\begin{equation}
A(J)= \int_0^1 \df\bar z K(\bar z,J) \dDisc \left[G(\bar z)\right],
\end{equation}
where 
\begin{equation}
K(\bar z,J) = \sum_m \frac{K^{(m)}(\bar z)}{J^{2m}}.
\end{equation}
As will be clear momentarily, the precise form of these projectors will not be necessary. Acting on both sides of (\ref{SL2R}) with the Casimir operator $\Dbar$ and integrating by parts we obtain
\begin{equation}
\left( {\Dbar\,}^\dagger-J^2\right) K(\bar z,J) =0
\end{equation}
where we have assumed the absence of boundary terms and ${\Dbar\,}^\dagger = \bar \partial (1-\bar z) \bar \partial \bar z^2$. Introducing the notation $J^2=\bar h(\bar h-1)$ we find two independent solutions related by $\bar h \leftrightarrow 1-\bar h$. We will be interested in the one regular for positive $\bar h$. Requiring the inversion formula to give $A(J)=1$ for $G(\bar z)=h^{(0)}(\bar z)$ fixes the overall normalisation. We find it convenient to use the integral representation \eqref{eq:HyperInt1} which leads to the following result
\begin{equation}
A(\bar h)= \frac{1}{\pi^2} \int_0^1 \df t \df\bar z \frac{\bar z^{\bar h-2}(t(1-t))^{\bar h-1}}{(1-t \bar z)^{\bar h}}  \dDisc \left[G(\bar z)\right].
\end{equation}
Integrating over $t$ leads to the inversion formula \eqref{eq:Ugenfdef}. For all the inversions needed in this chapter it will be convenient to integrate first over $\bar z$.

While this discussion is not a rigorous derivation of the inversion formula, it explains its relation to large spin perturbation theory in the original approach of \cite{Alday2016,Alday2016b}. In appendix \ref{integrals} we give several results relevant for our computations below. In all cases the integral is convergent in the region $\bar h>1$. For our application below this means the integral converges and is expected to give the right answer for $\ell > 0$. Below we will discuss the case $\ell=0$ in more detail. 

\subsection[Inverting discontinuities in the \texorpdfstring{$\epsilon$}{epsilon} expansion]{Inverting discontinuities in the $\boldsymbol\epsilon$ expansion}
\label{sec:orderepsthree}
Let us return to the correlator introduced at the beginning of this section. We will use the inversion formula to compute the CFT-data of leading twist operators in an expansion to cubic order in $\epsilon$ (or rather $g$). Crossing symmetry implies
\begin{equation}
\label{crossinglt}
\sum_{\substack{\Delta=\tau_\ell+\ell,\\\ell=0,2,\ldots}} A_\ell z^{\tau_\ell/2} f_{\Delta,\ell}(\bar z) = z^{\Delta_\phi} \left. \left( \frac{\bar z}{1-\bar z}\right)^{\Delta_\phi} {\cal G}(1-\bar z,1-z)\right|_{\text{small $z$}},
\end{equation}
where the sum runs over leading twist operators with $\tau_\ell=2\Delta_\phi+g^2 \gamma_\ell^{(2)}+\ldots$ and the OPE coefficients are related to $A_\ell$ by $a_\ell= A_\ell\, r_{\frac{\tau_\ell}{2}+\ell}$. According to our discussion above, the CFT-data appearing on the left-hand side of (\ref{crossinglt}) can be recovered from the double-discontinuities of the right-hand side. Up to cubic order in $g$ those are straightforward to compute, as they only arise from the identity operator and the bilinear operator of spin zero, so that
\begin{equation}
\sum_{\substack{\Delta=\tau_\ell+\ell,\\\ell=0,2,\ldots}}\!\! A_\ell z^{\tau_\ell/2} f_{\Delta,\ell}(\bar z) = z^{\Delta_\phi}\left. \left(\frac{\bar z}{1-\bar z}\right)^{\Delta_\phi}\left(1 + a_0  G^{(4-\epsilon)}_{\Delta_0,0}(1-\bar z,1-z) + \text{regular}\right)\right|_{\text{small $z$}}\!\!,
\end{equation}
where we remind that $\Delta_0=2\Delta_\phi+g$ we defined in \eqref{eq:deltazero}. The regular terms do not contribute to the double-discontinuity to the order we are considering. The $d$-dimensional conformal block for a scalar exchange between two identical scalar operators was given in \eqref{eq:ScalarBlockAnyD}
\begin{equation}
G^{(d)}_{\Delta,0}(1-\bar z,1-z) = \sum_{m,n=0} \frac{\left(\Delta/2\right)^2_m\left(\Delta/2\right)^2_{m+n}}{m! n! \left( \Delta+1-d/2\right)_m \left(\Delta\right)_{2m+n}} \left[(1-z)(1-\bar z)\right]^{m+\frac \Delta2}(1-z \bar z)^n.
\end{equation}
Note that in order to extract the small $z$ dependence the sum over $n$ has to be performed. Expanding the right-hand side of (\ref{crossinglt}) in powers of $g$ up to cubic order and keeping only terms that contribute to the double-discontinuity we obtain
\begin{align}\label{eq:scalarblocktoorderepscubed}
&\sum_{\substack{\Delta=\tau_\ell+\ell,\\\ell=0,2,\ldots}} A_\ell z^{\tau_\ell/2} f_{\Delta,\ell}(\bar z) =z^{\Delta_\phi}  \left( \frac{\bar z}{1-\bar z}\right)^{\Delta_\phi} + \\
&\quad+ z^{\Delta_\phi} \bar z^{\Delta_\phi} a_0 \left( \frac{g^2}{8} \log^2(1-\bar z) \left(1+ \epsilon \partial_\epsilon+g \partial_\Delta \right) + \frac{g^3}{48} \log^3(1-\bar z)\right) g^{(4d)}_{2,0}(1-\bar z,1-z), \nonumber
\end{align}
where $a_0=2(1-g+\ldots)$ and 
\begin{align}
 g^{(4d)}_{2,0}(1-\bar z,1-z) &= \frac{\log \bar z - \log z}{\bar z}, \nonumber\\
 \partial_\epsilon g^{(4d)}_{2,0}(1-\bar z,1-z) &= \frac{(\log \bar z-\log z)(\log \bar z-2)+2\zeta_2}{2 \bar z},  \\
  \partial_\Delta g^{(4d)}_{2,0}(1-\bar z,1-z) &= \frac{\text{Li}_2(1-\bar z) + \log \bar z-\log z - \zeta_2}{\bar z}, \nonumber
\end{align}
and only the small $z$ limit has been considered. We would like to recover the CFT-data for leading twist operators from these singularities. This data admits the following decomposition
\begin{align}
A_\ell = A^{(0)}_\ell + g^2 A^{(2)}_\ell+\ldots,\nonumber\\
\tau_\ell = 2\Delta_\phi + g^2 \gamma^{(2)}_\ell+\ldots, 
\end{align}
where 
\begin{equation}
A^{(0)}_\ell = \frac{2 \Gamma \left(\bar h+\Delta _{\phi }-1\right)}{\Gamma \left(\Delta _{\phi }\right){}^2 \Gamma \left(\bar h-\Delta _{\phi }+1\right)},\qquad\bar h = \ell+\Delta_\phi,
\end{equation}
i.e.\ $A_\ell=\AA[\Delta_\phi](\Delta_\phi+\ell)$ by the notion of \eqref{eq:InversionIdentity}.
In order to apply the inversion procedure to this order we introduce
\begin{eqnarray}
 A_\ell  &=&U^{(0)}_{\bar h} + \frac{1}{2} \partial_{\bar h}  U^{(1)}_{\bar h}, \nonumber\\
 A_\ell  \gamma_\ell &=& U^{(1)}_{\bar h},  
\end{eqnarray}
where we have made clear that the natural variable in which to express $U^{(0)}_{\bar h},U^{(1)}_{\bar h}$ is $\bar h = \ell+ \Delta_{\phi}$ as opposed to $\ell$. These combinations are the ones that preserve the reciprocity principle proven in \cite{AldayBissiLuk2015}\footnote{For the present computation we find it convenient to work with this ``bare'' $\bar h$ as opposed to the ``full'' one, given by $\bar h_{\mathrm f} =\frac{\Delta_{\ell}+\ell}{2}$. The standard reciprocity principle for the CFT-data is usually expressed in terms of the full conformal spin $\bar h_{\mathrm f}(\bar h_{\mathrm f}-1)$. Note that $\bar h_{\mathrm f} $ and $\bar h$ coincide to leading order.}:
\begin{eqnarray}
U^{(0)}_{\bar h} =   \sum \frac{u^{(0)}_m}{J^{2m}},\qquad U^{(1)}_{\bar h} =   \sum \frac{u^{(1)}_m}{J^{2m}},
\end{eqnarray}
where in principle these expansions could contain both even and odd powers of $1/J$ as well as logarithmic insertions. In terms of these expansions we obtain
\begin{align}
&\sum_m z^{\Delta_\phi} \left( u_m^{(0)} + \frac{1}{2} \log z u_m^{(1)} \right)h^{(m)}(\bar z) =z^{\Delta_\phi}  \left( \frac{\bar z}{1-\bar z}\right)^{\Delta_\phi} + \\
&\quad+ z^{\Delta_\phi} \bar z^{\Delta_\phi} a_0 \left( \frac{g^2}{8} \log^2(1-\bar z) \left(1+ \epsilon \partial_\epsilon+g \partial_\Delta \right) + \frac{g^3}{48} \log^3(1-\bar z)\right) g^{(4d)}_{2,0}(1-\bar z,1-z). \nonumber
\end{align}
This has exactly the form of the inversion problem discussed above. With the inversion formulas given in appendix \ref{integrals} we find
\begin{align}
U^{(0)}_{\bar h} &=\AA[\Delta_\phi](\hb)+\left( -\frac{ g^2}{(\bar h-1)^2 \bar h^2}   + \frac{ \zeta _2 (\bar h-1) \bar h+1}{(\bar h-1)^2 \bar h^2} g^2 \epsilon -\frac{\zeta _2 (\bar h-1) \bar h-S_1}{(\bar h-1)^2 \bar h^2} g^3\right) + \ldots, \nonumber \\
U^{(1)}_{\bar h} &=- \frac{2}{(\bar h-1) \bar h} g^2 +\frac{2 }{(\bar h-1) \bar h} g^2 \epsilon + \frac{2  S_1}{(\bar h-1) \bar h} g^3+\ldots,
\end{align}
where $S_{1}$ denotes the harmonic number with argument $\bar h-1$. These results encode the full CFT-data for leading twist operators to cubic order. They translate easily into the standard anomalous dimensions and OPE coefficients and agree exactly with those obtained previously in \cite{Gopakumar:2016cpb}. The explicit results, including order $\epsilon^4$ and for $\OO N$ symmetry are available in the ancillary data of the Arxiv submission of \cite{Paper3}.

\subsection{Matching conditions at low spin}\label{sec:mathingconditions}

Let us write the result we have just obtained for the anomalous dimensions in terms of the full $\bar h_{\mathrm f}$, defined as $\bar h_{\mathrm f}= \ell+ \Delta_\phi+ \frac{1}{2} \gamma_{\ell}$. We obtain
\begin{equation}
\label{deltaell}
\Delta_\ell =2 \Delta_\phi+ \ell - \frac{g^2}{(\bar h_{\mathrm f}-1) \bar h_{\mathrm f}} + \frac{g^2 \epsilon+(g^3-g^2\epsilon)S_1}{(\bar h_{\mathrm f}-1) \bar h_{\mathrm f}} + \ldots
\end{equation}
These results followed only from crossing symmetry of a single correlator and the inversion procedure used in this work shows that they basically follow from a one-loop computation (since squares of anomalous dimensions will generate discontinuities only at quartic order). 
We now impose two further matching conditions at low values of the spin
\begin{eqnarray}
\Delta_2&=&d,\\
\Delta_0 &=& 2\Delta_\phi+g.
\end{eqnarray}
The first condition is implied by the existence of a conserved stress tensor and fixes the dimension of the external operator
 \begin{equation}
 \Delta_\phi = 1-\frac{1}{2}\epsilon + \frac{1}{12} g^2 -\frac{1}{8} g^3 + \frac{11}{144} g^2 \epsilon + \ldots.
 \end{equation}
The second condition arises from the requirement that the inversion results can be extrapolated down to spin zero\footnote{We would like to thank Aninda Sinha for suggesting this idea.}. For $\epsilon,g \neq 0$, in order to reach $\ell=0$ we need to continue $\Delta_\ell$ to the left of the pole at $\bar h_{\mathrm f}=1$. We will assume the standard continuation across a pole, i.e.\ that the expression (\ref{deltaell}) remains valid also in this region. This is summarised in Figure~\ref{fig:plot}. 
\begin{figure}
\centering
\includegraphics{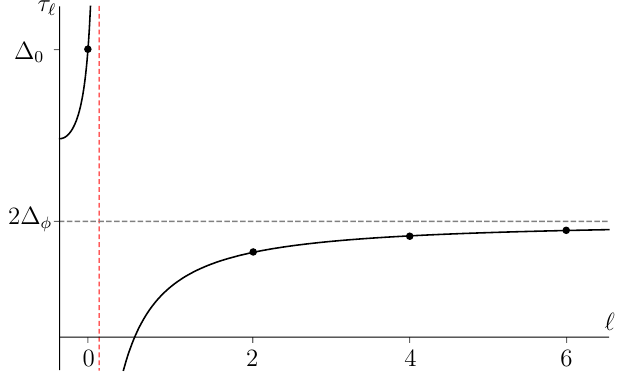}
\caption[Schematic graph of $\tau_\ell$ in the $\epsilon$ expansion.]{
Schematic graph of $\tau_\ell$. As we move from spin two to spin zero we move to the left of the pole at $\bar h_{\mathrm f}=1$, denoted by a red line. Note the change of sign in the correction. Assuming the standard continuation in (\ref{deltaell}), we reproduce the correct dimension on both sides. 
}\label{fig:plot}
\end{figure}
Note that in the $\epsilon$ expansion $\bar h_{\mathrm f}-1 \sim \epsilon$, so that the limit is somewhat subtle. To leading order we obtain the following relation
 \begin{equation}\label{eq:betalikeequation}
-g \epsilon + 3 g^2=0.
 \end{equation}
This equation has two solutions. One corresponds to the free theory with $g=0$ and the other corresponds to
 \begin{equation}\label{eq:epsilongrel}
g = \frac{1}{3}\epsilon + \ldots,
\end{equation}
fixing the relation between $g$ and $\epsilon$. Plugging this into the expression for $\Delta_\phi$ we obtain
 \begin{equation}
 \Delta_\phi = 1-\frac{1}{2}\epsilon + \frac{1}{108} \epsilon^2+ \ldots,
 \end{equation}
which exactly agrees with the well-known value for the WF model! The order $g^4$ results obtained in the next section allow us to go one order further, and find the relation
 \begin{equation}
g = \frac{1}{3}\epsilon + \frac{8}{81}\epsilon^2+ \ldots.
\end{equation}
This fixes the relation between $g$ and $\epsilon$, and therefore all the quantities entering the problem. 

\subsection[$\mathrm O(N)$ model]{$\boldsymbol{\mathrm O(N)}$ model}\label{sec:ONwithinPaper2}
The method used in this chapter generalises to the $\mathrm O(N)$ model immediately. Let us consider the WF model with $N$ scalar fields $\varphi^i$ with global $\mathrm O(N)$ symmetry in $d=4-\epsilon$ dimensions. We can now consider the four-point correlator of the fundamental field $\varphi^i$. Intermediate operators decompose into the singlet ($S$), symmetric traceless ($T$) and anti-symmetric ($A$) representations of $\mathrm O(N)$. It is convenient to write the crossing equations as
\beqa{
\label{crossON}
f_S(z,\bar z) &= \frac{1}{N}f_S(1-\bar z,1-z) + \frac{N^2+N-2}{2N^2} f_T(1-\bar z,1-z) +\frac{1\!-\!N}{2N} f_A(1-\bar z,1-z),\nonumber\\
f_T(z,\bar z) &=f_S(1-\bar z,1-z)+ \frac{N-2}{2N} f_T(1-\bar z,1-z) +\frac{1}{2} f_A(1-\bar z,1-z),\\
f_A(z,\bar z) &=-f_S(1-\bar z,1-z) + \frac{2+N}{2N} f_T(1-\bar z,1-z) +\frac{1}{2} f_A(1-\bar z,1-z), \nonumber
}
where $f_{R}(z,\bar z) =((1-z)(1-\bar z))^{\Delta_\varphi}  {\cal G}_{R}(z,\bar z)$. The crossing equations at leading order have been analysed in \cite{Alday2016b} with the methods of large spin perturbation theory. Again, at leading order the fundamental field does not acquire any corrections while
\begin{equation}
\gamma^{(1)}_{\varphi^2_S} = g =: g_S,\qquad\gamma^{(1)}_{\varphi^2_T} = \frac{2}{2+N}g + \ldots =: g_T.
\end{equation}
In order to reconstruct the CFT-data from double discontinuities we note that these arise from the identity operator, present in the singlet representation, and the bilinear operators in the singlet and traceless-symmetric representations, which acquire an anomalous dimension at order $g$. By looking at the double-discontinuity of the identity operator on the right-hand side of the crossing equations (\ref{crossON}) we see that at leading order the OPE coefficients of the $T$ and $A$ representations are exactly as before, up to a sign for $A$, while those of the $S$ representation have an extra factor of $1/N$. 
 \begin{equation}
 A^{(0)}_{T/A,\ell} = \pm \AA[\Delta_\phi](\hb),\qquad A^{(0)}_{S,\ell} = \frac{1}{N}\AA[\Delta_\phi](\hb),
 \end{equation}
where $\ell$ is even for the symmetric-traceless and singlet representations and odd for the anti-symmetric representation. A careful analysis of the crossing conditions also determines the corrections to order $g$ of the OPE coefficients for the spin zero operators:
\begin{equation}\label{eq:alpha1correctionp2}
a_{R,0} = a^{(0)}_{R,0} (1-g_{R} + \ldots),
\end{equation}
which in fact holds for $\phi^4$ theories in any global symmetry group. 
By looking at the crossing equations (\ref{crossON}) and comparing them with our computation for the $N=1$ case, it is then straightforward to write down the result for  $U^{(1)}_{\bar h}=A_\ell \, \gamma_\ell$ for each representation. We obtain
\begin{align}
U^{(1)}_{S,\bar h} = &\frac{2}{N^2J^2}\left( -g_S^2 +g_S^2 \epsilon +  g_S^3S_1+ \ldots\right)  + \frac2{J^2}\frac{N^2+N-2}{2N^2}\left( -g_T^2 + g_T^2 \epsilon +  g_T^3S_1+ \ldots\right)  ,
 \nonumber\\
U^{(1)}_{T,\bar h} = &\frac{2}{NJ^2}\left( -g_S^2 +g_S^2 \epsilon +  g_S^3S_1+ \ldots\right) 
+\frac{2}{J^2} \frac{N-2}{2N}\left( -g_T^2 + g_T^2 \epsilon +  g_T^3S_1+ \ldots\right)  \nonumber\\
U^{(1)}_{A,\bar h} = &\frac{2}{NJ^2}\left( -g_S^2 +g_S^2 \epsilon +  g_S^3S_1+ \ldots\right) - \frac2{J^2}\frac{2+N}{2N}\left( -g_T^2 + g_T^2 \epsilon +  g_T^3S_1+ \ldots\right)   ,
\end{align}
where $J^2=\bar h(\bar h-1)$ and as before the harmonic number $S_1$ is evaluated at $\bar h-1$. Similar expressions for $U^{(0)}_{R,\hb}$ can be obtained in exactly the same way. All the results are in full agreement with those obtained in \cite{Dey:2016mcs,Manashov:2017xtt} after substituting the literature values \beq{g_S=\frac{2+N}{8+N}\epsilon+6\frac{(N+2)(N+3)}{(N+8)^3} \epsilon^2+\ldots, \quad g_T= \frac{2}{8+N}\epsilon+\frac{36+4N-N^2}{(N+8)^3}\epsilon^2+\ldots.\label{eq:ONgslitt}
}

\section{Results to fourth order}
\label{fourthorder}
\subsection{New operators at second order}
Before proceeding to solve the crossing constraints to higher order, we would like to make the following crucial observation. At order $g^2$ new intermediate operators are expected to appear, which are of the schematic form $\phi^2 \Box^n \partial_{\mu_1} \ldots \partial_{\mu_\ell} \phi^2$ and have twist $\tau=4+2n$ and spin $\ell$. These operators are expected to acquire an anomalous dimension to order $\epsilon$. Hence, they generate a double-discontinuity, proportional to the square of their anomalous dimension, to order $g^4$. Furthermore, these operators are highly degenerate in perturbation theory, so that computing this double-discontinuity would require solving a mixing problem. The statement that the CFT-data can be reconstructed from the double-discontinuities of the correlator is not restricted to leading twist operators and the projection methods to higher twist, described in section~\ref{sec:projections}, can be used to find the leading OPE coefficients of these operators\footnote{Since we are near four dimensions, the problem simplifies and one can use the four-dimensional conformal blocks instead of the subcollinear blocks. The details are explicitly worked out in \cite{Paper3}.}. The steps are very similar to the ones above, and to second order in $g$ we find
\begin{equation}
\label{twist4ope}
a_{4+2n,\ell}= 
     \begin{cases}
       \frac{\Gamma (\ell+2)^2}{\Gamma (2 \ell+3)} \frac{\ell^2+3 \ell+8}{12 (\ell+1) (\ell+2)} g^2 + \ldots &\quad\text{for $n=0$,}\\ \\
       {\cal O}(g^4) &\quad\text{for $n \neq 0$.}\\     
       \end{cases}
\end{equation}
This is a somewhat surprising result:  only operators with approximate twist four appear at this order\footnote{As a byproduct, this result justifies an ansatz made in \cite{Liendo:2012hy}, where the vanishing of OPE coefficients involving operators with $n \neq 0$ was assumed. 
}. As we will see, this constrains the possible structure of double-discontinuities at fourth order and it will allow us to solve the problem completely. Given the convergence of the inversion integrals we expect these results to be valid down to spin zero. 

\subsection{Solving the inversion problem at fourth order}
The contribution arising from leading twist operators in a perturbative $\epsilon$ expansion can be encoded as follows
\begin{align}
\left. {\cal G}(z,\bar z) \right|_{\text{small $z$}} = \sum_m z^{\Delta_\phi} &\left( u_m^{(0)} + \frac{1}{2} \log z \, u_m^{(1)}+ \frac{1}{8} \log^2 z\,  u_m^{(2)} + \ldots \right)h^{(m)}(\bar z),
\end{align}
where $u_m^{(p)} \sim g^{2p}$ for small $g$. As before, the $u_m^{(p)}$ are the coefficients in the large $J$ expansions of $U^{(p)}_{\bar h}$, whose relation to the usual OPE data is given by \eqref{eq:aellfromU},
\begin{align}
\label{datafromU}
A_\ell \left( \gamma_\ell  \right)^p = U^{(p)}_{\bar h} + \frac{1}{2} \partial_{\bar h} U^{(p+1)}_{\bar h}+ \frac{1}{8} \partial^2_{\bar h} U^{(p+2)}_{\bar h}+\ldots.
\end{align}
To order $g^4$ the double-discontinuity of the correlator arises from four distinct contributions, so that
\begin{align}
\left. {\cal G}(z,\bar z) \right|_{\text{small $z$}} = z^{\Delta_\phi} \left( \left( \frac{\bar z}{1-\bar z}\right)^{\Delta_\phi}+ I_{\phi^2} +I_{2} + I_{4} + \, \text{regular} \right).
\end{align}
$I_{\phi^2}$ denotes the contribution from the scalar bilinear operator. To cubic order it was given in the previous section. It is straightforward to compute it to fourth order and the result is given in appendix \ref{ddisc}. $I_{2}$ denotes the contribution arising from leading twist operators of spin two and higher: the square of their anomalous dimension generates a double-discontinuity at fourth order. Since these operators are non-degenerate, this contribution can be readily computed and it is given in  appendix \ref{ddisc}. As already mentioned, a direct computation of $I_{4}$ would require solving a mixing problem, for instance by considering more general correlators\footnote{The contribution from twist-four operators to the anomalous dimension of leading twist operators starts at order $1/\ell^4$, see \cite{AldayMaldacena2007}, so that the leading terms in a $1/\ell$ expansion can still be computed without its knowledge. This was done in \cite{Dey:2017oim} by applying directly the methods of \cite{AldayZhiboedov2015} for isolated operators. Since there is an accumulation point at twist two, one should be careful. In principle the correct procedure from the large spin perspective would be to compute the double-discontinuity due to the tower of twist-two operators and then compute the anomalous dimensions from there. The procedure of \cite{Dey:2017oim} is justified since the resulting series are convergent.}. However, note that at fourth order $I_{4}$ involves four-dimensional conformal blocks evaluated at reference twist four. This implies the following structure
\begin{equation}
I_{4}= \left( \log z g(\bar z) - \log \bar z g(z) \right) \log^2(1-\bar z),
\end{equation}
where $g(\bar z)$ arises from a sum over twist-four operators 
\begin{equation}
\label{gsum}
g(\bar z) = \frac{1}{8}\sum_{\ell=0,2,\ldots} \!\eta_\ell\, k_{2+\ell}(1-\bar z)
\end{equation}
for some $\eta_\ell$ equal to the weighted average, over degenerate operators, of the square anomalous dimensions $\eta_\ell =\langle a_{4,\ell} \gamma^2_{4,\ell}\rangle=\sum_i a_{4,\ell,i} \gamma^2_{4,\ell,i}$. As such it is regular around $\bar z=1$. Furthermore, the structure of the OPE to this order implies the following expansion around $z=0$~\footnote{Specifically, note that in equation (3.4), on the left-hand side any higher powers $\log^k z$ would have to be generated by higher powers $\gamma_{2,\ell}^k$ of anomalous dimensions, which contribute only at order $g^{2k}$ and higher.}
\begin{equation}
\label{garound0}
g(z) = \alpha_0 \log^2 z + \alpha_1 \log z+ \alpha_2 + \ldots.
\end{equation}
We will now discuss how to fix  $U^{(0)}_{\bar h},U^{(1)}_{\bar h},U^{(2)}_{\bar h}$ to quartic order. Before we proceed, note that the term $\log z g(\bar z)$ in $I_4$ will only contribute to $U^{(1)}_{\bar h}$. Hence $U^{(0)}_{\bar h}$ and $U^{(2)}_{\bar h}$ only require minimal information about $g(z)$, namely only its limit as $z \to 0$. As a result, they could be fully determined in terms of $\alpha_0$  and $\alpha_2$, even without any knowledge of twist-four operators. We will be able to do even better than this. 

Let us start with $U^{(2)}_{\bar h}$. From the expressions in appendix \ref{ddisc}, it follows that $I_{\phi^2}$ and $I_2$ do not contribute to $U^{(2)}_{\bar h}$, as they do not contain a $\log^2z$ piece. The whole contribution arises then from $I_{4}$ and is proportional to $-\alpha_0\log \bar z \log^2(1-\bar z)$. From the results in appendix \ref{integrals} this immediately gives
\begin{equation}
U^{(2)}_{\bar h}=-8 \alpha_0 \frac{4(1-2\bar h)}{\bar h^2(1-\bar h)^2} g^4,
\end{equation}
which exactly agrees with $A_\ell(\gamma_\ell)^2$ to order $g^4$ provided $\alpha_0=1/16$. 

To compute $U^{(1)}_{\bar h}$ one needs to know $g(\zb)$. The full results for double discontinuities up to cubic order as well as the double discontinuities in appendix \ref{ddisc} suggest that perturbative results for the present correlator organise themselves in pure transcendental functions with discontinuities around $\bar z=0$ and regular around $\bar z =1$. Furthermore, the degree of these functions increases with the perturbative order in a prescribed way\footnote{More precisely, up to this order we will assume that the answer can be written as combinations of polylogarithms of $\bar z$ and $1-\bar z$, without rational functions in front, such that the total degree increases linearly with the loop order. This structure is very familiar in other perturbative contexts.}. If this principle holds then we expect $g(\zb)$ to be given by a linear combination of the following building blocks
\begin{equation}\label{eq:basisfuncs}
\{\log^2 \bar z,\, \mathrm{Li}_2(1-\bar z),\,\log^3 \bar z, \,\log \bar z \,\mathrm{Li}_2(1-\bar z) ,\,\mathrm{Li}_3(1-\bar z),\,\mathrm{Li}_3\left(\frac{\bar z-1}{\bar z}\right)  \}.
\end{equation}
These blocks form a basis of functions as described above. Any other function with the same features can be related to combinations of these by identities for polylogarithms such as \eqref{eq:polyLogid2} and \eqref{eq:polyLogid3}. The fact that $g(\bar z)$ arises from twist-four operators in the dual channel, constrains the possibilities. Furthermore, consistency with (\ref{gsum})  and (\ref{garound0}) leads us to the following result
\begin{equation}
\label{gzb}
g(\bar z) =  \frac{1}{16} \log^2 \bar z +\alpha \left(-\frac{1}{6} \log^3 \bar z-\frac{2}{3} \log z \,\text{Li}_2(1-\bar z)+  \text{Li}_3(1-\bar z)+\text{Li}_3\left(\frac{\bar z-1}{\bar z}\right)\right),
\end{equation}
with a single undetermined coefficient.  We would like to stress that this expression can be systematically tested as an expansion around $\bar z = 1$.  Since $k_{2+\ell}(1-\bar z) \sim (1-\bar z)^{2+\ell}$, to any given order in $(1-\bar z)$ only a finite number of operators contribute and the mixing problem is finite. For instance,  twist-four operators with spin zero and two are non-degenerate. 
The anomalous dimensions for these operators are known, see section~\ref{sec:WFspectrum}, and in the conventions used here they take the form $\gamma_{4,0}=3 g+\ldots$ and $\gamma_{4,2}=4/3 g+\ldots$~\footnote{Alternatively, these anomalous dimensions can be computed from the discontinuities of the correlator at cubic order by a projection from the leading twist family \cite{Paper3}.}. 
From (\ref{twist4ope}) we can also read off $a_{4,0}=g^2/6+\ldots $ and  $a_{4,2}=g^2/160 + \ldots$. These values are exactly consistent with the expression for $g(\bar z)$ up to fifth order in $(1-\bar z)$ and furthermore fix $\alpha=-3/2$. With this we find
\begin{equation}
g(z) = \frac{1}{16} \log^2 z -\frac{1}{2}\zeta_2 \log z-\frac{3}{2} \zeta_3 + \ldots,\qquad\text{around $z=0$}.
\end{equation}
We have now all the ingredients to compute $U^{(0)}_{\bar h}$ and $U^{(1)}_{\bar h}$ to fourth order. Using the inversion formulae in appendix \ref{integrals} we find
\begin{align}
U^{(1)}_{\bar h} =   \frac{-2}{J^2} g^2
+  \frac{2\left(3+S_1\right)}{J^2}g^3&
+  \frac{1}{6 J^2} \Bigg(
\frac{6}{J^4}  + \frac{7+48 S_{-2}}{J^2}  
 \nonumber\\&
  -9 \zeta _2-6 S_1^2-36 S_1-12 S_{-2}-58\Bigg)g^4+ \ldots
\end{align}
and 
\begin{align}
U^{(0)}_{\bar h} &= \AA[\Delta_\phi](\hb)
+ \frac{ -2}{J^4} g^2
+  \frac{1}{J^2}\left(\frac{3+S_1}{J^2} + 2  \zeta_2 \right)g^3
  + \frac{1}{12 J^2}\Bigg(\frac{2}{J^4}-106\zeta _2
 \nonumber\\&\quad
-\frac{56+3 \zeta _2+72 \zeta _3+6 S_1^2+36 S_1-12 S_{-2}}{J^2} +72 \zeta _3-24 \zeta _2 S_1-54 S_3 \Bigg)g^4 + \ldots,
\end{align}
where the argument of all nested sums, defined in appendix \ref{integrals}, is $\bar h-1$. In these expressions we have traded the dependence on $\epsilon$ in favour of $g$. The CFT-data can then be recovered from (\ref{datafromU}). In particular
\begin{equation}
\gamma_\ell = \frac{U^{(1)}_{\bar h} + \frac{1}{2} \partial_{\bar h} U^{(2)}_{\bar h} + \ldots}{U^{(0)}_{\bar h} + \frac{1}{2} \partial_{\bar h} U^{(1)}_{\bar h} + \ldots},
\end{equation}
and the result can be seen to exactly agree with that obtained in \cite{Derkachov:1997pf}\footnote{We would like to thank the authors of \cite{Dey:2017oim} for making us aware of a typo in  \cite{Derkachov:1997pf}.}. In order to fix $\Delta_\phi$ and $g(\epsilon)$ to this order one could proceed exactly as before: $\Delta_\phi$ follows again from conservation of the stress tensor while $g(\epsilon)$ follows from the matching condition at spin zero. However, the later result to cubic order would require going to higher orders in our computation. Instead, we will take a shortcut and assume the known value of the dimension of the fundamental field $\Delta_\phi=1-\frac{\epsilon}{2}+\frac{\epsilon^2}{108}+\frac{109}{11664}\epsilon^3+(\frac{7217}{1259712}-\frac{2}{243}\zeta_3)\epsilon^4+\ldots$. This together with the conservation of the stress tensor gives the relation between $g$ and $\epsilon$:
\begin{equation}
g=\frac{\epsilon}{3}+\frac{8}{81}\epsilon^2+\left(\frac{305}{8748}-\frac{4}{27}\zeta_3 \right)\epsilon^3+\ldots.
\end{equation}
Let us emphasise however, that the first two orders follow completely from our results, without any additional input, and also the next term could be in principle computed in our formalism if extended to order $\epsilon^6$. The result for the OPE coefficients is completely new. The most interesting quantity that can be extracted from them is the central charge, related to the OPE coefficient for $\ell=2$. In terms of $\epsilon$ we find exactly the fourth order result \eqref{eq:CTintrores} quoted at the beginning of this chapter.
The result to cubic order exactly reproduces what was found in \cite{Gopakumar:2016cpb}. The result to fourth order is new. Setting $\epsilon=1$ we observe that this new contribution gets us closer to the highly precise numerical result for the 3d Ising model found in \cite{El-Showk:2014dwa,Kos:2016ysd}. 

\subsection[$\OO N$ model at order \texorpdfstring{$\epsilon^4$}{epsilon**4}]{$\boldsymbol{\OO N}$ model at order $\boldsymbol{\epsilon^4}$}\label{sec:ONe4}
Before we conclude, let us summarise briefly the results of \cite{Paper3}, which considered the $\OO N$ model. Also there, the new operators to contribute at order $\epsilon^4$ were families of weakly broken currents and of operators of approximate twist four, but now in all three representations $S$, $T$ and $A$. The contributions from twist-four operators required an ansatz similar to \eqref{eq:basisfuncs}, but now the anomalous dimensions of non-degenerate operators (spin zero in $S$ and $T$, and spin one in $A$) was computed using a projection at order $\epsilon^3$ along the lines of section~\ref{sec:projections}. 
For completeness, we give here the results from \cite{Paper3} for the central charges in the critical $\OO N$ model in the $4-\epsilon$ expansion:
\begin{align} \label{eq:CentralCharge}
\frac{C_T}{C_{T,\mathrm{free}}} &= 1
-\frac{5 (N+2) }{12 (N+8)^2} \epsilon ^2 
-\frac{(N+2) \left(7 N^2+382 N+1708\right)}{36 (N+8)^4} \epsilon ^3
\nonumber\\
&\quad-\frac{(N+2) \left(65 N^4+5998 N^3+309036 N^2+2396800N+5440832\right)}{1728 (N+8)^6}  \epsilon^4
\nonumber\\
&\quad+\frac{(N+2) \left(2 N^3+43 N^2+922 N+3488\right)\zeta_3}{12 (N+8)^5}  \epsilon^4+O(\epsilon^5),
\end{align}
and
\begin{align}\label{eq:JCharge}
\frac{C_J}{C_{J,\mathrm{free}}}&=1
-\frac{3 (N+2)}{4 (N+8)^2} \epsilon^2
-\frac{(N+2) \left(N^2+132 N+632\right) }{8 (N+8)^4}\epsilon^3
\nonumber\\
&\quad + \frac{(N+2) \left(11 N^4+246 N^3-13124 N^2-126976 N-310976\right)}{64 (N+8)^6} \epsilon ^4
\nonumber\\
&\quad +\frac{(N+2) \left(7 N^2+442 N+1792\right)\zeta_3}{4 (N+8)^5}\epsilon ^4+O(\epsilon^5).
\end{align}

\section{Conclusions} 
We have used analytic bootstrap techniques to derive the anomalous dimensions and OPE coefficients of bilinear operators (weakly broken currents) in the WF model in $d=4-\epsilon$ dimensions, to fourth order in the $\epsilon$ expansion. To cubic order the computation is essentially straightforward, since the double-discontinuity arises solely from the identity operator and the bilinear scalar. This simplicity is also manifest in the results of the $\mathrm O(N)$ model, and in section~\ref{sec:anyglobalsymmetryeps} we will generalise this to any global symmetry. At fourth order the situation is much more interesting, since two towers of high spin operators, of twist two and four respectively, contribute to the discontinuity. The contribution from twist two operators can be readily computed, while the structure of perturbation theory, together with the explicit form of four-dimensional conformal blocks, allows to make a proposal for the double-discontinuity due to twist-four operators. This proposal can be systematically tested order by order in powers of $(1-\bar z)$, by solving a finite order mixing problem. This satisfies all possible consistency conditions and is compatible with features of perturbation theory from other CFTs. With this result, we have found the CFT-data to fourth order. Two further constraints, namely conservation of the stress tensor, together with a continuation to spin zero, allowed to fix the anomalous dimensions of both the scalar operator $\phi^2$ as well as the dimension of the external operator. 

There are several interesting open problems. A remarkable feature of our computation is the apparent analyticity down to spin zero. This allowed us to reproduce constraints analogous to those of a vanishing beta function. It would be interesting to understand the systematics of this to higher orders, and even non-perturbatively. It would also be interesting to understand the structure of double-discontinuities to higher orders in the $\epsilon$ expansion. Up to fourth order we have observed that the functions that appear have pure transcendentality.  It is tantalising to propose that this persists to higher orders which would greatly simplify the computation of CFT-data. The extension to order $\epsilon^4$ requires detailed knowledge of the operator content of the theory in question, such as the degeneracy of the operators at approximate twist four. As mentioned above, this has been done for the $\OO N$ model in \cite{Paper3}, but not in theories with other global symmetry groups.

A natural direction would be to extend these results to higher orders. At order $\epsilon^5$ the same operators contribute as at order $\epsilon^4$. The challenge, again, is to find the contribution from the twist-four operators. It would also be interesting to consider analytic constraints arising from mixed correlators. In the present case one could consider correlators of the fundamental field and the bilinear scalar. The crossing constraints for such a system are expected to be stronger than the ones considered in this chapter.

\chapter{Weakly coupled gauge theories}
\label{ch:paper1}

\section{Introduction}

In this chapter we apply this method to weakly coupled conformal field theories in four space-time dimensions. We study four-point correlation functions
\begin{equation}
\mathcal{G}(x)=(x_{12}^2x_{34}^2)^{\Delta_{\mathcal O}}\langle \mathcal{O}(x_1)\mathcal{O}(x_2)\mathcal{O}(x_3)\mathcal{O}(x_4)\rangle
\end{equation}
 of identical operators built out of fundamental scalar fields of the theory in the small coupling $g$ expansion. Here, $\Delta_{\mathcal{O}}$ is the conformal dimension of the operator $\mathcal{O}$ and $x_{ij}$ denotes the distance between two space-time points. A prototypical example of such theory is $\mathcal{N}=4$ SYM. In order to focus our attention we will discuss two very particular scalar operators in $\mathcal{N}=4$ SYM: the Konishi operator $\mathcal{K}$ and the half-BPS operator $\mathcal{O}_{{\boldsymbol{20'}}}$ in the $[0,2,0]$ representation of the $\SU 4 $ R-symmetry. Both of them are the simplest gauge invariant scalar operators and have the schematic form $\mathcal{O}=\mbox{Tr}(\Phi^2)$, where $\Phi$ is a fundamental scalar field of the theory. The methods developed here will however apply to a large class of conformal field theories satisfying a set of assumptions spelled out at the end of this section.

In the following we study four-point correlation functions in the perturbation theory around vanishing coupling constant $g=0$,
\begin{equation}
\mathcal{G}(x)=\mathcal{G}^{(0)}(x)+g\, \mathcal{G}^{(1)}(x)+\ldots.
\end{equation}
 The leading-order answers $\mathcal{G}^{(0)}(x)$ can be found by directly performing Wick contractions and depend on a single parameter related to the central charge of the theory. In this chapter we focus most of our attention on the one-loop function $\mathcal{G}^{(1)}(x)$ and find its general form using only conformal symmetry, crossing symmetry and the structure of the operator product expansion (OPE). In the two cases that we study we find a family of crossing-symmetric solutions which depend on a small number of free parameters. The most transcendental part of the answer is given by the scalar box function times a rational function. These have to be supplemented by lower transcendental functions. We find the explicit form of these functions without referring to Feynman diagram calculations. In particular, we will avoid introducing any regularisation or any redundancies fundamentally bound to the Feynman approach. In order to find a particular four-point correlator we supplement our general solution with a few explicit values of the CFT-data for operators with small classical conformal dimension and spin. 

Our method will be based on only a few assumptions:
\begin{itemize}
\item We study unitary weakly coupled conformal gauge theories in four dimensions. In particular, unitarity implies that the operators in the OPE expansion satisfy the unitarity bound and have non-negative (squared) OPE coefficient with $\mathcal{O}=\mbox{Tr}(\Phi^2)$. Moreover, the fact that we study gauge theories implies that the fundamental field $\Phi$ is not part of the spectrum, and therefore the correlator of $\mathcal{O}$ provides the strongest constraint on the CFT-data. 
\item We assume that infinite towers of operators parametrised by spin $\ell$ have a regular expansion of the CFT-data at large spin, i.e. the CFT-data can be written as a Taylor expansion of $\frac{1}{\ell}$ with possible $\log \ell$ insertions. 
\end{itemize}
Furthermore we will use the following properties of conformal field theories:
\begin{itemize}
\item We use the fact that four-point correlation functions are crossing symmetric.
\item We use the knowledge of the OPE structure. Furthermore, we rely on an explicit form of the conformal blocks in four dimensions and the superconformal blocks for the half-BPS operators $\mathcal{O}_{{\boldsymbol{20'}}}$ in $\mathcal{N}=4$ SYM.
\end{itemize}
It was already found in \cite{Heemskerk2009,Alday:2014tsa} that there exists a class of crossing symmetric solutions which correspond to CFT-data that is truncated in spin. In particular, the instanton solutions are of this type, as shown in \cite{Arutyunov:2000im}. Our analysis extends these results by including also solutions unbounded in spin. Since crossing at one loop in perturbation theory is a linear problem, we can treat these two types of solutions separately and focus only on the latter.

The chapter is organised as follows: in section \ref{sec:four.points} we collect basic information about four-point correlation functions and their properties. In section \ref{sec:twist.conformal.blocks} we introduce the notion of twist conformal blocks and H-functions and study their properties. In section \ref{sec:FindingNemo} we use H-functions to find a family of solutions to the conformal bootstrap equation and in particular recover the known form of the four-point correlator of Konishi operators. In section \ref{sec:super.case} we repeat the analysis from the previous two sections in the case of the correlation function of four half-BPS operators $\mathcal{O}_{{\boldsymbol{20'}}}$ in $\mathcal{N}=4$ SYM. We end the chapter with conclusions and outlook and supplement it with a few appendices containing the more technical ingredients of our results. 

\section{Four-point correlators}
\label{sec:four.points}

In this section we collect all relevant information about four-point correlation functions of operators that we will study in the rest of this chapter. In the first part we describe four-point correlators of four identical scalar operators with classical dimension $\Delta_0=2$. This is relevant for the Konishi operator in $\mathcal{N}=4$ SYM, which is of the form 
\begin{equation}
\mathcal{K}(x)=\mbox{Tr}(\Phi^I(x) \Phi^I(x)),
\end{equation}
where $I$ is the $\SO 6$ R-symmetry index. We study the correlation function of four Konishi operators using the ordinary conformal block decomposition in four dimensions \cite{Dolan:2000ut}. 

In the second part we study the $\mathcal{N}=4$ SYM half-BPS operator in the $[0,2,0]={\boldsymbol{20'}}$ representation of the $\SU4$ R-symmetry
\begin{equation}
\mathcal{O}_{{\boldsymbol{20'}}}(x,y)=y_I\,y_J\,\mbox{Tr}(\Phi^I(x)\Phi^J(x))\,,
\end{equation}
where we have introduced an auxiliary six-dimensional complex null vector $y_I$, namely $y\cdot y\equiv y_Iy^I=0$. In order to properly accommodate for a non-trivial R-symmetry structure of the correlation function of four half-BPS operators we employ the superconformal blocks introduced in \cite{Dolan:2004iy}.

\subsection{Conformal block decomposition for Konishi operators}
\label{sec:bosonic.four.points}
First, let us consider the case relevant for the Konishi operator $\mathcal{K}$, namely a scalar operator with the conformal dimension 
\begin{equation}
\Delta_\mathcal{K}=2+\sum_{i=1}^\infty \gamma^{(i)}_{\mathcal{K}}\, g^i\,.
\end{equation}
The crossing equations for the four-point correlator of Konishi operators are
\begin{equation}\label{crossing.symmetry}
\mathcal{G}(u,v)=\mathcal{G}\left(\frac{u}{v},\frac{1}{v}\right)\,,\qquad v^{\Delta_{\mathcal{K}}}\mathcal{G}(u,v)=u^{\Delta_{\mathcal{K}}}\mathcal{G}(v,u)\,.
\end{equation}
In the following, we will solve these equations and study their solutions as perturbations in the double lightcone limit. While the first equation in \eqref{crossing.symmetry} can easily be expanded using the conformal block decomposition, the second equation has to be treated more carefully. In order to do that we will need to employ the twist conformal blocks introduced in \cite{Alday2016}. We refer to the second equation in \eqref{crossing.symmetry} as the {\it conformal bootstrap equation}. 

The conformally invariant function $\mathcal{G}(u,v)$ entering \eqref{crossing.symmetry} admits a decomposition into conformal blocks obtained by considering the OPE expansion in the limit $x_1\to x_2$
\begin{equation}\label{eq:BlockDecomposition}
\mathcal G(u,v)=\sum_{\tau,\ell,i}a_{\tau,\ell,i}\,G_{\tau,\ell}(u,v).
\end{equation}
Here the sum runs over all conformal primaries of twist $\tau=\Delta-\ell$, where $\Delta$ is the conformal dimension, and even spin $\ell$ present in the OPE decomposition of two Konishi operators 
and the index $i=1,\ldots,d_{\tau_0,\ell}$ runs over a possible additional degeneracy in the spectrum of operators with a given twist and spin.
We denote the square of OPE coefficients by $a_{\tau,\ell,i}=c^2_{\mathcal{K}\mathcal{K}\mathcal{O}_{\tau,\ell,i}}$. 
The conformal blocks $G_{\tau,\ell}(u,v)$, which resum contributions coming from all descendants of a given conformal primary operator, can be found explicitly for four dimensions \cite{Dolan:2000ut}. For even spins they take the following form \eqref{eq:CB4d}:
\begin{equation}\label{eq:ConformalBlock}
G_{\tau,\ell}(z,\zb)=\frac{z \zb}{z-\zb}\left(k_{\frac\tau2+\ell}(z)k_{\frac \tau2-1}(\zb)-k_{\frac\tau2+\ell}(\zb)k_{\frac \tau2-1}(z)\right).
\end{equation}
It is easy to check that each conformal block satisfies the first equation in \eqref{crossing.symmetry}.

On the other hand, in perturbative conformal gauge theories the four-point function admits a small coupling expansion
\begin{equation}\label{four.point.weak}
\mathcal{G}(u,v)=\mathcal{G}^{(0)}(u,v)+g \,\mathcal{G}^{(1)}(u,v)+\ldots,
\end{equation}
where $g$ is the gauge coupling.
The tree-level term can be directly evaluated using Wick contractions in the free theory as in figure~\ref{fig:Free}
\begin{figure}[t!]
\centering
\includegraphics{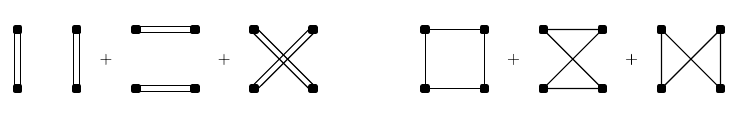}
\caption[Wick contractions for the tree-level calculation in conformal gauge theories.]{Wick contractions relevant for the tree-level calculation.}
\label{fig:Free}
\end{figure}
\noindent and renders
\begin{equation}\label{tree.level}
\mathcal G^{(0)}(u,v)=\left(1+u^2+\frac{u^2}{v^2}\right)+c\left(u+\frac uv+\frac{u^2}v\right),
\end{equation}
where $c$ is a theory-dependent constant which for example for $\mathcal{N}=4$ SYM with gauge group $\SU N$ is proportional to the inverse of the central charge, 
$c\sim (N^2-1)^{-1}$.
Performing the conformal block decomposition we find that for each reference twist $\tau_0=2,4,6,\ldots$ there exists an infinite tower of operators contributing to the sum in \eqref{eq:BlockDecomposition}, labelled by spin $\ell$ and degeneracy index $i$. This twist degeneracy will be partially lifted in the next sections, when we include perturbative corrections to the four-point correlator. Using the conformal block decomposition of \eqref{tree.level} we can compute the tree-level structure constants, i.e.\ the OPE coefficients. They are non-zero only for even spins $\ell$ and take the form
\begin{align}\label{eq:structureconstants}
\langle a^{(0)}_{\tau_0,\ell}\rangle=\begin{cases} 2c\,\frac{\Gamma(\ell+\frac{\tau_0}{2})^2}{\Gamma(2\ell+\tau_0-1)}\,,&\tau_0=2\,, \vspace{0.2cm}
\\
2\frac{\Gamma(\frac{\tau_0}{2}-1)^2\Gamma(\frac{\tau_0}{2}+\ell)^2}{\Gamma(\tau_0-3)\Gamma(\tau_0+2\ell-1)}\left(c\,(-1)^{\frac{\tau_0}{2}}+(\tau_0+\ell-2)(\ell+1)\right)\,,& \tau_0>2\,,\end{cases}
\end{align}
where we have introduced an average of structure constants over operators with the same reference twist and spin, $\langle a^{(0)}_{\tau_0,\ell}\rangle:=\sum_i a^{(0)}_{\tau_0,\ell,i}$. Notice that from the correlator \eqref{tree.level} alone it is not possible to calculate individual structure constants by this procedure. 

In the following sections we will find the most general one-loop correction to \eqref{tree.level} using the conformal symmetry, crossing symmetry and the structure of the OPE. In particular, we will compute an explicit form of the perturbative corrections to the structure constants $\langle a^{(0)}_{\tau_0,\ell}\rangle \to\langle a^{(0)}_{\tau_0,\ell}\rangle+g\langle a^{(1)}_{\tau_0,\ell}\rangle$ as well as to the twists $\tau_0\to\tau_0+g \,\frac{\langle a^{(0)}_{\tau_0,\ell} \gamma^{(1)}_{\tau_0,\ell}\rangle}{\langle a^{(0)}_{\tau_0,\ell}\rangle}$. The knowledge of results for individual operators $\mathcal{O}_{\tau,\ell,i}$ will not be necessary to find the complete four-point correlator at one loop, they will become relevant only at the two-loop order. We will comment on this matter in the outlook of this chapter.

\subsection{Superconformal block decomposition for half-BPS operators}
\label{sec:super.four.points}
As the second example, we consider the four-point correlation function of four half-BPS operators $\mathcal{O}_{{\boldsymbol{20'}}}$ in $\mathcal{N}=4$ SYM, which are protected and their dimension is $\Delta_{\mathcal{O}_{{\boldsymbol{20'}}}}=2$. The four-point correlation function of such operators decomposes into the following two contributions
\begin{equation}\label{eq:super.four.point}
\langle \mathcal{O}_{{\boldsymbol{20'}}}(x_1,y_1)\mathcal{O}_{{\boldsymbol{20'}}}(x_2,y_2)\mathcal{O}_{{\boldsymbol{20'}}}(x_3,y_3)\mathcal{O}_{{\boldsymbol{20'}}}(x_4,y_4)\rangle=\mathcal{G}_{\mathrm{free}}(x,y)+\mathcal{G}_{\mathrm{pert}}(x,y),
\end{equation}
where $\mathcal{G}_{\mathrm{pert}}(x,y)$ vanishes when $g\to0$. The part $\mathcal{G}_{\mathrm{free}}(x,y)$ corresponds to the free theory and is a rational function of space time and R-symmetry coordinates. Again, it can be evaluated directly by Wick contractions and it boils down to the same set of graphs as in figure~\ref{fig:Free}. It renders
\begin{equation}
\mathcal{G}_{\mathrm{free}}(x,y)=d_{12}^2d_{34}^2+d_{13}^2d_{24}^2+d_{14}^2d_{23}^2+\tilde c\,\big(d_{12}d_{23}d_{34}d_{14}+d_{12}d_{24}d_{34}d_{13}+d_{13}d_{24}d_{23}d_{14}\big),
\end{equation}
where the superpropagator $d_{ij}$ is given by
\begin{equation}
d_{ij}=\frac{y_{ij}^2}{x_{ij}^2},\qquad y_{ij}=y_i\cdot y_j,
\end{equation}
and $\tilde c$ is a theory-dependent constant which for $\SU N$ $\mathcal N=4$ SYM again depends only on the central charge $\tilde c\sim (N^2-1)^{-1}$.

From the superconformal Ward identities \cite{Nirschl:2004pa}, the interacting part of the four-point function can be written in a factorised form 
\begin{equation}
\mathcal{G}_\mathrm{pert}(z,\zb,\alpha,\alphab)=d_{12}^2d_{34}^2 \frac{(z-\alpha)(z-\alphab)(\zb-\alpha)(\zb-\alphab)}{(\alpha\, \alphab)^2}\mathcal{H}(u,v)\,,
\end{equation}
where we have introduced a set of cross-ratios for the R-symmetry coordinates
\begin{equation}
\alpha \alphab=\frac{y_{12}^2y_{34}^2}{y_{13}^2y_{24}^2},\qquad (1-\alpha)(1-\alphab)=\frac{y_{14}^2y_{23}^2}{y_{13}^2y_{24}^2}\,.
\end{equation}
Similar to the four-point function of Konishi operators, crossing symmetry implies that the function $\mathcal{H}(u,v)$ satisfies the two equations
\begin{equation}\label{eq:superbootstrap}
 \mathcal{H}(u,v)=\frac{1}{v^2}\mathcal{H}\left(\frac{u}{v},\frac{1}{v}\right)\,,\qquad v^{2}\mathcal{H}(u,v)=u^{2}\mathcal{H}(v,u)\,,
\end{equation} 
where in the second equation we used explicitly the fact that $\Delta_{\mathcal{O}_{{\boldsymbol{20'}}}}=2$.

On the other hand, the four-point correlation function~\eqref{eq:super.four.point} admits a superconformal block decomposition, see e.g.\ \cite{Doobary:2015gia} 
\begin{equation}
\mathcal{G}_{\mathrm{free}}(x,y)+\mathcal{G}_{\mathrm{pert}}(x,y)=d_{12}^2d_{34}^2 \sum_{\mathcal{R},i}A_{\mathcal{R},i}\,\mathcal{S}_{\mathcal{R}}(x,y),
\end{equation}
where the sum runs over all superconformal primary operators appearing in the OPE expansion of two half-BPS operators
\begin{equation}\label{eq:super.OPE}
\mathcal{O}_{{\boldsymbol{20'}}}\times \mathcal{O}_{{\boldsymbol{20'}}}\sim\sum_{\mathcal{R},i}C_{\mathcal{O}_{{\boldsymbol{20'}}}\mathcal{O}_{{\boldsymbol{20'}}}\mathcal{O}_{\mathcal{R},i}}\left(\mathcal{O}_{\mathcal{R},i}+\ldots\right).
\end{equation}
Superconformal primaries in \eqref{eq:super.OPE} are labelled by their twist $\tau=\Delta-\ell$, spin $\ell$ and a representation of the $\SU4$ R-symmetry of $\mathcal{N}=4$ SYM, which we collectively denote by $\mathcal{R}$. Again, we also introduced the label $i$  which takes care of a possible additional degeneracy of operators with the same twist, spin and the R-symmetry label. Importantly, the superconformal blocks do not depend on the label $i$. An explicit description of superconformal multiplets and an explicit form of the superconformal blocks $\mathcal{S}_{\mathcal{R}}$ can be found in the appendix~\ref{app:superblocks}. As it is summarised there, we distinguish three types of supermultiplets in \eqref{eq:super.OPE}: half-BPS, quarter-BPS and long supermultiplets. All half-BPS and most quarter-BPS supermultiplets have their conformal dimensions and structure constants protected by supersymmetry. Then, their two-point and three-point correlation functions are completely determined by the free part $\mathcal{G}_\mathrm{free}$. They will therefore not contribute to the interacting part $\mathcal{H}(u,v)$ of the four-point correlation function. The only exception are quarter-BPS supermultiplets at the unitarity bound. They can combine in the interacting theory to form a long, non-protected supermultiplet \cite{Dolan:2002zh,Heslop:2003xu}. This is exactly the case for the twist-two operators. Together with the other long supermultiplets they form a complete non-protected spectrum of operators present in the intermediate channel. Since we want to find the one-loop correction to $\mathcal{H}(u,v)$, we will in the following be interested only in the non-protected part of the spectrum. 

We can perform a superconformal block decomposition of the leading contribution $\mathcal{G}_{\mathrm{free}}(x,y)$ to the four-point function and get structure constants for all non-protected multiplets
\begin{equation}\label{eq:structure.constants.Free}
 \langle A_{\tau_0,\ell}^{(0)}\rangle=\begin{cases} 2\tilde c\,\frac{\Gamma(\ell+\frac{\tau_0}{2}+2)^2}{\Gamma(2\ell+\tau_0+3)}\,,&\tau_0=2\,,
\\2\frac{\Gamma(\frac{\tau_0}{2}+1)^2\Gamma(\frac{\tau_0}{2}+\ell+2)^2}{\Gamma(\tau_0+1)\Gamma(\tau_0+2\ell+3)}\left(\tilde c\,(-1)^{\frac{\tau_0}{2}}+(\tau_0+\ell+2)(\ell+1)\right),& \tau_0=4,6,8,\ldots\,.\end{cases}
\end{equation}
It is interesting to notice that $\langle A^{(0)}_{\tau_0,\ell}\rangle=\langle a^{(0)}_{\tau_0,\ell}\rangle\big|_{c\to\tilde c,\tau_0\to\tau_0+4}$.

Furthermore, using the explicit form of superconformal blocks \eqref{eq:super.long} and \eqref{eq:twist2.recomb} for non-protected multiplets, the interacting part of the four-point correlation function can be expanded as
\begin{equation}\label{eq:superconformal.decomposition}
\mathcal H(u,v)=\sum_{\tau,\ell}\langle A_{\tau,\ell}\rangle u^{-2} \,G_{\tau+4,\ell}(z,\zb),
\end{equation}
where $G_{\tau,\ell}(z,\zb)$ is exactly the same conformal block as in \eqref{eq:ConformalBlock} in section \ref{sec:bosonic.four.points}. We notice that both leading-order structure constants $\langle A^{(0)}_{\tau_0,\ell}\rangle$ and superconformal blocks for non-protected supermultiplets are related to the Konishi case by shifting $\tau_0\to\tau_0+4$. For this reason, the one-loop calculation for the four-point correlator of half-BPS operators is analogous to a similar analysis for four Konishi operators, after this shift is implemented at the level of twist conformal blocks. 

\section{Twist conformal blocks}
\label{sec:twist.conformal.blocks}
In this section we describe twist conformal blocks and their generalisations introduced in \cite{Alday2016} and use them to rewrite the conformal block decomposition of four-point correlation functions from the previous section. We focus in this section exclusively on the case of four Konishi operators, leaving the half-BPS case to section~\ref{sec:super.case}. We start by defining twist conformal blocks relevant for the tree-level correlators and then define their generalisations with spin-dependent insertions that will be relevant for the perturbative expansion around the tree-level solution. 

\subsection{Twist conformal blocks}
A motivation to study twist conformal blocks is the observation that in perturbation theory there exists, for each even number $\tau_0=2,4,6,\ldots$, an infinite family of operators $\mathcal{O}_{\tau_0,\ell,i}$, $\ell=0,2,4,\ldots$, $i=1,\ldots,d_{\tau_0,\ell}$, with the reference twist equal to $\tau_0$:
\begin{equation}
\tau=\tau_0+O(g)\,.
\end{equation}
Therefore, at tree-level we have an infinite twist degeneracy which is lifted only when we turn on the coupling constant. In particular, it motivates us to resum contributions coming from all intermediate operators with the same reference twist $\tau_0$. In this case, the leading order four-point correlator \eqref{tree.level} can be decomposed as
\begin{equation}
\mathcal G^{(0)}(u,v)=\!\sum_{\tau_0=2,4,\ldots}\! H_{\tau_0}(u,v),
\end{equation}
where we have defined {\it twist conformal blocks} 
\begin{equation}\label{eq:Hfun.def}
H_{\tau_0}(u,v)=\sum_{\ell=0}^\infty \langle a^{(0)}_{\tau_0,\ell}\rangle G_{\tau_0,\ell}(u,v),
\end{equation}
with $\langle a^{(0)}_{\tau_0,\ell}\rangle$  given in \eqref{eq:structureconstants}. The sum in \eqref{eq:Hfun.def} can be  performed for any $\tau_0$ using the explicit form of conformal blocks. For example for $\tau_0=2$ it renders
\begin{equation}\label{eq.H2expl}
H_2(u,v)=c\frac{u}{v}+c\, u .
\end{equation}
For higher twists, the explicit form of $H_{\tau_0}(u,v)$ is more involved and we will not present it here. However, in all subsequent calculations we will need only their power divergent part as $v\to 0$. Such divergent parts can be easily calculated and written in a closed form as we will show below.

\subsection{H-functions}
In order to study perturbative corrections to the tree-level correlation function $\mathcal{G}^{(0)}(u,v)$ we need to generalise the notion of twist conformal blocks. In particular, when the coupling constant $g$ is not zero, the twist degeneracy we observed at the tree-level is lifted and each $\mathcal O_{\tau_0,\ell,i}$ gets individual corrections to their twists and structure constants,
\begin{align}
\tau_{\tau_0,\ell,i}&=\tau_0+g\,\gamma^{(1)}_{\tau_0,\ell,i}+\mathcal O(g^2),
\\
a_{\tau_0,\ell,i}&=a^{(0)}_{\tau_0,\ell,i}+g \,a^{(1)}_{\tau_0,\ell,i}+\mathcal O(g^2).
\end{align}
Here $\gamma^{(1)}_{\tau_0,\ell,i}$ is the one-loop anomalous dimension of $\mathcal O_{\tau_0,\ell,i}$ and $a^{(1)}_{\tau_0,\ell,i}$ is the one-loop correction to the structure constants. In the conformal block decomposition, these corrections will introduce an additional dependence on the spin and will modify the sum in the definition of the twist conformal blocks. Therefore, we will need to calculate sums of the form
\begin{equation}\label{eq:sum.with.insertions}
\sum_{\ell=0}^\infty \langle a^{(0)}_{\tau_0,\ell}\rangle  \kappa_{\tau_0}(\ell)\,G_{\tau_0,\ell}(u,v) ,
\end{equation}
where $\kappa_{\tau_0}(\ell)$ stands for the spin dependence coming from either the anomalous dimensions or the OPE coefficients. 
In particular, these insertions can be of two kinds: unbounded in spin $\ell$ or truncated contributions with finite support in $\ell$. The truncated contributions do not affect the enhanced divergent part of correlator and we will postpone their study to the following section. On the other hand, for the insertions unbounded  in spin the sum \eqref{eq:sum.with.insertions} can be calculated as an expansion around the infinite value of spin. In particular, in the unbounded case $\kappa_{\tau_0}(\ell)$ can be expanded in inverse powers of the conformal spin $J_{\tau_0}^2=(\frac{\tau_0}{2}+\ell)(\frac{\tau_0}{2}+\ell-1)$:
\begin{equation}
\kappa_{\tau_0}(\ell)=\sum_{m=0}^\infty \left( \frac{C_{(m)}}{J_{\tau_0}^{2m}}+\frac{C_{(m,\log)}}{J_{\tau_0}^{2m}}\log J_{\tau_0}+\ldots\right),
\end{equation}
as was shown in \cite{AldayBissiLuk2015}.
Then, in order to study perturbation theory beyond the tree-level, we consider a set of functions \cite{Alday2016}
\begin{equation}\label{eq:HfunctionsDef}
H^{(m,\log^n)}_{\tau_0}(u,v)=\sum_\ell \langle a_{\tau_0,l}^{(0)}\rangle \frac{(\log J_{\tau_0})^n}{J_{\tau_0}^{2m}}G_{\tau_0,l}(u,v),
\end{equation}
which we will refer to as {\em H-functions}. The H-functions describe contributions from an infinite sum of conformal blocks with spin-dependent insertions. In the case $m=n=0$ the H-functions $H_{\tau_0}^{(0)}(u,v)$ coincide with the twist conformal blocks. Importantly, the functions \eqref{eq:HfunctionsDef} satisfy the following recursion relation 
\begin{equation}\label{eq:Hrec}
H_{\tau_0}^{(m,\log^n)}(u,v)=\mathcal C H_{\tau_0}^{(m+1,\log^n)}(u,v),
\end{equation}
where we defined the shifted quadratic Casimir \eqref{eq:casimirrelationforTCB}
\begin{equation}\label{eq:fullCasimir}
\mathcal C=D_z+D_\zb+2\frac{z \zb}{z-\zb}\left((1-z)\partial_z-(1-\zb)\partial_\zb\right)-\frac{\tau_0(\tau_0-6)}4,
\end{equation}
with $D_x=(1-x)x^2\partial^2_x-x^2\partial_x$ .  The relation \eqref{eq:Hrec} can be easily proven by noticing that each individual conformal block $G_{\tau_0,\ell}(z,\zb)$ is an eigenvector of the Casimir operator $\mathcal{C}$ with the eigenvalue $J_{\tau_0}^2$.

\subsection{Enhanced divergences}\label{sec:enhanced.div}
In the following we will not need an explicit form of the functions $H^{(m,\log^n)}_{\tau_0}(u,v)$ but only their enhanced divergent part as $v\to0$. Expanding \eqref{eq:ConformalBlock} in this limit, one can notice that the conformal blocks behave as a logarithm $G_{\tau_0,\ell}(u,v)\sim \log(v)$ for $v\to0$. By enhanced divergence we will mean terms which cannot be written as a finite sum of conformal blocks. There are two kinds of enhanced divergences we will encounter: inverse powers of $v$, and functions with higher powers of the logarithm, that is functions of the form $p(v)\log^n v$, $n>1$, where $p(v)$ is regular for $v\to0$. 
As was shown in \cite{AldayBissiLuk2015}, the power divergent part of $H_{\tau_0}(u,v)$ is completely determined by operators with large spin $\ell$. In order to compute this divergent part it is therefore sufficient to study the tail of the sum in \eqref{eq:Hfun.def}. As explained in the following section such computations can be done explicitly. For example, at $\tau_0=2$ it renders
\begin{equation}\label{eq:H2.div}
H^{(0)}_2(u,v)=c\frac{u}{v}+O(v^0).
\end{equation}
One notices that the power divergence agrees with the explicit calculation in \eqref{eq.H2expl}. Moreover, the finite term $O(v^0)$ will not be necessary in the following sections.

Throughout the chapter we will often be interested in comparing only the enhanced divergent part of various functions. For this reason we introduce a notation
\begin{equation}
f(u,v)\doteq g(u,v) \quad \mathrm{if} \quad f(u,v)=g(u,v)+p(u,v)+q(u,v)\log v , 
\end{equation}
where $p$ and $q$ are polynomials in $v$ with coefficients that are functions of $u$.
This is to say that $f(u,v)$ and $g(u,v)$ are equal up to ``regular terms'', by which we mean contributions which can come from a finite number of conformal blocks. In particular, regular terms can contain a single power of $\log v$ but no higher powers of the logarithm nor inverse powers of $v$. 

\subsection{Computing H-functions}
We now describe how to construct the power divergent part of the H-functions that we will need in the subsequent calculations. First, we describe how to use the kernel method, motivated by \cite{AldayMaldacena2007} and systematically developed in \cite{Komargodski:2012ek,Fitzpatrick:2012yx}. This method, however, becomes inefficient very fast. For this reason we explain how to use an alternative method based on the recursion relation \eqref{eq:Hrec}. We start by focusing on the case of operators with twist $\tau_0=2$, and later on describe how H-functions for higher twists arise naturally from the twist-two case.

\subsubsection{Factorisation}
We are only interested in the terms with a power divergence as $v\to 0$. In the following, it will be more convenient to use the coordinates $(z,\zb)$ instead of the cross-ratios $(u,v)$. In these coordinates we are interested in the limit $\zb\to1$. 
Using the definition \eqref{eq:Hfun.def} and the explicit form of conformal blocks, any power divergent contributions to twist conformal blocks must arise from an infinite sum over spins. Moreover, they can only come from the second part of the conformal block \eqref{eq:ConformalBlock}. Then the part of the twist conformal blocks with a power divergence as $\zb\to 1$ can be written as
\begin{equation}\label{eq:factorised.CBlocks}
\frac{z\,\zb}{\zb-z}k_{\frac{\tau_0}{2}-1}(z)\sum_{\ell=0}^\infty \langle a^{(0)}_{\tau_0,\ell}\rangle k_{\frac{\tau_0}{2}+\ell}(\zb).
\end{equation} 
Similar reasoning can be applied to all H-functions defined in \eqref{eq:HfunctionsDef}. For this reason the power divergent part of the H-functions takes a factorised form
\begin{equation}\label{eq:factorised.form}
H_{\tau_0}^{(m,\log^n)}(z,\zb)\doteq\frac{z}{\zb-z}k_{\frac {\tau_0}2-1}(z) \overline H_{\tau_0}^{(m,\log^n)}(\zb),
\end{equation}
where we have defined the functions 
\begin{equation}\label{eq:def.Hbar}
\overline H^{(m,\log^n)}_{\tau_0}(\zb)=\zb
\sum_{\ell=0}^\infty \langle a^{(0)}_{\tau_0,\ell}\rangle \frac{\log^nJ_{\tau_0}}{J_{\tau_0}^{2m}}\, k_{\frac{\tau_0}{2}+\ell}(\zb).
\end{equation}

We notice now that the action of the quadratic Casimir \eqref{eq:fullCasimir} simplifies significantly when applied only to the divergent part of the H-functions
\begin{equation}\label{eq:casimiraction.red}
\mathcal C H_{\tau_0}^{(m,\log^n)}(z,\zb)\doteq\frac{z}{\zb-z}k_{\frac{\tau_0}{2}-1}(z)\, \overline{\mathcal D} \,\overline H_{\tau_0}^{(m,\log^n)}(\zb),
\end{equation}
where
\begin{equation}\label{eq:Casimir.reduced}
\overline{\mathcal D}= (2-\zb)(1-\zb\partial_\zb)+\zb^2(1-\zb)\partial_\zb^2 = \zb\, \Dbar\, \zb^{-1}
\end{equation}
where $\Dbar$ is the $\SL2\R$ Casimir defined in \eqref{eq:Dbardeff}.
Additionally, due to \eqref{eq:casimiraction.red}, the recursion \eqref{eq:Hrec} implies a similar recursion relation for $\overline H_{\tau_0}^{(m,\log^n)}(\zb)$, taking the form
\begin{equation}\label{eq:rec.simple}
\overline{H}_{\tau_0}^{(m,\log^n)}(\zb)=\overline{\mathcal{D}}\, \overline{H}_{\tau_0}^{(m+1,\log^n)}(\zb)\,.
\end{equation}

It is important to notice that the operator $\overline{\mathcal{D}}$ maps regular terms to regular terms and therefore does not introduce any enhanced divergence while acting on finite sums of conformal blocks. More generally, for polynomial functions $p(\zb)$ it acts as
\begin{equation}
\overline{\mathcal{D}} (p(\zb)\log(1-\zb)^n)=\frac{n(n-1)\,\zb\, p(\zb)\log(1-\zb)^{n-2}}{1-\zb}+O((1-\zb)^0).
\end{equation}
It is clear that for $n=0,1$ no enhanced divergence is produced when acting with $\overline{\mathcal{D}}$. On the other hand, expressions with higher powers of the logarithm, namely $n>1$, will always produce terms with negative powers of $1-\zb$ after we act on them with $\overline{\mathcal{D}}$ a finite number of times. This property explains why we refer to such terms as enhanced divergent.

\subsubsection{Derivation of H-functions: kernel method}\label{sec:kernel.method}
Let us now focus on finding the power divergent part of the functions $\overline H_{\tau_0}^{(m,\log^n)}(\zb)$. In principle, this is possible for any $m$ and $n$. However, in order to solve the one-loop problem we will see that it is sufficient to focus on $\overline H_{\tau_0}^{(m,\log^n)}(\zb)$ for $n=0,1$ and $m\leqslant 0$. Since we want to compute just the power divergent part of these functions we only need to consider the tail of the sum over spins in \eqref{eq:def.Hbar}. In this limit the sum is well-approximated by an integral which can be explicitly computed using the method described in \cite{Komargodski:2012ek,Fitzpatrick:2012yx}, see also the appendix A of \cite{AldayZhiboedov2015}. This method allows to capture all power divergences, namely all terms of the form $\sim \frac{1}{(1-\zb)^k}$ for $k>0$. 

Let us start by considering the twist conformal block $\overline{H}_{2}^{(0)}(\zb)$ and compute
\begin{equation}
 \zb\sum_\ell \langle a_{2,\ell}^{(0)}\rangle\,k_{\ell+1}(\zb)=\sum_\ell 2c\frac{\Gamma(\ell+1)^2}{\Gamma(2\ell+1)}\, \zb^{\ell+2}\,{_2F_1}(\ell+1,\ell+1,2\ell+2;\zb).
\end{equation}
The divergent contributions come from large spins of order $\ell\sim\frac{1}{\sqrt{\varepsilon}}$, where we have introduced the notation $\varepsilon=1-\zb$ in order to simplify the following formulae. Therefore, we can define $\ell=\frac{p}{\sqrt{\varepsilon}}$ and convert the sum over $\ell$ into the integral $\frac12\int \frac{dp}{\sqrt\varepsilon}$. We also replace the hypergeometric function by its integral representation
\begin{equation}
{_2F_1}(a,b;c;x)=\frac{\Gamma(c)}{\Gamma(b)\Gamma(c-b)}\int\limits_0^1d t\frac{t^{b-1}(1-t)^{c-b-1}}{(1-x\, t)^a} .
\end{equation}
Consecutively, we perform the change of variables
\begin{equation}
\frac{p}{\sqrt{\varepsilon}}\left(\frac{p}{\sqrt{\varepsilon}}+1\right)=\frac{\hat J^2}{\varepsilon},\qquad t=1-w\sqrt{\varepsilon}\,.
\end{equation}
 The integration limits of the $w$ integral can safely be extended to $[0,\infty)$ since this does not add any power divergent term. 
Implementing these changes of variables gives the result
\begin{equation}
 \zb\sum_\ell \langle a_{2,\ell}^{(0)}\rangle\,k_{\ell+1}(\zb)\rightarrow(1-\varepsilon)\,c \int_0^\infty d{\hat J}\, \mathcal K_2({\hat J},\varepsilon),
\end{equation}
where we have defined the integral kernel
\begin{equation}
\mathcal K_2(j,\varepsilon)=\int_0^\infty dw\,\frac{-2\hat J}{w\,\varepsilon(w\sqrt\varepsilon-1)}\left(\frac{w(1-\varepsilon)(1-w\sqrt\varepsilon)}{w+\sqrt\varepsilon-w\varepsilon)}\right)^{\frac12\left(1+\sqrt{1+\frac{4\hat J^2}\varepsilon}\right)}.
\end{equation}
Expanding $\mathcal K_2(\hat J,\varepsilon)$ in powers of $\varepsilon$ we get
\begin{equation}\label{eq:expanding.kernel}
\mathcal K_2(\hat J,\varepsilon)=4\hat J K_0(2j)\frac1\varepsilon-\frac43\left(\hat J K_0(2\hat J)+(1+2\hat J^2)K_1(2\hat J)\right)+\ldots,
\end{equation}
where $K_n(x)$ are the modified Bessel functions of the second kind.

In particular, this method allows us to find
\begin{equation}\label{eq:kernel.H2}
\overline H^{(0)}_2(\zb)\doteq\frac{1}{1-\zb}c\int_0^\infty dj\, 4jK_0(2j)= \frac1{1-\zb}c+O((1-\zb)^0),
\end{equation}
which is exactly the previously mentioned result \eqref{eq:H2.div}. Importantly, it agrees up to regular terms with the direct calculation \eqref{eq.H2expl}. Let us emphasise that for the twist conformal block $\overline{H}^{(0)}_2(\zb)$ there are no additional enhanced divergences beyond the power divergence, namely there are no terms with $\log^n(1-\zb)$ for $n>1$. This statement will become crucial when we use the recursion relation method in the following section.

More generally, using this method we can find all negative powers of $\varepsilon=1-\zb$ of the H-functions with $m\leqslant 0$ by modifying the integrand with suitable insertions
\begin{equation}\label{eq:kernel.Hbar}
\overline{H}_{2}^{(m,\log^n)}(\zb)\doteq
(1-\varepsilon)\,c
\int_0^\infty dj\, \mathcal K_2(j,\varepsilon)\left(\frac\varepsilon{j^2}\right)^m\log^n\left(\frac{j}{\sqrt\varepsilon}\right).
\end{equation}
For example for $m=0$, $n=1$ we find after an explicit calculation
\begin{align}\nonumber
\overline H^{(0,\log)}_2(\zb)&\doteq\frac{1}{1-\zb}c\int_0^\infty d\hat J\, 4\hat JK_0(2\hat J)\left(\log \hat J-\frac12\log(1-\zb)\right)\\
&\doteq -\frac{\gamma_E}{1-\zb}c-\frac{\log(1-\zb)}{2(1-\zb)}c+O((1-\zb)^0),
\end{align}
where $\gamma_E$ is Euler's constant.

By studying the $\varepsilon$-dependence in \eqref{eq:kernel.Hbar} we also immediately find a general schematic form of the power divergent part of $\overline H_2^{(m,\log^n)}(\zb)$ for $m\leqslant 0$,
\begin{equation}\label{eq:form.of.Hfunctions}
\overline{H}_{2}^{(m,\log^n)}(\zb)\doteq \sum_{i=0}^{-m}\sum_{j=0}^{n}k_{i,j}^{(m,\log^n)}\frac{\log^{j}(1-\zb)}{(1-\zb)^{-m-i+1}}c,
\end{equation}
where all coefficient $k_{i,j}^{(m,\log^n)}$ in principle can be  calculated from \eqref{eq:kernel.Hbar}. This quickly becomes very tedious and for this reason we present a different approach in the following section.

\subsubsection{Derivation of H-functions: recursion relation method}
We will now move to a more efficient approach, where we derive the H-functions $\overline H_2^{(m,\log^n)}(\zb)$ using the recursion relation \eqref{eq:rec.simple}. 
From \eqref{eq.H2expl} the complete enhanced divergent part of twist conformal block for $\tau_0=2$ is $\overline H_2^{(0)}(\zb)\doteq \frac{c}{1-\zb}$. The recursion relation \eqref{eq:rec.simple} immediately allows us to find all divergent parts for all H-functions $\overline H_{2}^{(m)}(\zb)$ with $m<0$ by simply using 
\begin{equation}\label{eq:recminusm}
\overline{H}^{(m)}_{2}(\zb)=\overline{\mathcal{D}}^{-m} \overline{H}_{2}^{(0)}(\zb)\,,\qquad \mathrm{for}\quad m<0\,.
\end{equation} 
Also for positive $m$ we could in principle find the enhanced divergent part of the H-functions by solving differential equations \eqref{eq:rec.simple}. This becomes tedious very quickly and moreover we would need to introduce two constants of integration every time we increase $m$. However, as we already pointed out, we will not need H-functions with positive $m$ at all. Left to construct are therefore the H-functions with logarithmic insertions. As described in the appendix A.4 of \cite{Alday2016b}, these are given by differentiating the $\overline H^{(m)}_2(\zb)$ with respect to the parameter $m$:
\begin{equation}\label{eq:Hlog}
\overline H_2^{(m,\log^n)}(\zb)=-\frac12\frac\partial{\partial m} \overline H_2^{(m,\log^{n-1})}(\zb).
\end{equation}
We will only need to consider the case $n=1$, although the computation for $n>1$ is analogous. In order to find $\overline H^{(0,\log)}_2(\zb)$, we need to analytically continue $\overline H^{(m)}_2(\zb)$ with respect to the parameter $m$ and then take the derivative. The most general form of the enhanced divergent parts of $\overline H^{(m)}_2(\zb)$ for $m\leqslant 0$ is given by \eqref{eq:form.of.Hfunctions},
\begin{equation}\label{eq:Hm.exp}
\overline{H}_2^{(m)}(\zb)\doteq\sum_{i=0}^{-m} \frac{k^{(m)}_i}{(1-\zb)^{-m-i+1}}c,
\end{equation}
where all coefficients $k_i^{(m)}$ can be found explicitly from \eqref{eq:recminusm}. In particular, it allows us to derive a recursion relation for the coefficients $k_i^{(m)}$. For example for $k^{(m)}_{0}$ we get
\begin{equation}
k_0^{(m)}=m^2\,k_0^{(m+1)},
\end{equation}
which together with the initial condition $k_0^{(0)}=1$ coming from $\overline H_2^{(0)}(\zb)\doteq c\, (1-\zb)^{-1}$ allows us to find the general form
\begin{equation}
k_{0}^{(m)}=\Gamma(-m+1)^2 \,,\qquad \mathrm{for}\quad m\leqslant 0.
\end{equation}
Proceeding to subleading terms, and using as boundary conditions the explicit values of $k_i^{(-i)}$ for $i>0$ that can be calculated directly from \eqref{eq:recminusm}, one can find all expansion terms in \eqref{eq:Hm.exp}. We present few first terms below
\begin{align}\label{eq:Hm.exp.expl}
\overline H_2^{(m)}(\zb)&\doteq\frac{\Gamma(-m+1)^2c}{(1-\zb)^{-m+1}}+\frac{m(2m^2-6m+1)}{3}\frac{\Gamma(-m)^2c}{(1-\zb)^{-m}}+\nonumber\\&\quad+\frac{(m-1)m(m+1)(20m^3-54 m^2-35m+36)}{90}\frac{\Gamma(-m-1)^2c}{(1-\zb)^{-m-1}}+\ldots.
\end{align}
For all $m\leqslant 0$ this expansion is valid up to the order $(1-\zb)^{-1}$. Now, all expressions in \eqref{eq:Hm.exp.expl} are meromorphic functions and can be analytically continued to any value of $m$. Taking the derivative with respect to $m$, as in \eqref{eq:Hlog}, we obtain the divergent part of $\overline H_2^{(m,\log)}(\zb)$
\begin{align}\label{eq:Hlog.expl}
\overline H_2^{(m,\log)}(\zb)\doteq&-\frac{1}{2}\frac{\Gamma(-m+1)^2c}{(1-\zb)^{-m+1}}\left(\log(1-\zb)-2S_1(-m)+2\gamma_E\right)+\ldots\,,
\end{align}
where $S_k(N)=\sum_{i=1}^N \frac{1}{i^k}$ are harmonic sums. Again, for given $m\leqslant0$, this expansion is valid up to the order $(1-\zb)^{-1}$.

There exists a very compact way to encode all negative powers of $1-\zb$ in the functions $\overline{H}^{(m,\log)}_2(\zb)$ for $m\leqslant0$ by constructing the complete enhanced divergent part of $\overline H^{(0,\log)}_2(\zb)$. In order to do that we start with a general ansatz
\begin{equation}
\overline H_2^{(0,\log)}(\zb)=\frac{e_{\log}}{1-\zb}c\log(1-\zb)+\frac{e_{-1}}{1-\zb}c+\sum_{i=0}^\infty e_i(1-\zb)^ic\log^2(1-\zb).
\end{equation}
We can fix the coefficients $e_{i}$ and $e_{\log}$ by using the relation
\begin{equation}\label{eq:reclog}
\overline{H}^{(m,\log)}_{\tau_0}(\zb)=\overline{\mathcal{D}}^{-m} \overline{H}_{\tau_{0}}^{(0,\log)}(\zb)\,,\qquad \mathrm{for}\quad m<0,
\end{equation}
and comparing it with the previously obtained expansion \eqref{eq:Hlog.expl}. 
This allows us to find
\begin{equation}\label{eq:H0log}
\overline H_2^{(0,\log)}(\zb)=-\frac12\frac{\log(1-\zb)}{1-\zb}c-\frac{\gamma_E}{1-\zb}c+\left(-\frac1{12}+\frac{1-\zb}{10}-\frac{5(1-\zb)^2}{504}+\ldots\right)c\log^2(1-\zb).
\end{equation}
With this method arbitrarily many terms multiplying $\log^2(1-\zb)$ can be computed if we use~\eqref{eq:reclog} for a sufficiently large $-m$. We refer the reader to the appendix~\ref{app:H0log} where we have collected more orders of this expansion. Now, using the explicit form of $\overline{H}_2^{(0,\log)}(\zb)$ in \eqref{eq:H0log} we can easily find all negative powers of $\overline{H}_2^{(m,\log)}(\zb)$ for $m\leqslant 0$ by applying the formula \eqref{eq:reclog}. A similar analysis can be done also for $\overline H^{(m,\log^n)}_2(\zb)$ for $n>1$, however we will not need these functions in solving the one-loop problem.

 \subsubsection{Higher twist H-functions}

We end this section by describing how to compute the H-functions $\overline{H}_{\tau_0}^{(m,\log^n)}(\zb)$ for $\tau_0>2$. First of all, notice that the tree-level structure constants for higher twists \eqref{eq:structureconstants} can be nicely written using the tree-level structure constants for twist-two operators
\begin{equation}
\langle a^{(0)}_{\tau_0,\ell}\rangle=\frac{\Gamma(\frac{\tau_0}{2}-1)^2}{\Gamma(\tau_0-3)}\frac{1}{c}\left(c\,(-1)^{\frac{\tau_0}{2}}-(\tfrac{\tau_0}{2}-2)(\tfrac{\tau_0}{2}-1)+J_{\tau_0}^2\right)\langle a^{(0)}_{2,\ell+\frac{\tau_0}{2}-1}\rangle,
\end{equation}
where again $J_{\tau_0}^2=\left(\frac{\tau_0}{2}+\ell\right)\left(\frac{\tau_0}{2}+\ell-1\right)$.
When we plug this into the definition of twist conformal blocks for higher twist and perform a change of variables $j=\ell+\frac{\tau_0}{2}-1$ we get
\begin{equation}
\overline{H}^{(0)}_{\tau_0}(\zb)=\zb\frac{\Gamma(\frac{\tau_0}{2}-1)^2}{\Gamma(\tau_0-3)}\sum_{j=\frac{\tau_0}{2}-1}^\infty \frac{1}{c}\left(c\,(-1)^{\frac{\tau_0}{2}}-(\tfrac{\tau_0}{2}-2)(\tfrac{\tau_0}{2}-1)+(J_{2})^2\right)\langle a^{(0)}_{2,j}\rangle k_{j+1}(\zb).
\end{equation}
where $(J_2)^2=j(j+1)$.
 In the limit $\zb\to1$ the sum over $j$ can be replaced by a sum from zero to infinity since the difference is a regular term. This leads to
\begin{equation}
\overline H^{(0)}_{\tau_0}(\zb)\doteq\frac{\Gamma(\frac{\tau_0}{2}-1)^2}{\Gamma(\tau_0-3)}\frac{1}{c}\left(\left(c\,(-1)^{\frac{\tau_0}{2}}-(\tfrac{\tau_0}{2}-2)(\tfrac{\tau_0}{2}-1)\right)\overline H^{(0)}_2(\zb)+\overline H_2^{(-1)}(\zb)\right).
\end{equation}
This allows us to rewrite the twist conformal blocks for higher twists in terms of functions we have already constructed. Similar analysis can be performed for all H-functions leading to the explicit form for higher-twists 
\begin{equation}\label{eq:computing.higher.twist.H}
\overline H^{(m,\log^n)}_{\tau_0}(\zb)\doteq\frac{\Gamma(\frac{\tau_0}{2}-1)^2}{\Gamma(\tau_0-3)}\frac{1}{c}\!\left(\!\left(c\,(-1)^{\frac{\tau_0}{2}}-(\tfrac{\tau_0}{2}-2)(\tfrac{\tau_0}{2}-1)\right)\!\overline H^{(m,\log^n)}_2(\zb)+\overline H^{(m-1,\log^n)}_2(\zb)\right)\!.
\end{equation}
To summarise, all H-functions relevant for the one-loop problem can be constructed using just two functions: $\overline H^{(0)}_2(\zb)$ and $\overline H^{(0,\log)}_2(\zb)$ whose explicit form can be found in \eqref{eq:H2.div} and \eqref{eq:H0log}, respectively.

\subsection{Decomposing one-loop correlator into H-functions}
\label{sec:one-loop.correlator.into.H}
Knowing the explicit form of the H-functions, we focus now on the one-loop four-point correlation function $\mathcal{G}^{(1)}(z,\zb)$ and expand its power divergent part in terms of the H-functions. By doing this we focus only on  contributions to anomalous dimensions and structure constants unbounded in spin $\ell$. Later on we will also include terms which are truncated in spin. The latter do not interfere with our analysis of the power divergent part of the correlator. 

For each operator present in the intermediate channel we expand their conformal dimension and structure constants as follows
\begin{align}
\tau_i&=\tau_0+g\, \gamma^{(1)}_{\tau_0,\ell,i}+O(g^2),\\ a_{\tau_i,\ell,i}&=a^{(0)}_{\tau_0,\ell,i}+g\, a_{\tau_0,\ell,i}^{(1)}+O(g^2).
\end{align}
Then the four-point correlation function $\mathcal{G}(z,\zb)$, up to the order $g$, can be written as
\begin{align}\label{eq:G0G1exp}
&\mathcal{G}^{(0)}(z,\zb)+g\, \mathcal{G}^{(1)}(z,\zb)\nonumber\\
&=\sum\limits_{\tau_0,\ell,i}\left(a^{(0)}_{\tau_0,\ell,i}+g\, a^{(1)}_{\tau_0,\ell,i}\right)\left(G_{\tau_0,\ell}(z,\zb)+g\,\gamma^{(1)}_{\tau_0,\ell,i}\left(\frac\partial{\partial\tau} G_{\tau,\ell}(z,\zb)\right)\big|_{\tau\to\tau_0}\right)
\\
&=\sum\limits_{\tau_0,\ell}\langle a_{\tau_0,\ell}^{(0)}\rangle G_{\tau_0,l}(z,\zb)+g\,\sum\limits_{\tau_0,\ell}\left(\langle a_{\tau_0,\ell}^{(1)}\rangle G_{\tau_0,\ell}(z,\zb)+\langle a_{\tau_0,\ell}^{(0)}\gamma^{(1)}_{\tau_0,\ell}\rangle\left(\frac\partial{\partial\tau} G_{\tau,\ell}(z,\zb)\right)\big|_{\tau\to\tau_0}\right)\nonumber,
\end{align}
where we have again defined the averages $\langle f_{\tau_0,\ell}\rangle=\sum_i f_{\tau_0,\ell,i}$. 

In the last line of \eqref{eq:G0G1exp} the derivative with respect to twist $\tau$ is understood as a partial derivative of a function of two variables: $\tau$ and $\ell$. It turns out that our further analysis simplifies significantly if we instead use the variables $(\tilde\tau,\tilde\ell)$ defined as
\begin{equation}
 \left(\tilde\tau,\,\tilde\ell\,\right)=\left(\tau,\ell+\frac{\tau}{2}\right).
\end{equation} 
Up to a constant factor, this is equivalent to a change of variables to $(h,\hb)$ as discussed at the end of section~\ref{sec:blockology}.
Then the partial derivatives in the new variables can be related to the partial derivatives with respect to the twist and spin as
\begin{equation}
\frac\partial{\partial\tau}=\frac\partial{\partial\tilde\tau}+\frac{1}{2}\frac\partial{\partial\tilde\ell}\,,\qquad\frac\partial{\partial\ell}=\frac\partial{\partial\tilde\ell}\,.
\end{equation}
In particular, it implies that $\partial_{\tilde\tau}k_{\frac\tau2+\ell}(\zb)=0$. 
We can now rewrite the derivative in the last line of \eqref{eq:G0G1exp} as
\begin{align}
\sum\limits_{\tau_0,\ell}\left(\langle a_{\tau_0,\ell}^{(0)}\gamma^{(1)}_{\tau_0,\ell}\rangle\left(\frac\partial{\partial\tilde\tau} G_{\tau,\ell}(z,\zb)\right)\big|_{\tau\to\tau_0}+\frac{1}{2}\langle a_{\tau_0,\ell}^{(0)}\gamma^{(1)}_{\tau_0,\ell}\rangle\left(\frac\partial{\partial\tilde\ell} G_{\tau_0,\ell}(z,\zb)\right)\right)\nonumber\\
\doteq \sum\limits_{\tau_0,\ell}\left(\langle a_{\tau_0,\ell}^{(0)}\gamma^{(1)}_{\tau_0,\ell}\rangle\left(\frac\partial{\partial\tilde\tau} G_{\tau,\ell}(z,\zb)\right)\big|_{\tau\to\tau_0}-\frac{1}{2}\frac{\partial}{\partial \tilde\ell}\left(\langle a_{\tau_0,\ell}^{(0)}\gamma^{(1)}_{\tau_0,\ell}\rangle\right)G_{\tau_0,\ell}(z,\zb)\right),
\end{align}
where in the second line we dropped a total derivative with respect to $\tilde\ell$, which is a regular term. Finally, we can rewrite the divergent part of $\mathcal{G}^{(1)}(z,\zb)$ as
\begin{equation}\label{eq:G1.almostH}
\mathcal{G}^{(1)}(z,\zb)\doteq\frac{ z\zb}{\zb-z}\sum\limits_{\tau_0,\ell}\langle a^{(0)}_{\tau_0,\ell}\rangle\left(\overline{\langle \hat\alpha_{\tau_0,\ell}\rangle} k_{\frac{\tau_0}{2}-1}(z)+\overline{\langle\gamma_{\tau_0,\ell}\rangle}\left(\frac\partial{\partial\tau} k_{\frac{\tau}{2}-1}(z)\right)\big|_{\tau\to\tau_0} \right)k_{\frac{\tau_0}{2}+\ell}(\zb),
\end{equation}
where we used the factorisation~\eqref{eq:factorised.CBlocks} of the divergent parts of the conformal blocks and introduced
\begin{align}\label{eq:gammatau}
\overline{\langle  \gamma_{{\tau_0},\ell}\rangle} &:= \frac{\langle  a^{(0)}_{{\tau_0},\ell}\gamma^{(1)}_{{\tau_0},\ell}\rangle}{\langle  a^{(0)}_{{\tau_0},\ell}\rangle},\\
\overline{\langle \hat \alpha_{{\tau_0},\ell}\rangle}
\langle a_{{\tau_0},\ell}^{(0)}\rangle \nonumber
&:=
 \langle a_{{\tau_0},\ell}^{(1)}\rangle-\frac{1}{2}\frac\partial{\partial\ell}\left(\langle a^{(0)}_{{\tau_0},\ell}\gamma^{(1)}_{{\tau_0},\ell}\rangle\right)\\&\ =\langle a_{{\tau_0},\ell}^{(1)}\rangle
 -
 \frac{1}{2}\langle a^{(0)}_{{\tau_0},\ell}\rangle\frac\partial{\partial\ell}\overline{\langle \gamma_{{\tau_0},\ell}\rangle}
 -
 \frac{1}{2}\frac\partial{\partial\ell}\langle a^{(0)}_{{\tau_0},\ell}\rangle\overline{\langle \gamma_{{\tau_0},\ell}\rangle}.\label{eq:modified.structure.constant}
\end{align}
One can recognise the last formula in \eqref{eq:modified.structure.constant} as the one-loop perturbative expansion of $\hat a_{\tau_0,\ell}$ introduced in \cite{Alday2016b}.

In weakly coupled CFTs at one loop, both the anomalous dimensions $\overline{\langle  \gamma_{{\tau_0},\ell}\rangle}$ and the modified structure constants $\overline{\langle \hat \alpha_{{\tau_0},\ell}\rangle}$ depend on spin as a single logarithm $\log \ell$ at large $\ell$. Therefore, in order to use the H-functions to constrain the unbounded parts of the CFT-data we expand the modified structure constants $\overline{\langle\hat\alpha_{\tau_0,\ell}\rangle}$ and anomalous dimensions $\overline{\langle\gamma_{\tau_0,\ell}\rangle}$ in the following way \cite{AldayBissiLuk2015}:
\begin{align}\label{eq:OPEexpansion}
\overline{\langle \hat \alpha_{\tau_0,\ell}\rangle} &=\sum_{m=0}^\infty \frac{A_{\tau_0,(m,\log)}}{J^{2m}_{\tau_0}}\log J_{\tau_0}+\sum_{m=0}^\infty \frac{A_{\tau_0,(m)}}{J_{\tau_0}^{2m}},\quad\\\label{eq:gammaexpansion}
\overline{\langle  \gamma_{\tau_0,\ell}\rangle} &=\sum_{m=0}^\infty \frac{B_{\tau_0,(m,\log)}}{J^{2m}_{\tau_0}}\log J_{\tau_0}+\sum_{m=0}^\infty \frac{B_{\tau_0,(m)}}{J_{\tau_0}^{2m}}.
\end{align}

Inserting the expansions \eqref{eq:OPEexpansion} and \eqref{eq:gammaexpansion} into \eqref{eq:G1.almostH} we can finally rewrite the divergent part of the one-loop correlator in terms of H-functions
\begin{align}\label{eq:oneloop.in.Hfunctions}
\mathcal G^{(1)}(z,\zb)\doteq \sum_{\tau_0}\frac{ z}{\zb-z}\sum_\rho\left( A_{\tau_0,\rho}\,k_{\frac{\tau_0}{2}-1}(z)+B_{\tau_0,\rho}\,\left(\frac\partial{\partial\tau} k_{\frac{\tau}{2}-1}(z)\right)\big|_{\tau\to\tau_0}\right) \overline H_{\tau_0}^{\,\rho}(\zb),
\end{align}
where $\rho=(m,\log)$ or $\rho=(m)$, $m=0,1,2,\ldots$ and we have used the definition of H-functions \eqref{eq:def.Hbar}. This is the most important formula of this section and in the following we will use it to completely fix the form of $\mathcal{G}^{(1)}(z,\zb)$. 

\subsection{Using H-functions: toy example}
\label{sec:method}

We present a simple example of how to use H-functions to extract the asymptotic spin dependence of CFT-data given a particular function with power divergences. In order to simplify our discussion we focus here only on the $\zb$ dependence. In analogy with the actual computations in the next section, we will assume that the sum of H-functions produces a divergent expression containing a constant term and a term proportional to $\log(1-\zb)$:
\begin{equation}\label{eq:toy.equation}
\sum_{m=0}^\infty\sum_{n=0}^1 C_{(m,\log^n)}\overline H_2^{(m,\log^n)}(\zb)\doteq\frac{\lambda_1\log(1-\zb)+\lambda_0}{1-\zb}c.
\end{equation}
We will work iteratively and fix coefficients $C_{(m,\log^n)}$ by repeatedly applying the Casimir operator \eqref{eq:Casimir.reduced} on both sides of \eqref{eq:toy.equation} and keeping only power divergent terms. As a first step let us analyse the power divergent terms of \eqref{eq:toy.equation} itself. In this case only two terms in the sum on the left hand side are power divergent as $\zb\to1$, namely $\overline H^{(0)}_2(\zb)$ and $\overline H^{(0,\log)}_2(\zb)$. Therefore we get
\begin{equation}
C_{(0)}\frac{1}{1-\zb}c+C_{(0,\log)}\left( -\frac{\gamma_E}{1-\zb}-\frac{\log(1-\zb)}{2(1-\zb)}\right)c =
\frac{\lambda_1\log(1-\zb)+\lambda_0}{1-\zb}c,
\end{equation}
where we used the explicit form of $\overline H^{(0)}_2(\zb)$ and  $\overline H^{(0,\log)}_2(\zb)$. Solving this equation we get
\begin{equation}\label{eq:toy.zero.order.sols}
C_{(0)}=\lambda_0-2\lambda_1\gamma_E,\qquad C_{(0,\log)}=-2\lambda_1.
\end{equation}
To compute higher coefficients we act with the Casimir $\overline{\mathcal D}$ on both sides of \eqref{eq:toy.equation} and again compare power divergent terms. On the left hand side, using the recurrence \eqref{eq:rec.simple}, the Casimir brings the previously undetermined coefficients $C_{(1)}$ and $C_{(1,\log)}$ into the problem. This renders
\begin{equation}
\sum_{m=0}^1 \sum_{n=0}^1 C_{(m,\log^n)}\overline H^{(m-1,\log^n)}_2(\zb) \doteq\overline{\mathcal D}\left(\frac{\lambda_1\log(1-\zb)+\lambda_0}{1-\zb}c \right).
\end{equation}
Using the explicit form of the H-functions
\begin{align}
\overline H^{(-1)}_2(\zb)&=\overline{\mathcal{D}}\,\overline{H}^{(0)}_2(\zb)\doteq\frac{1}{(1-\zb)^2}c-
\frac{3}{1-\zb}c,\\\!\!
\overline H^{(-1,\log)}_2(\zb)&=\overline{\mathcal{D}}\,\overline{H}^{(0,\log)}_2(\zb)\doteq
\frac{2-2\gamma_E-\log(1-\zb)}{2(1-\zb)^2}c
+
\frac{18\gamma_E-19+9\log(1-\zb)}{6(1-\zb)}c,\!
\end{align}
and plugging in the solutions \eqref{eq:toy.zero.order.sols}, the term proportional to $(1-\zb)^{-2}$ vanishes, and the term proportional to $(1-\zb)$ provides
\begin{equation}
C_{(1)}=-\frac{\lambda_1}{3},\qquad C_{(1,\log)}=0.
\end{equation}
We can continue in this fashion, and determine the coefficients $C_{(m)}$ and $C_{(m,\log)}$ after acting $m$ times with the Casimir $\overline{\mathcal D}$. The results for $m=1,2,\ldots$ are
\begin{equation}
C_{(m)}=-2\lambda_1\left\{\frac{1}{6},\,-\frac{1}{30},\,\frac{4}{315},\,-\frac{1}{105},\,\ldots\right\},\qquad C_{(m,\log)}=0.
\end{equation}
We identify the $C_{(m)}$ together with $C_{(0,\log)}$ as coefficients in the large $\ell$ expansion \eqref{eq:S1expansion} of the harmonic sum  $S_1(\ell)$ expanded in inverse powers of $J^2=\ell(\ell+1)$. They therefore describe a function
\begin{equation}
\sum_{m,n} {C_{(m,\log^n)}}\frac{\log^n J}{J^{2m}}=\lambda_0-2\lambda_1S_1(\ell).
\end{equation}
This computation proves the following relation, which can also be shown by explicit computation,
\begin{equation}\label{eq:toy.example.starting.point}
\sum_\ell \langle a_{2,\ell}^{(0)}\rangle\, \zb\,k_{\ell+1}(\zb)( \lambda_0-2\lambda_1S_1(\ell)) \doteq \frac{\lambda_1\log(1-\zb)+\lambda_0}{1-\zb}c.
\end{equation}

In the following we will apply this method to more complicated functions, but the general idea will stay exactly the same.

\section{Four-point correlator from H-functions}
\label{sec:FindingNemo}

In this section we use the H-functions to construct the one-loop correction to the four-point function of four identical scalar operators. Again, we think of the correlator of four Konishi operators as our example, but the method applies to a large family of scalar operators. 

\subsection{The strategy}
We remind the reader that the four-point correlation function in weakly coupled gauge theories admits an expansion in the coupling constant $g$ of the form
\begin{equation}
\mathcal{G}(u,v)=\mathcal{G}^{(0)}(u,v)+g \,\mathcal{G}^{(1)}(u,v)+\ldots.
\end{equation}
The contributions to the one-loop correlator $\mathcal{G}^{(1)}(u,v)$ come from two different sources. First of all, there are infinite towers of operators for which the CFT-data can be expanded as a power series at large spin $\ell$, with possible $\log \ell$ insertions. Such towers of operators necessarily produce power divergent contributions to the correlator and we can study them using the H-functions. Secondly, there are terms in the four-point correlator which after performing the conformal block decomposition render CFT-data that is truncated in spin. Such terms are always regular as $v\to0$. Importantly, these two kinds of contributions are partially interchanged under crossing. 
In fact, the interchange is such that all contributions from infinite towers, at any twists, are completely determined by the twist-two operators. 
Therefore we will start our analysis from general ansatz for the twist-two operators, and then use the crossing symmetry and the H-function method to extend the ansatz to a full solution for the one-loop four-point correlator. 
In the process we will assume that there are no truncated solutions of the form found in \cite{Heemskerk2009}. 

Our strategy to find the one-loop correlation function is the following:
\begin{itemize}
 \item Using the explicit form of conformal blocks \eqref{eq:ConformalBlock} and the bootstrap equation \eqref{crossing.symmetry} we find a general form of the power divergent part of $\mathcal{G}^{(1)}(u,v)$ in the limit $v\to0$. We show using crossing symmetry that this is fully described by operators at leading twist, namely $\tau_0=2$. Subsequently, we use the H-function method to constrain the form of the contributions from infinite towers of leading twist operators. Supplementing this with terms truncated in spin we arrive at the most general leading twist contribution to the correlator $\mathcal G_{\mathrm{L.T.}}(u,v)\sim uf(\log u,v)$, where $f(\log u,v)$ is expressed to all orders in $v$ in terms of a finite number of unknowns.
\item Crossing symmetry maps $u f(\log u,v)$ to the power divergent part of the complete four-point correlator. This allows us to use the H-function method to find the large spin expansion of the CFT-data for all twists, which can be resummed to  closed-form functions of spin. Plugging this result back to the conformal block expansion we find the complete form of the four-point correlator in terms of a finite number of unknowns.
\item   As a final step we check that such obtained function satisfy all necessary constraints. In particular, consistency with the bootstrap equation reduces the number of unknowns to just four. 
\end{itemize}

\subsection{The ansatz}\label{sec:ansatz}
We focus first on the most general form of the power divergent terms in the limit $v\to 0$ and show that the bootstrap equation implies that all such contributions are encoded by the twist-two operators. 

Let us start by writing down an explicit form of the bootstrap equation in the perturbative expansion 
 \begin{equation}\label{eq:crossing.Konishi}
 v^{2+g \,\gamma_{\mathrm{ext}}}(\mathcal{G}^{(0)}(u,v)+g\, \mathcal{G}^{(1)}(u,v))=u^{2+g\, \gamma_{\mathrm{ext}}}(\mathcal{G}^{(0)}(v,u)+g\, \mathcal{G}^{(1)}(v,u)),
\end{equation}
where $\gamma_{\mathrm{ext}}$ is the one-loop anomalous dimension of the external operators, which we at the moment will keep unspecified.
The one-loop part of this equation can be written in the form
\begin{equation}\label{eq:crossing.Konishi.simpl}
\tilde{\mathcal G}^{(1)}(u,v)=\frac{u^2}{v^2} \tilde{\mathcal G}^{(1)}(v,u),
\end{equation}
where for convenience we defined $\tilde{\mathcal{G}}^{(1)}(u,v)=\mathcal{G}^{(1)}(u,v)+\gamma_\mathrm{ext}\log v\,\mathcal{G}^{(0)}(u,v)$.
Both functions $\mathcal{G}^{(0)}(u,v)$ and $\mathcal{G}^{(1)}(u,v)$ can be expanded in conformal blocks. Let us then look at the expansion of a single conformal block in the small $g$ limit,
\begin{equation}\label{eq:leadingconfblock}
G_{\tau,\ell}(u,v)= G_{\tau_0,\ell}(u,v)+g\,(\partial_{\tau} G_{\tau,\ell}(u,v))|_{\tau=\tau_0}+O(g^2).
\end{equation} 
From the explicit form of the conformal blocks we notice that at one loop there is a contribution proportional to $\log u$ in this expansion but no higher powers of the logarithm. We also notice that in the small $u$ limit we have $G_{\tau_0,\ell}(u,v)\sim u^{\tau_0/2}$. Thus the first non-trivial part of $\tilde{\mathcal{G}}^{(1)}(u,v)$ at small $u$ comes exclusively from the twist-two operators and is of the form
\begin{equation}\label{eq:leadingG}
\tilde{\mathcal{G}}^{(1)}(u,v)= \gamma_{\mathrm{ext}}\log v+ u\left(Q^{(1)}(v,\log v)\log u+Q^{(2)}(v,\log v)\right)+O(u^2),
\end{equation}
where the first trivial term comes from the identity operator contribution to $\mathcal{G}^{(0)}(u,v)$ and $Q^{(i)}(v,\log v)$ are arbitrary functions.
The bootstrap equation \eqref{eq:crossing.Konishi.simpl} used for \eqref{eq:leadingG} now gives
 \begin{equation}\label{eq:leadingG2}
 \tilde{\mathcal{G}}^{(1)}(u,v)=\gamma_{\mathrm{ext}}\frac{u^2}{v^2}\log u +\frac{u^2}{v}\left(Q^{(1)}(u,\log u)\log v+Q^{(2)}(u,\log u)\right)+O(v^0).
 \end{equation}
We notice in particular that, when crossed, \eqref{eq:leadingG} produces a power divergence for $v\to0$. It is easy to see that also the opposite statement is true: any divergent part of $\tilde{\mathcal{G}}^{(1)}(u,v)$ is mapped to the first two leading $u$ powers under crossing. Finally, by comparing the formulae \eqref{eq:leadingG} and \eqref{eq:leadingG2} we conclude that we must have $Q^{(i)}(u,\log u)\sim \frac{1}{u}+\ldots$.

Since the term proportional to $u^0$ is completely determined by the tree-level, we will focus here on the term proportional to $u$. Therefore, we start our analysis by considering the most general ansatz for twist-two operators. There are two distinguished terms: the contributions containing a power divergent part at $v\to0$, and contributions truncated in the spin. From the discussion above, we conclude that the former takes the form
\begin{equation}\label{Ginf.leading}
 \mathcal{G}_{\mathrm{inf,L.T.}}^{(1)}(u,v)\sim\frac{u}{v} \left(\alpha_{11}\log u \log v+\alpha_{10}\log u+\alpha_{01}\log v+\alpha_{00}\right)c+\ldots\,.
\end{equation}
where $\alpha_{00},\alpha_{10},\alpha_{01},\alpha_{11}$ are arbitrary constants and we introduced an explicit dependence on $c$ for later convenience. In the subsequent part of this section, we will use the H-function method to extend this to all subleading orders in $v$.
 
For the truncated contributions, let us take $L$ such that
\begin{equation}
\begin{cases}
\langle a^{(1)}_{2,\ell}\rangle=\langle a^{(1)}_{2,\ell}\rangle_{\mathrm{inf}}+\langle a^{(0)}_{2,\ell}\rangle\mu_{\ell}\,,\\
\overline{\langle \gamma_{2,\ell} \rangle}=\overline{\langle \gamma_{2,\ell} \rangle}_{\mathrm{inf}}+\nu_\ell\,,
\end{cases}
\qquad 
\ell=0,2,\ldots,L\,,
\end{equation}
and that for spins $\ell>L$ we have only contributions from infinite towers of operators.
 In this case the truncated part of the one-loop answer is given by
\begin{equation}\label{eq.Gfin.leading}
\mathcal{G}^{(1)}_{\mathrm{trunc,L.T.}}(u,v)=\sum_{\ell=0}^L\langle a^{(0)}_{2,\ell}\rangle\left( \mu_\ell \,G_{2,\ell}(u,v)+\nu_\ell \left(\partial_{\tau} G_{\tau,\ell}(u,v)\right)\big|_{\tau\to2}  \right).
\end{equation}

Let us go back to the term containing a divergence as $v\to 0$ in \eqref{Ginf.leading}. It originates purely from an infinite tower of twist-two operators and can be expanded using H-functions as in \eqref{eq:oneloop.in.Hfunctions}:
\begin{align}\nonumber
&\frac{z}{1-\zb}\left( \alpha_{11}\log z \log(1-\zb)+\alpha_{10}\log z +\alpha_{01}\log(1-\zb) +\alpha_{00}\right)c\doteq\\
&\hspace{7cm}\doteq z \sum_\rho\left(A_{2,\rho}+\frac{1}{2}B_{2,\rho}\log z \right) \overline H_2^{\rho}(\zb)\,,
\end{align}  
 where $A_{2,\rho}$ and $B_{2,\rho}$ are large-$J$ expansion coefficients, as in \eqref{eq:OPEexpansion} and \eqref{eq:gammaexpansion}, of the modified structure constants and anomalous dimensions, respectively, with $\rho=(m,\log^n)$ for $n=0,1$ and $m=0,1,\ldots$. Using the H-function method described in section \ref{sec:method} we find 
 \begin{align}
 A_{2,(0,\log)}&=-2\alpha_{01},&  A_{2,(0)}&=-2\alpha_{01}\gamma_E+\alpha_{00},& A_{2,(m)} &=-2\alpha_{01}\left\{  
 \frac{1}6,\, \frac{-1}{30},\, \frac{4}{315},\,\ldots
 \right\},
 \\ 
 B_{2,(0,\log)}&=-4\alpha_{11}, &  B_{2,(0)}&=-4\alpha_{11}\gamma_E+2\alpha_{10}, &B_{2,(m)}&=-4\alpha_{11}\left\{ \frac{1}6,\, \frac{-1}{30},\, \frac{4}{315},\,\ldots \right\}.
 \end{align}
From these values we can find an explicit form of the anomalous dimension and one-loop structure constants coming from an infinite tower of twist-two operators:
 \begin{align}\label{eq:twisttwogammaa}
\overline{ \langle \gamma_{2,\ell}\rangle}_{\mathrm{inf}} &=-4\alpha_{11}\,S_1(\ell)+2\alpha_{10}\,,\\\label{eq:twisttwoalphaa}
\overline{ \langle \hat{\alpha}_{2,\ell}\rangle}_{\mathrm{inf}}&=-2\alpha_{01}\,S_1(\ell)+\alpha_{00}\,.
 \end{align}

In the next step we will take the results \eqref{eq:twisttwogammaa}, \eqref{eq:twisttwoalphaa} and plug them into the conformal block expansion \eqref{eq:BlockDecomposition}. We can perform a resummation of the complete leading $z$ expansion of the four-point correlator $\mathcal{G}^{(1)}_{\mathrm{inf,L.T.}}(u,v)$ and arrive at
\begin{equation}\label{eq:Ginfleading}
\mathcal{G}^{(1)}_{\mathrm{inf,L.T.}}(u,v)=z\zb \left(\alpha_{11} F_{11}( z,\zb)+\alpha_{10} F_{10}(z,\zb)+\alpha_{01} F_{01}( z,\zb)+\alpha_{00} F_{00}(z,\zb)\right),
\end{equation}  
where
\begin{align}
F_{11}(z,\zb)&=c\,\frac{\zb}{1-\zb}\log(1-\zb)\log(z\zb)+2c\left(\frac{\zb}{1-\zb} \mathrm{Li}_2(\zb)-\frac{2-\zb}{1-\zb}\zeta_2 \right),
\\
F_{10}(z,\zb)&=c\left( \frac1{1-\zb}+1 \right)\log(z\zb)-c\log(1-\zb)\,,
\\
F_{01}(z,\zb)&=c\left( \frac1{1-\zb}-1 \right)\log(1-\zb)\,,
\\
F_{00}( z,\zb)&=c\left( \frac1{1-\zb}+1 \right).
\end{align}
It is easy to confirm that the power divergent part of \eqref{eq:Ginfleading} indeed equals \eqref{Ginf.leading}. We emphasise that the expansion \eqref{eq:Ginfleading} is valid only at the leading order in $z\to0$ but is exact to all orders in $\zb$.

We add together \eqref{eq.Gfin.leading} and \eqref{eq:Ginfleading} to get the most general form of the one-loop correlator at the leading order in $u\to0$ expansion
\begin{equation}\label{eq:G.leading.twist}
\mathcal G^{(1)}_{\mathrm{L.T.}}(u,v)=\mathcal G^{(1)}_{\mathrm{inf,L.T.}}(u,v)+\mathcal G^{(1)}_{\mathrm{trunc,L.T.}}(u,v).
\end{equation}
This answer depends on $2L+4$ unspecified coefficients and concludes the first step in our strategy.

\subsection{Higher twist operators}
In the next step we will use the complete form of the leading twist four-point function $G^{(1)}_{\mathrm{L.T.}}(u,v)$ together with the crossing equation to study implications for higher twist operators. As we already have pointed out, the term proportional to $ u$ are, apart from the trivial contribution from the identity operator, the only ones which can produce power divergent terms after the crossing. It implies that after we apply the crossing symmetry to the function \eqref{eq:G.leading.twist} we get the complete power divergence of the full one-loop answer. 

In order to make our results more transparent, let us assume at the moment that $L=0$, namely only spin $\ell=0$ contributes to the truncated ansatz \eqref{eq.Gfin.leading}. We will come back to the general case later. Let us look again at the crossing equation \eqref{eq:crossing.Konishi} at order $g$, which gives the following equation for the one-loop correlation function:
\begin{equation}\label{eq:crossing.for.HT}
\mathcal G^{(1)}(u,v)=\frac{u^2}{v^2} \mathcal G^{(1)}(v,u)+\gamma_{\mathrm{ext}}\mathcal G^{(0)}(u,v)\left(\log u-\log v\right).
\end{equation}
From our previous computations, on the right hand side we know explicitly all power divergent contributions in the limit $v\to0$. First of all, we can expand \eqref{eq:crossing.for.HT} at leading $v\to0$ and $u\to0$ to get
 \begin{align}\nonumber
 \mathcal{G}^{(1)}(z,\zb)\sim\frac{z }{(1-\zb)} \left(\alpha_{11}\log z \log (1-\zb)+(\alpha_{01}+\gamma_{\mathrm{ext}})\log z\right.\\ \left.+(\alpha_{10}-\gamma_{\mathrm{ext}})\log (1-\zb)+\alpha_{00}\right)c+\ldots.
 \end{align}
 Comparing it with \eqref{Ginf.leading} we find the constraint 
 \begin{equation}
 \alpha_{01}=\alpha_{10}-\gamma_{\mathrm{ext}}.
 \end{equation}
 After substituting this into \eqref{eq:crossing.for.HT} we notice that the divergent part of $\mathcal{G}^{(1)}(u,v)$ depends on the anomalous dimension of external operator $\gamma_{\mathrm{ext}}$ and the five parameters $(\alpha_{11},\alpha_{10},\alpha_{00},\mu_0,\nu_0)$. We use this function to find the unbounded CFT-data for higher twist operators by solving \eqref{eq:oneloop.in.Hfunctions}. Applying the method explained in section \ref{sec:method} we can compute as many coefficients $A_{\tau_0,(m,\log^k)}$ and $B_{\tau_0,(m,\log^k)}$ as necessary. Similar to the case of twist-two operators, we plug it back to \eqref{eq:OPEexpansion}, \eqref{eq:gammaexpansion} and we are able to perform the sum to find an explicit form of the CFT-data coming from infinite towers of operators as a function of spin. The result for the anomalous dimensions is 
\begin{align}\label{eq:gamma.HT}
\overline{\langle\gamma_{\tau_0,\ell}\rangle}&=\frac{c}{P_{\tau_0,\ell}}\Big(
4\alpha_{11}\eta\left[S_1(\tfrac{\tau_0}2-2)+S_1(\tfrac{\tau_0}2+\ell-1)+\tfrac12\delta_{\tau_0,4}\right]-4\eta\,\alpha_{10}+2\eta\,\gamma_{\mathrm{ext}}
\nonumber\\
&\qquad\qquad-4\mu_0-4\nu_0\left[S_1(\tfrac{\tau_0}2-2)-S_1(\tfrac{\tau_0}2+\ell-1)+1\right]
\Big)
+2\gamma_{\mathrm{ext}},\quad \mathrm{for}\,\,\tau_0>2,
\end{align}
where $\eta=(-1)^{\frac{\tau_0}2}$ and $P_{\tau_0,\ell}=c\, \eta+(\tau_0+\ell -2)(\ell+1)$ is the factor that appears in the tree-level structure constants~\eqref{eq:structureconstants}.
The result for $\overline{\langle\hat\alpha_{\tau_0,\ell}\rangle}
$ is more involved and we present here only its schematic form
\begin{equation}\label{eq:structure.constants.sum}
\overline{\langle\hat\alpha_{\tau_0,\ell}\rangle}=\alpha_{11}\overline{\langle\hat\alpha_{\tau_0,\ell}\rangle}_{11}+\alpha_{10}\overline{\langle\hat\alpha_{\tau_0,\ell}\rangle}_{10}+\alpha_{00}\overline{\langle\hat\alpha_{\tau_0,\ell}\rangle}_{00}+\gamma_{\mathrm{ext}}\overline{\langle\hat\alpha_{\tau_0,\ell}\rangle}_{\mathrm{ext}}+\mu_{0}\overline{\langle\hat\alpha_{\tau_0,\ell}\rangle}_{\mu_0}+\nu_{0}\overline{\langle\hat\alpha_{\tau_0,\ell}\rangle}_{\nu_0}.
\end{equation}
The explicit expressions for $\overline{\langle\hat\alpha_{\tau_0,\ell}\rangle}_{i}$ can be found in the appendix~\ref{app:results.nonsuper}. In order to get the one-loop structure constants $\langle a^{(1)}_{\tau_0,\ell}\rangle$ one again needs to use the formula \eqref{eq:modified.structure.constant}. We can observe that the explicit results for $\overline{\langle\gamma_{\tau_0,\ell}\rangle}$ and $\langle a^{(1)}_{\tau_0,\ell}\rangle$ at higher twist satisfy an interesting symmetry, where formally exchanging $\frac{\tau_0}2-1$ by $\frac{\tau_0}2+\ell$ gives the same expression up to a sign. We also note that the conformal blocks have the same symmetry, but it is not clear to us if this carries any meaning.

\subsection{Complete one-loop resummation}
In the previous section we found the CFT-data for all twists and spins. We can now insert it into the conformal block expansion \eqref{eq:BlockDecomposition} and reproduce the full one-loop correlation function.  After we do that we need to check the obtained function indeed satisfies the bootstrap equation \eqref{eq:crossing.Konishi}. We have performed this calculation explicitly and have found that the crossing relation for such obtained function implies one more constraint on the parameters of our ansatz, namely
\begin{equation}
\mu_0=-\nu_0.
\end{equation}
Implementing this constraint we end up with the function
\begin{align}\nonumber
&\mathcal G^{(1)}(u,v)\\&=\alpha_{11}\mathcal G_{11}(u,v)+\alpha_{10}\mathcal G_{10}(u,v)+(\alpha_{00}-2\,\zeta_2\,\alpha_{11})\mathcal G_{00}(u,v)+\nu_0\mathcal G_{\nu_0}(u,v)+\gamma_{\mathrm{ext}}\mathcal G_{\mathrm{ext}}(u,v),\label{eq:oneloop.linear.combination}
\end{align}
where the individual functions are given by
\begin{align}
\mathcal G_{11}(u,v)&=c\frac{u(1+u^2+v^2-2u-2v-2uv)}v\Phi(u,v),
\\
\mathcal G_{10}(u,v)&=c\frac{u\left((1+v-2u)\log u+(1+u-2v)\log v\right)}{v},
\\
\mathcal G_{00}(u,v)&=c\frac uv(1+u+v),
\\
\mathcal G_{\nu_0}(u,v)&=-c\frac{u(u+v+uv)}v\Phi(u,v),
\\
\mathcal G_{\mathrm{ext}}(u,v)&=\left({u^2}+\frac{u^2}{v^2}+c\frac{2u^2}v \right)\log u+\left(cu-\frac{u^2}{v^2}-c\frac uv-c\frac{u^2}v\right)\log v.
\end{align}
Here we introduced the usual box function \cite{Usyukina:1992jd}
\begin{equation} \label{eq:boxfunction}
\Phi(u,v)=\frac{\log \left(\frac{1-z}{1-\zb}\right) \log
   \left(z\, \zb\right)+2\left( \mathrm{Li}_2(z)- \mathrm{Li}_2(\zb)\right)}{z-\zb}.
\end{equation}
Notice that we may interpret the contribution $\mathcal{G}_{00}(u,v)$ in \eqref{eq:oneloop.linear.combination} as a one-loop renormalisation of the constant $c$. We also emphasise that the solution $\mathcal{G}_{\nu_0}(u,v)$, which produces truncated CFT-data for leading twist, does not belong to the family of truncated solutions found in \cite{Heemskerk2009} since it contributes to all spins for $\tau_0>2$.

Let us now come back to a general ansatz for the truncated solution with $L>0$. We can repeat all the calculations we performed in this section and we find that the solution is even more constrained than in the $L=0$ case. Working with the general ansatz we find that there is no new solution to the bootstrap equation for higher truncated spins. Namely, we find
\begin{equation}
\mu_\ell=0,\qquad \nu_\ell=0,\qquad \mathrm{for}\,\,\ell=2,4,\ldots,L.
\end{equation}
Notice that this agrees with the range $\ell>1$ of analyticity in spin from the inversion integral \cite{Caron-Huot2017}.
 
\subsection{Comparing with Konishi}
In the previous section we have found the most general one-loop four-point correlator of four identical scalars with classical dimension $\Delta_0=2$. In this section we will find the values for all the constants which selects the Konishi solution from the family \eqref{eq:oneloop.linear.combination}. The best case scenario would be to use the properties of conformal field theories to do that. One additional piece of information which we could use is the fact that the CFT-data for the stress tensor, which is present in the OPE of two Konishi operators, are known. It is, however, often difficult to access this information since the stress tensor is not the only operator with twist $\tau_0=2$ and spin $\ell=2$ present in the OPE of two Konishi operators. For that reason we are not able to fix the Konishi four-point correlator directly from conformal symmetry and  we will need to refer to some explicit results of direct perturbative calculations which can be found in the literature. 

In particular, we start by noticing that the Konishi operator is the only operator of twist $\tau_0=2$ and spin $\ell=0$ in the OPE of two Konishi operators. For that reason the average $\langle a^{(1)}_{2,0}\rangle=
a^{(1)}_{\mathcal{K}\mathcal{K}\mathcal{K}}:=
2\,c^{(0)}_{\mathcal{K}\mathcal{K}\mathcal{K}}c^{(1)}_{\mathcal{K}\mathcal{K}\mathcal{K}}$ is the one-loop structure constant of three Konishi operators and $\overline{\langle\gamma_{2,0}\rangle}=\gamma^{(1)}_{\mathcal{K}}$ is the one-loop anomalous dimension of Konishi operator. These can be extracted from the results in \cite{Bianchi:2001cm} and in the normalisation we use in this chapter they take the values
\begin{equation}\label{eq:Konishi.res}
\gamma_{\mathrm{ext}}=\overline{\langle\gamma_{2,0}\rangle}=3,\qquad \langle a^{(1)}_{2,0}\rangle=-18c.
\end{equation}
for $c=\frac2{3(N^2-1)}$.
Moreover, the averages of leading twist anomalous dimensions for all spins can be calculated using the results from \cite{Kotikov:2001sc}, given in \eqref{eq:averagesKonishi} in section~\ref{sec:NN4spectrum},
\begin{equation}\label{eq:gamma.an.res}
\overline{\langle \gamma_{2,\ell} \rangle}=2S_1(\ell),\qquad \mathrm{for} \quad\ell>0.
\end{equation}
In fact, the first two values of \eqref{eq:gamma.an.res}, together with \eqref{eq:Konishi.res}, are enough to fix all the constants and we get
\begin{equation}
\alpha_{11}=-\frac12,\quad\alpha_{10}=0,\quad \alpha_{00}=-6-\zeta_2,\quad \nu_0=3,\quad\gamma_{\mathrm{ext}}=3.
\end{equation}
Substituting this in \eqref{eq:oneloop.linear.combination} we find
\begin{align}\nonumber
\mathcal{G}^{(1)}_{\mathcal{K}\mathcal{K}\mathcal{K}\mathcal{K}}(u,v)&=-\frac{c\, u}{2v}\left(1+4u+4v+4uv+u^2+v^2\right)\Phi(u,v)-6\frac{c\, u}{v}\left(1+u+v\right)\\\label{eq:Konishi.final}
&\quad+\frac{3u}{v}\left(\frac{u}{v}+uv+2cu\right)\log u+\frac{3u}{v}\left(-\frac{u}{v}-c-cu+cv \right)\log v,
\end{align}
which exactly agrees with the result in \cite{Bianchi:2001cm}. We have therefore shown that the one-loop four-point correlation function of four Konishi operators belongs to our family of solutions, and we have found the explicit values of the constants describing this solution.

\section{The superconformal case}
\label{sec:super.case}
In this section we will focus on the four-point function of half-BPS operators. We  follow very closely the logic from the previous section and adapt it to the case of superconformal block expansion. Following the observations in section~\ref{sec:super.four.points}, the computations in this case are very similar and here we will only highlight the differences and the results. 

The most relevant difference compared to the Konishi case is that the conformal blocks  take a different form, we need to replace the ordinary blocks by superconformal blocks. From \eqref{eq:superconformal.decomposition} it boils down to the replacement
\begin{equation}
G_{\tau,\ell}(u,v)\to u^{-2}G_{\tau+4,\ell}(u,v).
\end{equation}
Importantly, the superconformal blocks are eigenvectors of the shifted quadratic Casimir operator of the superconformal group
\begin{equation}
\mathcal C_S( u^{-2}G_{\tau_0+4,\ell}(u,v))=\mathcal J_{\tau_0}^2u^{-2}G_{\tau_0+4,\ell}(u,v).
\end{equation}
Here we have defined
\begin{equation}\label{eq:supercasimir}
\mathcal C_S=u^{-2}\mathcal C u^2+\frac{\tau_0(\tau_0-6)}4-\frac{(\tau_0+4)(\tau_0-2)}4=u^{-2}\mathcal C u^2-2\tau_0+2,
\end{equation}
so that the eigenvalue is 
\begin{equation}
\mathcal J_{\tau_0}^2=J^2_{\tau_0+4}=\left(\frac{\tau_0}2+\ell+1\right)\left(\frac{\tau_0}2+\ell+2\right).
\end{equation}
Led by these observations we define H-functions in the supersymmetric case to be
\begin{equation}
\boldsymbol{H}_{\tau_0}^{(m,\log^n)}(u,v)=\sum_{\ell=0}\langle A^{(0)}_{\tau_0,\ell}\rangle\frac{(\log \mathcal J_{\tau_0})^n}{\mathcal J_{\tau_0}^{2m}}u^{-2}G_{\tau_0+4,\ell}(u,v),
\end{equation}
where $\langle A^{(0)}_{\tau_0,\ell}\rangle$ are the structure constants \eqref{eq:structure.constants.Free}.
The H-functions satisfy again a recursion relation
\begin{equation}\label{eq:recursion.super.H}
\boldsymbol{H}_{\tau_0}^{(m,\log^n)}(u,v)=\mathcal C_S\boldsymbol{H}_{\tau_0}^{(m+1,\log^n)}(u,v).
\end{equation}
Following similar arguments to the ones presented in section~\ref{sec:twist.conformal.blocks} one can prove that the power divergent part of H-functions factorises 
\begin{equation}
\boldsymbol{H}_{\tau_0}^{(m,\log^n)}(z,\zb)\doteq\frac{z^{-1}}{\zb-z}k_{\frac{\tau_0}{2}+1}(z)\overline{\boldsymbol{H}}_{\tau_0}^{(m,\log^n)}(\zb),
\end{equation}
where we have again defined H-function depending only on $\zb$ as
\begin{equation}
\overline{\boldsymbol{H}}_{\tau_0}^{(m,\log^n)}(\zb)=\zb^{-1}\sum_{\ell=0}\langle A^{(0)}_{\tau_0,\ell}\rangle\frac{\log^n \mathcal J_{\tau_0}}{\mathcal J_{\tau_0}^{2m}}\,
k_{\frac{\tau_0}{2}+\ell+2}(\zb).
\end{equation}
Also, the action of Casimir operator \eqref{eq:supercasimir} simplifies when acting on the power divergent part
\begin{equation}
\mathcal C_S\boldsymbol{H}_{\tau_0}^{(m,\log^n)}(z,\zb)\doteq\frac{z^{-1}}{\zb-z}k_{\frac{\tau_0}{2}+1}(z)\overline{\mathcal D}_S\overline{\boldsymbol{H}}_{\tau_0}^{(m,\log^n)}(\zb),
\end{equation}
where we defined
\begin{equation}
\overline{\mathcal D}_S=\zb^{-2}\,\overline{\mathcal D}\, \zb^2=-\zb+(2-3\zb)\zb\partial_\zb+(1-\zb)\zb^2\partial_\zb^2\,.
\end{equation}
Finally, the H-functions $\overline{\boldsymbol{H}}^{(m,\log^n)}(\zb)$ satisfy the following recursion relation
\begin{equation}\label{eq:recursion.super.Hbar}
\overline{\boldsymbol{H}}_{\tau_0}^{(m,\log^n)}(\zb)= \overline{\mathcal D}_S\overline{\boldsymbol{H}}_{\tau_0}^{(m+1,\log^n)}(\zb).
\end{equation}

In the following, we will compute the one-loop perturbative correction to the function $\mathcal H(u,v)$, in exactly the same way as we did in the ordinary, non-superconformal case. In particular, in analogy with \eqref{eq:oneloop.in.Hfunctions} its power divergent part can be expanded using H-functions as
\begin{align}
\mathcal H(z,\zb)\doteq \sum_{\tau_0}\frac{ z^{-1}}{\zb-z}\sum_\rho\left( A_{\tau_0,\rho}\,k_{\frac{\tau_0}{2}+1}(z)+B_{\tau_0,\rho}\,\left(\frac\partial{\partial\tau} k_{\frac{\tau}{2}+1}(z)\right)\big|_{\tau\to\tau_0}\right) \overline {\boldsymbol{H}}_{\tau_0}^{\,\rho}(\zb).\label{eq:oneloop.in.Hfunctions.super}
\end{align}
Here, $A_{\tau_0,\rho}$ and $B_{\tau_0,\rho}$ are large-$\mathcal{J}$ expansion coefficients of the modified structure constants and the anomalous dimensions, respectively. Again, in order to extract the CFT-data, we will need only an explicit form of the power divergent part of the H-functions for $\rho=(m,\log^n)$ with $m\leqslant0$ and $n=0,1$. All these functions can be easily obtained from $\overline{\boldsymbol{H}}_{2}^{(0)}(\zb)$ and $\overline{\boldsymbol{H}}_{2}^{(0,\log)}(\zb)$ using the recursion relation \eqref{eq:recursion.super.Hbar} and
\begin{equation}
\overline{\boldsymbol{H}}_{\tau_0}^{(m,\log^n)}(\zb)\doteq \frac{\Gamma(\frac{\tau_0}{2}+1)^2}{\Gamma(\tau_0+1)}\frac1{\tilde c}\left(\left(
\tilde c(-1)^{\frac{\tau_0}2}-\tfrac{\tau_0}{2}(\tfrac{\tau_0}{2}+1)\right)\overline{\boldsymbol{H}}_{2}^{(m,\log^n)}(\zb)+
\overline{\boldsymbol{H}}_{2}^{(m-1,\log^n)}(\zb)
\right).
\end{equation}

In the superconformal case, we have not been able to compute the exact form of the complete $\boldsymbol{H}_{2}^{(0)}(u,v)$, in contrast to the conformal case. Therefore, in principle, both $\overline{\boldsymbol{H}}_{2}^{(0)}(\zb)$ and $\overline{\boldsymbol{H}}_{2}^{(0,\log)}(\zb)$ could contain enhanced divergent terms proportional to $\log^2(1-\zb)$. It turns out that this is not the case\footnote{The fact that we can take $\overline{\boldsymbol{H}}_{2}^{(0)}(\zb)$ free from powers of logarithms can be seen by explicitly computing the power divergent terms of $\overline{\boldsymbol{H}}_{2}^{(m)}(\zb)$ for some $m<0$ using the kernel method, and see that they can be obtained by acting $m$ times with $\overline{\mathcal D}_S$ on $\overline{\boldsymbol{H}}_{2}^{(0)}(\zb)$.} and we end up with expressions analogous to the conformal case
\begin{align}
\overline{\boldsymbol{H}}_{2}^{(0)}(\zb)&=\frac{\tilde{c}}{1-\zb},\\\label{eq:H0log.super}
\overline{\boldsymbol{H}}_{2}^{(0,\log)}(\zb)&=-\frac{\log(1-\zb)}{2(1-\zb)}\tilde c-\frac{\gamma_E}{1-\zb}\tilde c+
\left(
-\frac{1}{12}-\frac{1-\zb}{15}+\ldots
\right)\tilde c\log^2(1-\zb).
\end{align}
More terms in the expansion of $\overline{\boldsymbol{H}}_{2}^{(0,\log)}(\zb)$ can be found in appendix~\ref{app:H0log}.

Equipped with the supersymmetric H-functions we are now ready to find the form of one-loop correction to the function $\mathcal{H}(u,v)$. Following a similar discussion as in section~\ref{sec:ansatz}, we start by observing that again all power divergent contributions to $\mathcal{H}(u,v)$ are completely captured by the twist-two operators. These terms come either from an infinite towers of twist-two operators or from solution truncated in spin. The general ansatz for leading-$u$ contribution of $\mathcal{H}(u,v)$ is therefore
\begin{align}\label{eq:Hansatz}
\mathcal{H}^{(1)}_{\mathrm{L.T.}}(u,v)&=\frac{u}{v} \left(\beta_{11}\log u \log v+\beta_{10}\log u+\beta_{01}\log v+\beta_{00}\right)\tilde c+\ldots\\&+\sum_{\ell=0}^L\langle A^{(0)}_{2,\ell}\rangle u^{-2}\left( \kappa_\ell \,G_{6,\ell}(u,v)+\lambda_\ell \left(\partial_{\tau} G_{\tau+4,\ell}(u,v)\right)\big|_{\tau\to2}  \right).
\end{align} 
for some $L$. The bootstrap equation \eqref{eq:superbootstrap} immediately implies that
\begin{equation}
\beta_{10}=\beta_{01}.
\end{equation}
Moreover, by direct application of the method described in the previous section, one can check that the truncated solutions cannot be completed to a crossing symmetric function. It implies that
\begin{equation}
\kappa_\ell=0,\qquad \lambda_\ell=0, \qquad\mathrm{for}\quad\ell=0,2,4,\ldots,L .
\end{equation} 
This stays in contrast to the ordinary conformal case where the spin-zero truncated solution was allowed.

We now use the H-function method explained in section \ref{sec:method} to complete the power divergent part of \eqref{eq:Hansatz} to a full leading-$u$ answer. In particular, the H-function method allows us to find the CFT-data for twist-two operators
\begin{align}\label{eq:BPS.gamma}
\overline{\langle \gamma_{2,\ell}\rangle}&=-4\beta_{11}S_{1}(\ell+2)+2\beta_{10},\\\label{eq:BPS.aa}
\overline{\langle \hat{\alpha}_{2,\ell}\rangle}&=-2\beta_{10}S_{1}(\ell+2)+\beta_{00}.
\end{align}
We could in principle continue as in the previous section and find a general solution as a function of three constants $(\beta_{11},\beta_{10},\beta_{00})$. Instead we will focus purely on the case of four half-BPS operators for which we can use additional information about the CFT-data found in the literature. In particular, it is known that the twist-two operators are not degenerate and the anomalous dimensions $\gamma_{2,\ell}$ have been found by direct calculations in e.g.\ \cite{Plefka:2012rd}
\begin{equation}\label{eq:BPS.an.res}
\gamma_{2,\ell}=2S_1(\ell+2),\qquad \ell=0,2,4,\ldots.
\end{equation} 
Additionally, the structure constants for two half-BPS operators and twist-two operators can also be found in \cite{Plefka:2012rd} and for $\ell=0$ it is
\begin{equation}\label{eq:BPS.str.cons.res}
a^{(1)}_{2,0}=-\tilde c.
\end{equation} 
Using the first two values in \eqref{eq:BPS.an.res} together with \eqref{eq:BPS.str.cons.res} we can fix our constants to\footnote{Notice that these values could also be found by considering \eqref{eq:BPS.gamma} and \eqref{eq:BPS.aa} for $\ell=-2$. This should correspond to a BPS current in the symmetric traceless representation of R-symmetry which implies $\gamma_{2,-2}=0$ and $a^{(1)}_{2,-2}=0$. It leads to $\beta_{10}=0$ and $\beta_{00}=2\, \zeta_2\,\beta_{11}$. The remaining constant can be reabsorbed into the definition of the coupling constant, leading to the result \eqref{eq:BPS.four.point}.}
 \begin{equation}\label{eq:BPS.four.point}
 \beta_{11}=-\frac{1}{2},\qquad \beta_{10}=0,\qquad\beta_{00}=-\zeta_2 .
 \end{equation}
Then the leading-$u$ result takes the form
\begin{equation}
\mathcal{H}_{\mathrm{L.T.}}(u,v)=-\tilde c\,\frac{ z \left(2\,\text{Li}_2(\zb)+\left(\log \left(z\right)+\log \left(\zb\right)\right)\log \left(1-\zb\right) \right)}{2 \left(1-\zb\right)}.
\end{equation}

Now we can use the bootstrap equation \eqref{eq:superbootstrap} to find the complete power divergent part of the function $\mathcal{H}(u,v)$. Subsequently, we use the H-function method to find the CFT-data for all twists which we collect in appendix \ref{app:superconformal.results}. Plugging it back to the superconformal block decomposition we can find the complete one-loop correlator which takes the form 
\begin{equation}\label{eq:BPS.final}
\mathcal{H}(u,v)=-\frac{\tilde c\,u}{2\,v}\Phi(u,v).
\end{equation}
This agrees with the known one-loop result for the four-point correlation function of four half-BPS operators in $\mathcal{N}=4$ SYM found in \cite{Arutyunov:2001mh}.

\section{Conclusions and outlook}
In this chapter we found a family of solutions to the conformal bootstrap equation relevant for the one-loop perturbation of four-dimensional conformal gauge theories. We employed twist conformal blocks which allow a systematic expansion around the double lightcone limit, namely $u=0$, $v=0$. Starting from the most general leading expansion \eqref{eq:G.leading.twist} we were able to complete it to a full crossing symmetric function of the cross-ratios. For four-point correlator of scalar operators with dimension $\Delta=2+g \,\gamma_{\mathrm{ext}}+O(g^2)$ we found a four-parameter family of solutions. By supplementing this by a few additional pieces of CFT-data for the leading-twist spectrum of the theory, we extracted the known form of one-loop correlator of four Konishi operators. Repeating this analysis for half-BPS operators $\mathcal{O}_{\boldsymbol{20'}}$ in $\mathcal{N}=4$ SYM and employing the superconformal block expansion we have also found an explicit form of the one-loop correlator of four such operators. 

There are many directions one could pursue using the method we described in this chapter. First of all, the four-point correlator of Konishi operators is only one representative of the family of solutions we found. A natural question is whether we can identify how other scalar correlators fit into our solution. Secondly, it should be possible to generalise our construction and apply it to correlation functions of operators with higher classical dimension. This would allow to find a large class of one-loop correlation functions in conformal gauge theories. Furthermore, there should be no conceptual obstruction to generalise it to mixed correlators.

The H-function technology can be in principle applied also to higher orders in the perturbation theory. Also in this case, the CFT-data can be expanded around the infinite spin and one can extract expansion coefficients for infinite towers of operators by focusing on the enhanced divergent part of the four-point function. In contrast with the one-loop case, where the complete enhanced divergent part was captured by power divergent terms, at higher orders it is possible to get other types of enhanced divergences. For example, at two loops there can be terms proportional to $\log^2 v$ which were prohibited by the conformal block expansion and bootstrap equation at one loop, see section \ref{sec:ansatz}. By examining an explicit form of conformal blocks and using the bootstrap equation it is easy to see that all such contributions come from $\langle (\gamma_{\tau_0,\ell}^{(1)})^2\rangle$. They are therefore determined by the one-loop CFT-data. Unfortunately, we are unable to access this information from our previous discussion since there is a degeneracy in the spectrum. It implies that, in general, $\langle (\gamma_{\tau_0,\ell}^{(1)})^2\rangle\neq \langle (\gamma_{\tau_0,\ell}^{(1)})\rangle^2$ and therefore we cannot use the one-loop averages we have calculated to determine the enhanced divergent part of the two-loop answer. In order to find it we would need to solve the mixing problem at one loop completely. This has been successfully done for the large-$N$ expansion of the correlators of four half-BPS operators in \cite{AldayBissi2017,Aprile:2017bgs,Aprile:2017xsp}. There, it has been possible to solve the mixing problem by using the knowledge of an infinite family of one-loop four-point correlators $\langle \mathcal{O}_p(x_1)\mathcal{O}_p(x_2)\mathcal{O}_q(x_3)\mathcal{O}_q(x_4)\rangle$, for $p,q\geqslant 2$, where $\mathcal{O}_p(x)$ is an $\mathcal{N}=4$ SYM half-BPS operator with R-symmetry labels $[0,p,0]$. Similar analysis should be possible also at weak coupling. In particular, it would allow us to find the two-loop correlation function of four Konishi operators, which is not known at the moment. We postpone it to  future work.

\chapter{More applications}\label{ch:more}

\section[The $\OO N$ model at large $N$]{The $\boldsymbol{\OO N}$ model at large $\boldsymbol N$}
\label{sec:paper4short}

In this section, which is a summary of \cite{Paper4}, we show how large spin perturbation theory can be used to study the critical $\OO N$ model in the large $N$ expansion. Contrary to chapters~\ref{ch:paper2} and \ref{ch:paper1}, which followed the presentation of the respective publications, we will here make direct reference to the formalism laid out in chapters~\ref{ch:analyticstudy} and \ref{ch:practical}.

In section~\ref{sec:ONspectrum} we gave an overview of the operator content of the critical $\OO N$ model. We will consider the four-point function of $\varphi^i$ transforming in the vector representation $V$, which means that operators in the OPE transform in irreps in the tensor product $V\otimes V=S\oplus T\oplus A$. Using tensor structures
\beq{
\mathsf T^{ijkl}_S=\delta^{ij}\delta^{kl}, \qquad \mathsf T^{ijkl}_T=\frac{\delta^{ik}\delta^{jl}+\delta^{il}\delta^{jk}}2-\frac1N \delta^{ij}\delta^{kl}, \qquad \mathsf T^{ijkl}_A=\frac{\delta^{ik}\delta^{jl}-\delta^{il}\delta^{jk}}2,
}
the crossing matrix takes the form
\beq{\label{eq:crossingmatrixON}
{\renewcommand*{\arraystretch}{2.15}
M^{\OO N}=\begin{pmatrix}
\dfrac1N &\dfrac{ (N+2)(N-1)}{2N^2}  & \dfrac{1-N}{2N}\\
1&\dfrac{N-2}{2N}&\dfrac12\\
-1&\dfrac{N+2}{2N}&\dfrac12
\end{pmatrix}
}.}
Since we work in an expansion in $1/N$, the crossing matrix itself will affect the order at which crossed-channel operators appear with a non-zero double-discontinuity. To make this more clear, we write
\beqa{
&\G_S(u,v)=\parr{\frac uv}^{\Delta_\varphi}\!\!\parr{
\frac{\G_S(v,u) }N\! +\! \frac{\G_T(v,u)-\G_A(v,u)}2\!+\! \frac{\G_T(v,u)+\G_A(v,u)}{2N} -\frac{\G_T(v,u)}{N^2}
}\!,
\\
&\G_{T/A}(u,v)=\parr{\frac uv}^{\Delta_\varphi}\!\!\parr{
\pm\G_S(v,u)+\frac{\G_T(v,u)+\G_A(v,u)}2\mp\frac1N\G_T(v,u)\label{eq:crossingTA}
}\!,
}
where the upper sign refers to $T$ and the lower to $A$.

The first operator to contribute is the identity, which gives GFF OPE coefficients \eqref{eq:aGFF}
\beq{
a_{T,n,\ell}=-a_{A,n,\ell}=Na_{S,n,\ell}=
a^{\mathrm{GFF}}_{n,\ell}|_{\Delta_\varphi}.
}
As discussed immediately after proposition~\ref{prop:doubletwistsexist} in section~\ref{sec:directchannelstructure}, from the explicit expansion of these OPE coefficients with $\Delta_\varphi=\mu-1+\gamma^{(1)}_\varphi/N+\ldots$, it follows that the OPE coefficients for $n\geqslant1$ are suppressed by an additional order $1/N$. To the order we consider, they will not contribute to the double-discontinuity and therefore decouple from the problem. In the following, we concentrate on the operators at $n=0$, which are the weakly broken currents $\mathcal J_{R,\ell}$.

The next crossed-channel operator to generate a double-discontinuity is the auxiliary field $\sigma$, where we assume that
\beq{
\Delta_\sigma=2+O(N^{-1}), \qquad a_\sigma=c_{\varphi\varphi\sigma}^2=\frac{a_\sigma^{(0)}}N+O(N^{-2}).
}
Because of the particular value of the scaling dimension, the contribution from $\sigma$ can be computed using inversion~\ref{inv:scalarN}. In the $T$ and $A$ representations, $\1$ and $\sigma$ are the only operators contributing to order $1/N$, and the CFT-data to this order can therefore be extracted. From \eqref{eq:crossingTA} we see that between $T$ and $A$, the function $\mathbf U_R(\log z,\hb)$ only differs by a sign, which has the important consequence that the leading anomalous dimensions agree,
\beq{
\gamma_{T,\ell}=\gamma_{A,\ell}=-\frac{2(\mu-2)^2}{J^2}\frac{a^{(0)}_{\sigma}}{N}+O(N^{-2}),
}
where $J^2=(\mu-1+\ell)(\mu-2+\ell)$ to this order. 

So far, we have introduced two free parameters in the problem: the OPE coefficient $a^{(0)}_\sigma$ and the anomalous dimension $\gamma_\varphi^{(1)}$. We can now use conservation of the global symmetry current, $\gamma_{A,1}=-2\gamma_\varphi$, to deduce an equation relating these parameters,
\beq{
\label{eq:asigma0res}
a^{(0)}_{\sigma}=\frac{\mu(\mu-1)}{(\mu-2)^2}\gamma^{(1)}_\varphi.
}

Next we consider corrections to the currents in the $S$ representation. Here the contribution from $\sigma$ appears at the same order as the first contribution from the weakly broken currents in the $T$ and $A$ representations. Recall that by proposition~\ref{prop:firstappearance}, the contribution from these currents is proportional to the square of their anomalous dimensions. This means that first we need to first compute an infinite sum of the form
\beqa{\label{eq:ITAsum}
I_{T-A}=\left(\frac{\zb}{1-\zb}\right)^{\Delta_\varphi}\frac12\Bigg(
&\sum_{\ell=0,2,\ldots}\!\!\!a_{T,0,\ell} \gamma_{T,\ell}^2 G^{(d)}_{2\mu-2+\ell,\ell}(1-\zb,1-z)\\&-\!\!\!
\sum_{\ell=1,3,\ldots}\!\!\!a_{A,0,\ell} \gamma_{A,\ell}^2 G^{(d)}_{2\mu-2+\ell,\ell}(1-\zb,1-z)
\Bigg)\frac{\log^2(1-\zb)}{8} + O(N^{-3}).\nonumber
}
This sum is computed by invoking the twist conformal blocks that we introduced in section~\ref{sec:familiesofoperators}. More precisely, $I_{T-A}$ corresponds to the double-lightcone limit of the difference of the level $2$ twist conformal blocks in the $T$ and $A$ representations. It can therefore be found by solving the differential equation
\beq{\label{eq:H2caseq}
{\cas}^2\mathcal H^{(2)}_{T-A}(u,v)=\mathcal H^{(0)}_{T-A}(u,v)=\parr{\frac uv}^{\mu-1},
}
where we used the explicit form of the $T-A$ tree-level correlator. Since the sum \eqref{eq:ITAsum} is evaluated at the unitarity bound, equation \eqref{eq:H2caseq} can be supplemented by the equation $\dsat \mathcal H^{(m)}=0$ for $\dsat$ in \eqref{eq:boundspinning}. The combined system of differential equations was solved in the appropriate limit in \cite{Paper4}. A similar computation can be done for $\mathcal H^{(2)}_{T+A}(u,v)$, and we reproduce the result in the following inversion.
\begin{invtool}\label{inv:fromcurrents} 
Consider the contribution from an infinite sum over $\ell$ of
broken currents $\mathcal J_\ell$ with anomalous dimensions
$\gamma_\ell:=\tau_\ell-2\Delta_\phi=\frac \kappa{J^2}$ and OPE coefficients $\alpha a^{\mathrm{GFF}}_{0,\ell}|_{\mu-1}$. In the $\phi$ four-point function, the leading contribution takes the form
\beq{
\mathbf U(\log z,\hb)=
-\frac{1}{2(\mu-2)^2J^2}\left[\pm1+(\mu-2)\pi\csc(\pi\mu)\right]\log
z+E_\pm,
}
where we use the $+$ ($-$) sign if the broken currents have even (odd) spin, and where $E_\pm$ are some rather lengthy expressions\footnote{More precisely,
\beqa{
E_\pm=\frac{\alpha \kappa^2}{4}\Bigg[&\frac{2\pi \csc(\pi\mu)\left((\mu-2)(\SS[\mu-1](\hb)+\pi\cot(\pi\mu))-1\right)\pm 2\left(\frac{3-\mu}{\mu-2}-S_1(\mu-2)\right)}{J^2(\mu-2)^2}\nonumber\\
&\pm\frac{\mu(\mu-1)B(\hb,\mu)+2(S_2(\mu-2)-\zeta_2)+\frac{2(\mu-3)}{(\mu-2)^2}}{(J^2-(\mu-1)(\mu-2))(\mu-2)}-\frac{2\pi\csc(\pi\mu)}{J^4(\mu-2)}
\Bigg],
\nonumber
}
where
\beq{\nonumber
B(\hb,\mu)=\frac{{\displaystyle {_4F_3}\!\left(\!\left. {1,1,2,\mu+1}~\atop~{\!\!3,3-\hb,\hb+2} \right|1\right)}}{J^2(J^2-2)}
-\frac{2 \pi  \Gamma (\hb) \Gamma (\mu +\hb-1)}{J^2 \Gamma (\mu +1) \sin (\pi  \hb)  \Gamma (2 \hb)} {_3F_2}\!\left(\!\left. {\hb-1,\hb,\hb+\mu -1}~\atop~{\!\!2 \hb,\hb+1} \right|1\right).}}.
\end{invtool}\vspace{2.5ex}

\noindent This inversion is used to find the anomalous dimensions in the $S$ representation,
\beq{
\label{eq:gammaSres}
\gamma_{S,\ell}^{(1)} = -\frac{2\gamma_{\varphi}^{(1)}}{J^2}\left(   (\mu -1) \mu +  \gamma_{\varphi}^{(1)} \frac{\pi  \csc (\pi  \mu ) \Gamma (\mu +1)^2 \Gamma (\ell+1)}{(\mu -2) \Gamma (\ell+2 \mu -3)}   \right),
}
where we used the relation \eqref{eq:asigma0res} to eliminate $a_\sigma^{(0)}$. Conservation of the stress tensor, $\gamma_{S,2}=-2\gamma_\varphi$ now gives a quadratic equation for $\gamma^{(1)}_\varphi$, which has the solutions
\beq{\label{eq:findingON}
\gamma^{(1)}_{\varphi,\mathrm{free} }=0,\qquad \gamma^{(1)}_{\varphi,\OO N }= \frac{(\mu-2)\Gamma(2\mu-1)}{\Gamma(\mu+1)\Gamma(\mu)^2\Gamma(1-\mu)}.
}
The second of these solutions exactly agrees with the result \eqref{eq:gammaphival} in the critical $\OO N$ model. Notice that this equation fixes all parameters that enter the problem at order $1/N$, and has followed simply from analytic bootstrap considerations\footnote{This was shown already in \cite{Alday2016b}, which focussed completely on computing anomalous dimensions.}.

Finally, the anomalous dimension of $\sigma$ can be extracted from our results by assuming the shadow relation
\beq{\label{eq:shadowrelation}
\Delta_\sigma=d-\Delta_{S,\ell=0}.
}
Pluggin in the anomalous dimensions \eqref{eq:gammaSres}, evaluated at spin zero, reproduces exactly the literature value \eqref{eq:Deltasigma} for $\Delta_\sigma$. Although \eqref{eq:shadowrelation} clearly holds at tree-level, it is not obvious why this should survive in perturbation theory. Perhaps this question can be addressed using the framework of \cite{Giombi:2018vtc}, which treated deformations like the Hubbard--Stratonovich transform in a unified way, reproducing for instance the relation \eqref{eq:asigma0res} for $a_\sigma$ from general considerations.

In \cite{Paper4} one further step was worked out, namely the contribution to the CFT-data in the $T$ and $A$ representations at order $1/N^2$. Now we have the same terms as described above, including subleading corrections to the contribution from $\sigma$, but also a new contribution from the operators $[\sigma,\sigma]_{n,\ell}$. This involves a rather complicated sum, and unfortunately the contributions to $U^{(0)}_{T/A,\hb}$ was only determined numerically, giving a numerical prediction for the current central charge at order $1/N^2$. All scaling dimensions computed in \cite{Paper4} agree perfectly with the literature \cite{Derkachov:1997ch,Manashov:2017xtt}, and many of the OPE coefficients were new results\footnote{More precisely, the only results for OPE coefficients in the literature prior to \cite{Paper4} were the order $1/N$ corrections to the central charges \cite{Petkou:1994ad} ($C_J$ in 3d already in \cite{Cha1991}), the order $1/N$ OPE coefficients in the $T$ and $A$ irreps \cite{Dey:2016mcs}, and the correction $a^{(1)}_\sigma$ to the $\sigma$ OPE coefficient \cite{Lang:1993ct}.}.

\section[\texorpdfstring{$\phi^4$}{phi**4} theories with any global symmetry]{$\boldsymbol{\phi^4}$ theories with any global symmetry}\label{sec:paper5short}

In \cite{Paper5} it was realised that the considerations for the critical $\OO N$ model, both in section~\ref{sec:ONwithinPaper2} for the $\epsilon$ expansion and in section~\ref{sec:paper4short} for the large $N$ expansion, will generalise to $\phi^4$ theories in any global symmetry group. There is a variety of interesting such models, some of which correspond to critical phenomena in three dimensions \cite{Pelissetto:2000ek,Osborn:2017ucf}, and relevant symmetry groups are for instance (hyper)cubic models \cite{Stergiou:2018gjj} and product groups such as $\OO m\times\OO n/\Z_2$ \cite{Paper5} and $\OO m^n\rtimes \mathrm S_n$ \footnote{$\mathrm S_n$ denotes the permutation group which acts by permuting the $\OO m$ factors.}\cite{Stergiou:2019dcv}. 

The existence of an expansion corresponding to the large $N$ limit of $\OO N$ depends on the group in question, and can be determined by studying the scaling dimensions of the bilinear scalars $\phi^2_R$ in the various representations. If a large $N$ expansion exists for a given fixed-point, the scaling dimensions of the bilinear scalars will approach either $\Delta=2+O(N^{-1})$ or $\Delta=2\Delta_\phi+O(N^{-1})$. The first case signals that the operator should be promoted to an auxiliary field $\mathcal R$. We have seen that this happens in the singlet representation in the critical $\OO N$ model, where $\mathcal S=\sigma$. The second case happens in the $T$ representation, where we have that $\Delta_{\varphi^2_T}=2-\epsilon+O(N^{-1})$ using \eqref{eq:ONgslitt}. If the bilinear scalars approach any other value than these two, the fixed-point does not admit a large $N$ expansion described by a Hubbard--Stratonovich field. It is therefore recommended to consider a specific model first in the $\epsilon$ expansion, even if one is primarily interested in the behaviour at large $N$ in e.g.\ three dimensions.

\subsection[General solution in the \texorpdfstring{$d=4-\veps$}{d=4-epsilon} expansion]{General solution in the $\boldsymbol{d=4-\veps}$ expansion}\label{sec:anyglobalsymmetryeps}
Consider first the contribution from the identity operator, appearing in
the singlet representation. This will give rise to the leading
contribution to $U^{(0)}_{R,\hb}$ in all representations. Since this is the
only operator contributing until order $\veps^2$, we get, using
inversion~\ref{inv:identity},
\begin{equation}
  \mathbf U_{R}(\log z,\hb)=M_{RS}  \AA[\Delta_\phi](\hb)+O(\veps^2),
\end{equation}
where $\Delta_\phi=1-\veps/2+\gamma_\phi$ and $\gamma_\phi=O(\veps^2)$.
From this expression, the leading order OPE coefficients can be extracted:
\begin{equation}
c^2_{\phi\phi\mathcal
J_{R,\ell}}=\frac{2  \Gamma(\ell+1)^2}{\Gamma(2\ell+1)}M_{RS}+O(\veps).
\end{equation}
Here $\ell$ takes even (odd) values for $R$ being an even (odd) representation. The scalar bilinears $\phi^2_R$ in the even representations have OPE coefficients
\begin{equation}
  c^2_{\phi\phi\phi^2_{R}}=2M_{RS}+O(\veps).
\end{equation}
These scalars are the next operators to contribute to the
double-discontinuity. Assume that they have dimension
$\Delta_{\phi^2_{R}}=2\Delta_\phi+g_{R}  \veps+O(\veps^2)$. Then,
using inversion~\ref{inv:scalarEps} we get the order $\veps^2$ corrections
\beq{
\mathbf U_{R}(\log z,\hb)=M_{RS}  \AA[\Delta_\phi](\hb)-M_{RS}  \Gamma^{\{2\}}_R\parr{\frac{1}{J^4}+\frac{\log z}{J^2}}\veps^2+O(\veps^3),
}
where
\begin{equation}
  \Gamma_R^{\{2\}}=\frac{1}{M_{RS}}\sum_{\rp \text{ even}}M_{R\rp }  
  g_{\rp }^2   M_{\rp S}.
\end{equation}
Using \eqref{eq:aellfromU} we can thus write down the leading correction to the anomalous dimension,
\begin{equation}
\Delta_{R,\ell}=2\Delta_\phi+\ell+\gamma_{R}(\hb), \quad
\gamma_{R}(\hb)=-\frac{\Gamma^{\{2\}}_R\veps^2}{J^2},
\end{equation}
where $J^2=\hb(\hb-1)$ and $\hb=\Delta_\phi+\ell$.

Next, as observed in chapter~\ref{ch:paper2} for the Wilson--Fisher fixed-point, we assume that it is possible to
analytically continue the result $\gamma_R(\hb)$ to spin zero, by making
the change of variables $\hb\to \hb_{\mathrm f}=\frac{\Delta+\ell}2$, i.e.\
by replacing the \emph{bare} with the \emph{full} conformal spin. For spin zero we should evaluate this at
$\hb_\mathrm{f}=\Delta_{\phi^2_{R}}/2=1-\veps/2+g_{R}  \veps/2+O(\veps^2)$. This leads to a system of equations
\begin{equation}\label{eq:matchingEps}
g_{R} \overset!=\left.\gamma_{R}(\hb)\right|_{\hb=\Delta_\phi+\frac{g_{R}}2
}, \qquad \text{$R$ even,}
\end{equation}
at order $\veps$, where now one power of $\veps$ in the $\gamma_{R}(\hb)$ cancels against the factor $\hb_{\mathrm f}-1=(g_{R}-1)\veps/2$ in the denominator. This simplifies to
\begin{equation}\label{eq:eqsystScalars}
M_{RS}\, g_{R}(g_{R}-1)+2\sum_{\rp \text{ even}}M_{R\rp }M_{\rp S}\,
g_{\rp }^2=0, \qquad \text{$R$ even,}
\end{equation}
which is a system of $k$ quadratic equations for the $k$ constants $g_R$, where $k$ is the number of even representations, or equivalently the number of scalar bilinears. Solving \eqref{eq:eqsystScalars} gives all possible fixed-points in the $\veps$ expansion with the given symmetry.

As an example, we go back to the $\OO N$ model with crossing matrix \eqref{eq:crossingmatrixON}. The even representations are $S$ and $T$, and the bilinear scalars are $\varphi^2_S=\varphi^i\varphi^i$ and $\varphi^2_T=\varphi^{\{i}\phi^{j\}}$. There are two solutions to \eqref{eq:eqsystScalars}, $g_S=g_T=0$, which is the theory of $N$ free fields, and
\begin{equation}\label{eq:ONscalardims}
  g_S=\frac{N+2}{N+8}, \qquad g_T=\frac{2}{N+8}.
\end{equation}
which is exactly the values for the critical $\OO N$ model \cite{Paper3} quoted in \eqref{eq:ONgslitt}.

The singlet spin-two current in any global symmetry group is the
stress tensor with dimension $\Delta_{S,2}=4-\veps$. This gives the constraint
\begin{equation}
  \gamma^{(2)}_\phi=\tfrac{1}{12}\Gamma^{\{2\}}_S,
\end{equation}
where $\Delta_\phi=1-\veps/2+\gamma^{(2)}_\phi\veps^2+O(\veps^3)$. Using this we write down the full dimension of the broken currents to order $\veps^2$
\begin{equation}
  \Delta_{R,\ell}=2-\veps+\ell+2\gamma_\phi^{(2)}\veps^2-\frac{\Gamma^{\{2\}}_R\veps^2}{\ell(\ell+1)},
\end{equation}
determined completely by the solutions to \eqref{eq:eqsystScalars}.
The OPE coefficients are extracted using \eqref{eq:aellfromU} and \eqref{eq:aArel},
\begin{equation}
a_{R,\ell}=M_{RS}a^{\mathrm{GFF}}_{0,\ell}|_{\Delta_\phi}+M_{RS}\frac{2\Gamma(\ell+1)^2}{\Gamma(2\ell+1)}\frac{\Gamma_R^{\{2\}}\veps^2}{\ell(\ell+1)}\left(
S_1(2\ell)-S_1(\ell)+\frac{1}{\ell+1}
\right)+O(\veps^3),
\end{equation}
where we evaluate the GFF OPE coefficients \eqref{eq:aGFF} at $\Delta_\phi=1-\veps/2+\gamma^{(2)}_\phi\veps^2$ and expand to order $\veps^2$.

From the $\ell=2$ OPE coefficient in the singlet representation we extract the central charge correction using \eqref{eq:centralchargesinglet},
\begin{equation}
\frac{C_T}{C_{T,\mathrm{free}}}=1-\frac{5\gamma_\phi^{(2)}}{3}\veps^2+O(\veps^3)=1-\frac{5  \Gamma_S^{\{2\}}}{36}\veps^2+O(\veps^3),
\end{equation}
which is consistent with~(E.1) of \cite{Osborn:2017ucf}.  We emphasise
that the considerations here are valid with any global symmetry group. The
input needed to specialise to a given symmetry group is the crossing matrix
$M_{R\rp }$ and the division of the representations into even and odd.
By solving the system of equations \eqref{eq:eqsystScalars}, one finds all
fixed-points in the $\veps$ expansion compatible with that symmetry group
and derives the leading (order $\veps$) anomalous dimensions of the
bilinear scalars.  Conservation of the stress tensor allows one to
compute the leading (order $\veps^2$) anomalous dimension of $\phi$.

In one or several of the odd representations $R$, the current at spin $\ell=1$ may be conserved, being the Noether current of a continuous global symmetry. This gives further constraints $\Delta_{R,\ell}=d-1$, which must be explicitly checked. The corresponding OPE coefficient is related to the $C_J$ of that symmetry current:
\begin{equation}
  \frac{C_{J_R}}{C_{J_R,\mathrm{free}}}=1-3\gamma_\phi^{(2)}\veps^2+O(\veps^3)=1-\frac{3  \Gamma_R^{\{2\}}}4\veps^2+O(\veps^3).
\end{equation}

Let us discuss the extension to higher orders in the $\veps$ expansion. To
order $\veps^3$, the operators contributing with a nonzero
double-discontinuity are the same as at the previous order, namely the
bilinear scalars $\phi^2_R$. At higher orders, infinite families of
operators contribute. In the $\OO N$ model, the only such families at order $\veps^4$ are operators of approximate twist $2$ and $4$, and we expect that this generalises to any global symmetry. However, to compute the contribution from approximate twist $4$ requires detailed knowledge of the operator content of the theory in question. This was worked out in the $\OO N$ model in \cite{Paper3}.

All constants that enter the problem at order $\veps^2$ can be fixed using continuation to spin zero and conservation of the stress tensor. This is no longer true at higher orders. At order
$\veps^3$ a total of $2k+1$ new constants appear: $\gamma_\phi^{(3)}$, the
second order correction to $\gamma_{\phi^2_R}=g_R  \veps(1+g^{(2)}_R\veps)+\ldots$, and the corrections $\alpha_R$ to the OPE coefficients defined by
\begin{equation}
  c^2_{\phi\phi\phi^2_R}=2M_{RS}(1+\alpha_R  \veps)+O(\veps^2).
\end{equation}
Based on experience from the $\OO N$ model, the
order $\veps^2$ continuation to spin zero requires order $\veps^4$ results
for the currents, so the only new equations at order $\veps^3$ are the
conservation of the symmetry currents (including the stress tensor). In general, this will not provide enough equations to fix all constants, but in many cases we can still make progress. Firstly, from the crossing analysis of \eqref{eq:alpha1correctionp2}, it follows that we must have $\alpha_R=-g_R$. Secondly, the second order corrections $g_R^{(2)}$ to the bilinear scalar dimensions, as well as $\gamma_\phi^{(3)}$, are in many cases known in the literature, and can be taken as input.

Using inversion~\ref{inv:scalarEps}, it is then straightforward to derive expressions for $U^{(p)}_{R,\hb}$ at order $\veps^3$. The anomalous dimensions extracted from these expressions take the form
\begin{equation}
  \gamma_R(\hb)=-\frac{\Gamma_R^{\{2\}}}{J^2}\veps^2+\frac{\Gamma_R^{\{2\}}-2  \Gamma_R^{\{2,1\}}+\big(\Gamma_R^{\{3\}}-\Gamma_R^{\{2\}}\big)S_1(\hb-1)}{J^2}\veps^3+O(\veps^4),
\end{equation}
where $\hb=1-\frac\epsilon2+\ell+O(\epsilon^2)$ and 
\begin{equation}
\Gamma_R^{\{3\}}=\frac{1}{M_{RS}}\sum_{\rp \text{ even}}M_{R\rp }   g_{\rp }^3M_{\rp S}, \qquad
\Gamma_R^{\{2,1\}}=\frac{1}{M_{RS}}\sum_{\rp \text{ even}}M_{R\rp }   g_{\rp }^2g^{(2)}_{\rp }M_{\rp S}.
\end{equation}

From the corresponding expression for the OPE coefficients using inversion~\ref{inv:scalarEps}, we can extract the central charge correction,
\begin{equation}
  \frac{C_T}{C_{T,\mathrm{free}}}=1-\frac53\left(\gamma_\phi^{(2)}\veps^2+\gamma_\phi^{(3)}\veps^3\right)-\frac{29}{18}\gamma_\phi^{(2)}\veps^3+\frac5{48}\Gamma^{\{3\}}_S\veps^3+O(\veps^4).
\end{equation}
Here we used that the stress tensor conservation eliminates the
dependence on $g_R^{(2)}$ in favour of $\gamma_\phi^{(3)}$. 
Similarly, for the current central charges we derive the expression
\begin{equation}
  \frac{C_{J_R}}{C_{J_R,\mathrm{free}}}=1-3\left(\gamma_\phi^{(2)}\veps^2+\gamma_\phi^{(3)}\veps^3\right)-\frac94\gamma_\phi^{(2)}\veps^3+\frac{1}{4}\Gamma^{\{3\}}_R\veps^3+O(\veps^4).
\end{equation}

\subsection[General solution in the large \texorpdfstring{$N$}{N} expansion]{General solution in the large $\boldsymbol N$ expansion}
We will now derive the form of the CFT-data in the large $N$ expansion for a
generic symmetry group, parametrised by some number $N$. Compared to the
$\veps$ expansion, the situation is a bit more complicated, since the
parameter $N$ enters in the crossing matrix $M_{R\rp }$ itself. In a given
even representation $R$, there are two options for the smallest dimension
scalar. As discussed above, it is either a scalar bilinear $\phi^2_R$ with dimension
$2\Delta_\phi+O(N^{-1})$, or a Hubbard--Stratonovich field $\mathcal
R$ with dimension $2+O(N^{-1})$. We assume that the Hubbard--Stratonovich fields $\mathcal R$ have OPE coefficients $c^2_{\phi\phi\mathcal R}={a_{\mathcal R}}/N+O(N^{-2})$, which will generate corrections to the free theory.

In order to provide some structure of the subsequent computations we define the following subsets of the representations in $V\otimes V=\mathrm I\cup\mathrm{II}$:

\begin{itemize}
\item Group $\mathrm I$: Representations whose only corrections at order $1/N$ come from crossed-channel Hubbard--Stratonovich fields.
\item Group $\mathrm{II}$: Representations where the corrections at order $1/N$ come from Hubbard--Stratonovich fields as well as from broken currents in group $\mathrm I$ representations in the crossed channel.
\item Group $\mathrm{III}$: Representations that admit a Hubbard--Stratonovich field. Typically $\mathrm{III}\subset\mathrm{II}$.
\end{itemize}
For instance, in the $\OO N$ model we have $S\in\mathrm{II}\cap\mathrm{III}$ and $T,A\in\mathrm I$. Our strategy will then be the following. First, as in the $\veps$ expansion, the identity operator creates the leading contribution to $U^{(0)}_{R,\hb}$ for all representations. Next, we turn to the representations in Group $\mathrm I$. The contributions from Hubbard--Stratonovich fields will give the order $1/N$ anomalous dimensions in these representations. Using inversion~\ref{inv:scalarN} we see that these corrections will be proportional to $1/J^2$. Finally, we consider the representations in the Group $\mathrm{II}$. Here we get contributions from both the Hubbard--Stratonovich fields, using inversion~\ref{inv:scalarN}, and from the currents in Group $\mathrm{I}$. Due to the particular form of the anomalous dimensions of these currents, we can use inversion~\ref{inv:fromcurrents} to find the complete order $1/N$ CFT-data.

The expressions will depend on $|\mathrm{III}|+1$ free parameters: the OPE coefficients $a_{\mathcal R}=c^2_{\phi\phi\mathcal R}$ for $R\in \mathrm{III}$, as well as the leading order anomalous dimension of $\phi$.
The consistency conditions available to fix these constants are the conservation of the symmetry currents (including the stress tensor), and depending on the number of conserved currents this may or may not be enough. As in the order $\veps^3$ results above, literature values can be used to fix the remaining constants if the conservation equations are not
sufficient. Finally, the leading anomalous dimensions of the
Hubbard--Stratonovich fields may be extracted by imposing the shadow relation $\Delta_{\mathcal R}+\Delta_{R,0}\overset !=d$
in similarity with \eqref{eq:shadowrelation} in the $\OO N$ model.

Let us now execute the strategy in full generality.
The contribution from the identity operator gives
\begin{equation}
\mathbf U_R(\log z,\hb)=M_{RS}  \AA[\Delta_\phi](\hb),
\end{equation}
where $\Delta_\phi=\mu-1+\gamma^{(1)}_\phi/N+O(N^{-2})$ and $\mu=d/2$. For the representations in group $\mathrm I$ we have contributions from Hubbard--Stratonovich fields in group $\mathrm{III}$. Using inversion~\ref{inv:scalarN} we get
\begin{equation}
U^{(1)}_{R,\hb}=-\sum_{ \rp \in\mathrm{III}}M_{R \rp
}  \, 2(\mu-2)^2\,\frac{a_{\oprp}}N\frac{\AA[\mu-1](\hb)}{J^2}, \quad R\in \mathrm
I,
\end{equation}
and a corresponding expression for $U^{(0)}_{R,\hb}$. From this we can extract the order $N^{-1}$ anomalous dimensions of currents in group $\mathrm I$ representations:
\begin{equation}
  \gamma_{R,\hb}=-\frac{2(\mu-2)^2K_R}{J^2N}+O(N^{-2}), \quad
  K_R=\frac{1}{M_{RS}}\sum_{\rp \in\mathrm{III}}M_{R\rp }\, a_{\oprp},
  \quad R\in \mathrm I.
\end{equation}
In step 3 we consider the second group of operators, $\mathrm{II}$. They get contributions both from $\mathcal R$ for $R\in \mathrm{III}$ and from $\mathcal J_{R,\ell}$ for $R\in\mathrm I$. We get
\begin{align}\nonumber
U^{(1)}_{R,\hb}&=-\sum_{ \rp \in\mathrm{III}}2M_{R \rp }(\mu-2)^2\frac{a_{\oprp}}N\frac{\AA[\mu-1](\hb)}{J^2}
\\
&\quad-\sum_{\rp _\pm\in\mathrm{I}}4M_{R\rp }K_{\rp }^2M_{\rp
S}\frac{(\mu-2)^2(2\hb-1)}{J^2N^2}\left(\pm1+(\mu-2)\pi\csc(\pi\mu)\right),\quad
R\in \mathrm{II},
\end{align}
where the $+$ ($-$) sign is used if the operators in the $\rp $ representations have even (odd) spin. This means that the dimensions of the group $\mathrm{II}$ double-twist operators are
\begin{equation}
\Delta_{R,\ell}=2(\mu-1)+\ell+\frac{2\gamma_\phi^{(1)}}N-\frac{2(\mu-2)^2K_R}{J^2N}-\frac{\widehat
K_{R}}{J^2N^2}\frac{(\mu-2)^2\Gamma(\mu-1)^2\Gamma(\ell+1)}{\Gamma(2\mu+\ell-3)},\quad
R\in\mathrm{II},
\end{equation}
where in the above expressions we have $J^2=(\mu-1+\ell)(\mu-2+\ell)$ and
\begin{equation}
\widehat K_R=\frac{1}{M_{RS}}\sum_{\rp _\pm\in\mathrm I}2M_{R\rp }K_{\rp
}^2M_{\rp S}\left(\pm1+(\mu-2)\pi\csc(\pi\mu)\right), \quad R\in
\mathrm{II}.
\end{equation}
Conservation equations for the stress-tensor and for global symmetry currents may now be used to fix the free parameters $\gamma_\phi^{(1)}$ and $a_{\mathcal R}$ for $R\in \mathrm{III}$.

\section[Multicritical theories in \texorpdfstring{$d_{\mathrm c}(\theta)-\epsilon$}{d\_c(theta)-epsilon} dimensions]{Multicritical theories in $\boldsymbol{d_{\mathrm c}(\theta)-\epsilon}$ dimensions}
\label{sec:multicritical}

In this section we apply large spin perturbation theory to the multicritical theories in an $\epsilon$ expansion near their critical dimensions. As far as we are aware, this has not previously been done in the literature. Let us recall from figure~\ref{fig:operahouse} in the introductory chapter, that the multicritical theories, labelled by integers $\theta$, correspond to scalar theories with $\lambda\phi^{2\theta}$ interactions and are defined below the critical dimensions $2\mu_\theta:=d_{\mathrm c}(\theta)=\frac{2\theta}{\theta-1}$ where these interactions become marginal. 
The interacting fixed-points in $2\mu_\theta-\epsilon$ dimensions have $\theta-1$ relevant deformations and are individually referred to as tricritical ($\theta=3$), tetracritical ($\theta=4$) etc. In an $\epsilon$ expansion near the critical dimension, $\lambda$ takes a value of order $\epsilon$ at the fixed-points, which are reached by short RG flows.

Despite not being defined in any integer dimension $\geqslant3$, these theories have received some attention from the conformal bootstrap. In particular, the methods of \cite{Rychkov:2015naa} using multiplet recombination have been generalised to the tricritical \cite{Basu:2015gpa} and general multicritical case \cite{Codello:2017qek}\footnote{I thank G. P. Vacca for useful discussions of the literature on multicritical theories.}. Using traditional diagrammatic techniques, some scaling dimensions have been computed. Here we only need the anomalous dimension for $\phi$, which is given by \cite{Lewis:1978zz}
 \beq{
\Delta_\varphi=\mu-1+\epsilon^2\gamma_\phi^{(2)}+O(\epsilon^3), \qquad \gamma_\phi^{(2)}=\frac{2(\theta-1)^2\Gamma(\theta+1)^6}{\Gamma(2\theta+1)^3}.
}

We now follow the procedure of large spin perturbation theory for these theories. We are interested in computing CFT-data of weakly broken currents $\mathcal J_\ell=\phi\de^\ell\phi$ in the direct channel, given by inverting contributions from crossed-channel operators. Clearly the identity operator $\1$ appears, and from proposition~\ref{prop:firstappearance} we know that the contribution from the currents themselves is suppressed. To determine the next operator to contribute, we use the heuristic diagrammatic method of section~\ref{sec:heuristicmethod}. From studying some possible diagrams we realise that the next operator after $\1$ to consider is $\phi^{2\theta-2}$, where the corresponding diagram is displayed in figure~\ref{fig:multidiagram}. 
This operator has dimension $\Delta_{\phi^{2\theta-2}}=(2\theta-2)\Delta_\phi+O(\epsilon)=2+O(\epsilon)$, and from the diagram it follows that its OPE coefficient is of order $\epsilon^2$. The contributions from $\1$ and $\phi^{2\theta-2}$ can this be readily computed using inversions~\ref{inv:identity} and \ref{inv:scalarN}, and take the form
\beqa{
U^{(1)}_\hb&=-2(\mu_\theta-2)^2\frac{\AA[\mu_\theta-1](\hb)}{J^2}a^{(0)}_{\phi^{2\theta-2}}\epsilon^2+O(\epsilon^3),
\\
U^{(0)}_\hb&=\AA[\Delta_\phi](\hb)+(\mu_\theta-2)^2\frac{\AA[\mu_\theta-1](\hb)}{J^2}a^{(0)}_{\phi^{2\theta-2}}\epsilon^2\parr{\SS[\mu_\theta-1](\hb)-\frac1{J^2}}+O(\epsilon^3).
}
From these functions, the anomalous dimensions, and therefore the scaling dimensions, of the weakly broken currents are extracted:
\beq{\label{eq:dimensionsmultibefore}
\Delta_{\ell}=2(\mu-1)+\ell+2\epsilon^2\gamma_\phi^{(2)}-\frac{2(\mu_\theta-2)^2a^{(0)}_{\phi^{2\theta-2}}\epsilon^2}{(\mu_\theta-1+\ell)(\mu_\theta-2+\ell)}+O(\epsilon^3),
}
where $\mu=\mu_\theta-\frac\epsilon2$. 
\begin{figure}
  \centering
\includegraphics{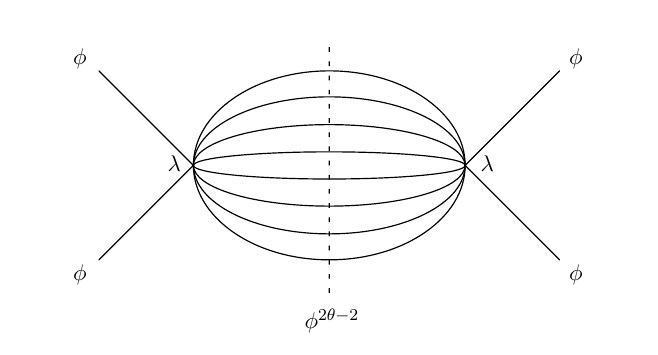}
\caption[Feynman diagram for operators in the crossed channel in multicritical theories.]{Feynman diagram for operators in the crossed channel, showing that the operator $\phi^{2\theta-2}$ contributes at order $\lambda^2$, or equivalently at order $\epsilon^2$.}
\label{fig:multidiagram}
\end{figure}
Solving $\Delta_2=2\mu$ for the stress tensor at spin $2$ gives the relation
\beq{
a^{(0)}_{\phi^{2\theta-2}}=\frac{\theta(2\theta-1)}{(\theta-2)^2}\gamma_\phi^{(2)}.
}
Plugging this into \eqref{eq:dimensionsmultibefore}, as well as the corresponding expression for the OPE coefficients, gives a complete determination of the CFT-data at order $\epsilon^2$ in terms of $\gamma_\phi^{(2)}$. The scaling dimensions \eqref{eq:dimensionsmultibefore} precisely reproduce the result derived in \cite{Gliozzi:2017gzh}, whereas the corresponding OPE coefficients are new. 

Assuming that the CFT-data can be extended to spin zero generates the dimension of $\phi^2$, which reproduces the relation derived in \cite{Codello:2017epp}
\beq{\label{eq:multirelationphiphi2}
\gamma^{(2)}_{\phi^2}=\frac{2\theta(2\theta-1)}{\theta-2}\gamma_\phi^{(2)} ,
}
where $\Delta_{\phi^2}=2\Delta_\phi+\gamma_{\phi^2}$. Finally, from the OPE coefficient of the stress tensor we can extract the correction to the central charge,
\beq{\label{eq:CTmulti}
\frac{C_T}{C_{T,\mathrm{free}}}=1-\frac{2(\theta-1)(3\theta-1)}{\theta(2\theta-1)}\epsilon^2\gamma_\phi^{(2)}+O(\epsilon^3).
}
We believe that this is a new result.

\subsection{Generalisations}
The generalisation to multicritical theories with global $\OO N$ symmetry follows in a straightforward way. We use the crossing matrix \eqref{eq:crossingmatrixON}, and scalar operators in both $S$ and $T$ representations contribute. Conservation of the global symmetry current and the stress tensor gives two constraints solved by
\beq{
a_S^{(0)}=\frac{\theta(2N\theta-2N+1)\gamma_\varphi^{(2)}}{(\theta-2)^2N}, \quad a^{(0)}_T=\frac{2\theta(2N\theta-N+1)\gamma_\varphi^{(2)}}{(\theta-2)^2(N+2)},
}
which again fix all CFT-data in terms of the anomalous dimension of $\varphi$. For this we use the literature value \cite{Hofmann:1991ge}
\beq{
\gamma_\varphi^{(2)}=\frac{(\theta-1)^2\Gamma(\theta)^2\Gamma(\theta+1)\Gamma(\theta+\frac N2)}{4\Gamma(2\theta)^2\Gamma(1+\frac N2)}{_3F_2}
\left( {1-\frac N2-\theta,\frac12-\frac\theta2,-\frac\theta2}~\atop~{ 1,\frac12-\theta}\middle|1\right)^{-2}\!,
}
where for any integer $\theta$ the hypergeometric function truncates and gives a polynomial in $N$. The relations~\eqref{eq:multirelationphiphi2} and \eqref{eq:CTmulti} are unchanged, and the current central charge is given by
\beq{
\frac{C_J}{C_{J,\mathrm{free}}}=1-\frac{2(\theta^2-1)}{\theta}\epsilon^2\gamma_\varphi^{(2)}+O(\epsilon^3).
}
Contrary to the critical ($\theta=2$) $\OO N$ model, it is believed that a large $N$ expansion of the multicritical theories does not exist for odd $\theta$. In the tricritical case, there is evidence for a curve with approximate equation $N_c=\frac{3.6}{3-d}$, at which the fixed-point vanishes by annihilating a non-perturbative fixed point. In \cite{Yabunaka:2017uox} it was argued that this curve ends near the point $(N,d)=(19,2.8)$, and that passing below this point and then towards large $N$ for $d<2.8$ corresponds to reaching another non-perturbative fixed-point. It would be interesting to investigate this using numerical bootstrap.

We could also extend the analysis to interactions with an odd number of fields, by looking at theories with interaction $\lambda\phi^{2t+1}$. These fixed-points are non-unitary, and are believed to be related to non-unitary minimal models in two dimensions, although the mapping is not completely resolved \cite{Codello:2017epp}. Now $\lambda\sim i\sqrt{\epsilon}$ and we have a similar diagram to figure~\ref{fig:multidiagram}. The results for these theories can be recovered from the considerations above by the substitutions $\theta\rightsquigarrow t+\frac12$ and $\gamma_{\phi}^{(2)}\epsilon^2\rightsquigarrow\gamma_{\phi}^{(1)}\epsilon$, where $\gamma_{\phi}^{(1)}$ is negative. No closed form expression exists for $\gamma_{\phi}^{(1)}$, but it is given as a sum representation in \cite{Codello:2017epp} which can be evaluated case by case in $t$. Relations like \eqref{eq:multirelationphiphi2} still hold in these cases. 

Finally, we would like to mention a couple of interesting aspects of multicritical theories with global symmetries. As discussed in section~\ref{sec:ONspectrum}, the critical $\OO N$ model is related to a cubic theory in $6-\epsilon$ dimensions, and a generalisation of this relation has been suggested involving multicritical theories for even $\theta$ and theories with odd power interactions \cite{Gracey:2017okb}. Theories with $\phi^5$ interactions ($t=2$) are also interesting since their critical dimension $\frac{10}3$ is above three dimensions, which suggests that physical theories in 3d may be reached starting from theories with $\phi^5$ interactions in $\frac{10}3-\epsilon$ dimensions \cite{Zinati:2019gct}.

\section[Four-point function of \texorpdfstring{$\varphi^2$ in the $\epsilon$}{phi**2 in the epsilon} expansion]{Four-point function of $\boldsymbol{\varphi^2}$ in the $\boldsymbol\epsilon$ expansion}\label{sec:WFfrompaper1}

Our final example is an application of the result from chapter~\ref{ch:paper1} to the four-point function of bilinear scalars $\varphi^2=\varphi^2_S$ in the critical $\OO N$ model $4-\epsilon$ dimensions\footnote{I thank M. van Loon for suggesting this idea.}. At first, this does not seem to be a allowed, since the $\epsilon$ expansion neither is a gauge theory nor defined in four dimensions. However, by comparing figures~\ref{fig:spectrumWF}\footnote{Figure~\ref{fig:spectrumWF} shows the $N=1$ case, but as discussed in section~\ref{sec:ONspectrum}, the corresponding spectrum for general $N$ takes the same form but with larger number of degenerate operators at each point. When studying the $\expv{\varphi_S^2\varphi_S^2\varphi_S^2\varphi_S^2}$ correlator, the grey bands of figure~\ref{fig:spectrumWF} will be adjusted to match those of figure~\ref{fig:spectrumNN4}.} and \ref{fig:spectrumNN4}, we see that the spectrum of operators in the singlet representation has the same structure as in \NN4 SYM, and furthermore, by Wick contractions, the tree-level correlator $\expv{\varphi_S^2\varphi_S^2\varphi_S^2\varphi_S^2}$ takes the form
\beq{
\G^{(0)}(u,v)=1+\frac{u^2}{v^2}+u^2+\frac 4N\parr{u+\frac uv+\frac{u^2}v},
}
which exactly matches \eqref{tree.level} for $c=\frac4N$. Working at leading order in $\epsilon$, the conformal blocks can be evaluated in four dimensions, and finally, the $\epsilon$ expansion is consistent with the perturbative structure~\eqref{eq:ansatzforgaugetheory}, or equivalently \eqref{Ginf.leading}, which was derived in \cite{AldayBissi2013} for conformal gauge theories.

The only piece of CFT-data that can not be extracted by this method are the corrections to the OPE coefficients $\langle a^{(1)}_{\tau_0,\ell}\rangle$ for $\tau_0\geqslant 4$. This is because the expression \eqref{eq:structure.constants.sum}, from which they would follow, corresponds to expanding the $\expv{\varphi_S^2\varphi_S^2\varphi_S^2\varphi_S^2}$ correlator in terms of four-dimensional conformal blocks, whereas a proper decomposition should be in terms of $(4-\epsilon)$-dimensional blocks. At leading twist, $\tau_0=2$, we do not have this issue since the collinear blocks are independent of spacetime dimension. 

We will now determine the values of the constants in the results of chapter~\ref{ch:paper1} by matching with CFT-data of leading twist operators, i.e.\ the weakly broken currents. We compare their dimensions with respect to the free four-dimensional theory, and we therefore have
\beq{
\gamma_{\mathrm{ext}}=-1+\gamma_{\varphi^2_S}^{(1)}=-\frac6{N+8}
,\qquad \gamma_{\ell}^{\mathrm{LT}}=-\epsilon+(1+\gamma_{\mathrm{ext}})\epsilon\delta_{\ell0},
}
from which it follows that
\beq{
 \alpha_{11}=0,\qquad \alpha_{10}=-\frac12,\qquad
\nu_0=1+\gamma_{\mathrm{ext}}=\frac{N+2}{N+8}.
}
The central charge gives the final constraint. Since by \eqref{eq:CentralCharge}, the central charge is not corrected until order $\epsilon^2$, this corresponds to a purely dimensional correction which we can extract from $C_T=N\frac{4-\epsilon}{3-\epsilon}$ by \eqref{eq:CTNscalars}. Matching with \eqref{eq:twisttwoalphaa} gives
\beq{
\alpha_{00}=-2(1+\gamma_{\mathrm{ext}})=-\frac{2N+4}{N+8}.
}

There are some consistency conditions we can check for our result, namely for operators where there is no mixing. For instance, we get $\gamma_{4,0}=0$, which is consistent with $\Delta_{\varphi^4_S}=4+O(\epsilon^2)$. In the $N=1$ case we can also check that $\gamma_{4,2}=-\frac59$, which is consistent with $\Delta_{\de^2\varphi^4_S}=4-2\epsilon+\frac{13}9\epsilon+O(\epsilon^2)$, and that $a_{6,0}=0$, which is consistent with the fact mentioned at the end of section~\ref{sec:WFspectrum} that there is no operator at this point constructed from four fields $\phi$.

Finally, let us give two pieces of our results on explicit form, namely the leading twist OPE coefficients,
\beq{
a^{\mathrm{LT}}_{\ell}=c^2_{\varphi_S^2\varphi_S
^2\mathcal J_{S,\ell}}=\frac{8\Gamma(\ell+1)^2}{N\Gamma(2\ell+1)}\big(1+\epsilon\parr{S_1(2\ell)-2-\delta_{\ell0}}+\gamma_{\mathrm{ext}} \epsilon\parr{2S_1(\ell)-2-\delta_{\ell0}}
\big),
}
and the whole correlator,
\beqa{
\G^{(1)}
(u,v)&=
-(1+\gamma_{\mathrm{ext}})\frac{4\, u}{N v}\parrk{\left(u+v+uv\right)\Phi(u,v)+2+2u+2v}\nonumber\\
&\quad+\gamma_{\mathrm{ext}}\left[\frac{u}{v}\left(\frac{u}{v}+uv+\frac{8u}N\right)\log u+\frac{u}{v}\left(-\frac{u}{v}-\frac4N-\frac{4u}N+\frac{4v}N \right)\log v\right]
\nonumber\\\label{eq:varphi.final}
&\quad+\frac{4u}{Nv}\parrk{\parr{u-\frac v2-\frac12}\log u+\parr{v-\frac u2-\frac12}\log v}\!,
}
where $\Phi(u,v)$ is given in \eqref{eq:boxfunction}.

\chapter{Discussion}
\label{ch:disc}

In this thesis we have aimed to give a comprehensive account of large spin perturbation theory and its application to conformal field theories with a suitable expansion parameter. In chapter~\ref{ch:intro} we gave an introduction where we said that the goal of the analytic bootstrap is to make general statements about the spectrum, the OPE and the correlators in a CFT, without any reference to additional tools such as a Lagrangians or supersymmetry. Chapter~\ref{ch:analyticstudy} contained a technical background where we fixed some essential conventions and definitions, reviewed the developments that led to the proposal of large spin perturbation theory and gave a precise derivation of the perturbative inversion formula. It also contained a short exposition of three commonly studied CFTs, which were revisited in the later chapters. Chapter~\ref{ch:practical} took the form of a practical guide providing the tools needed for a step-by-step application of the computational framework. 
The three following chapters gave explicit examples of how large spin perturbation theory can be applied to generate a variety of new results and insights. Following closely the original publications, we derived in detail new results for the Wilson--Fisher fixed-point at order $\epsilon^4$ \cite{Paper2}, and the most general perturbative four-point function of bilinear scalars in a conformal gauge theory \cite{Paper1}. Chapter~\ref{ch:more} was more linked to the earlier chapters and made direct use of several results derived there. Sections~\ref{sec:paper4short} and \ref{sec:paper5short} studied critical $\phi^4$ theories with $\OO N$ and generic global symmetry group, based on \cite{Paper4} and \cite{Paper5} respectively. Section~\ref{sec:multicritical} contained a previously unpublished computation of the leading CFT-data at the multicritical fixed-points in an $\epsilon$ expansion near their critical dimensions, including new results for OPE coefficients and the central charge. Finally, section~\ref{sec:WFfrompaper1} was an adaptation of the results of chapter~\ref{ch:paper1} to the correlator of bilinear singlet scalars in the $\epsilon$ expansion, also that a previously unpublished result.

Let us recapitulate the main line of arguments in slightly new words. Thinking about CFT correlators as constructed from a sum of conformal blocks, it appears as if there would be an enormous amount of freedom in a generic CFT, where all pieces of CFT-data in principle could be uncorrelated. The crossing equation, which is equivalent to associativity of the OPE, massively reduces this freedom. The Lorentzian inversion formula, developed in analogy with similar dispersion relations in scattering amplitudes in generic quantum field theories, shows that physical considerations in the form of a bounded Regge limit constrains the CFT-data even more. Specifically, the inversion formula connects data of all spinning operators into analytic functions. Of course, such functions could in principle be very complicated, but at lower orders in perturbation theory they tend to take simple forms, often guided by some transcendentality principle. 

In some contexts, the transcendentality principle is comparatively well understood, for instance in scaling dimensions of individual operators in the Wilson--Fisher model. The transcendental numbers appearing there can be viewed as the result of Feynman integrals computed to high order---collectively known as periods---and the order at which each number appears has been analysed in terms of a Galois coaction principle \cite{Panzer:2016snt}. For the CFT-data of entire twist families, transcendentality was an important organisational principle in computing leading twist anomalous dimensions in QCD and \NN4 SYM, and there is a dictionary between the generalisations of harmonic numbers and harmonic polylogarithms through the Mellin transform \eqref{eq:mellinsplitting} \cite{Remiddi:1999ew}. It is an interesting direction for future research to connect these two cases in order to give a deeper theoretical explanation for the structure of the inversion dictionary worked out in appendix~\ref{integrals}.

The fact that the CFT-data of spinning operators is captured in the double-discontinuity of the correlator has interesting physical interpretations. It means that the intuition from holography, where multi-trace operators on the boundary correspond to multi-particle states in the bulk, in a suitable approximation extends beyond holographic CFTs. This was referred to as superhorizon locality in \cite{Fitzpatrick:2012yx}. From a more practical point of view, it means that we are able to capture the whole CFT-data in terms of a few parameters. As we have seen, in perturbative CFTs this can be developed in a systematic way, where it is possible to make precise statements about exactly which operators contribute at a given order. However, many ideas from this approach persist non-perturbatively, as demonstrated in \cite{Simmons-Duffin:2016wlq}.
\\

\noindent Two main types of applications emerge from the framework described in this thesis, although there is some overlap. On the one hand, the analytic bootstrap can be used to make universal statements, valid in all or a wide range of conformal field theories. Here, large spin perturbation theory is very suitable, since it relies only on the CFT axioms and the existence of an expansion parameter $g$ such that the spectrum at $g=0$ has twist degeneracy. By explicitly stating what further assumptions are made, one can prove statements of the form \emph{any CFTs satisfying $A$, also has properties $B$}. Chapter~\ref{ch:paper1} is an obvious example of this, where we find most general perturbative four-point function of bilinear scalars in a conformal gauge theory. From the simple assumption that such a theory admits a global symmetry singlet $\O$ with dimension $\Delta_\O=2+O(g)$, it follows that the space of possible four-point functions is constrained to a five-parameter family. However, also the work in chapter~\ref{ch:paper2} and section~\ref{sec:paper4short} on the Wilson--Fisher and $\OO N$ models can be said to adhere to this principle. In these cases, we initially only make some crude assumptions on the theory, which in principle could apply to a larger range of models. Yet we saw that these assumptions led to quadratic equations---\eqref{eq:betalikeequation} in the $\epsilon$ expansion and \eqref{eq:findingON} in the large $N$ expansion---which precisely single out the only two known possibilities: the free theory and the critical Ising or $\OO N$ model. In section~\ref{sec:paper5short} we proposed that this principle can be used to classify fixed-points with any global symmetry group. The only input needed is the crossing matrix $M_{R\rp}$ and the parity of the respective irreps $R$.

On the other hand, we have also demonstrated how large spin perturbation theory can be used to derive new quantities in a number of theories. In particular, it treats anomalous dimensions and OPE coefficients on an almost equal level and is therefore particularly useful for the latter, where diagrammatic approaches such as the skeleton expansion \cite{Petkou:1994ad} become very complicated. For instance, until five years ago the central charge $C_T$ was only known to order $\epsilon^2$ in the Wilson--Fisher \cite{Cappelli:1990yc} and $\OO N$ models \cite{Petkou:1994ad}. It was then computed to order $\epsilon^3$ in \cite{Gopakumar:2016cpb} and \cite{Dey:2016mcs} respectively, and to order $\epsilon^4$ in \cite{Paper2} and \cite{Paper3}, reproduced in chapter~\ref{ch:paper2}. An extension to order $\epsilon^5$ may be possible in the near future; the main obstacle is to determine the exact contribution from the operators $\de^\ell\phi^4$, which participate in a non-trivial mixing. When using large spin perturbation theory to generate new results in a given theory, one can follow a less purist approach, and allow oneself to rely on existing results in the literature and on methods specific to the theory under consideration. As we reviewed in section~\ref{sec:applicationsofLSPT}, this combination of methods has generated many results in the case of strongly coupled \NN4 SYM.
\\

\noindent While we have aimed to present a consistent and coherent account of analytic conformal bootstrap and large spin perturbation theory, one aspect is not completely addressed, namely the question about analyticity in spin at spin zero. At first, the statement made in \cite{Caron-Huot2017} and repeated in section~\ref{sec:Inversionformula} seems decisive---the Lorentzian inversion formula will give the correct result for $\ell>1$, and one would expect there to be a non-analytic contribution at spin zero\footnote{Of course, similar considerations would apply at spin one, but we limit the discussion here to spin zero.}. However, as reviewed in section~\ref{sec:spinzerogendisc}, we have seen that in many cases, analyticity can in fact be extended to spin zero, albeit sometimes in a subtle way\footnote{The most obvious exception is the Konishi four-point function in chapter~\ref{ch:paper1}, where we explicitly had to supplement the averaged anomalous dimensions $\overline{\langle \gamma_{2,\ell} \rangle}_{\mathrm{inf}}=\gamma_{\mathrm{univ.}}(\ell)$ with a contribution at spin zero. However, there the spin zero anomalous dimension, corresponding to the Konishi operator, matches instead the dimension of the spin zero operator in the $\Xi$ family: $\gamma_{\mathcal K}=\gamma_{\Xi_0}$ in \eqref{eq:anomalousdimensionsLTSYM}.}. It remains as an open problem to clear up exactly when and why this is possible. 

A particularly interesting case is the Wilson--Fisher model, where we analysed the analytic continuation in section~\ref{sec:mathingconditions}, as illustrated in figure~\ref{fig:plot}. If one expands the scaling dimensions \eqref{deltaell} of the weakly broken currents in the coupling constant, one gets a pole $\sim\frac{g^2}\ell$, as in \eqref{eq:gammaWFlitt}. If one instead evaluates the full conformal spin at $\ell=0$, assuming an anomalous dimension of order $g$, the pole gets cancelled at the cost of one factor $g$, and one arrives at the quadratic equation~\eqref{eq:betalikeequation}. Conjecturing that this extends to $\phi^4$ theories in any symmetry group underpins the method of section~\ref{sec:paper5short}, and so far no counterexamples have been found\footnote{The case for $\OO m\times \OO n/\Z_2$ symmetry is described in \cite{Paper5}. We have also checked this for theories with $\OO m^n\rtimes \mathrm S_n$, hypercubic and hypertetrahedral symmetry, where the latter two do not admit a Hubbard--Stratonovich description at large $N$.}. It is interesting to compare with the situation in the multicritical case for generic $\theta$, where such continuation to spin zero also passes beyond the pole $\hb=1$ but generates $\gamma_{\phi^2}$ correctly without any subtleties. This may be explained through the observation made in \cite{Alday2016b} by explicitly studying crossing in almost free theories, that a solution truncated in spin is allowed only in four dimensions. Perhaps more can be learnt by investigating various limits in the $(\theta,d)$ plane near the point $(\theta=2,d=4)$. This idea is inspired by the interpolation made in \cite{Badel:2019oxl} between universal non-perturbative behaviour at large $\OO 2$ charge and the $4-\epsilon$ expansion  (see also \cite{Arias-Tamargo:2019xld,Watanabe:2019pdh}). 

Holographic CFTs in a strong coupling expansion explicitly violate analyticity at spin two, since they are assumed to have a stress tensor but no other single-trace operators of spin $\ell\geqslant2$. This holds for both \NN4 SYM at strong coupling and in the minimal gravity theories considered in the heavy-light bootstrap mentioned in section~\ref{sec:applicationsofLSPT}. Indeed, in formulating proposition~\ref{prop:currentsexist} we had to limit ourselves to the case where the expansion parameter is the coupling constant and the weakly broken currents can be explicitly constructed, since at strong coupling the range of analyticity is raised from $\ell>1$ to some other small value \cite{Alday:2017vkk}. We note that the truncated solutions of \cite{Heemskerk2009,Alday:2014tsa} which violate analyticity in spin also violate Regge boundedness used in the derivation of the Lorentzian inversion formula.

A promising tool in addressing the question of spin zero is the notion of light-ray operators \cite{Kravchuk:2018htv}. They are non-local, intrinsically Lorentzian, operators, which can take arbitrary values of spin. For integer spins, they reduce to the integration of local operators along a null direction, called the \emph{light transform}. In this perspective, the function $C(\Delta,\ell)$ generated from the Lorentzian inversion formula \eqref{eq:invformulagen} should be viewed as CFT-data for a family of such non-local operators. An integral transform similar to the light transform is the \emph{shadow transform}, which generates a non-local operator with scaling dimension $d-\Delta$. Perhaps a combination of these transforms might explain our final example of analytic continuation to spin zero, namely the shadow relation \eqref{eq:shadowrelation}, which relates the dimension of a Hubbard--Stratonovich auxiliary field to the shadow dimension of the would-be $\ell=0$ operator in the leading twist family. While such relation at tree-level is clear from the Lagrangian, it is not obvious why it should hold in perturbation theory or if this continues to all orders\footnote{We have checked that it holds at order $1/N^2$ in three dimensions, using the literature values for $\gamma^{(2)}_\sigma$ \cite{Vasiliev:1981dg} and $\gamma^{(2)}_{S,\ell}$ \cite{Manashov:2017xtt}. Notice the typo in the expression $\gamma_2(s)$ in the published version of \cite{Manashov:2017xtt}, one of the harmonic numbers should read $S_1(s-\frac12)$ instead of $S_1(s+\frac12)$.}.
\\

\noindent When twist additivity was derived, which opened up a new, analytic, direction of the bootstrap, the authors of \cite{Fitzpatrick:2012yx} said that the conformal bootstrap since its revival had already ``led to a great deal of progress, but perhaps the best is yet to come''. Now, three quarters of a decade later, the research has gone into a more mature phase. While there is certainly room for more numerical bootstrap, it is perhaps the analytic techniques that will generate the greatest long-term impact on fundamental physics as a whole. Conformal field theories will continue to play an important role, both as toy models and as \emph{bona fide} models for physics, and will be studied by a wide range of existing and future techniques. It is my hope that the methods presented in this thesis will become a part of the toolbox for anyone who is interested in studying various conformal field theories. More precisely, a large spin analysis should be one of the aspects considered when giving a presentation of the fundamental characteristics of a given CFT.

Apart from obtaining a complete description of the situation around spin zero, there is also room for other developments of large spin perturbation theory. This includes applications to new specific models as well as theoretical and technical improvements. One such direction is to extend the inversion formula to non-scalar correlators. While this may not be so conceptually hard---the essential ingredients could perhaps be extracted from \cite{Kravchuk:2018htv}---one has to reduce the setup to a manageable problem to facilitate a practical implementation. Initial examples to consider may be correlators involving fermions \cite{Elkhidir:2017iov,Albayrak:2019gnz} or conserved currents $J_\mu$, for instance analysing the $\epsilon$ expansion of the the $\expv{\varphi\varphi JJ}$ correlator in the critical $\OO2$ model, which was studied numerically in three dimensions in \cite{Reehorst:2019pzi}.

More generally, the results and methods from the analytic bootstrap may be used in combination with other methods for studying conformal field theories, both within the bootstrap and more generally. One promising result is the systematic study of twist families in two dimensional CFTs in \cite{Collier:2018exn}. It is also desirable to make a more direct contact with the numerical bootstrap in the search for a more powerful implementation to be used for finding non-perturbative CFTs. This direction is connected with great challenges in combining the respective rigid assumptions on both sides---strict inequalities in the numerical bootstrap and perturbative expansion in large spin perturbation theory as presented here.

In the future, mathematical consistency will continue to be a leading principle within theoretical physics. This is increasingly true as the field develops in a direction away from standard methods within perturbative Lagrangian quantum field theory. In this thesis we have presented a complete framework for a perturbation theory that is independent on any Lagrangian description, and therefore applies also to expansion parameters different from the coupling constant. The future will show what other results and technologies will follow from clever application of mathematical consistency conditions.

\appendix

\renewcommand{\appendixname}{\xspace}

\chapter{Appendices to chapters \ref{ch:analyticstudy} and \ref{ch:practical}}

\section{Subleading corrections to collinear conformal blocks}\label{app:subcollinearblocks}
In this appendix we give some more details on the subleading corrections to the conformal blocks in the collinear limit, referring back to section~\ref{sec:blockology}. We follow closely appendix~A of \cite{Paper3}. In the collinear limit the conformal blocks expand as \cite{Simmons-Duffin:2016wlq}
\beq{
G^{(d)}_{2h+\ell,\ell}(z,\zb)=z^h\sum_{k=0}^\infty z^k\sum_{m=-k}^k c_{k,m}k_{h+\ell+m}(\zb)
.}
The coefficients $c_{k,m}$ can be computed order by order by solving the Casimir equation
\beq{
\cas G^{(d)}_{2h+\ell,\ell}(z,\zb) =(h+\ell)(h+\ell-1) G^{(d)}_{2h+\ell,\ell}(z,\zb),
}
where
\beq{
\cas=\mathcal C_2-h(h+1-2\mu),
}
and $\cas$ is given in \eqref{eq:quadraticCasimir}. The results for the first two subleading orders are

\beqa{\nonumber
c_{0,0}&=1,\qquad\qquad c_{1,-1}=\frac{\ell  (\mu -1)}{\ell +\mu -2},\qquad\qquad c_{1,0}=\frac h2,\phantom{alotofequationspaceneedd}\\\nonumber
c_{1,1}&=\frac{(\mu -1) (h+\ell )^2 (2 h+\ell -1)}{4 (2 h+2 \ell -1) (2 h+2 \ell +1) (2 h+\ell -\mu +1)},}\beqa{
\nonumber
c_{2,-2}&=\frac{(\ell -1) \ell  (\mu -1) \mu }{2 (\ell +\mu -3) (\ell +\mu -2)}, \qquad\qquad c_{2,-1}=\frac{(h+1) \ell  (\mu -1)}{2 (\ell +\mu -2)},
\\\nonumber
c_{2,0
}&=\tfrac{h (h+1)^2 (2 h-3) (\mu -2) (2 h-\mu +1)}{4 (2 h+2 \ell -3) (2 h+2 \ell +1) (\ell +\mu -2) (2 h+\ell -\mu +1)}
\\\nonumber
&\hspace{-8mm}
+\!
\tfrac{(2 h-1) \ell  (4 h^5+4 h^4 (\mu -2)+h^3 (46 \mu -10 \mu ^2-59)+h^2 (2 \mu ^3-31 \mu ^2+92 \mu -81)+h (8 \mu ^3-38 \mu ^2+64 \mu -37)+(\mu -1)^2 (2 \mu -3))}{4 (2 h+2
   \ell -3) (2 h+2 \ell +1) (2 h-2 \mu +3) (\ell +\mu -2) (2 h+\ell -\mu +1)}
   \\\nonumber
   &\quad
   +\tfrac{\ell ^2 (20 h^5-12 h^4 (\mu -2)+h^3 (-2 \mu ^2+22 \mu -55)-h^2 (6 \mu ^3+11 \mu ^2-92 \mu +101)+h (16 \mu ^3-64 \mu ^2+90 \mu -41)-\mu +1)}{4 (2 h+2 \ell -3) (2 h+2 \ell +1) (2
   h-2 \mu +3) (\ell +\mu -2) (2 h+\ell -\mu +1)}
\\\nonumber
&\quad
+
\tfrac{ \ell ^3(\ell+4 h-2) (4 h^3-4 h^2 (\mu -3)+2 h (\mu ^2-5 \mu +6)-2 \mu ^3+7 \mu ^2-9 \mu +4)}{4 (2 h+2 \ell -3) (2 h+2 \ell +1) (2 h-2 \mu +3) (\ell +\mu -2) (2 h+\ell -\mu +1)}
\\\nonumber
 c_{2,1}&=\frac{(h+1) (\mu -1) (h+\ell )^2 (2 h+\ell -1)}{8 (2 h+2 \ell -1) (2 h+2 \ell +1) (2 h+\ell -\mu +1)}, 
\\
c_{2,2}&=\frac{(\mu -1) \mu  (h+\ell )^2 (h+\ell +1)^2 (2 h+\ell -1) (2 h+\ell )}{32 (2 h+2 \ell -1) (2 h+2 \ell +1)^2 (2 h+2 \ell +3) (2 h+\ell -\mu +1) (2 h+\ell -\mu +2)}.
}
Unfortunately, we have not been able to find any closed form for the coefficients $c_{k,m}$, except for the four sequences $c_{\pm k,\pm k}$ and $c_{\pm k,\pm k\mp 1}$.

\section{Some useful identities}\label{app:identities}
In this appendix we collect some useful identities used throughout the thesis. Some of these are not symbolically implemented in Mathematica \cite{Mathematica}, but can be checked numerically.

The hypergeometric function is defined by
\beq{\label{eq:hyperdef}
_{p+1}F_{p}\left( {{a}_1\,\dots,\,{a}_{p+1}}~\atop~{ \!\!\!{b}_1,\dots,{b}_p}\middle|x\right)=\sum_{k=0}^{\infty}\frac{\prod_{i=1}^{p+1}(a_i)_k}{\prod_{j=1}^{p}(b_j)_k}\frac{x^k}{k!}.
}
The case $p=1$ is referred to as \emph{Gau\ss's hypergeometric function}, for which we write $_2F_1(a,b;c;x)$. It satisfies the differential equation
\beq{
x(1-x)F''(x)+\parr{c-(a+b+1)x}F'(x)-a b F(x)=0.
}
The special case $a=b=c/2=\hb$ that appears in the collinear blocks has a series expansion near unit argument which contains logarithms,
\beq{\label{eq:gausshyperexp}
{_2F_1}(\hb,\hb;2\hb;1-w)=\sum_{k=0}^\infty\frac{\Gamma(2\hb)}{\Gamma(\hb)^2}\parr{\frac{(\hb)_k}{k!}}^2\parr{2S_1(k)-2S_1(k+\hb-1)-\log w}w^k.
}
For generic parameter values satisfying $a+b\neq c$, the expansion consists of two superimposed power series:
\beqa{\nonumber
{_2F_1}(a,b;c;1-w)&=\parr{\frac{\Gamma(c)\Gamma(c-b-a)}{\Gamma(c-a)\Gamma(c-b)}+O(w)}\\&\quad-w^{c-a-b}\parr{\frac{\Gamma(c)\Gamma(c-a-b)\Gamma(1+a+b-c)}{\Gamma(a)\Gamma(b)\Gamma(1+c-a-b)}+O(w)}.
\label{eq:generic2F1expansion}
}
When $a+b<c$, the limit $w\to0$ is finite and gives 
\beq{\label{eq:Buhringsrelation}
_2F_1(a,b;c;1)=\frac{\Gamma(c)\Gamma(c-b-a)}{\Gamma(c-b)\Gamma(c-a)}.
}

For $x\in (0,1)$ the following holds for Gau\ss's hypergeometric function
\begin{equation} \label{eq:2F1zbtoxit}
_2 F_1\left(a,c-b;c;\frac{x}{x-1}\right) = (1-x)^a\, _2 F_1(a,b;c;x),
\end{equation}
and for the polylogarithms
\beqa{\label{eq:polyLogid2}
\mathrm{Li}_2(x)&=\zeta_2-\mathrm{Li}_2(1-x)-\log(1-x)\log x,
\\\label{eq:polyLogid3}
\mathrm{Li}_3(x)&=\zeta_3-\mathrm{Li}_3(1-x)-\mathrm{Li}_3\parr{\frac{x-1}x}+\zeta_2\log x+\frac{\log^3 x}6-\frac{\log(1-x)\log^2x}2.
}

Finally, we have the integrals
\beqa{\label{eq:BetaIntegralDef}
\int\limits_0^1 \df x\, x^a(1-x)^b&=\mathrm B(a+1,b+1)=\frac{\Gamma(a+1)\Gamma(b+1)}{\Gamma(a+b+2)},
\\\label{eq:BetaIntegralmod}
\int\limits_0^1 \df x\,x^a(1-x)^b(1-\gamma x)^c&=\frac{\Gamma(a+1)\Gamma(b+1)}{\Gamma(a+b+2)}{_2F_1}(a+1,-c;a+b+2;\gamma),
}
for $a,b>-1$ and $\gamma\in(0,1)$,
and the practical relation for the $\Gamma$ function:
\beq{\label{eq:subSinGamma}
\sin(\pi p)=\frac{\pi}{\Gamma(p)\Gamma(1-p)}.
}

\chapter{Appendices to chapter \ref{ch:paper2}}

\section{Some inversion integrals}
\label{integrals}

In chapter~\ref{ch:paper2} of the main text we arrived at the inversion integral
\begin{equation}
 A(J)= \frac{1}{\pi^2} \int_0^1 \df t \df\bar z \frac{\bar z^{\bar h-2}(t(1-t))^{\bar h-1}}{(1-t \bar z)^{\bar h}}  \dDisc \left[G(\bar z)\right],
\end{equation}
where $J^2=\bar h(\bar h-1)$. In table~\ref{tab:allinversions} we present a number of inversions used in the main text. In this table we use the nested harmonic sums $S_{\mathbf{a}}=S_{\mathbf{a}}(\bar h-1) $ which for integer arguments take the values
\begin{table}
\begin{center}
\caption{Inversions used in the $\epsilon$ expansion in chapter~\ref{ch:paper2}.}\label{tab:allinversions}
{\small
\vspace{4pt}
\begin{tabular}{|l|l|}\hline
$ G(\bar{z}) $ & $ A(J)$
\rule{0pt}{2.7ex} \rule[-1.2ex]{0pt}{0pt}
\\\hline
$\log ^2\left(1-\bar{z}\right)$&$\displaystyle \dfrac{4 }{J^2}$
\rule{0pt}{4.2ex} \rule[-4.2ex]{0pt}{0pt} \\
$\log^3(1-\bar z)$ & $\displaystyle -\dfrac{24 S_1}{J^2}$
 \rule{0pt}{4.2ex} \rule[-4.2ex]{0pt}{0pt} \\
$\log^4(1-\bar z)$ & $\displaystyle\dfrac{96}{J^2}\left(S_1^2- \zeta _2- S_{-2}\right)$
 \rule{0pt}{4.2ex} \rule[-4.2ex]{0pt}{0pt} \\
$ \log ^2\left(1-\bar{z}\right)\text{Li}_2\left(1-\bar{z}\right)$ & $\displaystyle\dfrac{4 }{J^2}\left( \zeta _2+2 S_{-2}\right)$
 \rule{0pt}{4.2ex} \rule[-4.2ex]{0pt}{0pt} \\
$ \log ^3\left(1-\bar{z}\right)\text{Li}_2\left(1-\bar{z}\right)$ & $\displaystyle\dfrac{24 }{J^2}\left( \left( S_{-3}-2 S_{-2,1}\right)-3 \left( S_{-3}-2 S_{1,-2}\right)+3 \zeta _2 S_1-2
   S_3\right)$
 \rule{0pt}{4.2ex} \rule[-4.2ex]{0pt}{0pt} \\\hline
$ \log ^2\left(1-\bar{z}\right)\text{Li}_3\left(1-\bar{z}\right)$ & $\displaystyle\dfrac{4}{J^2}\left(-2 \left( S_{-3}-2 S_{1,-2}\right)+\zeta _3+2 \zeta _2 S_1-2 S_3\right)$
 \rule{0pt}{4.2ex} \rule[-4.2ex]{0pt}{0pt} \\
$ \log ^2\left(1-\bar{z}\right)\text{Li}_3\left(\dfrac{\bar{z}-1}{\bar{z}}\right)$ & $\displaystyle\dfrac{4}{J^2}\left(-2 \zeta _3-\frac{1}{J^6}-\frac{2}{J^4}+2 S_3\right)$
 \rule{0pt}{4.2ex} \rule[-4.2ex]{0pt}{0pt} \\
$\log ^2\left(1-\bar{z}\right) \log\bar z$ & $\displaystyle-\dfrac{4 }{J^4}$
 \rule{0pt}{4.2ex} \rule[-4.2ex]{0pt}{0pt} \\
$\log ^2\left(1-\bar{z}\right)\log^2\bar z$ & $\displaystyle\dfrac{8 }{J^2}\left(- \zeta _2+\frac{1}{J^4}+\frac{1}{J^2}-2 S_{-2}\right)$
 \rule{0pt}{4.2ex} \rule[-4.2ex]{0pt}{0pt} \\
$\log ^2\left(1-\bar{z}\right) \log^3\bar z$ & $\dfrac{24 }{J^2}\left(2 \left( S_{-3}-2 S_{1,-2}\right)+ \zeta _3-\dfrac{1}{J^6}-\dfrac{2}{J^4}+\dfrac{ \zeta
   _2}{J^2}+\dfrac{2 S_{-2}}{J^2}-2 \zeta _2 S_1\right)$\hspace{-5mm}
 \rule{0pt}{4.2ex} \rule[-4.2ex]{0pt}{0pt} \\\hline
$\log ^3\left(1-\bar{z}\right) \log \bar z$ & $\displaystyle\dfrac{24 }{J^2}\left(- \zeta _2+\frac{ S_1}{J^2}-2 S_{-2}\right)$
 \rule{0pt}{4.2ex} \rule[-4.2ex]{0pt}{0pt} \\
$\log ^4\left(1-\bar{z}\right) \log\bar z$ & $\displaystyle\begin{matrix}\dfrac{48 }{J^2}\left(-\dfrac{2S_1^2}{J^2}-4 \left( S_{-3}-2 S_{-2,1}\right)+6 \left(S_{-3}-2
   S_{1,-2}\right)\right.\\\qquad\qquad\qquad\left.+3 \zeta _3+\dfrac{2 \zeta _2}{J^2}+\dfrac{2 S_{-2}}{J^2}-6 \zeta _2 S_1+2 S_3\right)\end{matrix}$
 \rule{0pt}{6.2ex} \rule[-6.2ex]{0pt}{0pt} \\
$\log ^2(1-\bar{z})\text{Li}_2(1-\bar{z})\!  \log \bar{z}$ & $
\displaystyle\dfrac{4 }{J^2}\left(-6 \left( S_{-3}-2 S_{1,-2}\right)-\dfrac{ \zeta
   _2}{J^2}-3 \zeta _3-\dfrac{2 S_{-2}}{J^2}+6 \zeta _2 S_1\right)$
  \rule{0pt}{6.2ex} \rule[-6.2ex]{0pt}{0pt} \\\hline
\end{tabular}
}
\end{center}
\end{table}
\begin{equation}
S_{a_1,a_2,\ldots}(n)=\sum_{b_1=1}^n\frac{(\mathrm{sgn}\, a_1)^{b_1}}{b_1^{|a_1|}}\sum_{b_2=1}^{b_1}\frac{(\mathrm{sgn}\, a_2)^{b_2}}{b_2^{|a_2|}}\sum_{b_3=1}^{b_2}\frac{(\mathrm{sgn}\, a_3)^{b_3}}{b_3^{|a_3|}}\cdots.
\end{equation}
For non-integer values of $\bar h$ we make the standard analytic continuation from even arguments $n$, see e.g.\ \cite{Albino2009}, so that for instance
\begin{equation}
S_{-2}(x)=\frac{1}{4}\left(\psi^{(1)}\left(\tfrac{x+1}{2}\right)-\psi^{(1)}\left(\tfrac{x+2}{2}\right)\right)-\frac{\zeta_2}{2},
\end{equation}
where $\psi^{(1)}(x)$ is the trigamma function.

Evaluating these inversion integrals is non-trivial, but one can proceed as follows. Expanding the function to invert in powers of $\frac{1-\bar z}{\bar z}$ we are led to the integral entering in \eqref{eq:invXiDphi}. We then by \eqref{eq:sumexpansionbeforeinversion} obtain a series expansion for large $J^2$, which can be identified as a linear combination of suitable functions. The final result is checked numerically, for finite values of $\bar h$, to very high precision.

\section{Double-discontinuity at fourth order}
\label{ddisc}
To order $g^4$ the terms contributing to the double discontinuity from the bilinear operators are
\begin{align}
I_{\varphi^2}&= \log^4(1-\bar z) \frac{\log \bar z-\log z}{192} + \log^3(1-\bar z) \frac{1}{24} (\text{Li}_2(1-\bar z)+3\log z-3\log \bar z+2\zeta_2)\nonumber\\
&+ \log^2(1-\bar z)\Bigg(\frac{5}{8} \text{Li}_3(1-\bar z)-\log z \frac{46+3 \text{Li}_2(1-\bar z)+\log\bar z+12 \zeta_2}{48}  +\frac{1}{2} \text{Li}_3 \left(\!\frac{\bar z-1}{\bar z}\! \right)\nonumber \\
&+ \frac{2 (23+6 \zeta_2) \log \bar z-\text{Li}_2(1-\bar z) (21 \log \bar z+34)-106 \zeta_2-4 \log^3\bar z+\log ^2\bar z+24 \zeta_3}{48}\Bigg)\!,\\
I_{2} &=\frac{\log^2(1-\bar z)}8 \left( \log z (\zeta_2-2)  +2 \log \bar z+ \frac{1}{6}\log^3 \bar z +\text{Li}_3(1-\bar z)-\text{Li}_3\left(\!\frac{\bar z-1}{\bar z}\!\right)-\zeta_3\right)\!.
\end{align}
In order to compute the first expression we used the value of the OPE coefficient for the bilinear scalar operator, which takes the form $a_0=2(1-g-g^2+\ldots)$, as well as the precise relation between $g$ and $\epsilon$.

\chapter{Appendices to chapter \ref{ch:paper1}}

\section{Superconformal blocks}
\label{app:superblocks}
In this appendix we present an explicit form of the superconformal blocks appearing in the expansion of correlation functions of four half-BPS operators in $\mathcal{N}=4$ SYM. We closely follow \cite{Doobary:2015gia} and restrict to the case $p_1=p_2=p_3=p_4=2$, which is the one relevant for this thesis. All supermultiplets appearing in the intermediate channel of such correlation functions can be labelled by a Young tableau $\underline\lambda=[\lambda_1,\lambda_2]$, with $\lambda_1\geq\lambda_2$, consisting of maximally two rows, and a charge $\gamma=0,2,4$. We distinguish three types of multiplets: half-BPS, quarter-BPS and long, whose representation labels are summarised in the table~\ref{tab:supermultiplets}. Notice that the only long multiplets are in the singlet representation $[0,0,0]$ of the $\SU4$ R-symmetry. 

\begin{table}[ht]

\centering
\caption{Supermultiplets appearing in the superconformal partial waves of $\langle \mathcal{O}_{\boldsymbol{20'}}\mathcal{O}_{\boldsymbol{20'}}\mathcal{O}_{\boldsymbol{20'}}\mathcal{O}_{\boldsymbol{20'}}\rangle$.}\label{tab:supermultiplets}
{\small
\vspace{4pt}
\renewcommand{\arraystretch}{1.25}
\begin{tabular}{|c|c|c|c|c|}
\hline
Young tableau $\underline\lambda$& twist $\tau$ & spin $\ell$&R-symmetry representation& multiplet type\\
\hline
$[0,0]$&$\gamma$&0&$[0,\gamma,0]$&half-BPS\\
\hline
$[\lambda_1,0],\lambda_1\geq 2$&$\gamma$&$\lambda_1-2$&$[0,\gamma-2,0]$&quarter-BPS\\
$[\lambda_1,1],\lambda_1\geq 2$&$\gamma$&$\lambda_1-2$&$[1,\gamma-4,1]$&quarter-BPS\\
$[1,0]$&$\gamma$&$0$&$[1,\gamma-2,1]$&quarter-BPS\\
$[1,1]$&$\gamma$&$0$&$[2,\gamma-4,2]$&quarter-BPS\\
\hline
$[\lambda_1,\lambda_2],\lambda_2\geq 2$&$2\lambda_2$&$\lambda_1-\lambda_2$&$[0,0,0]$&long\\
\hline
\end{tabular}
}
\end{table}

The superconformal blocks are given by
\begin{equation}
\mathcal{S}_{\mathcal{R}}(z,\zb,\alpha,\alphab)=\left(\frac{z \zb}{\alpha \alphab}\right)^{\gamma/2}\mathcal{F}^{\gamma, \underline \lambda}(z,\zb,\alpha,\alphab),
\end{equation}
where
\begin{equation}
\mathcal{F}^{\gamma, \underline \lambda}(z,\zb,\alpha,\alphab)=(-1)^{\frac{\gamma}{2}-1}D^{-1}\,\det \begin{pmatrix}
F^X_{\underline \lambda}(z,\zb)&R\\K_{\underline\lambda}&F^Y(\alpha,\alphab)
\end{pmatrix}.
\end{equation}
The explicit form of all ingredients (with $1\leq i,j\leq 2$ and $1\leq m,n\leq\gamma/2$) is
\begin{align}
(F_{\underline\lambda}^X(z,\zb))_{in}&=[x_i^{\lambda_n-n}{_2F_1}(\lambda_n+1-n+\tfrac{\gamma}{2},\lambda_n+1-n+\tfrac{\gamma}{2},2\lambda_n+2-2n+\gamma;x_i)],\\
(F^Y(\alpha,\alphab))_{mj}&=(y_j)^{m-1}{}_2F_1(m-\tfrac{\gamma}{2},m-\tfrac{\gamma}{2},2m-\gamma;y_j),
\end{align}
where $x_1=z$, $x_2=\zb$ and $y_1=\alpha$, $y_2=\alphab$, 
and
\begin{align}
(K_{\underline\lambda})_{mn}&=-\delta_{m,n-\lambda_n},\\
R&=\begin{pmatrix}
\frac{1}{z-\alpha}&\frac{1}{z-\alphab}\\\frac{1}{\zb-\alpha}&\frac{1}{\zb-\alphab}
\end{pmatrix},\\
D&=\frac{(z-\zb)(\alpha-\alphab)}{(z-\alpha)(z-\alphab)(\zb-\alpha)(\zb-\alphab)}.
\end{align}  
 Here, the square bracket in the definition of $F^X$ indicates that we keep only the regular part, namely
 \begin{equation}
 [x^{-\alpha}{}_2F_1(a,b,c;x)]=x^{-\alpha}{}_2F_1(a,b,c;x)-\sum_{k=0}^{\alpha-1}\frac{(a)_k(b)_k}{(c)_k k!}x^{k-\alpha}=\sum_{k=0}^{\infty}\frac{(a)_{k+\alpha}(b)_{k+\alpha}}{(c)_{k+\alpha}(k+\alpha)!}x^{k}\,.
 \end{equation}
Importantly, for long multiplets have $\gamma=4$, $\lambda_2=\tfrac{\tau}{2}$, $\lambda_1=\ell+\tfrac{\tau}{2}$, $\tau\geq4$ and $\alpha\geq 0$. Then, the superconformal blocks can be written in a more explicit form as
\begin{align}\label{eq:super.long}
\mathcal{F}_{\mathrm{long}}(z,\zb,\alpha,\alphab)=\frac{(z-\alpha)(z-\alphab)(\zb-\alpha)(\zb-\alphab)}{(z \,\zb)^4}G_{\tau+4,\ell}(z,\zb) \,,
\end{align}   
where $G_{\tau,\ell}(z ,\zb)$ is the ordinary conformal block in four dimensions \eqref{eq:ConformalBlock} as found in~\cite{Dolan:2001tt}.
  
At the unitarity bound, quarter-BPS multiplets can combine to form a long multiplet in the interacting theory. This is exactly the case for the twist-two multiplets in the singlet representation
\begin{equation}
(\gamma=2,\underline\lambda=[\ell+2,0])\oplus (\gamma=4,\underline\lambda=[\ell+1,1])\longrightarrow (\gamma=4,\underline\lambda=[\ell+1,1])_\mathrm{long}\,.
\end{equation}
Using the explicit form of superconformal blocks one can write
\begin{equation}\label{eq:twist2.recomb}
\frac{\alpha \alphab}{z \zb}\mathcal{F}^{2,[\ell+2,0]}(x,y)+\mathcal{F}^{4,[\ell+1,1]}(x,y)=\frac{(z-\alpha)(z-\alphab)(\zb-\alpha)(\zb-\alphab)}{(z \,\zb)^4}G_{6,\ell}(z,\zb) ,
\end{equation} 
which agrees with \eqref{eq:super.long} for $\tau=2$.

\section[More details on \texorpdfstring{$\overline{H}^{(0,\log)}(\zb)$}{H**(0,log)(zb)}]{More details on $\overline{H}^{\boldsymbol{(0,\log)}}\boldsymbol{(\zb)}$}
\label{app:H0log}
In the expression~\eqref{eq:H0log} for $\overline H^{(0,\log)}(\zb)$, the coefficients $e_i$ multiplying $(1-\zb)^i\log^2(1-\zb)$ for $i=\{0,1,2,\ldots\}$ are given by the sequence
\begin{equation}
\left\{-\frac{1}{12},\frac{1}{10},-\frac{5}{504},-\frac{8}{2835} ,-\frac{251}{199584}, 
-\frac{55967}{81081000},-\frac{2499683}{5837832000},-\frac{50019793}{173675502000},
\ldots 
\right\}.
\end{equation}
The corresponding values in the superconformal case \eqref{eq:H0log.super} are
\begin{equation}
\left\{-\frac{1}{12},-\frac{1}{15},-\frac{151}{2520},-\frac{127}{2268},-\frac{53219}{997920},-\frac{8327609}{162162000},-\frac{290756381}{5837832000},
\ldots
\right\}.
\end{equation}

\section{Modified structure constants for the conformal case}
\label{app:results.nonsuper}
In this appendix we present an exact form of the modified structure constants that appear in \eqref{eq:structure.constants.sum}. The equations below are valid for $\tau_0>2$.
\begin{align}
\overline{\langle\hat\alpha_{\tau_0,\ell}\rangle}_{11}&=\frac{4\,c\,\eta}{P_{\tau_0,\ell}}\Big(
-\zeta_2+S_1\left( \tfrac{\tau_0}{2}-2 \right)^2-S_1\left( \tfrac{\tau_0}{2}-2 \right)S_1\left( \tau_0-4 \right)-\frac12 S_2\left( \tfrac{\tau_0}{2}-2 \right)
\nonumber\\
& \quad-\frac{\delta_{\tau_0,4}}{2} + \left[2S_1\left( \tfrac{\tau_0}{2}-2 \right)-S_1\left(\tau_0-4\right)+\tfrac{\delta_{\tau_0,4}}{4}\right]S_1\left( \tfrac{\tau_0}{2}+\ell-1 \right)
\Big),
\\
\overline{\langle\hat\alpha_{\tau_0,\ell}\rangle}_{10}&=\frac{2\,c\,\eta}{P_{\tau_0,\ell}}\Big(-3S_1\left( \tfrac{\tau_0}{2}-2 \right)+2S_1\left(\tau_0-4\right)-\tfrac{3\delta_{\tau_0,4}}{4}-S_1\left( \tfrac{\tau_0}{2}+\ell-1 \right)\Big),
\\
\overline{\langle\hat\alpha_{\tau_0,\ell}\rangle}_{00}&=\frac{c\,\eta}{P_{\tau_0,\ell}},
\\
\overline{\langle\hat\alpha_{\tau_0,\ell}\rangle}_{\mathrm{ext}}&=\frac{2\,c\,\eta}{P_{\tau_0,\ell}}\Big(1+ S_1\left( \tfrac{\tau_0}{2}-2 \right)-S_1\left( \tau_0-4 \right)+\frac{\delta_{\tau_0,4}}{2}\Big)-\frac{\tau_0-3}{P_{\tau_0,\ell}}
\nonumber\\
&\quad+2\left[-1+2 S_1\left( \tfrac{\tau_0}{2}-2 \right)-S_1\left(\tau_0-4 \right)+S_1\left( \tfrac{\tau_0}{2}+\ell-1 \right) \right],
\\
\overline{\langle\hat\alpha_{\tau_0,\ell}\rangle}_{\mu_0}&=\frac{4\,c}{P_{\tau_0,\ell}}\Big(-S_1\left( \tfrac{\tau_0}{2}-2 \right)+ S_1\left(\tau_0-4\right)\Big),
\\
\overline{\langle\hat\alpha_{\tau_0,\ell}\rangle}_{\nu_0}&=\frac{2\,c}{P_{\tau_0,\ell}}\Big(-\zeta_2-2S_1\left( \tfrac{\tau_0}{2}-2 \right)-2S_1\left( \tfrac{\tau_0}{2}-2 \right)^2+2S_1\left(\tau_0-4\right) +S_2\left( \tfrac{\tau_0}{2}-2 \right)
\nonumber\\
& +2S_1\left( \tfrac{\tau_0}{2}-2 \right)S_1\left(\tau_0-4\right)+2\left[S_1\left( \tfrac{\tau_0}{2}-2 \right)- S_1\left(\tau_0-4\right) \right]S_1\left( \tfrac{\tau_0}{2}+\ell-1 \right)\Big).
\end{align}
As in \eqref{eq:gamma.HT}, we have $\eta=(-1)^{\frac{\tau_0}{2}}$ and $P_{\tau_0,\ell}=c\eta+\left(\tau_0+\ell-2\right)\left(\ell+1\right)$.

\section{Konishi CFT-data}
We present here an explicit form of the CFT-data for operators present in the conformal partial wave decomposition of \eqref{eq:Konishi.final}.

The anomalous dimensions are given by
\begin{align}
\overline{\langle \gamma_{2,\ell}\rangle}&=2S_1(\ell)+3\delta_{\ell,0},
\\
\overline{\langle \gamma_{\tau_0,\ell}\rangle}&=6+\frac{12 c}{P_{\tau_0,\ell}}\left[-S_1\left(\tfrac{\tau_0}{2}-2\right)+S_1\left(\tfrac{\tau_0}{2}+\ell-1\right)\right]
\nonumber\\&\quad
+\frac{c\,\eta}{P_{\tau_0,\ell}}\left[
6-\delta_{\tau_0,4}-2S_1\left(\tfrac{\tau_0}{2}-2\right)-2\left(\tfrac{\tau_0}{2}+\ell-1\right)
\right],\qquad \tau_0>2,
\end{align}
and the modified structure constants take the form
\begin{align}
\overline{\langle \hat\alpha_{2,\ell}\rangle}&=-6-3\delta_{\ell,0}-\zeta_2+6 S_1(\ell),
\\
\overline{\langle \hat\alpha_{\tau_0,\ell}\rangle}&=6\left[
-1+2S_1\left(\tfrac{\tau_0}{2}-2\right)-S_1\left(\tau_0-4\right)+S_1\left(\tfrac{\tau_0}{2}+\ell-1\right)
\right]-\frac{3}{P_{\tau_0,\ell}}(\tau_0-3)
\nonumber\\
&\quad+\frac{c}{P_{\tau_0,\ell}}\Big(12\left[ 
S_1\left(\tfrac{\tau_0}{2}-2\right)-S_1\left(\tau_0-4\right)
\right]
\nonumber\\&\qquad\qquad\quad
+\eta\left[-4 S_1\left(\tfrac{\tau_0}{2}-2\right)+2S_1\left(\tau_0-4\right)-\tfrac{\delta_{\tau_0,4}}{2}\right]\Big)S_1\left(\tfrac{\tau_0}{2}+\ell-1\right)
\nonumber\\
&\quad+\frac{6\,c}{P_{\tau_0,\ell}}\Big(
-\zeta_2-2S_1\left(\tfrac{\tau_0}{2}-2\right)^2+2 S_1\left(\tfrac{\tau_0}{2}-2\right)S_1\left(\tau_0-4\right)+S_2\left(\tfrac{\tau_0}2-2\right)
\Big)
\nonumber\\
&\quad+\frac{c\,\eta}{P_{\tau_0,\ell}}\Big(\zeta_2+6 S_1\left(\tfrac{\tau_0}{2}-2\right)-2S_1\left(\tfrac{\tau_0}{2}-2\right)^2-6 S_1\left(\tau_0-4\right)
\nonumber\\&\qquad\qquad\quad
+2S_1\left(\tfrac{\tau_0}{2}-2\right)S_1\left(\tau_0-4\right)+S_2\left(\tfrac{\tau_0}{2}-2\right)+4\delta_{\tau_0,4}
\Big) , \qquad \tau_0>2.
\end{align}
Recall that the one-loop structure constants $\langle a^{(1)}_{\tau_0,\ell}\rangle$ can be found using \eqref{eq:modified.structure.constant}.

\section{Half-BPS CFT-data}
\label{app:superconformal.results}
We present here an explicit form of the CFT-data for long supermultiplets present in the superconformal block decomposition of \eqref{eq:BPS.final}.

The anomalous dimensions are given by
\begin{align}
\overline{\langle \gamma_{2,\ell}\rangle}&=2S_1(\ell+2),
\\
\overline{\langle \gamma_{\tau_0,\ell}\rangle}&=-\frac{2\,\tilde c}{\mathcal P_{\tau_0,\ell}}\Big((\eta+1)S_1\left(\tfrac{\tau_0}2\right)+(\eta-1)S_1\left(\tfrac{\tau_0}2+\ell+1\right) \Big), \qquad \tau_0>2\,,
\end{align}
and the modified structure constants by
\begin{align}
\overline{\langle \hat\alpha_{2,\ell}\rangle}&=-\zeta_2,
\\
\overline{\langle \hat\alpha_{\tau_0,\ell}\rangle}&=-\frac{2\,\tilde c}{\mathcal P_{\tau_0,\ell}}\Big(\left[
(2\eta-1)S_1\left(\tfrac{\tau_0}2\right)+(1-\eta)S_1(\tau_0)
\right]S_1\left(\tfrac{\tau_0}2+\ell+1\right)+(1+\eta)S_1\left(\tfrac{\tau_0}{2}\right)^2
\nonumber\\
&\qquad\qquad
-(1+\eta)S_1\left(\tfrac{\tau_0}{2}\right)S_1(\tau_0)-\frac{1+\eta}{2}S_2\left(\tfrac{\tau_0}{2}\right)+\frac{1-\eta}{2}\zeta_2
\Big),\qquad \tau_0>2,
\end{align}
where $\mathcal P_{\tau_0,\ell}=\tilde c\,\eta+(\tau_0+\ell+2)(\ell+1)$ is the factor appearing in the higher twist structure constants \eqref{eq:structure.constants.Free}. The one-loop structure constants $\langle a^{(1)}_{\tau_0,\ell}\rangle$ can be found using the supersymmetric version of \eqref{eq:modified.structure.constant}.

\renewcommand{\bibname}{References}

\let\oldbibliography\thebibliography
\renewcommand{\thebibliography}[1]{%
  \oldbibliography{#1}%
\addcontentsline{toc}{chapter}{References}
  \setlength{\itemsep}{-0pt}%
}

\bibliography{biblthesis}
\bibliographystyle{JHEPthesis}

\end{document}